\documentclass[11pt,letterpaper]{article}

\usepackage[margin=1in]{geometry}
\usepackage{amsmath,amssymb,amsthm}
\usepackage{mathtools}
\usepackage{mathrsfs,dsfont}
\usepackage{braket}
\usepackage{thmtools,thm-restate}

\usepackage{graphicx}
\usepackage{subcaption}
\usepackage{tikz}
\usetikzlibrary{positioning,arrows.meta,patterns,calc,shapes}
\usepackage[x11names]{xcolor}
\usepackage{bm}
\usepackage{pifont}

\usepackage{booktabs}
\usepackage{makecell}

\usepackage[inline]{enumitem}
\setlist[itemize]{topsep=2pt,itemsep=1pt,parsep=0pt,partopsep=0pt}
\setlist[enumerate]{topsep=2pt,itemsep=1pt,parsep=0pt,partopsep=0pt}
\makeatletter
\renewcommand\paragraph{\@startsection{paragraph}{4}{\z@}%
  {1.25ex plus .2ex minus .1ex}{-1em}{\normalfont\normalsize\bfseries}}
\makeatother
\usepackage{nicefrac}
\usepackage{csquotes}
\usepackage{soul}
\usepackage{comment, verbatim}
\usepackage{etoolbox, tabto}
\usepackage{xspace, xparse}

\usepackage[x11names]{xcolor} 
\usepackage{nicematrix}
\usepackage{tikz}
\usepackage{subcaption}
\usepackage{tcolorbox}
\usepackage{caption}

\usepackage[ruled,linesnumbered]{algorithm2e}

\SetAlFnt{\small}
\SetAlCapFnt{\small}
\SetAlCapNameFnt{\small}
\SetAlCapHSkip{0pt}
\IncMargin{-\parindent}

\usepackage[numbers]{natbib}
\usepackage[
  colorlinks=true,
  linkcolor=blue,
  citecolor=blue,
  urlcolor=blue,
  filecolor=blue,
  pdfborder={0 0 0}
]{hyperref}
\usepackage[capitalize]{cleveref}
\usepackage{fancyhdr}

\DeclarePairedDelimiter{\crl}{\{}{\}}

\newcommand{\R}{\mathbb{R}} %
\newcommand{\N}{\mathbb{N}} %

\DeclareMathOperator*{\E}{\mathbb{E}} %

\newcommand{\reals}{\mathbb{R}}

\DeclareMathOperator*{\argmax}{arg\,max}

\newtheoremstyle{coloredthm}%
  {3pt}   %
  {3pt}   %
  {\itshape\color{blue}} %
  {}      %
  {\bfseries\color{blue}} %
  {.}     %
  { }     %
  {}      %

\theoremstyle{plain}
\newtheorem{theorem}{Theorem}[section]
\newtheorem{claim}{Claim}
\newtheorem{lemma}[theorem]{Lemma}
\newtheorem{corollary}[theorem]{Corollary}
\newtheorem{proposition}[theorem]{Proposition}

\newtheorem{definition}{Definition}

\newtheorem*{lemma*}{Lemma}
\newtheorem*{claim*}{Claim}
\newtheorem*{theorem*}{Theorem}
\newtheorem*{corollary*}{Corollary}
\newtheorem*{proposition*}{Proposition}
\newtheorem{observation}{Observation}

\newtheorem{example}[theorem]{Example}

\newtheorem{mainthm}{Main Theorem}

\newtheoremstyle{case}{}{}{}{}{}{:}{ }{}
\theoremstyle{case}

\DeclareMathOperator{\A}{\auc{A}}
\newcommand{\auc}[1]{\mathsf{#1}}

\newcommand\numberthis{\addtocounter{equation}{1}\tag{\theequation}}

\newcommand{\sqbr}[1]{\left[ #1 \right]}
\newcommand{\brackets}[1]{\left( #1 \right)}
\DeclareMathOperator{\irn}{\Bar{\varphi}}

\newcommand{\safe}{\mathrm{Safe}}

\renewcommand{\vec}[1]{\bm{#1}}
\newcommand{\bv}{\vec{v}}
\newcommand{\pred}{\mathcal{G}}
\newcommand{\true}{\mathsf{True}}
\newcommand{\false}{\mathsf{False}}
\newcommand{\Uni}{\mathsf{Uni}}
\newcommand{\bid}{\beta}     %
\newcommand{\vbid}{\vec{\bid}}     %
\newcommand{\bottomV}{\underline{v}}
\newcommand{\topV}{\overline{v}}
\newcommand{\payoffPay}{\rho}
\newcommand{\payoffValue}{\omega}
\newcommand{\expect}[1]{\overline{#1}}
\newcommand{\util}{u}
\newcommand{\unconstrained}{\util^{\mathrm{uc}}}
\newcommand{\revMaximizer}{\textsf{IVW-maximizer}\xspace}
\newcommand{\consMaximizer}{\textsf{IVW-thresh-maximizer}\xspace}
\newcommand{\xmono}{\tilde{x}}
\newcommand{\pmono}{\tilde{p}}

\newcommand{\aee}{a.e.\xspace}

\newif\ifanonymous

\title{\huge Optimal Auction Design for Constrained Buyers}

\ifanonymous
  \author{Anonymous Submission}
  \date{}
\else
  \author{%
    Batya Berzack\\
    Tel Aviv University\\
    \texttt{batyaberzack@mail.tau.ac.il}
    \and
    Rotem Oshman\\
    Tel Aviv University\\
    \texttt{roshman@tau.ac.il}
    \and
    Inbal Talgam-Cohen\\
    Tel Aviv University\\
    \texttt{inbaltalgam@gmail.com}
  }
  \date{}
\fi

\begin{document}
\setlength{\abovedisplayskip}{4pt}
\setlength{\belowdisplayskip}{4pt}
\setlength{\abovedisplayshortskip}{2pt}
    \setlength{\belowdisplayshortskip}{2pt}

\begin{titlepage}

\pagenumbering{gobble}

\maketitle
\begin{abstract}
We study single-parameter, multi-buyer auctions in which buyers are subject to constraints that affect their bidding strategy. Such constraints arise in many real-world auction settings and fundamentally alter the auction design space. As a consequence, the Revelation Principle, Envelope Theorem, and Myerson's Lemma no longer hold.
In this paper we focus on a large family of buyer constraints where the buyers are restricted in the manner in which they can bid or spend their budget, but do not have a hard budget cap. 
These include the common constraints of no-overbidding, ex-post individual rationality, and stagewise individual rationality. 
We ask whether the seller can leverage the buyers' constraints to obtain improved payoff. 

Our main finding is a separation between \emph{revenue-aligned} seller objectives 
(e.g., revenue maximization, welfare, or any linear combination of the two), and \emph{consumer-aligned} seller objectives, which are objectives where the seller prefers \emph{lower} payments, e.g.~to maximize consumer surplus. 
For revenue-aligned objectives, we establish a unified theory for all constraints in the family, which parallels Myerson's theory of optimal auctions for unconstrained buyers. We develop a new measure-theoretic technique to show that Myerson-style auctions remain optimal, despite the altered design space and failure of the classical theory's central tenets. For consumer-aligned objectives, the picture is different: we show that the seller can leverage the buyers' strategic limitations to strictly outperform classically incentive compatible mechanisms. We design an optimal deterministic auction for a wide class of instances, focusing in particular on buyers who cannot tolerate temporary debt.

\end{abstract}

\end{titlepage}

\hypersetup{linkcolor=black}
\vspace*{0.5cm}
\setcounter{tocdepth}{1} %
\tableofcontents
\hypersetup{linkcolor=blue}
\newpage

\pagenumbering{arabic}
\setcounter{page}{1}

\section{Introduction}\label{sec:intro}

Classical auction theory often makes the idealized assumption that buyers enjoy complete strategic freedom, allowing them to take any action 
that maximizes their utility.
However, actual markets often present a different picture:
in many real-world auction settings, buyers face strict limitations that constrain their behavior. For example, buyers may be unable %
to risk loss with any probability (\emph{ex-post individual rationality (IR)} constraint~\cite[e.g.,][]{christodoulou2008bayesian,lucier2010price}); they may be unable to go into temporary debt in auctions that unfold over time (\emph{stagewise IR} constraint~\cite[e.g.,][]{berzack2025dynamic,krahmer2025dynamic}); 
or they may be unable to bid above their true valuation due to bid verification (\emph{no-overbidding} constraint~\cite[e.g.,][]{green1986partially,krahmer2025unidirectional}). 
All of these are examples of \emph{buyer constraints},
which may prevent buyers from placing bids or making choices %
that would otherwise yield %
them positive utility.
While %
these examples and others have been studied in an ad-hoc manner (see Related Work), %
to date there has not been an attempt to develop a general theory of single-parameter mechanism design with constrained buyers,
and this is the goal of the current paper.

In a world \emph{without} buyer constraints, %
single-parameter auction design has a %
profound and elegant solution: Myerson's lemma~\cite{myerson1981optimal}, together with the revelation principle~\cite{krishna2009auction}, %
provide a complete characterization of any truthful auction, essentially solving the auction design problem in this setting. 
Although Myerson originally formulated his results for revenue-optimal mechanisms,
the theory is powerful enough to
extend to a wide range of other objectives.
We call auctions that are based on this theory \emph{Myerson-like} auctions.

Interestingly,
the classical theory of single-parameter mechanism design breaks down when buyers are constrained.
Central tenets such as the revelation principle, Myerson's lemma and the envelope theorem~\cite{milgrom2002envelope} no longer hold: not only do their standard proofs not go through, but the results themselves are no longer true. 
This is because buyer constraints change the \emph{space of implementable allocation rules}; 
since buyers are constrained, the auction no longer has to be truthful against all bids,
only against those that buyers are actually able to place.

To illustrate this,
consider a simple setting with two units of an item
and a single ex-post IR constrained buyer, %
whose valuation per unit is supported over
$[5,7]$.
If the buyer bids in the range $[5,6)$,
they are allocated both units, and charged $10$;
if they bid in the range $[6,7]$, they are allocated
one unit, and are charged $6$ w.p.\ $1/2$ and nothing w.p.\ $1/2$.
The allocation rule of this auction 
is \emph{non-monotone} --- a lower bid results in a  higher allocation 
--- and in the unconstrained setting,
the auction would not be truthful. 
For example, a buyer with valuation $5$ would prefer to bid $6$,
for expected utility $1\cdot 5 - (1/2) \cdot 6 = 2$,
which is better than their expected utility of $2 \cdot 5 - 10 = 0$
if they truthfuly report their valuation.
However, when the buyer is ex-post IR,
they \emph{cannot} place this bid:
with probability $1/2$, their utility will be negative,
$1 \cdot 5 - 6 = -1$.
It is easy to verify that the auction is indeed truthful, given ex-post IR buyers.

The example above is perhaps odd, 
in that it is clearly \emph{non-optimal}
if our goal is to maximize either welfare or revenue:
to maximize welfare we should just give both units to the buyer, no matter the bid,
and to maximize revenue a posted-price auction is optimal.
Thus, at this point one might still hope that \emph{optimal} auctions are Myerson-like, even though the example shows that not \emph{all} implementable auctions are.
But this is not the case, at least for some seller objectives:
suppose we have a single item and three ex-post IR buyers with valuations
drawn uniformly from $[0,1]$, 
and the seller wishes to maximize \emph{consumer surplus}
(the buyers' total value 
minus their payments).
The optimal Myerson-like auction for consumer surplus is to allocate the item to a \emph{random} buyer and charge nothing~\cite{HartlineR08}; this results in an expected consumer surplus of $0.5$. 
However, by leveraging the ex-post IR constraint,
the seller can obtain higher payoff:
for example, it can allocate the item to the buyer with the highest bid,
and charge the full bid with probability $1/10$ (and otherwise charge nothing).
This yields expected consumer surplus $0.75-1/10\cdot 0.75=0.675 > 0.5$. 
In this auction, ex-post IR buyers are truthful, but the auction is not truthful for unconstrained buyers.

\paragraph{The general picture.}
Fig~\ref{fig:rev-aligned-usual-flow} below
depicts
the flow of the classical theory of optimal mechanism design with \emph{unconstrained} buyers,
and points out the components that no longer hold when buyers are constrained (these are marked with \ding{55} in the figure).
The examples above already demonstrate 
that Myerson's lemma does not hold in the constrained setting (i.e., truthful auctions are not necessarily monotone,
and do not necessarily have Myerson payments).
The examples can easily be extended to show
the failure of the relationship between expected payment and virtual value and between virtual value and ironed virtual value in all truthful auctions.
We demonstrate the failure of the revelation principle in Example~\ref{ex:failure-of-revelation-principle}.

\begin{figure}[h]

\centering
\resizebox{\textwidth}{!}{%
\begin{tikzpicture}[
    node distance=1.35cm and 0.25cm,
    box/.style={
        rectangle,
        draw=none,
        very thick,
        align=center,
        minimum height=1.35cm,
        text width=3.4cm,
        inner sep=3pt,
        font=\footnotesize
    },
      redbox/.style={
        box,
        draw=black,
        fill=red!20!,
    line width=1.5pt,
    append after command={
      node[anchor=north east, inner sep=2pt, font=\large]
      at (\tikzlastnode.north east) {\ding{55}}
    }
  },
    smalllabel/.style={
    rectangle,
    draw=black,
    fill=white,
    line width=0.6pt,
    rounded corners=1pt,
    inner xsep=5pt,
    inner ysep=2pt,
    font=\footnotesize\sffamily,
    align=center
},
  whitebox/.style={
    box,
    fill=white,
    draw=black,
    line width=0.8pt,
    append after command={
      node[anchor=north east, inner sep=1.2pt, font=\large]
      at (\tikzlastnode.north east) {\ding{52}}
    }
    },
  greyboxwrong/.style={
    box,
    dashed,
        draw=black,
    fill=gray!15,
    append after command={
      node[anchor=north east, inner sep=4pt, font=\large]
      at (\tikzlastnode.north east) {\ding{55}}
    }
  },
  greyboxquestion/.style={
    box,
    dashed,
        draw=black,
    fill=gray!15,
    append after command={
      node[anchor=north east, inner sep=7pt, font=\large]
      at (\tikzlastnode.north east) {\Huge\sffamily\bfseries \textit{?}}
    }
  },
    line/.style={
        draw,
        thick,
        rounded corners,
        line cap=round
    },
    arrow/.style={
        line,
        -{Stealth[scale=1.2]} 
    }
]

    \node[redbox, text width=2.8cm] (TC) {Truthful auctions\\are monotone};
    \node[redbox, text width=3.7cm, left=0.2cm of TC] (TL) {Truthful auctions have Myerson payments};
    \node[whitebox, text width=4.2cm, right=0.2cm of TC] (TR) {In monotone auctions,\\${\E[\varphi(v)x(v)] \leq \E[\bar{\varphi}(v)x(v)]}$};

    \node[redbox, minimum height=1.55cm, text width=2.8cm, below=0.35cm of TC] (MC) {Truthfulness is\\w.l.o.g.};
    \node[greyboxwrong, minimum height=1.55cm, text width=3.7cm, left=0.2cm of MC] (ML) {In truthful auctions,\\$\E[p(v)] = \E[\varphi(v)x(v)]$};
    \node[greyboxwrong, minimum height=1.55cm, text width=4.2cm, below=0.35cm of TR] (MR) {In truthful auctions,\\${\E[\varphi(v)x(v)] \leq \E[\bar{\varphi}(v)x(v)]}$};
    \node[whitebox, minimum height=1.55cm, text width=3.9cm, left=0.2cm of ML] (MLL) {A maximizer of ironed virtual welfare has\\
    ${\E[p(v)]=\E[\bar{\varphi}(v)x(v)]}$};

    \path (MLL.west) -- (MR.east) coordinate[midway] (CENTER_X);
    
    \node[greyboxquestion, minimum height=1.1cm, text width=12.6cm, anchor=north] (Bottom) 
        at (CENTER_X |- MC.south) [yshift=-0.55cm] %
        {A maximizer of ironed virtual welfare, with Myerson payments, is optimal};

    \draw[arrow,dashed] (TL.south) -- (ML.north);
    \draw[arrow,dashed] (TC.south) to(MR.north);

    \draw[arrow] (TR.south) -- (MR.north);
            \draw[arrow] (MLL.south) -- ([xshift=-4.8cm]Bottom.north);
    \draw[arrow,dashed] (ML.south) -- ([xshift=-1.6cm]Bottom.north);
    \draw[arrow,dashed] (MC.south) -- ([xshift=1.4cm]Bottom.north);
    \draw[arrow,dashed] (MR.south) -- ([xshift=4.8cm]Bottom.north);

    \node[smalllabel, anchor=west] at ([xshift=0.12cm]TL.north west) {Payments};
    \node[smalllabel, anchor=west] at ([xshift=0.12cm]TC.north west) {Monotonicity};
    \node[smalllabel, anchor=west] at ([xshift=0.12cm]TR.north west) {Ironing};
    \node[smalllabel, anchor=west] at ([xshift=0.12cm]MLL.north west) {Candidate};
    \node[smalllabel, anchor=west] at ([xshift=0.12cm]ML.north west) {Identity};
    \node[smalllabel, anchor=west] at ([xshift=0.12cm]MC.north west) {Revelation};
    \node[smalllabel, anchor=west] at ([xshift=0.12cm]MR.north west) {Bound};
    \node[smalllabel, anchor=west] at ([xshift=0.12cm]Bottom.north west) {Optimality};
\end{tikzpicture}
}
\captionsetup{font=small}
\caption{The structure of Myerson's theory, with arrows indicating logical implications. 
The theory characterizes the allocation and payment rules for all truthful auctions. It introduces virtual values $\varphi$ and ironed virtual values $\bar{\varphi}$ to characterize the expected revenue.
Statements shown in red boxes with bold borders are \emph{not true} with constrained buyers, and so the dashed arrows represent implications that can no longer be deduced. Thus, statements in grey boxes with dashed borders are not known to be true a-priori %
(in fact, those marked with an \ding{55} turn out to be incorrect in our setting). Statements in white boxes with thin borders still hold in our setting. %
}
    \label{fig:rev-aligned-usual-flow}
\end{figure}

Without Myerson's characterization,
what is the optimal auction for 
a given seller objective and buyer constraint?

\subsection{Our Contributions}\label{sec:results}

In this paper we initiate a systematic study of auctions with constrained buyers,
which encompasses all
of the examples above and more.
We introduce a wide class %
of constraints --- those that are \emph{monotone} and \emph{uncapped}, 
i.e., they
constrain high-valuation buyers no more than low-valuation ones, and do not impose a firm cap on any buyer's budget %
The constraints are incorporated into the buyers' \emph{utility} by assigning a buyer utility of $-\infty$ 
whenever an outcome and/or bid violates their constraints.
We remark that several past works have incorporated specific buyer constraints into the utility function (see Appendix~\ref{app:prior-papers-that failed}), and some have implicitly assumed that classic Myerson theory still applies to the resulting \emph{constrained} utility function,
which is incorrect.

Our main finding for UM-constraints is a separation %
according to the seller's objective (``payoff''):
for sellers who want to charge as much as possible (e.g., revenue-maximizing) or are agnostic to payments (e.g., welfare-maximizing),
the buyers' strategic constraints cannot be leveraged to improve the seller's payoffs.
On the other hand, sellers who want to \emph{minimize} payments (e.g., consumer surplus maximizers~\cite{ezra2025multi,goldner2025multidimensional,berzack2025dynamic,HartlineR08}) \emph{can} leverage the constraints, sometimes leading to large improvements in the seller's payoff.
One implication of these results is that Myerson's 
optimal
auction is robustly optimal for revenue-aligned objectives, even if buyers are UM-constrained.
However, establishing its optimality requires new machinery.

\paragraph{A parallel theory to Myerson's for revenue-aligned objectives and UM-constraints.}

A \emph{revenue-aligned} seller's payoff is roughly composed of a welfare term combined with a revenue term with a positive coefficient (see Section~\ref{sec:model-intro} for the formal definition).
In Section~\ref{sec:overview-rev-aligned} we establish a theory parallel to Myerson's for any revenue-aligned objective and any UM-constraints:

\begin{mainthm}[``Parallel Myerson'', informal; see Theorem~\ref{thm:rev-myerson}]
\label{thm:informal-rev-myerson}
    Consider a single-parameter, Bayesian auction setting where each buyer's valuation is drawn from a known (possibly irregular) distribution.
    Assume that each buyer is UM-constrained and that the seller is revenue-aligned.%

        \begin{enumerate}
        \item There exists an optimal \emph{truthful} auction. 
        \item Every optimal
         auction is monotone.
        \item In every optimal %
        truthful auction,
        there is a unique payment rule (Myerson's payment).
    \end{enumerate}
\end{mainthm}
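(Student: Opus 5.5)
The plan is to avoid the revelation principle and Myerson's lemma, both of which fail here, and to work directly with the interim outcome distributions induced by an arbitrary (possibly non-truthful) auction together with buyers' constrained best responses. Fix a Bayes--Nash equilibrium of an arbitrary auction, and for each buyer $i$ let $x_i(v_i)$ and $p_i(v_i)$ denote the interim allocation and expected payment at type $v_i$. The first step is to show that the unconstrained utility $\unconstrained_i(v_i) = v_i x_i(v_i) - p_i(v_i)$ still satisfies a one-sided envelope inequality. Because the constraint is \emph{monotone}, any action that is feasible for type $v_i$ remains feasible for every higher type $v_i' > v_i$. Hence $\unconstrained_i(v_i') \ge \unconstrained_i(v_i) + (v_i'-v_i)x_i(v_i)$. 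From this, $\unconstrained_i$ is convex-minorant-like and dominates its integral representation. I would then derive the inequality
\[
\E[p_i(v_i)] \le \E[\varphi_i(v_i)\,\xmono_i(v_i)] - \unconstrained_i(\bottomV)
\]
for a suitably \emph{monotonized} allocation $\xmono_i$, obtained as a measurable monotone rearrangement of $x_i$ with respect to the distribution $F_i$.

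The second step is the bound. Since $\xmono_i$ is monotone, the white-box ironing inequality of Fig.~\ref{fig:rev-aligned-usual-flow} applies and gives $\E[\varphi_i\xmono_i] \le \E[\irn_i \xmono_i]$. For a revenue-aligned payoff $\alpha\cdot\text{Rev} + \text{Welfare}$ with $\alpha>0$, I need the welfare term not to decrease under monotonization. Here the main obstacle is a measure-theoretic construction. Given a non-monotone interim allocation profile, I must produce a feasible monotone ex-post allocation whose interim allocations dominate in the sense needed. Concretely, the expected payment bound must hold, and welfare must weakly improve; for welfare, higher types should receive at least as much. I would attempt this by defining $\xmono_i$ via the upper envelope $\sup_{w\le v}$ of realized allocation contributions, i.e.\ via convexification of $\unconstrained_i$. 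I would then argue that the resulting combined objective is bounded by the payoff of the \revMaximizer, the pointwise maximizer of ironed virtual welfare plus welfare, with Myerson payments. The key fact I expect to use is that the \revMaximizer's objective is pointwise optimal, so one only needs the aggregate upper bound in integral form rather than a feasible implementation of $\xmono$.

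The third step is to conclude the three items. The \revMaximizer with Myerson payments is monotone and truthful for unconstrained buyers. The \emph{uncapped} assumption ensures that truthful bidding is feasible under the constraint: e.g.\ ex-post IR or no-overbidding can be satisfied by Myerson payments realized deterministically, never exceeding the value. This gives item 1. For item 2, I would trace equality cases. If an optimal auction were non-monotone on a positive-measure set, then either the envelope inequality is strict or monotonization strictly changes the welfare term, contradicting optimality. For item 3, in a truthful optimal auction equality must hold in the payment bound, which forces $\unconstrained_i(\bottomV)=0$ and $\unconstrained_i' = x_i$ a.e. This pins down Myerson payments. I expect the equality analysis in item 2 to require the most care, especially on ironed intervals, where a strict welfare gain must be extracted.
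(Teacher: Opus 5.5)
You have the right skeleton: the one-sided envelope from monotonicity of the constraint, a payment bound against a monotone rearrangement, Hardy--Littlewood for the welfare term, ironing, comparison with the \revMaximizer{}, and uncappedness for implementability. But the two steps that carry the difficulty are missing.

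\textbf{The rearranged payment bound.} Write $U_i$ for equilibrium utility and $x_i^\uparrow$ for the equimeasurable nondecreasing rearrangement. The bound $\E[p_i]\le\E[\varphi_i x_i^\uparrow]-U_i(\bottomV_i)$ is true, but it does not follow from the integral form of the envelope. Telescoping $U_i(v')\ge U_i(v)+(v'-v)x_i(v)$ only gives $\E[p_i]\le\E[\varphi_i x_i]-U_i(\bottomV_i)$ for the \emph{original} non-monotone $x_i$. For irregular $F_i$, $\varphi_i$ is not monotone, so from there you can pass neither to $x_i^\uparrow$ nor to the ironed value. This is exactly the crossed-out ``Bound'' box in Fig.~\ref{fig:rev-aligned-usual-flow}. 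What you need is a threshold bound, obtained as follows:
\begin{enumerate}
\item Chain the one-sided inequality along a nondecreasing sequence of types $w_k$ that are nearly the cheapest types reaching allocation level $k/N$. This gives $p_i(v)\le\int_0^{x_i(v)}\tau_x(t)\,dt$ with $\tau_x(t)=\inf\{z:x_i(z)\ge t\}$.
\item Equimeasurability then gives $\tau_x\le\tau^\uparrow$.
\item The layer-cake identity turns $\E\bigl[\int_0^{x_i(v)}\tau^\uparrow(t)\,dt\bigr]$ into $\E[\varphi_i x_i^\uparrow]$.
\end{enumerate}
This is also where strictness for item 2 comes from. When $\payoffPay>0$, $\tau_x<\tau^\uparrow$ on a positive-measure set of levels if $x_i$ is not monotone a.e. When $\payoffValue$ is strictly increasing, Hardy--Littlewood is strict. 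So no analysis of ironed intervals is needed.

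\textbf{Joint feasibility of the rearranged profile.} This is the more serious gap. You argue that pointwise optimality of the \revMaximizer{} means you need ``only the aggregate upper bound in integral form rather than a feasible implementation.'' That is backwards. Pointwise optimality only dominates $\sum_i\irn_{i,\Pi}(v_i)y_i-c(\sum_i y_i)$ for allocations $y\in\Omega$ chosen profile by profile. To bound $\sum_i\E[\irn_{i,\Pi}x_i^\uparrow]-\E[c(\cdot)]$ by the payoff of the \revMaximizer{}, you must exhibit a (possibly randomized) ex-post rule with these properties:
\begin{itemize}
\item its values lie in $\Omega$;
\item its interim curves equal all the $x_i^\uparrow$ simultaneously;
\item its total allocation has the original law, so the cost term does not move.
\end{itemize}
Separately rearranged interim curves define no such rule. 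Your ``upper envelope / convexification'' definition of $\xmono_i$ is worse: it is not equimeasurable with $x_i$, so neither supply nor cost is controlled (a running supremum raises a non-monotone buyer's expected allocation).

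The paper fills exactly this gap with a randomized measure-preserving relabeling $\sigma_i(v_i,r)$, built via disintegration and a transfer map, with $\sigma_i(v_i,r)\sim F_i$ and $\E_r[x_i(\sigma_i(v_i,r))]=x_i^\uparrow(v_i)$. It is fed into the original mechanism one buyer at a time. This keeps every realized allocation in $\Omega$, preserves the law of every other buyer's outcome, and preserves the law of total allocation. For the bound alone, a coupling of the same kind would suffice, for example re-randomizing each type comonotonically with its interim allocation. Without some such construction, the passage from your per-buyer bounds to the \revMaximizer{} is unjustified.
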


Note that by necessity, Theorem~\ref{thm:informal-rev-myerson} applies only to \emph{optimal} auctions, 
unlike Myerson's theory which applies to all auctions (as we pointed out above, central ingredients such as truthfulness and monotonicity do not hold for all auctions in the constrained setting).
To establish this result, we introduce a technique of \emph{measure-preserving rearrangements}, described in Section~\ref{sec:overview-rev-aligned}.

\paragraph{Consumer-aligned objectives.}

Despite the failure of Myerson's theory,
the results above show that
\emph{optimal} auctions for revenue-aligned objectives do obey the tenets of classical single-parameter design.
However, this is not the case for \emph{consumer-aligned objectives},
where the seller's payoff assigns negative weight to the revenue term (i.e., it is in the seller's interest to minimize payments; see Section~\ref{sec:model-intro} for the formal definition).
For this class of objectives, the seller \emph{can} leverage the buyers' constraints to improve their payoff compared to Myerson-like auctions.
At the same time, there is no universal way to do so:
each constraint rules out a different set of buyer deviations, and an auction
can exploit only the deviations that are no longer available to the buyers.
As a result,
in contrast to revenue-aligned objectives (where we have shown that the seller cannot benefit from the buyers' constraints),
for consumer-aligned objectives there is no single optimal auction structure.

In this paper we focus %
on %
one UM-constraint: %
\emph{stagewise IR} \cite{berzack2025dynamic,krahmer2025dynamic},
where the auction unfolds over time,
and buyers cannot incur debt even temporarily at any point in the auction.
Stagewise IR auctions are only partially understood: recently,~\cite{berzack2025dynamic} gave optimal mechanisms for \emph{single-buyer} auctions.
To demonstrate the applicability of our measure-preserving rearrangement technique, %
we develop the following positive result. %

\begin{mainthm}[Optimal consumer-aligned auctions with stagewise-IR buyers, informal; see Theorems~\ref{thm:cons-surplus-mono-wlog}-\ref{thm:cons-aligned-mono-wlog}]
\label{thm:informal-consumer-surplus}
Consider a single-parameter, Bayesian auction setting where each buyer's valuation is drawn from a known (possibly irregular) distribution.
    Assume that each buyer is subject to the stagewise-IR constraint and that the seller is consumer-aligned and has a concave cost function.
    Then in each of the following two cases,
    all optimal auctions are monotone:
    \begin{enumerate}
        \item The seller's objective is consumer surplus, or
        \item The seller has any consumer-aligned objective,
        and the auction is \emph{single winner},
        meaning that all units must be allocated to the same buyer.
    \end{enumerate}

\end{mainthm}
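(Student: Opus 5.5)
Since stagewise IR makes the revelation principle fail, I would first fix an arbitrary optimal auction $\mathcal{A}$, possibly indirect, together with an equilibrium of stagewise-IR-feasible strategies. From these I define, for each buyer $i$, the interim allocation $x_i(v_i)$ and expected payment $p_i(v_i)$ induced in equilibrium. Monotonicity is then a claim about these interim rules: if some $x_i$ is not (a.e.) nondecreasing, I will build a strictly better auction. That contradicts optimality.

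The main tool is a \emph{measure-preserving rearrangement}. Let $F_i$ be buyer $i$'s distribution. I replace $x_i$ by its nondecreasing rearrangement $\xmono_i$, which has the same distribution under $F_i$ as $x_i$. I also rearrange the whole interim outcome (the random allocation path and stagewise payment schedule) assigned to each quantile, so that type $v$ receives the outcome previously assigned to the type at its rearranged quantile. This preserves total allocated quantity in distribution, so feasibility and the seller's concave cost are unaffected.

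In case (2) the auction is single winner. I would rearrange the joint quantile space so that the winner's identity and the number of units are preserved, and hence the cost term is unchanged. For case (1) the objective is consumer surplus, $\sum_i \E[v_i x_i - p_i]$ minus cost. The rearrangement weakly increases the welfare term $\E[v_i\xmono_i(v_i)]$ by the Hardy--Littlewood inequality, strictly so if $x_i$ was non-monotone on a set of positive measure.

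The remaining step, which I expect to be the main obstacle, is the payment side. I must show that the rearranged allocation $\xmono_i$ can be implemented under stagewise IR, with truthful (or equilibrium) behaviour, at total expected payment no larger than $\E[p_i]$. My plan is to charge Myerson-style payments for $\xmono_i$, implemented stagewise-IR by charging each unit only as it is delivered, at a per-unit price of at most the buyer's value (using that the constraint is uncapped and monotone). The key inequality to verify is $\E[\pmono_i] \le \E[p_i]$. It comes from the fact that in the original auction, stagewise IR forces the payment attached to an allocation level to be at most what low types can afford. Because the constraint is monotone, every type $v$ that now receives a given outcome could feasibly have mimicked the original type holding it. Hence the original payments already dominated the minimal Myerson payments of the rearranged rule. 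I would formalize this as a pointwise bound $p_i(v)\ge \int_0^{v}\ldots$ coming from the incentive constraints of deviations that \emph{are} stagewise-IR-feasible, namely downward deviations (type $v$ mimicking a lower type $v'$ is safe whenever $v'$'s payments were safe).

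Combining the welfare gain, the weak payment decrease and the unchanged cost gives a strict improvement unless the $x_i$ were already monotone, which proves both cases. For a general consumer-aligned objective in case (2), the same argument applies term by term, because each term is either welfare-like (improved by Hardy--Littlewood) or a negatively weighted revenue term (improved by the payment decrease).
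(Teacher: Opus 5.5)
There is a genuine gap in the payment step, which you correctly flag as the main obstacle. Myerson payments are the wrong target for a consumer-aligned seller, and the bound you plan to prove runs the wrong way.

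A downward deviation (type $v$ mimicking $v'<v$) gives $p(v)\le p(v')+(x(v)-x(v'))v$. Chained from $\bottomV_i$, these give $U(v)\ge\int x$, i.e.\ payments are \emph{at most} Myerson's (Lemma~\ref{lem:rev-aligned-payment-bound}), never at least. Stagewise IR lets the seller block upward deviations with a screening filter, so DSIC auctions can charge far less than Myerson payments, and $\E[\tilde p_i]\le\E[p_i]$ is false in general.

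For example, take one buyer, $m=3$, $v\sim U[0,1]$, and allocate as follows:
\begin{itemize}
\item $3$ units on $[0.5,0.55)$: one unit at $0.5$ in the screening phase, two more for $0.65$.
\item $2$ units on $[0.55,0.6)$: screening price $0.55$, rest free.
\item $3$ units on $[0.6,1]$: screening price $0.6$, rest free.
\end{itemize}
This non-monotone auction is DSIC and stagewise-IR, with expected payment $0.325$. Its rearrangement gives $2$ units on $[0.5,0.55)$ and $3$ above; Myerson payments for it are $1$ and $1.55$, expected $0.7475$. The Hardy--Littlewood welfare gain is only $0.0025$, so your new auction is strictly worse. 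Your ``could have mimicked'' justification also fails here: types in $[0.5,0.55)$ now receive the outcome of $[0.55,0.6)$, whose filter $0.55$ they cannot meet.

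The missing idea is to charge \emph{threshold} payments for the rearranged rule: one unit at $\sup\{z:\tilde{x}_i(z,\bv_{-i})<t\}$ in a screening phase, the rest free. You then compare allocation level by level:
\begin{itemize}
\item By Claim~\ref{clm:payment-at-least-thresh}, every original type receiving $t$ units pays at least $\sup\{z:x_i(z,\bv_{-i})<t\}$. A type $z$ exceeding that payment can meet the filter, and integrality gives it a whole extra unit.
\item This supremum dominates the new threshold, because $\tilde{x}_i$ is a nondecreasing equimeasurable rearrangement.
\item Equimeasurability equalizes the mass at each level.
\end{itemize}
In the example, this brings the expected payment down to $0.2725$.

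Second, your setting does not match the statement, and this matters for feasibility. The theorem concerns deterministic DSIC auctions and ex-post monotonicity; the paper notes that with lotteries the problem becomes trivial. The threshold bound above also relies on integer, deterministic allocations. Rearranging interim outcomes by quantile, as in Section~\ref{sec:overview-rev-aligned}, yields only a BIC auction. So you must rearrange $x_i(\cdot,\bv_{-i})$ separately for each $\bv_{-i}$, and then preserving allocated quantities in distribution no longer gives ex-post feasibility.

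In case (2), feasibility is recovered from Claim~\ref{clm:start-win-always-win}. Buyer $i$'s winning set is an up-interval with left endpoint $\alpha_i(\bv_{-i})$, so a rearrangement fixing $\alpha_i(\bv_{-i})$ keeps at most one winner. Then $c(\sum_j x_j)=\sum_j c(x_j)$ makes the cost term equimeasurable slice by slice.

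In case (1), with a general downward-closed $\Omega$, a per-slice rearrangement can create infeasible allocations (cf.\ Figure~\ref{fig:rearrange}). The paper therefore does not rearrange there at all. It applies the single-buyer SWAC transformation of Lemma~5.7 in~\cite{berzack2025dynamic} slice by slice, modified (Proposition~\ref{prop:swac-consumer-surplus}) so that no type's allocation increases, and obtains feasibility from downward closure.
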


Using this characterization,
we develop an explicit optimal auction for each of the two cases handled in the theorem.
We remark that single-winner environments arise naturally in the context of rental games~\cite{berzack2025dynamic}. As a consequence, we solve the rental game of~\cite{berzack2025dynamic} when \emph{multiple} agents can arrive each day, where~\cite{berzack2025dynamic} solved only the single-agent case.

\subsection{Related Work}
Prior work has studied different buyer constraints in a ``pointwise manner'', focusing on one constraint at a time rather than attempting a unified treatment. %
We give a brief overview of the most directly relevant prior work here, 
specifically
for buyers subject to \emph{no overbidding} or to \emph{stagewise individual rationality (stagewise-IR)};
we discuss additional related work in Appendix~\ref{sec:additional-related}.
Interestingly, some works implicitly rely on Myerson's framework, %
which does not hold in constrained-buyer settings (see Appendix~\ref{app:prior-papers-that failed}).

For buyers that \emph{cannot overbid} (also known as \emph{unidirectional incentive compatibility}),
several prior works establish special cases of our results here,
restricting the number of buyers, the distributions of valuations, or both.
To our knowledge, all prior work deals only with \emph{revenue maximization},
and ask whether the seller can extract more revenue when the buyers cannot overbid.
The question was first studied by Moore~\cite{moore1984global},
who settled it (in the negative) for a \emph{single buyer} with a \emph{finite} valuation distribution, and posed the question of whether the result holds for multiple buyers.
Moore's result was extended to continuous \emph{regular distributions} 
in~\cite{krahmer2025unidirectional} (under mild continuity assumptions);
this class of distributions admits a significantly simpler proof compared to the general case we study here
(see Appendix~\ref{sec:additional-related} for details).
The argument of~\cite{krahmer2025unidirectional} for regular distributions appears to extend to multi-buyer auctions (though this is not explicitly claimed).
Finally, independently to our work, \cite{yoon2026sufficiency} 
extended Moore's result to the \emph{multi-buyer} setting
for \emph{discrete and finitely-supported} distributions.
Our work removes all of these restrictions and provides a unified answer for any multi-buyer auction,
addressing not only the constraints of no overbidding and revenue maximization but the entire class of UM-constraints and a wide range of objective functions.

For buyers subject to \emph{stagewise-IR},
the closest work to ours is~\cite{berzack2025dynamic}, which introduces
a single-buyer auction that has a temporal element, where the buyers are subject to stagewise-IR.
Our results strictly generalize those of~\cite{berzack2025dynamic}, even when applied only in the single-buyer setting.

\section{A Model for Constrained Buyers} 
\label{sec:model-intro}

We begin by presenting  (a slightly simplified version of)
our model for auctions with constrained buyers;
we omit some technical details that are not crucial for the technical overview. The full version of our definitions is given in Section~\ref{sec:model}.

\paragraph{Auctions.}
We focus on standard Bayesian single-parameter multi-unit auctions with $m$ identical units and $n$ buyers with private per-item valuations drawn from known distributions ($F_i$ for buyer $i$). In general,
we consider auctions that take place over time,
but for the technical overview in Section~\ref{sec:overview-rev-aligned} we consider only instantaneous auctions,
and this is the version that we define here.

The seller's payoff is parameterized by parameters 
$\Pi=(\payoffValue(\cdot),\payoffPay)$ and a cost function $c$,
where 
$\payoffValue:\reals_{\ge 0}\to\reals_{\ge 0}$
is a nondecreasing function,
 $\payoffPay\in\reals$,
 and the cost function \(c : \N \to \R_{\geq 0}\) is nondecreasing and normalized to $c(0)=0$. 
Given an allocation rule $\vec{x}$ and payment rule $\vec{p}$, assuming a BNE $\vbid$ is played, the seller's expected payoff
is given by
\begin{equation*}
    \E_{\bv}\sqbr{\sum_{i=1}^n\brackets{\payoffValue(v_i)x_i(\vbid(\bv))+\payoffPay \cdot p_i(\vbid(\bv))}-c\brackets{\sum_{i=1}^n x_i(\vbid(\bv))}}.
\end{equation*}
We refer to the first term as the \emph{welfare term},
and to the second term as the \emph{revenue term}.
If the revenue term is non-negative (i.e., $\rho \geq 0$),
then we say that the seller's payoff is \emph{revenue-aligned} (this includes, for example,
the standard seller objectives of \emph{welfare} and \emph{revenue}),
and otherwise (if $\rho < 0$) we say that it is \emph{consumer-aligned} (this includes, for example, \emph{consumer surplus}).

We now introduce our simple unified model for auctions with constrained buyers. 

\paragraph{The acceptability predicate.}
Each buyer's constraint is captured by an \emph{acceptability predicate}
$\pred_i(S_i,b_i;v_i)$ (possibly a different predicate for each buyer), where $S_i$ 
is a distribution of outcomes for buyer $i$%
\footnote{For purposes of the technical overview, the outcome of an auction is an allocation and a payment. However,
when we consider auctions that unfold over time,
the outcome is more complex; see Section~\ref{sec:model}.} 
and 
$b_i, v_i$ are the bid and valuation (resp.) of buyer~$i$.
When $\pred_i(S_i,b_i;v_i) = \true$,
buyer $i$ with valuation $v_i$
is able to place bid $b_i$ and accept outcome distribution $S_i$ after doing so.%
\footnote{All constraints of which we are aware in the literature restrict \emph{either} the bid (e.g., no overbidding) \emph{or} the outcome distribution (e.g., ex-post IR), but not both at the same time. However, for modeling purposes it is cleaner to have one single constraint that deals with both.}
Conceptually, $\pred_i$ encodes hard constraints faced by the buyer rather
than preferences.
For instance, if the constraint is that a buyer cannot overbid, then
$\pred_i(S_i,b_i;v_i)=\true$ if and only if $b_i \le v_i$,
independently of the random outcome.
Another example,
which constrains the outcome this time,
is ex-post IR, which is the constraint
$\pred_i(S_i, b_i ; v_i) = \true$ if and only if $\Pr[ v_i \cdot x_i - p_i < 0] = 0$.
(Here, $x_i, p_i$ are the allocation and payment of buyer $i$;
they are
random variables
drawn from the outcome distribution $S_i$.)
We mention that the general definition of our model,
which includes auctions that unfold over time,
can also express constraints that include a temporal element:
for example, a buyer may willing to pay more if they receive an item \emph{quickly}.

 We require that truthful bidding is not explicitly ruled out by the constraint:
 if an outcome distribution $S_i$ is acceptable with \emph{some} bid $b_i$
 (i.e., $\pred_i(S_i,b_i;v_i)=\true$),
 then it is also acceptable with a truthful bid (i.e., $\pred_i(S_i,v_i;v_i)=\true$).

\paragraph{Constrained utility.}

The \emph{unconstrained} (or \emph{quasilinear}) utility of buyer $i$ with valuation $v_i$ is, as usual, given by
$\unconstrained_i(S_i;v_i)=x_i(S_i)v_i-p_i(S_i)$.
We define the \emph{constrained} utility of a buyer based on their acceptability
predicate as follows: $u_i(S_i,b_i;v_i)=\unconstrained_i(S_i;v_i)$ if $\pred_i(S_i,b_i;v_i)=\true$, and $u_i(S_i,b_i;v_i)=-\infty$ otherwise. 
Unless otherwise specified, the term \emph{utility} refers to the \emph{constrained}
utility function $u_i$.

\paragraph{Auction properties in the constrained setting.}
We use the standard notions of BNE, BIC, DSIC, and IR, always with respect to the \emph{constrained} utility. In particular, in a truthful auction, it is possible that there are deviations (to bids and outcomes that are non-acceptable) that can \emph{increase} the buyers unconstrained utility, but not the constrained utility. 
We call an auction \emph{U-BIC} or \emph{U-DSIC} if truthful bidding is optimal even with respect to the buyer's \emph{unconstrained} utility. These notions imply BIC and DSIC, respectively, but not the other way around.

\paragraph{Uncapped monotone (UM) constraints.}
In this paper we focus on a class of constraints that we call \emph{uncapped-payment monotone constraints} (UM constraints).
A constraint is \emph{monotone} if
high-type buyers can do whatever 
low-type buyers can:
formally,
if for any $v_i \le v_i'$, whenever 
$\pred_i(S_i,b_i;v_i)=\true$, it also holds that 
$\pred_i(S_i,b_i;v_i')=\true$. 
 For such constraints, at equilibrium, mimicking a low type \emph{never increases} the unconstrained utility, because this would also increase the constrained utility; however, mimicking a high type might \emph{increase} the \emph{unconstrained} utility, but \emph{decrease} the \emph{constrained} utility.

A constraint is \emph{uncapped-payment} (or \emph{uncapped} for short) if it never rules out an outcome that almost surely gives the buyer nonnegative unconstrained utility.
Thus, the constraint may restrict how payments are made, or the risk/timing of payments, but it does not impose a hard cap on total payment in such one-shot nonnegative-utility outcomes.
For example, the restrictions may be on the \emph{distribution} of the sum paid, or on the schedule of payments over time.
(For auctions that 
take place over time, the constraint only demands that ``instantaneous''
nonnegative-utility
outcomes must be accepted; see Section~\ref{sec:model}.)

\section{Technical Overview}
\subsection{Revenue-Aligned Objectives}\label{sec:overview-rev-aligned}

In this section we describe our generalization of Myerson's Lemma, which ultimately shows that Myerson-style auctions are still optimal when the seller aims to maximize a \emph{revenue-aligned} objective.
In this section, unless otherwise specified, when we say an auction is ``monotone'', we mean ``ex-interim monotone'', and when we say that an auction is ``truthful'', we mean that it is BIC.

To simplify the exposition,
in this technical overview we 
restrict attention to
auctions with no seller cost,
and we consider only
truthful auctions, despite the failure of the revelation principle (in Section~\ref{sec:revenue} we handle the general case of possibly non-truthful auctions, with arbitrary BNEs).
In addition, we only sketch here the \emph{existence} of an optimal auction that satisfies the properties from Theorem~\ref{thm:informal-rev-myerson}; proving these for \emph{every} optimal auction requires a refinement of the arguments, which we do in the full proof.
Finally, we gloss over many technical details,
and ignore all violations of measure 0,
which in general occur throughout the full proof and require  delicate care.

\paragraph{A Myerson-like optimal auction.}
Myerson proves that an auction that pointwise-maximizes the ironed virtual welfare is optimal for revenue. 
Here we generalize Myerson's auction to general revenue-aligned objectives:
we first generalize the definition of virtual values
to arbitrary revenue-aligned objectives,
and then define a Myerson-like deterministic auction that we call \revMaximizer,
which pointwise-maximizes the generalized ironed virtual value
and has Myerson payments
(see Section~\ref{sec:revenue} for the formal definition).

In the remainder of this section we sketch our proof that \revMaximizer is \emph{optimal}.
The high-level structure of our proof is depicted in Figure~\ref{fig:rev-aligned-no-overbidding} (for simplicity,
the figure shows the proof only for revenue maximization, but for general revenue-aligned objectives the structure is similar).
As discussed in Section~\ref{sec:intro}, the revelation principle, the envelope theorem, and Myerson's lemma do not hold for constrained buyers. Our alternative proof of optimality therefore focuses on establishing the same \emph{bottom-line statement} as Myerson's theory in all \emph{optimal} auctions,
but along the way we cannot rely on properties of \emph{all} auctions (or of all truthful auctions).

\begin{figure}[h!]
\centering

\resizebox{0.96\textwidth}{!}{%
\begin{tikzpicture}[
    node distance=0.85cm and 0.35cm,
    box/.style={
        rectangle,
        draw=none,
        very thick,
        align=center,
        minimum height=1.35cm,
        text width=4.35cm,
        inner sep=4pt,
        font=\footnotesize
    },
  whitebox/.style={
    box,
    fill=white,
    draw=black,
    line width=0.8pt,
    },
    greenbox/.style={
        box,
        draw=black,
        fill=green!20!,
    },
    line/.style={
        draw,
        thick,
        rounded corners,
        line cap=round,
        shorten <=2pt,
        shorten >=2pt
    },
    arrow/.style={
        line,
        -{Stealth[scale=1.2]} 
    }
]
    \node[greenbox] (TL) {Truthful payments are at most Myerson payments (Lem.~\ref{lem:rev-aligned-payment-bound})};
    \node[greenbox, right=of TL] (TC) {\textbf{W.l.o.g., an \emph{optimal} auction is BIC and monotone (Thm.~\ref{thm:rev-aligned-bid-constraint-mono-wlog})}};
    \node[whitebox, right=of TC] (TR) {In monotone auctions,\\${\E[\varphi(v)x(v)] \leq \E[\bar{\varphi}(v)x(v)]}$\\\cite{myerson1981optimal}$^*$};

    \node[greenbox, below=of TL, minimum height=1.7cm] (ML) {In truthful auctions,\\$\E[p(v)]\leq \E[\varphi(v)x(v)]$\\(Cor.~\ref{cor:pay-bound-by-varphi})};
    \node[whitebox, below=of TC] (MLL) {In a maximizer of ironed virtual welfare,\\
    ${\E[p(v)]=\E[\bar{\varphi}(v)x(v)]}$\\
    \cite{myerson1981optimal}$^*$};
    \node[greenbox, below=of TR] (MR) {W.l.o.g., in an\\\emph{optimal} auction,\\${\E[\varphi(v)x(v)] \leq \E[\bar{\varphi}(v)x(v)]}$\\
    (immediate)};

    \path (ML.west) -- (MR.east) coordinate[midway] (CENTER_X);
    
    \node[greenbox, minimum height=1cm, text width=13.4cm, double, anchor=north] (Bottom) 
        at (CENTER_X |- ML.south) [yshift=-0.8cm] %
        {A maximizer of ironed virtual welfare, with Myerson payments, is optimal (Thm.~\ref{thm:rev-myerson})};

    \draw[arrow] (TL.south) -- (ML.north);
    \draw[arrow] (TC.south) -- (ML.north);
    \draw[arrow] (TC.south) -- (MR.north);
    \draw[arrow] (TR.south) -- (MR.north);

    \draw[arrow] (ML.south) -- ($(Bottom.north west)!0.22!(Bottom.north east)$);
    \draw[arrow] (MLL.south) -- (Bottom.north);
    \draw[arrow] (MR.south) -- ($(Bottom.north west)!0.78!(Bottom.north east)$);

\end{tikzpicture}
}
\captionsetup{font=small}
\caption{Flowchart showing the structure of the proof of Theorem~\ref{thm:rev-myerson} for the revenue objective, with $\varphi$ denoting the virtual value for revenue. The green boxes with thick borders are new results for this specific setting, while the white boxes with thin borders follow from Myerson's proofs. The box with bold text highlights the centerpiece technical result.}%
$^*$ These results are generalized in the appendix to any revenue-aligned objective, not just revenue.
    \label{fig:rev-aligned-no-overbidding}
\end{figure}

\paragraph{Myerson payments are an upper bound.}
In truthful auctions in unconstrained settings,
buyers have no incentive to either underbid or overbid.
When the buyers are constrained,
this is no longer necessarily true:
buyers might have an \emph{incentive} to deviate
but be unable to do so due to their constraints (as in the examples from Section~\ref{sec:intro}).
However, monotone buyer constraints \emph{always allow underbidding} (this is the meaning of monotonicity for constraints).
Therefore, given an auction that is truthful under monotone buyer constraints,
we know that buyers have no incentive to underbid,
and this suffices to establish a \emph{one-sided envelope theorem},
and prove that Myerson payments are an upper bound on the payments of any truthful auction.%
\footnote{A similar argument is made independently in~\cite{krahmer2025unidirectional}.}

Before proceeding, let us dispense with a na\"ive proof idea
for showing the optimality of \revMaximizer.
It is tricky to directly compare an arbitrary truthful auction $\A$  to  \revMaximizer, because $\A$ can have
both an arbitrary allocation rule
and an arbitrary payment rule
(unlike the unconstrained setting,
where
Myerson's lemma asserts that all truthful auctions are monotone and have Myerson payments).
However, having now shown that Myerson payments are an \emph{upper bound} on the payment for any allocation rule,
could we perhaps argue that Myerson payments are \emph{optimal} for any allocation rule,
simplifying the comparison to \revMaximizer (which has Myerson payments)?
After all, 
with a revenue-aligned objective,
it is in the seller's interest to extract the maximum possible payment from the buyers (while still extracting enough welfare),
so why not simply increase the payments until they match the upper bound?
The answer, of course, is that increasing payments can twist the buyers' incentives,
causing them to bid differently and potentially decreasing
the seller's expected payoff despite the improved per-outcome revenue.
In fact,
this is not solely a failure of a na\"ive proof idea:
there exist
allocation rules
that simply cannot implemented using Myerson payments (see Example~\ref{ex:binding-payment-fail}).
Thus, when we wish to argue that a given auction $\A$ is no better than \revMaximizer,
we must somehow wrestle with the allocation rule \emph{and} the payment rule of $\A$ at the same time.
This is done in the following theorem,
which simultaneously establishes another key property, \emph{monotonicity},
and is the main technical contribution of this paper:

\begin{theorem}[Simplified, see Theorem~\ref{thm:rev-aligned-bid-constraint-mono-wlog} for the full statement]\label{thm:informal-truthification}
Let $\A$ be a truthful auction with revenue-aligned payoff and UM-constrained buyers.
Then there exists a truthful and monotone auction $\A'$ with Myerson payments
 and payoff at least as high as $\A$.%
 \footnote{In the full version of this theorem (Theorem~\ref{thm:rev-aligned-bid-constraint-mono-wlog})
 we do not assume truthfulness,
 and we also argue a strict improvement in payoff in some cases, which is necessary for the full version of Theorem~\ref{thm:rev-myerson}.}
\end{theorem}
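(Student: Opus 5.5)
The plan is a \emph{measure-preserving rearrangement} of the interim allocation of each buyer, followed by re-pricing at Myerson payments. I handle one buyer $i$ at a time, holding the others' (truthful) behavior fixed. Let $x_i(v_i)=\E_{\bv_{-i}}[x_i(v_i,\bv_{-i})]$ be the interim allocation of $\A$. Write $q=F_i(v_i)$ for the quantile and $y(q)=x_i(F_i^{-1}(q))$. Let $y^\uparrow$ be its nondecreasing rearrangement, i.e.\ the unique nondecreasing function equidistributed with $y$ under Lebesgue measure on $[0,1]$. Then there is a measure-preserving map $\sigma:[0,1]\to[0,1]$ with $y^\uparrow=y\circ\sigma$ a.e.; for atoms or flat parts I would use the standard construction via randomization. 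The new auction $\A'$ hands a buyer whose quantile is $q$ the outcome that $\A$ would have produced for quantile $\sigma(q)$. Because $\sigma$ preserves measure, from the viewpoint of the other buyers and of the seller's cost term the joint distribution of everyone else's outcomes is unchanged. Feasibility (at most $m$ units) and the cost $\E[c(\sum_j x_j)]$ are therefore preserved. The welfare term can only increase: $\payoffValue$ is nondecreasing, so by the rearrangement (Hardy--Littlewood) inequality, $\E[\payoffValue(v_i)x_i]$ is maximized over equidistributed allocations by the comonotone one.

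\textbf{Payments.} I replace the payment rule of buyer $i$ in $\A'$ with Myerson payments $p'_i(v)=v x'_i(v)-\int_0^v x'_i(t)\,dt$ for the now-monotone interim rule $x'_i$. Because $x'_i$ is monotone and payments are Myerson, the interim mechanism is U-BIC. Truthfulness under the constraint needs two more facts, which I would prove by realizing payments carefully. First, some realization of the Myerson payment leaves the truthful outcome acceptable; since the constraint is uncapped, it suffices to charge, ex post, a deterministic nonnegative-utility payment (e.g.\ a threshold price for each unit received). Second, every deviation's outcome is either unacceptable or no better in unconstrained utility, and this follows from U-BIC.

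\textbf{Revenue term.} With $\payoffPay\ge 0$, I need $\E[p'_i]\ge\E[p_i]$. By Lemma~\ref{lem:rev-aligned-payment-bound} (the one-sided envelope bound, which needs only monotone constraints), $\E[p_i]\le \E[\varphi_i(v_i)x_i(v_i)]$. On the other side, $\E[p'_i]=\E[\varphi_i x'_i]$. So it suffices to show
\[
\E\bigl[\varphi_i(v_i)x_i(v_i)\bigr]\le\E\bigl[\varphi_i(v_i)x'_i(v_i)\bigr].
\]
This is the main obstacle, since $\varphi_i$ need not be monotone, so Hardy--Littlewood does not apply directly. The first thing I would try is to write both sides in quantile space, integrate by parts, and use $\E[p_i]\le\int_0^1 R'(q)\,y(q)\,dq$ with $R(q)$ the revenue curve. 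The step is then to show that the Myerson bound for $y$ is dominated by that of $y^\uparrow$. I expect this to follow from majorization: for every $t$, $\int_t^1 y^\uparrow\ge\int_t^1 y$, with equality at $t=0$. Combined with the one-sided envelope bound evaluated pointwise, i.e.\ $p_i(v)\le v x_i(v)-\int_0^v x_i$ derived from no-underbidding, this should give $\E[p_i]\le \int_0^1 (1-q)F_i^{-1}{}'(q)\int_q^1 y\,dq\le$ the same expression with $y^\uparrow$, which equals $\E[p'_i]$. The measure-zero issues and non-atomless $F_i$ need care here.

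Finally, I iterate over the buyers $i=1,\dots,n$. Rearranging buyer $i$ leaves the interim rules of the other buyers unchanged, since their marginal view is preserved, so each step keeps the earlier buyers monotone with Myerson payments. After $n$ steps $\A'$ is truthful and monotone, has Myerson payments, and has payoff at least that of $\A$.
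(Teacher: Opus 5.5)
Most of your construction matches the paper's proof: the relabeling, the preservation of the other buyers' outcome laws and of the cost term, the Hardy--Littlewood step for the welfare term, and the U-BIC argument for buyer $i$. The gap is in the revenue step. You bound $\E[p_i]\le\E[\varphi_i x_i]$ by Lemma~\ref{lem:rev-aligned-payment-bound} and then need $\E[\varphi_i x_i]\le\E[\varphi_i x_i']$. Integrating by parts against the tail integrals $\int_q^1 y$ puts the weight $d(\varphi_i\circ F_i^{-1})$ on them, so majorization gives this inequality only when $\varphi_i$ is nondecreasing, i.e.\ for regular $F_i$, where Hardy--Littlewood already suffices. Your displayed expression $\int_0^1(1-q)(F_i^{-1})'(q)\int_q^1 y\,dq$ is also not $\E[\varphi_i x_i]$: for $y\equiv 1$ the latter equals $\bottomV_i$.

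For irregular $F_i$ the inequality fails even for truthful auctions. Take one buyer, $m=2$, and the ex post IR constraint. Let $F$ put mass $1/2$ uniformly on $[1,1.2]$, mass $\epsilon$ on $(1.2,1.8)$, and mass $1/2-\epsilon$ uniformly on $[1.8,2]$. Allocate $2,1,2$ units on these three ranges. Charge, respectively, $2$; then $v$ with probability $0.2/v$; then $2v$ with probability $1/v$. Any upward deviation that changes the outcome risks negative realized utility, and downward deviations are unprofitable, so this auction is truthful. Yet $\E[\varphi x]=\E[xv-\int_1^v x]=2.3-1.8\epsilon$, while $\E[\varphi x^\uparrow]=\E[p^\uparrow]=2-O(\epsilon)$. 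The reason is that the Lemma bound lets the dip of $x$ on the low-density middle range, where $\varphi\ll 0$, shrink the top types' rent. Truthfulness against mimicking $[1,1.2]$ still forces that rent, so the Lemma bound is too loose to compare an auction with its rearrangement.

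The missing idea is the level-wise bound of Claim~\ref{clm:tau-bound}: $\bar p_i(v)\le\int_0^{\bar x_i(v)}\tau_x(t)\,dt$, where $\tau_x(t)=\inf\{z:\bar x_i(z)\ge t\}$. It is proved by chaining downward deviations only through near-threshold types $w_1\le\dots\le w_K\le v$ with $\bar x_i(w_k)\ge k/N$. Thus the rent for level $t$ accrues from the first type that reaches $t$, regardless of later dips. With this bound, $\E[\bar p_i]\le\int_0^m\tau_x(t)\Pr[\bar x_i\ge t]\,dt\le\int_0^m\tau^\uparrow(t)\Pr[x_i^\uparrow\ge t]\,dt=\E[p_i^\uparrow]$. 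The middle inequality holds because the level sets have equal probability and $\{\bar x_i\ge t\}\subseteq[\tau_x(t),\topV_i]$ forces $\tau_x(t)\le\tau^\uparrow(t)$. The last equality is Claim~\ref{clm:nondecreasing-payment-threshold-int}. No regularity is needed.

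Two smaller points. First, a deterministic measure-preserving $\sigma$ with $y^\uparrow=y\circ\sigma$ need not exist even without atoms or flat parts (e.g.\ $y(q)=|2q-1|$), so the relabeling must be randomized in general. Second, the rearranged auction is only interim monotone, so ex post threshold prices will not reproduce the interim Myerson payment. Instead, buyer $i$ should pay a bid-dependent per-unit price $\gamma_i(b)=b-\int_{\bottomV_i}^{b}\hat{x}_i/\hat{x}_i(b)\le b$ for each realized unit.
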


\begin{proof}[Proof sketch] %
Given an arbitrary truthful auction $\A$, we carefully transform $\A$ into a  monotone auction $\A'$ that has payoff at least as high as $\A$, and also has Myerson payments.
We iterate over buyers $i = 1,\ldots,n$
and at each step we modify the allocation and payment rules (for all buyers) so buyer $i$'s allocation becomes monotone with Myerson payments,
while keeping the incentives and outcome distribution of all buyers $j \neq i$ the same, to preserve the seller's payoff from these buyers.
At the end we obtain an auction $\A'$ with the desired properties.
(In the full proof, we start with an arbitrary auction that may not be truthful, and in each iteration we also make buyer $i$ truthful.)

The key technical challenge in modifying $\A$ is that in each step $i$,
we cannot modify \emph{only} the allocation rule $x_i(\cdot)$ of buyer $i$,
as this risks creating infeasible allocations or hurting the seller's payoff.
Below we give a toy example to illustrate this (see Fig.~\ref{fig:rearrange} and its accompanying explanation).

\begin{figure}[h]

\begin{tcolorbox}[
        colback=white,           %
        colframe=black!70,       %
        arc=6pt,                 %
        boxrule=1pt,             %
        left=5pt, right=5pt,   %
        top=5pt, bottom=5pt    %
    ]

\paragraph{Toy example.}
Consider a simple auction $\A$,
depicted in Fig.~\ref{fig:rearrange_a} below,
with two units and two buyers whose valuations are distributed uniformly in $[1,2]$. 
Each entry $(\textcolor{blue}{x_1},\textcolor{green!50!black}{x_2})$ of the table represents the number of units allocated to buyer $1$ and buyer $2$, respectively,
given their bids $b_1, b_2$.
The auction can be implemented in a truthful manner under,
for example, no-overbidding constraints
(see Appendix~\ref{app:warmup-no-overbidding-payments}).

    \begin{center}

    \subcaptionbox{Original auction $\A$\label{fig:rearrange_a}}{%
        $\begin{NiceArray}{c|c|c}[cell-space-top-limit=5pt, cell-space-bottom-limit=5pt]
         & \textcolor{blue}{b_1<1.9} & \textcolor{blue}{b_1\geq 1.9} \\ \hline
        \textcolor{green!50!black}{b_2<1.6}
        & (\textcolor{blue}{2},\textcolor{green!50!black}{0})
        & (\textcolor{blue}{1},\textcolor{green!50!black}{1}) \\
        \textcolor{green!50!black}{b_2\geq 1.6}
        & (\textcolor{blue}{0},\textcolor{green!50!black}{2})
        & (\textcolor{blue}{1},\textcolor{green!50!black}{1})
        \CodeAfter
          \tikz \draw [red, thick, rounded corners=4pt] ([xshift=-6pt]2-|1) rectangle ([xshift=6pt]3-|4);
        \end{NiceArray}$%
    }
    \qquad \raisebox{5ex}{$\Rightarrow$} \qquad
    \subcaptionbox{Rearranged auction $\A'$\label{fig:rearrange_b}}{%
        $\begin{NiceArray}{c|c|c}[cell-space-top-limit=5pt, cell-space-bottom-limit=5pt]
         & \textcolor{blue}{b_1<1.1} & \textcolor{blue}{b_1\geq 1.1} \\ \hline
        \textcolor{green!50!black}{b_2<1.6}
        & (\textcolor{blue}{1},\textcolor{green!50!black}{1})
        & (\textcolor{blue}{2},\textcolor{green!50!black}{0}) \\
        \textcolor{green!50!black}{b_2\geq 1.6}
        & (\textcolor{blue}{1},\textcolor{green!50!black}{1})
        & (\textcolor{blue}{0},\textcolor{green!50!black}{2})
        \end{NiceArray}$%
    }
    \end{center}    
    \vspace{-10pt}
    \caption{Toy example of a measure-preserving rearrangement.}
    \label{fig:rearrange}
    \vspace{5pt}

Auction $\A$ is not monotonic:
when buyer $1$ bids $b_1 < 1.9$,
their expected allocation is $0.6 \cdot 2 + 0.4 \cdot 0 = 1.2$,
but when they bid higher, $b_1 \geq 1.9$,
their expected allocation is only $0.6 \cdot 1 + 0.4 \cdot 1 = 1$.
The issue arises from
the top row
of $\A$
(i.e., bid $b_2 < 1.6$ of buyer $2$,
marked in red).

How can we fix this violation of monotonicity?
Decreasing buyer $1$'s allocation to $1$ in the top-left cell %
hurts the seller's welfare,
and increasing buyer $1$'s allocation to $2$ in the top-right cell results in an infeasible allocation,
because we have only 2 items.
Swapping between
the left and right cells of the top row
could cause buyer $2$ to bid differently and yield lower payoff,
as it changes the probability with which buyer $2$ receives 1 or 0 items when they bid $b_2 < 1.6$.

The solution,
depicted in Fig.~\ref{fig:rearrange_b},
is to rearrange and relabel the \emph{bids of buyer $1$} (i.e., the columns of the table).
Observe that
each column of joint outcomes still occurs with the same probability as before ($0.1$ and $0.9$, respectively),
so
from the perspective of buyer $2$, nothing has changed, guaranteeing that they will act the same.
In fact, the entire
 distribution of outcomes is preserved.
But the new auction is \emph{monotone},
because for buyer $1$, the ``low-expected-allocation column'' now occurs on lower bids ($b_1 < 1.1$) compared to the ``high-allocation column'' ($b_1 \geq 1.1$).
(Buyer $2$'s allocation was already monotone; rearranging buyer $1$'s bids did not alter that.)

    \end{tcolorbox}    

\end{figure}

As the example shows,
to preserve feasibility,
we must \emph{jointly} rearrange the vector of allocations of all the buyers together,
so that each allocation that is realized in the new auction could also have been realized in the original auction, perhaps for a different bid vector.
However, at the same time,
we want to keep the incentives and outcome distributions of all buyers $j \neq i$ the same,
so as not to hurt the seller's payoff,
and also so as not to violate monotonicity and truthfulness already established for buyers $j < i$.
This requires delicate handling:
we appeal to measure-theoretic rearrangement tools to
map the space of buyer $i$'s bids
onto itself
in a way that results in an allocation that is nondecreasing in player $i$'s bid,
while simultaneously preserving the ex-interim distribution of outcomes for each other player.
As in the example, the point is to preserve the relevant outcome distributions while rearranging the bids on which outcomes are realized;
we call this transformation a \emph{measure-preserving rearrangement}.

Formally, we start with a truthful auction $\A^{(0)}$,
and construct a sequence of auctions
$\A^{(1)},\ldots,\A^{(n)}$
that are all truthful, and such that
in each $\A^{(i)}$,
buyers $1,\ldots,i$
have monotone interim allocations and Myerson payments,
and the final auction $\A^{(n)}$ is truthful, monotone and has Myerson payments.
We ensure 
that every buyer $j\neq i$ has the same interim outcome distribution in $\A^{(i-1)}$ and in $\A^{(i)}$,
and that buyer $i$ yields at least as much payoff in $\A^{(i)}$ as it does in $\A^{(i-1)}$.
Each $\A^{(i)}$ is constructed from $\A^{(i-1)}$ using a
process which we now describe.

Given an interim allocation function $y_i(v_i)=\expect{x}_i^{(i-1)}(v_i)$, we define the \emph{nondecreasing rearrangement} of $y_i$ to be a nondecreasing function $y_i^\uparrow:[\bottomV_i,\topV_i]\to[0,m]$ that is equimeasurable to $y_i$ when $v_i\sim F_i$; this nondecreasing function is guaranteed to exist~\cite{LiebLoss2001}. In other words, $y_i^\uparrow$ is a nondecreasing function that has the same distribution as $y_i$.
Our aim is to change the allocation and payment rules of buyer $i$ from $\A^{(i-1)}$, so that in $\A^{(i)}$ the interim allocation of buyer $i$ is $y_i^\uparrow$ instead of $y_i$, which is indeed nondecreasing. The way we do this is be ``relabeling'' the bids of buyer $i$: essentially, we change the bid $b_i$ of buyer $i$ in $\A^{(i)}$ to a different bid $b_i'$, and then feed $b_i'$ into the allocation and payment rules of $\A^{(i-1)}$.
Since we need to preserve the distribution of outcomes for all buyers $j\neq i$, we cannot just relabel the bids of buyer $i$ in an arbitrary way, but rather we need to find a relabeling that preserves the distribution of outcomes for all buyers. Hence, we need to make sure that the relabeled bids are distributed like truthful bids.
Indeed, using measure-theoretic techniques, we prove that there exists a \emph{randomized relabeling} of $v_i$ that essentially implements the nondecreasing rearrangement of $y_i$ on the interim allocation function of buyer $i$, while preserving the bid distribution of buyer $i$.

\begin{corollary*}[Informal, see Corollary~\ref{cor:random-relabeling} for the formal statement]
Let $\A$ be a truthful auction, and fix buyer $i$.
Let \(y_i^\uparrow\) denote the nondecreasing rearrangement of $\expect{x}_i(v_i)$, which is
buyer $i$'s interim allocation in $\A$.
Then there exists a map
$\sigma_i:[\bottomV_i,\topV_i]\times[0,1]\to[\bottomV_i,\topV_i]$
such that
\begin{enumerate}
    \item
    $\sigma_i(v_i,r)\sim F_i$
     (when $v_i\sim F_i$ $r\sim\Uni[0,1]$), and
    \item
    $\expect{x}_i(\sigma(v_i))=y_i^\uparrow(v_i).$
\end{enumerate}

\end{corollary*}

Here, we think of the second input $r \in [0,1]$ as pure randomness.
We remark that in general there does not exist a deterministic relabeling with the properties that we need;
however, for the sake of simplicity, in the rest of the proof sketch we use informal notation and assume that the relabeling is deterministic, so $\sigma_i$ is defined from $[\bottomV_i,\topV_i]$ onto itself.

Now that we have the map $\sigma_i$ that implements the nondecreasing rearrangement of buyer $i$'s interim allocation, we can define the new auction $\A^{(i)}$ by rearranging the allocation and payment rules of $\A^{(i-1)}$ according to $\sigma_i$:
\begin{align*}
&x_j^{(i)}(\vec{b},r)=x_j^{(i-1)}((\sigma_i(b_i),\vec{b}_{-i}),r)\qquad\text{for every buyer $j\in[n]$, and}\\
&p_j^{(i)}(\vec{b},r)=p_j^{(i-1)}((\sigma_i(b_i),\vec{b}_{-i}),r)\qquad\text{for every buyer $j\neq i$}.
\end{align*}
It remains to define the payments of buyer $i$ in $\A^{(i)}$, which we will do soon enough.
First, one can observe that
if all buyers bid truthfully in $\A^{(i)}$, in which case $\sigma_i(v_i)$ is distributed like $v_i$,
then
for every $j\neq i$,
the interim outcome of buyer $j$ in $\A^{(i)}$ is distributed exactly like their outcome in $\A^{(i-1)}$. Thus if we make sure buyer $i$ is truthful in $\A^{(i)}$, then the payoff of the seller from all buyers $j\neq i$ is equal in $\A^{(i)}$ and in $\A^{(i-1)}$.

Notice that we have not yet defined buyer $i$'s payment, $p_i^{(i)}(\bv)$, and this is for a good reason:
we must make sure that it will be a best response for buyer $i$ to bid truthfully in $\A^{(i)}$, so we cannot simply use the ``relabeled payments'' from $\A^{(i-1)}$ as we did with the allocation.
For this purpose, we change the payments of buyer $i$ so that their interim payments are equal to their interim Myerson payments.
This again must be done carefully, because we must ensure that the outcome for buyer $i$ does not violate their constraint
when bidding truthfully
truthfully in $\A^{(i)}$.

If we just use a deterministic payment rule that implements the Myerson payments, the buyer constraint may be violated, because although we are guaranteed that the interim allocation of buyer $i$ in $\A^{(i)}$ is nondecreasing, we have no pointwise guarantee on outcomes. That is, for some realizations of the auction, buyer $i$ may have negative utility if the payment is deterministically equal to the interim Myerson payment. To overcome this, in the full proof we design a pointwise payment rule that implements the interim Myerson payments, but guarantees that the utility of buyer $i$ is nonnegative for every realization of the auction when they bid truthfully.
Since buyer constraints cannot explicitly disallow truthful bidding (see Section~\ref{sec:model-intro}),
this guarantees that the constraint is satisfied.

Once we have interim Myerson payments, it is easy to show, by analysis similar to that of Myerson, that it is a best response for buyer $i$ to bid truthfully in $\A^{(i)}$.
However, it is not clear that this change does not reduce the seller's payoff.
We prove that in fact it does not:
we consider the revenue term and the welfare term separately,
and prove that for each term,
the contribution from buyer $i$ does not decrease in $\A^{(i)}$ compared to $\A^{(i-1)}$.

Let us start with the revenue term.
Instead of directly proving that modifying buyer $i$'s allocation and payment does not reduce
the expected payments of buyer $i$,
we
compare the revenue of all the \emph{single-buyer} auctions (including non-truthful ones, in the full proof)
in which buyer $i$'s allocation is equimeasurable with that of
buyer $i$'s allocation in $\A^{(i-1)}$  (essentially treating all buyers $j\neq i$ as part of the mechanism instead of as buyers).
This comparison includes auction $\A^{(i)}$, because our rearrangement is measure-preserving (specifically, of the allocation of buyer $i$).
We show that among all such auctions,
the auction where the allocation is nondecreasing and payments are Myerson payments (namely, $\A^{(i)}$) yields optimal revenue,
and conclude that in particular, $\A^{(i)}$'s
revenue
is no lower than $\A^{(i-1)}$.
The challenge in proving this is that the auctions we must compare can have non-monotone allocations and
can use all sorts of odd payment rules to incentivize the buyer to bid truthfully.
However, we are able to characterize these ``truthful payments'' by showing the following upper bound, which notably \emph{cannot be concluded} from the upper bound of Lemma~\ref{lem:rev-aligned-payment-bound}:

\begin{claim}[Informal, combination of Claim~\ref{clm:tau-bound} and~Claim~\ref{clm:nondecreasing-payment-threshold-int}]\label{clm:informal-tau-bound}
    Let $\A$ be a truthful auction with interim allocation and payment rules $\expect{x}_i$ and $\expect{p}_i$.
     Define $\mathcal{V}_t=\set{v_i\in[\bottomV_i,\topV_i]:\ \expect{x}_i(v_i)\geq t}$
    and $\tau(t)=\inf\mathcal{V}_t$.
    Then for every $v_i$,
    $\expect{p}_i(v_i)\leq \int_{0}^{\expect{x}_i(v_i)}\tau(t)dt$.
    In particular, if $\expect{x}_i$ is nondecreasing, then the upper bound coincides with the Myerson payment.
\end{claim}

Both auctions we wish to compare,
$\A^{(i-1)}$ and $ \A^{(i)}$, are subject to the upper bound of Claim~\ref{clm:informal-tau-bound}.
Moreover, in $\A^{(i)}$ the payments of buyer $i$ attain the upper bound, because they are Myerson payments and $\A^{(i)}$ has a monotone allocation rule for buyer $i$.
To show that $\A^{(i)}$ has no-lower payoff than $\A^{(i-1)}$,
we compare the RHS of the upper bound for the two auctions,
and successfully show that it is in fact larger for $\A^{(i)}$ than for $\A^{(i-1)}$.
This allows us to conclude that $\expect{p}_i^{(i-1)}(v_i) \leq \expect{p}^{(i)}_i(v_i)$.

It remains to compare the welfare term 
that buyer $i$ yields for the seller
in the two auctions.
We do this using the Hardy-Littlewood inequality~\cite{LiebLoss2001}, which essentially states that the integral of the product of two functions is maximized when the two functions are similarly ordered (i.e., both nondecreasing or both nonincreasing), assuming they are equimeasurable.
Indeed in our case, the expected welfare term is an integral of the product of $\payoffValue(\cdot)$ and $\expect{x}_i(\cdot)$, and both $\payoffValue(\cdot)$ and $\expect{x}_i^{(i)}(\cdot)$ are nondecreasing, and $\expect{x}_i^{(i)}(\cdot)$ and $\expect{x}_i^{(i-1)}(\cdot)$ are equimeasurable. Thus we can apply the Hardy-Littlewood inequality to conclude that the expected welfare term is no smaller in $\A^{(i)}$ than in $\A^{(i-1)}$. 
Together with the revenue
comparison above, this shows that buyer $i$'s contribution to the seller's payoff does not decrease.

To summarize, we have shown that $\A^{(i)}$ is truthful, has payoff at least as high  as $\A^{(i-1)}$ for \emph{all}  buyers, is monotone for buyer $i$, has Myerson payments for buyer $i$, and the outcomes of all buyers $j\neq i$ are unchanged. By iterating over all buyers, we obtain an auction $\A^{(n)}$ that is truthful, monotone, has Myerson payments, and has at least as high payoff as the original auction $\A^{(0)}$. This concludes the proof sketch.
\end{proof}

Theorem~\ref{thm:informal-truthification} essentially shows that although buyer constraints change the space of implementable allocation rules,
for revenue-aligned objectives,
when searching for an \emph{optimal} auction
we need only consider the same design space considered by Myerson:
truthful auctions with monotone allocation rules with Myerson payments.
From here, we prove the optimality of \revMaximizer
in a manner that is similar to Myerson's original proof.

\paragraph{Discussion.} 
Before moving on to consumer-aligned objectives,
let us briefly discuss why the results of the current section rely on the fact that the buyer constraints are \emph{monotone} and \emph{uncapped} (and thus, where new ideas may be required to extend our result to other families of constraints).
\emph{Monotonicity} is relatively straightforward:
we used it to derive a one-sided version of the envelope theorem,
relying on the fact that underbidding is always allowed.
The fact that the constraints are \emph{uncapped} is perhaps more fundamental:
although in general introducing buyer constraints can change the space of implementable auctions so that it becomes incomparable to the space of auctions that are implementable with unconstrained buyers,
with \emph{uncapped} constraints the design space \emph{only expands},
as any auction that is implementable with unconstrained buyers remains implementable with uncapped constraints.
In particular, Myerson-like auctions remain implementable,
as buyers can always afford to pay Myerson payments (an uncapped constraint does not prevent them from doing so).
We remark that our results extend to constraints that are not uncapped, provided that under the constraints, \revMaximizer remains truthful.

\subsection{Technical Overview: Consumer-Aligned Objectives}\label{sec:overview-cons-aligned}

In this section we  analyze auctions where the seller's payoff function
is \emph{consumer-aligned},
i.e.,
it has the form $\payoffValue(v)x-\payoffPay \cdot p$,
where $\payoffValue:\R_{\ge 0}\to\R_{\ge 0}$ is non-decreasing.%
\footnote{It is convenient to use positive $\payoffPay$ when working with consumer-aligned objectives 
and write the objective as $\payoffValue(v)x-\payoffPay p$, even though technically w.r.t.\ the definition from Section~\ref{sec:model} we should write $\payoffValue(v) x + \payoffPay p$ and have $\payoffPay < 0$.}

\paragraph{A few easy cases.}
There are a few natural scenarios where buyer-constrained auctions with consumer-aligned objectives are easy to solve.
For example, under the no-overbidding constraint, the seller can simply allocate all units to the highest bidder for free. Clearly, buyers bid truthfully, and the seller's payoff is maximized. Similarly, under the ex-post IR constraint, the seller can reach almost the same outcome, using the same allocation as the previous example, but charging a payment with probability $\epsilon$ that prevents overbidding (as we saw in an example in Section~\ref{sec:intro}). This way, they can achieve a payoff that is arbitrarily close to the optimal benchmark.

In the remainder of the section, we turn our attention to the stagewise-IR constraint, where the questions become a lot more interesting. As before, this technical overview omits many details; the full proofs for the stagewise-IR constraint appear in Section~\ref{sec:consumer}.

\subsubsection{Stagewise Individually Rational Buyers}

Stagewise individual rationality (stagewise-IR) concerns auctions that unfold over time.
We define such auctions formally in Section~\ref{sec:model},
but for purposes of this overview,
one may think of them as proceeding in the following (simplified) manner: first, each buyer places a bid;
then the auction mechanism determines an \emph{outcome},
which consists of a sequence of allocations $x_i^1,x_i^2,\ldots$ and payments $p_i^1,p_i^2,\ldots$ for each buyer $i$.
Such auctions arise, for example, in the context of \emph{rental games}~\cite{berzack2025dynamic}.
The \emph{cumulative utility} of buyer $i$ with valuation $v_i$ at point $t$ in time in the outcome is given by $\sum_{j = 1}^t (x_i^j \cdot v_i - p_i^j)$.

Buyers subject to the \emph{stagewise-IR} constraint cannot incur debt at any point in time, i.e.\ their cumulative utility at all times must be nonnegative with probability 1.
It was already shown in~\cite{berzack2025dynamic} that a designer can leverage this constraint to extract higher consumer-aligned payoff,
by using the first timestep of an auction to \emph{screen} for buyers with high valuations. We now extend the results of~\cite{berzack2025dynamic} to the \emph{multi-buyer} setting, where feasibility constraints become more of a challenge:
for example, in the rental game of~\cite{berzack2025dynamic}
there is only one item, and we must choose which buyer will receive the item on which day.
Allocating to a given buyer on a given day means that no other buyer can have the item that day.
This is not an issue that arises in the single-buyer setting of~\cite{berzack2025dynamic}.

We restrict attention to deterministic, DSIC, auctions.%
\footnote{Otherwise, the problem becomes trivial, and it is easy to find an auction that almost achieves the optimal benchmark.}
Thus, in this section by ``truthful'' we mean ``DSIC'', and by ``monotone'' we mean ``ex-post monotone''.
In terms of notation, because we consider only deterministic DSIC auctions,
the allocation $x_i(\bv)$ and the payment $p_i(\bv)$ of each buyer are now \emph{constants} (where before they were random variables).

\paragraph{Threshold auctions.}
Adopting the definition from \cite{berzack2025dynamic}, a \emph{monotone} auction with allocation rule
$\vec{x}$ is called a \emph{threshold auction}
if it consists of two stages,
a \emph{screening stage} and a \emph{main phase},
with the following prescribed behavior for each.
During the \emph{screening phase}, 
buyers place bids;
if the bid profile is $\vec{b}$,
the seller
offers each buyer $i$ exactly one unit at a price
of
\begin{equation*}
    p_i(\vec{b}) =
    \begin{cases}
         \inf\set{z\in[\bottomV_i,\topV_i] :
        x_i(z,\vec{b}_{-i})\ge x_i(\vec{b})},
        & \text{if  $\inf\set{z\in[\bottomV_i,\topV_i] :
        x_i(z,\vec{b}_{-i})\ge x_i(\vec{b})} > \bottomV_i$}
        \\
        0 & \text{otherwise.}
    \end{cases}
\end{equation*}
These payments are called \emph{threshold payments} (w.r.t\ $\vec{x}$).
During the \emph{main phase}, each buyer that purchased a unit during the screening phase is given the remainder of their total allocation (i.e., $x_i(\vec{b}) - 1$ units), and no payments are charged. 
Buyers that did not purchase a unit during the screening phase are not allocated units or charged payments during the main phase.

The threshold auction is designed in a way that the payment in the screening phase serves to ``screen out'' buyers with low valuations: since buyers are subject to stagewise-IR, they are unable to overbid, because this would result in negative quasilinear utility in the screening phase and violate their constraint; thus, the seller can choose to allocate a large number of units only to buyers with ``high enough'' valuations, without requiring high payments.
The threshold payments themselves are defined this way so as to guarantee truthfulness during the screening phase.

\paragraph{Characterizing optimal auctions.}
As we saw in Section~\ref{sec:overview-rev-aligned}, a key step in finding an optimal auction is often proving that every \emph{optimal} auction is \emph{monotone}. Indeed we're able to show this  true also in this setting:

\begin{theorem}\label{thm:cons-surplus-mono-wlog}
Any optimal auction for consumer surplus with stagewise-IR buyers  is guaranteed to be monotone a.e.
\end{theorem}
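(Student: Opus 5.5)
We argue by contradiction, using an exchange argument in the spirit of the measure-preserving rearrangements of Section~\ref{sec:overview-rev-aligned}, now in the deterministic DSIC setting. Fix an optimal deterministic DSIC auction $\A$ for consumer surplus with stagewise-IR buyers, and suppose it is not monotone on a set of positive measure. Then there is a buyer $i$ and a set of $\bv_{-i}$ of positive measure such that $z\mapsto x_i(z,\bv_{-i})$ is not nondecreasing in an essential sense. The plan is to replace $\A$ on these slices by a monotone auction with strictly higher consumer surplus, which contradicts optimality.

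\textbf{Step 1: lower bound on payments.} First I would establish what stagewise-IR forces on an arbitrary DSIC auction. Since stagewise-IR is monotone, underbidding is always available, so the one-sided envelope argument of Lemma~\ref{lem:rev-aligned-payment-bound} applies pointwise in $\bv_{-i}$. The relevant direction for consumer surplus is the opposite one, namely how \emph{little} the seller may charge. Overbidding from $v_i$ to $z>v_i$ is prevented only if some prefix of the outcome at $z$ gives $v_i$ negative cumulative utility. This means the payment schedule at $z$ must at some stage exceed $v_i$ times the units received so far.

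From this I would derive a pointwise lower bound for each $\bv_{-i}$. For every $z$ at which $x_i(z,\bv_{-i})>\sup_{w<z}x_i(w,\bv_{-i})$, i.e.\ the allocation strictly exceeds what lower bids obtain, the total payment at $z$ must be at least roughly $\inf\{w : x_i(w,\bv_{-i})\ge x_i(z,\bv_{-i})\}$. Otherwise a lower type would overbid without ever going into debt. The threshold auction attains exactly this bound with a single screening payment. Hence, for any allocation rule, the consumer surplus is at most $\sum_i \E[v_i x_i(\bv) - \text{threshold payment}]$, with equality for the threshold implementation.

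\textbf{Step 2: monotone rearrangement on each slice.} For each fixed $\bv_{-i}$, I would rearrange the slice $x_i(\cdot,\bv_{-i})$ into its nondecreasing rearrangement with respect to $F_i$. Unlike Section~\ref{sec:overview-rev-aligned}, I would rearrange jointly for \emph{all} buyers' allocation vectors along buyer $i$'s bid coordinate, so that feasibility is preserved. Consumer surplus decomposes as welfare minus payments.

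\begin{itemize}
\item \textbf{Welfare.} Because $v_i$ is increasing, Hardy--Littlewood shows that buyer $i$'s welfare term weakly increases.
\item \textbf{Payments.} Buyer $i$'s threshold payment under the rearranged rule must be compared with the lower bound of Step~1 under the original rule. Non-monotonicity means that high allocations sit at low bids, so the threshold for reaching a given level is lower there. The payment side therefore needs care, and I would try a level-set (layer-cake) comparison analogous to Claim~\ref{clm:informal-tau-bound}, with the inequality reversed.
\end{itemize}

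\textbf{Step 3: main obstacle.} The hard part is the payments and the effect on the other buyers $j\neq i$. Rearranging buyer $i$'s bids changes, for each $j$, \emph{which} $v_i$ co-occurs with which outcome. In the deterministic DSIC setting, $j$'s slices $x_j(\cdot,\bv_{-j})$ may then become non-monotone or receive different thresholds.

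My first attempt would avoid rearranging the other buyers. Instead, I would use that the objective is consumer surplus: total welfare does not depend on who pays, and ex-post the surplus equals $\sum_j(v_j x_j - p_j)$. I would then show directly that swapping allocations between a low bid and a high bid of buyer $i$, within a slice, weakly raises welfare and weakly lowers the \emph{sum} of minimal stagewise-IR payments. Allocations to others would be handled by letting them receive the complementary units, with thresholds computed from the new rule.

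Strictness would come from the non-monotone set having positive measure. Moving a strictly larger allocation to a strictly higher type yields a strict welfare gain by Hardy--Littlewood, unless the payment change offsets it, and ruling this out is where I expect most of the work. Finally, I would use concavity of the cost, as in the general theorem, only if needed to handle the joint reallocation.
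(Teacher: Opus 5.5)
There is a genuine gap, and it sits in the two places you leave open.

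\textbf{The Step~1 payment bound is the wrong one.} In a deterministic DSIC auction with stagewise-IR buyers, the binding constraint is $p_i(\bv)\ge\sup\{z : x_i(z,\bv_{-i})<x_i(\bv)\}$ (Claim~\ref{clm:payment-at-least-thresh}). The reason is that every type $z>p_i(\bv)$ with a strictly smaller allocation can accept the outcome at $\bv$, and it gains at least one unit by doing so. For a non-monotone slice, these include types \emph{above} $v_i$ that would underbid.

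Your bound $\inf\{w : x_i(w,\bv_{-i})\ge x_i(\bv)\}$ accounts only for lower types. It has two problems:
\begin{itemize}
\item It is not attained by any DSIC implementation of a non-monotone rule: charging it lets high types with small allocations underbid profitably.
\item It points the wrong way for your comparison. After rearrangement, $\inf\{w : x^\uparrow(w)\ge t\}\ge\inf\{w : x(w)\ge t\}$, so the rearranged threshold payments may exceed your lower bound.
\end{itemize}
With the $\sup$ form the comparison does work, since $\sup\{z : x^\uparrow(z)<t\}\le\sup\{z : x(z)<t\}$. This is what the paper uses for the single-winner theorem.

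\textbf{More seriously, no rearrangement is safe in this setting.}
\begin{itemize}
\item Rearranging only buyer $i$'s slice can raise $x_i$ at profiles where other buyers already hold units, which leaves $\Omega$. This is exactly the toy example of Fig.~\ref{fig:rearrange}.
\item Rearranging all buyers jointly along $v_i$, or ``giving others the complementary units'', makes each new slice $x_j(\cdot,\bv_{-j})$ a $v_j$-dependent splice of different original slices. Buyer $j$'s ex-post monotonicity, DSIC and threshold payments are then no longer controlled.
\end{itemize}
Measure preservation protects only interim quantities. That is why the technique of Section~\ref{sec:revenue} yields BIC but not DSIC, and why the single-winner proof escapes only via upward closure of the winning set.

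\textbf{The missing idea} is a repair that, for each fixed $\bv_{-i}$, weakly \emph{decreases} buyer $i$'s allocation at every $v_i$ and leaves all other buyers' outcomes untouched. Then feasibility follows from downward-closedness of $\Omega$, and the other buyers' incentives are trivially unaffected. DSIC, IR and the strict gain become purely single-buyer questions.

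The paper gets such a repair from the single-buyer SWAC argument of Lemma~5.7 in~\cite{berzack2025dynamic}. It bases the new filter at $v^*$ on some $w'\in(w,v^*)$ instead of $w$, so that no type gains units (Proposition~\ref{prop:swac-consumer-surplus}). That lemma is specific to consumer surplus, which is why the theorem is too. Concavity of the cost is not needed, since fewer units only lower the cost.
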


We also show that this is true for any consumer-aligned objective, as long as only one buyer in total can be allocated a positive amount. This is a natural setting, for example, in the rental game where multiple customers arrive each day, but only one customer can rent the asset at a time.

\begin{theorem}\label{thm:cons-aligned-mono-wlog}
Any auction for consumer-aligned payoff with stagewise-IR buyers that is optimal among auctions that only allocate units to a single buyer is guaranteed to be monotone a.e.
\end{theorem}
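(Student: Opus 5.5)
We argue by contradiction with a rearrangement step, mirroring the measure-preserving rearrangement of Theorem~\ref{thm:informal-truthification}, but now \emph{deterministic} and \emph{ex-post}. Let $\A$ be an optimal single-winner deterministic DSIC auction for a consumer-aligned payoff $\payoffValue(v)x-\payoffPay p$ (with $\payoffPay>0$ in the convention of this section) and stagewise-IR buyers. Suppose $\A$ is not monotone on a set of positive measure. Then for some buyer $i$ and a positive-measure set of $\bv_{-i}$, the map $v_i\mapsto x_i(v_i,\bv_{-i})$ is non-monotone on a set of positive measure.

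Fix such a $\bv_{-i}$. Single-winnerness gives a clean structure. For every $v_i$, either buyer $i$ wins some amount $k=x_i(\bv)$ and every other buyer gets $0$, or buyer $i$ gets $0$ and a single other buyer $j(v_i)$ gets everything. We replace the column $v_i\mapsto(\text{outcome vector})$ by its nondecreasing rearrangement in $x_i$, with respect to $F_i$. Because outcomes are discrete (integers in $\{0,\dots,m\}$), this rearrangement is deterministic: sort the level sets of $x_i$ by allocation and place them as consecutive intervals of $[\bottomV_i,\topV_i]$ of equal $F_i$-mass. Ties within the level set $x_i=0$ (which other buyer wins, and how much) are transported along with it in a measure-preserving way.

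The key steps, in order, are as follows.
\begin{enumerate}
\item \emph{Other buyers are unharmed.} Each other buyer $j$'s ex-post allocation as a function of $v_j$ may change. However, we must check that $x_j$ stays monotone in $v_j$ and DSIC under stagewise-IR, and that $j$'s contribution to payoff is unchanged. Because the rearrangement is done column by column in $\bv_{-i}$, it alters $j$'s allocation as a function of $v_j$. This is the main obstacle. What I would try first is to exploit single-winnerness: the event ``buyer $j$ wins'' is contained in ``$x_i=0$''. Sorting buyer $i$'s zero-allocation region to the bottom of $[\bottomV_i,\topV_i]$ in every column is a uniform rule across $\bv_{-i}$. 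Hence, if $j$'s winning region was monotone in $v_j$ (up-closed), then after sorting it should remain so, since the set of $v_i$ on which $x_i=0$ only grows in measure as $v_j$ rises. Optimality then lets us also take the threshold payments of Theorem-style threshold auctions for all buyers.
\item \emph{Threshold payments.} Since the rearranged rule is monotone for every buyer, we implement it as a threshold auction. Its screening payment is a one-unit threshold price, which is DSIC under stagewise-IR, as explained above.
\item \emph{Welfare term.} For each column, $\E[\payoffValue(v_i)x_i]$ does not decrease by Hardy--Littlewood, since $\payoffValue$ is nondecreasing and the allocations are equimeasurable. The cost term $c(\sum x)$ is preserved because the multiset of outcome vectors is preserved.
\item \emph{Payment term.} We show that the expected threshold payment of buyer $i$ in the rearranged auction is at most its payment in $\A$. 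The payment is at most one unit times the infimum bid achieving the allocation. In the original non-monotone column, DSIC together with stagewise-IR forces a screening payment at least as large as that threshold, by a one-sided envelope argument analogous to Claim~\ref{clm:informal-tau-bound}, now used as a \emph{lower} bound, since deviations downward are always available. Pushing high allocations to higher $v_i$ can only weakly raise thresholds for fewer mass, and a direct computation should show that the expected payment does not increase.
\end{enumerate}

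Finally, we need strictness. On a positive-measure set where $\A$ was non-monotone, either Hardy--Littlewood is strict (when $\payoffValue$ is strictly increasing there), or the payment strictly drops. This gives a strictly better auction, contradicting optimality. I expect step~1 (joint consistency across columns) and this strictness claim to require the most care.
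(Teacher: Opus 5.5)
Your Step~1 is a genuine gap, and it is the crux of the proof. You flag the effect on the other buyers as the main obstacle, but your resolution does two unjustified things. It treats up-closedness of $j$'s winning region as a hypothesis. It also claims that the measure of $\set{v_i: x_i=0}$ grows in $v_j$, which is false in general: with values in $[0,1]$, let $j$ never win and let $i$ win iff $v_i\ge 1-v_j$.

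The missing idea is that up-closedness is a \emph{consequence} of DSIC, IR and the monotone stagewise-IR constraint. By Claim~\ref{clm:start-win-always-win}, if $x_i(v_i,\bv_{-i})>0$ then $x_i(v_i',\bv_{-i})>0$ for every $v_i'>v_i$. So in every column, buyer $i$'s losing set is already an initial interval $[\bottomV_i,\alpha_i(\bv_{-i}))$, and the nondecreasing rearrangement can be chosen to fix it. The rearrangement then only permutes values of $x_i$ inside $i$'s winning set, where every other buyer gets $0$ units and pays $0$. Hence the identity of the winner at every profile is unchanged.

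This removes your cross-column consistency problem. It is also exactly what lets the paper avoid joint transport. The paper rearranges \emph{every} buyer's allocation in its own coordinate, $\xmono_j(\bv)=s_j^{\bv_{-j}}(v_j)$. Feasibility holds because two winners in $\tilde{\A}$ would be two winners in $\A$. The cost is handled by $c(\sum_j x_j)=\sum_j c(x_j)$ together with per-buyer equimeasurability. Given this fact, your one-buyer joint-column version would also work: use the identity transport on the losing set and keep the other buyers' original outcomes and payments.

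Two further steps fail as written. In Step~2 you assert that the rearranged rule is monotone for every buyer and charge threshold payments to all of them. But you only rearranged buyer $i$; other buyers of a non-monotone $\A$ need not be monotone, and threshold payments do not make a non-monotone buyer DSIC.

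In Step~4 the lower bound you need is Claim~\ref{clm:payment-at-least-thresh}: $p_i(\bv)\ge\sup\set{z:x_i(z,\bv_{-i})<x_i(\bv)}$. Downward deviations alone do not give it, since they only control types $z>v_i$. For types $z<v_i$, one uses that with integer allocations the filter $f_i(\bv)$ is at most the total payment $p_i(\bv)$. Hence every type $z>p_i(\bv)$ can safely mimic $v_i$; this is where stagewise-IR, rather than mere monotonicity of the constraint, is used.

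The comparison must also go through this sup-threshold. For every $v_i'$ with $x_i(v_i',\bv_{-i})=t$, we have $\pmono_i(\bv)\le\sup\set{z:\xmono_i(z,\bv_{-i})<t}\le\sup\set{z:x_i(z,\bv_{-i})<t}\le p_i(v_i',\bv_{-i})$. One then averages over the equimeasurable level sets. By contrast, the inf-threshold $\inf\set{z:x_i(z,\bv_{-i})\ge t}$ is what the rearrangement ``raises''. It moves in the wrong direction and cannot yield a payment decrease.
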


The proof of Theorem~\ref{thm:cons-aligned-mono-wlog} applies a technique similar to the one used in Section~\ref{sec:overview-rev-aligned} to rearrange the allocation rule of a non-monotone auction into a monotone one. The main difference is that here we need the monotone auction that we construct to be DSIC and not only BIC, and this requires a more careful rearrangement and payment design. 

\begin{proof}[Proof sketch for Theorem~\ref{thm:cons-aligned-mono-wlog}]
    Suppose for the sake of contradiction that $\A$
    is an optimal deterministic DSIC non-monotone auction with stagewise-IR buyers,
    with a single-buyer feasibility set.
    We would like to construct an auction $\A'$ that is monotone a.e.\
    and has higher payoff than $\A$, thereby obtaining a contradiction.
    The challenge, as always,
    is that this requires changing the allocation rule of $\A$
    to make it monotone, and even though the auction
    is \emph{single winner},
    we still cannot rearrange the allocation of one buyer in isolation from the rest.
    For example, if in a single-item auction buyer $1$ gets the item when $v_1<2$ and does not get it when $v_1\geq 2$,
    we cannot just rearrange buyer $1$'s allocation rule independently of buyer $2$, because buyer $2$ might get the item whenever buyer 1 does not get it (i.e., when $v_1\geq 2$).
    We also are not able to use the distribution-preserving joint rearrangement of outcomes from Section~\ref{sec:overview-rev-aligned},
    because the resulting auction is only guaranteed to be BIC,
    and here we want to construct a DSIC auction.

    \paragraph{Upwards-closure of auctions.}
    Fortunately, we are able to show that the auction from our example above is not implementable in the stagewise-IR setting:
    The allocation function of any IR auction
    is \emph{upwards-closed}
    in the sense that
    once a buyer ``starts winning'', they ``stay winning'' ---
    for every buyer $i$,
    valuations $v_i < v_i'$ for buyer $i$
    and valuations for the other buyers $\bv_{-i}$,
    if $x_i(v_i, \bv_{-i}) > 0$
    then $x_i(v_i', \bv_{-i}) > 0$ as well
    (Claim~\ref{clm:start-win-always-win}).
    This holds not only for single-winner auctions
    but for any auction with stagewise-IR buyers.
    Unfortunately,
    it still does not mean that the auction is monotone,
    because we could have $x_i(v_i, \bv_{-i}) > x_i(v_i', \bv_{-i})$
    even when both are positive. 
    Of course, we cannot simply resolve this violation of monotonicity by switching the two allocations $x_i(v_i, \bv_{-i})$, $ x_i(v_i', \bv_{-i})$,
    because this (a) may create \emph{other} violations of monotonicity (w.r.t.\ other valuations of buyer $i$); and (b) more importantly, risks changing buyer $i$'s incentives and reducing the seller's payoff.
    However, since the auction is single-winner,
    the upwards-closure property allows us to essentially
    isolate buyer $i$ from the other buyers
    and rearrange the allocations within buyer $i$'s ``winning set of valuations'', safe in the knowledge that when buyer $i$ wins, all other buyers receive an allocation of $0$.
    We do this in a careful, measure-preserving way,
    which then allows us to introduce payments that keep buyer $i$ truthful and \emph{improve} the seller's payoff (if buyer $i$ was non-monotone).

    \paragraph{Auction modification.}
    We design a new monotone auction $\A'$ based on $\A$, by modifying each buyer $i$'s ex-post allocation $x_i(\cdot,\bv_{-i}):[\bottomV_i,\topV_i]\to\set{0,1,\dots,m}$ 
    for every valuation $\bv_{-i}$ so that it becomes monotone.
    We also change the payments to make the new auction truthful.
    We later show that $\A'$ has no worse (and sometimes better) payoff than $\A$.

    Fix a buyer $i$ and valuations $\bv_{-i}$. We set $x'_i(\cdot,\bv_{-i})$ to be the \emph{nondecreasing rearrangement of $x_i(\cdot,\bv_{-i})$ with respect to $F_i$},
    which always exists 
    and has the following properties (see~\cite{LiebLoss2001}):
    \begin{enumerate}
        \item $x'_i(\cdot,\bv_{-i})$ is monotone nondecreasing, and
        \item $\Pr_{v_i}\sqbr{x'_i(v_i,\bv_{-i})=t}=\Pr_{v_i}\sqbr{x_i(v_i,\bv_{-i})=t}$ for every $t\in\set{0,1,\dots,m}$.
    \end{enumerate}
Finally, we define 
$\A'$ to be a threshold auction with the new allocation rule $\vec{x}'$ defined above.
Let $\vec{p}'$ be the payment rule of $\A'$ (see the definition of threshold auctions above).

\paragraph{Payoff improvement.}
It remains to show that $\A'$ indeed achieves higher payoff than $\A$.
Fix a buyer $i$ and valuations $\bv_{-i}$ for the other buyers;
we show that conditioned on $\bv_{-i}$, 
the expected payoff from payer $i$ is no less in $\A'$ compared to $\A$,
and strictly increases if buyer $i$'s allocation was non-monotone given $\bv_{-i}$.
We handle each term in the payoff separately:

\textbf{The welfare term:}
    by the Hardy-Littlewood inequality~\cite{LiebLoss2001},
    because $\payoffValue(\cdot)$ is nondecreasing and $x_i'(\cdot,\bv_{-i})$ is the nondecreasing rearrangement of $x_i(\cdot,\bv_{-i})$, we have
    $\mathbb E_{v_i}[\payoffValue(v_i)\tilde x_i(\bv)\mid \bv_{-i}] \ge \mathbb E_{v_i}[\payoffValue(v_i)x_i(\bv)\mid \bv_{-i}]$,
    with strict inequality whenever $x_i(\cdot,\bv_{-i})$ is non-monotone.
    
\textbf{The payment term:}
    we show that for each allocation size $t$,
    conditioned on buyer $i$ receiving exactly $t$ units,
    the expected payment of buyer $i$
    under $\A'$
is no greater than in $\A$:
    \begin{equation}
       \E_{u_i}[ p_i(u_i, \bv_{-i}) \mid \bv_{-i},\ x_i(u_i,\bv_{-i}) = t]
       \geq
        \E_{v_i}[ p_i'(v_i, \bv_{-i}) \mid \bv_{-i},\ x_i'(v_i,\bv_{-i}) = t]
        .
        \label{eq:slice_t}
    \end{equation}
    Because
    $\Pr_{v_i}[ x_i(v_i, \bv_{-i}) = t ] = \Pr_{u_i}[ x_i'(v_i, \bv_{-i}) = t ]$
    for each $t$,
    and because the payment is \emph{deducted}
    from the seller's payoff (multiplied by $\payoffPay$),
    we can take the total expectation on both sides
    and obtain that the payoff term in $\A'$ is at least as large as $\A$.

    In turn, to show~\eqref{eq:slice_t},
    we show that pointwise, for each $u_i, v_i$
    such that
    $x_i(u_i, \bv_{-i}) = t$ and $x_i'(v_i, \bv_{-i}) = t$,
    we have
    $p_i(u_i, \bv_{-i}) \geq p_i'(v_i, \bv_{-i})$. 

    By definition of a threshold auction,
    in the new auction $\A'$ buyer $i$ 
    has a threshold payment of
    $p_i'(v_i, \bv_{-i}) \leq \inf \set{ z : x_i'(z,\bv_{-i})\geq x_i'(v_i, \bv_{-i})}$.
    Because
     $x'(\cdot, \bv_{-i})$
    is nondecreasing and $x_i'(v_i, \bv_{-i}) = t$, we can equivalently write
    \begin{align}\label{eq:cons-sketch-1}
        p_i'(v_i, \bv_{-i}) \leq \sup\set{z:x_i'(z,\bv_{-i})< x_i'(v_i, \bv_{-i})}=\sup\set{z:x_i'(z,\bv_{-i})< t}.   
    \end{align}
    Next, to obtain a lower bound on the payment $p_i( u_i, \bv_{-i})$ of buyer $i$ in the original auction $\A$,
    we prove that in \emph{any} truthful auction (not just in monotone ones, as $\A$ is non-monotone):
    \begin{align}\label{eq:cons-sketch-2}
        p_i(u_i,\bv_{-i})\geq\sup\set{z: x_i(z,\bv_{-i})< x_i(u_i, \bv_{-i})}=\sup\set{z: x_i(z,\bv_{-i})< t}.
    \end{align}
    If this were not true, then a higher-type buyer would be incentivized to mimic a low-type buyer (see Claim~\ref{clm:payment-at-least-thresh}).

    Finally, observe that since $x_i'(\cdot,\bv_{-i})$ is nondecreasing and equimeasurable with $x_i(\cdot,\bv_{-i})$ (Definition~\ref{def:nondecreasing-rearrangement}), we have
    $\sup\set{z:x_i'(z,\bv_{-i})< t}\leq\sup\set{z:x_i(z,\bv_{-i})< t}$.
    Combining this with~\eqref{eq:cons-sketch-1} with~\eqref{eq:cons-sketch-2} yields the desired bound.

    \textbf{The seller cost term:} in the single-winner case,
the cost term can be written as $c(\sum_{i = 1}^n x_i) = \sum_{i = 1}^n c(x_i)$, 
because at most one buyer receives a non-zero allocation
and we have $c(0) = 0$.
This allows us to consider the contribution of buyer $i$ separately from the other buyers (by linearity of expectation).
By equimeasurability, the expected cost $\E[ c(x_i) |\ \vec{v}_{-i} ]$ of the allocation to player $i$ is unchanged between $\A$ and $\A'$.

Putting all three parts together (higher welfare, no-lower payments, same  cost),
taking the expectation over $\bv_{-i}$
and summing over all buyers 
yields $\E[\A'] > \E[A]$, a contradiction to the optimality of $\A$.
\end{proof}

\paragraph{Finding an optimal auction.}
After establishing conditions under which the optimal auction is monotone almost everywhere, we would ideally like to proceed in a way that closely mirrors the structure of Myerson's proof,
although this turns out not to be possible in general, and only in some cases. For these results we assume that $\bottomV_i=0$ for every buyer $i$. The monotonicity results in Theorems~\ref{thm:cons-surplus-mono-wlog} and \ref{thm:cons-aligned-mono-wlog} do not require this assumption.

We adopt a generalization from~\cite{berzack2025dynamic} of virtual values designed specifically for threshold auctions, and show that the expected payoff of any auction that is optimal among monotone auctions is upper bounded by the expected \emph{ironed virtual welfare} of the allocation.
 This naturally suggests that in settings where optimal auctions are monotone, an ironed virtual welfare maximizer with threshold payments should attain payoff equal to this bound, which would prove its optimality.
Indeed, we define a threshold auction with an allocation that maximizes the expected ironed virtual welfare, which we call \consMaximizer (Definition~\ref{def:cons-aligned-ironed-maximizer}). 
It is not hard to show that the \consMaximizer is DSIC and IR, and monotone.
Thus, our goal now is to prove that the \consMaximizer is also \emph{optimal}.
Unfortunately, \consMaximizer does not always achieve payoff equal to the virtual welfare bound, which fails our proof strategy. 

Fortunately, there \emph{are} cases where \consMaximizer achieves payoff equal to the bound, proving its optimality. This happens when the cost function is \emph{concave}, and it turns out that it is optimal to allocate either no units or all $m$ units to a single buyer. 

\begin{theorem}\label{thm:cons-surplus-optimal-auc}
    The \consMaximizer is optimal for consumer surplus maximization with stagewise-IR buyers, assuming the cost function is concave.
\end{theorem}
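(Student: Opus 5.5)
We prove optimality by sandwiching: an upper bound on the payoff of every optimal auction, and a matching lower bound achieved by \consMaximizer.

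\textbf{Reduction to monotone auctions.} By Theorem~\ref{thm:cons-surplus-mono-wlog}, every optimal deterministic DSIC auction for consumer surplus with stagewise-IR buyers is monotone a.e. It therefore suffices to compare \consMaximizer with monotone auctions. Moreover, we may assume such an auction is a threshold auction. In a monotone auction, Claim~\ref{clm:payment-at-least-thresh} shows that each buyer pays at least the threshold payment for its allocation. Replacing payments by threshold payments keeps the auction DSIC and stagewise-IR by the threshold-auction construction, and can only increase consumer surplus, since payments enter with coefficient $-\payoffPay<0$.

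\textbf{Upper bound via ironed virtual welfare.} For a monotone threshold auction, the expected payoff from buyer $i$ is
\[
\E[v_i x_i(\bv) - p_i(\bv)] = \E[v_i x_i(\bv)] - \E[\text{threshold price}\cdot \mathds{1}\{x_i(\bv)\ge 1\}].
\]
Here the payment depends only on the screening-entry threshold, not on $x_i$ beyond the first unit. Applying Myerson's integration by parts to the indicator $\mathds{1}\{x_i\ge1\}$, using $\bottomV_i=0$, yields the threshold-virtual-value form adopted from~\cite{berzack2025dynamic}: the payoff equals $\E[\sum_i \varphi_i(v_i; x_i) - c(\sum_i x_i)]$, where $\varphi_i$ combines $v_i x_i$ with a virtual-value correction on the first unit only. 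Ironing on the regions where the relevant monotone quantity is constant then gives
\[
\text{payoff} \le \E\Big[\sum_i \bar\varphi_i(v_i,x_i) - c\big(\textstyle\sum_i x_i\big)\Big],
\]
exactly as in Myerson's argument. This is the virtual welfare bound quoted in the overview. The right-hand side is at most the pointwise maximum over feasible allocations, which \consMaximizer attains by definition.

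\textbf{Tightness under concave cost (the main obstacle).} We must show that \consMaximizer's actual payoff equals its ironed virtual welfare. Two things can break equality.

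The first is the ironing step. Equality requires the allocation to be constant on each ironed interval, which holds because the maximizer depends on $v_i$ only through $\bar\varphi_i$, and that is constant there.

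The second is the mismatch between the virtual-value formula and the true threshold payment when $x_i\notin\{0,m\}$. Intermediate allocations are where the bound fails in general. Here we use concavity of $c$ together with the structure of consumer surplus, where $\payoffValue(v)=v$ and $\payoffPay=1$. Once a buyer pays the screening price, its ironed virtual welfare is linear in the number of units beyond the first, with slope $v_i>0$ (or the ironed analogue). The objective $\sum_i \bar\varphi_i - c(\sum_i x_i)$ is then convex in the allocation along rays: a linear gain minus a concave cost. So the pointwise maximum lies at an extreme point, giving either no units or all $m$ units to the single buyer with the largest $\bar\varphi_i(v_i,m)$. Splitting units across buyers is dominated for the same reason, since each extra paying buyer adds a screening-price correction that is sunk only once.

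With all-or-nothing allocations, $x_i\in\{0,m\}$. The threshold price is then exactly the Myerson-type threshold for the event $\{x_i=m\}$, so the virtual-welfare identity holds with equality. Hence the payoff of \consMaximizer equals the upper bound, and \consMaximizer is optimal.

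I expect the hardest step to be this extreme-point argument, specifically handling ties and ironed intervals so that the maximizer can be chosen both all-or-nothing and monotone simultaneously. I would first try a fixed lexicographic tie-breaking rule and check that it preserves monotonicity.
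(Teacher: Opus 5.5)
There is a genuine gap in your upper-bound step, and it carries into the tightness step. In a threshold auction the screening price is $\inf\{z : x_i(z,\bv_{-i})\ge x_i(\bv)\}$. That is the location of the \emph{last} jump of $x_i(\cdot,\bv_{-i})$ up to $v_i$, not the entry threshold $\inf\{z : x_i(z,\bv_{-i})\ge 1\}$. If only the entry threshold were charged, a buyer on the one-unit step could bid onto the $m$-unit step at the same screening price. So $\E[v_ix_i-\varphi_i(v_i)\mathds{1}\{x_i\ge 1\}]$ is not an identity. It is only an upper bound on buyer $i$'s contribution, and it is too loose to be matched.

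Because your correction is a fixed charge on the first unit, the map $k\mapsto kv_i-\varphi_i(v_i)\mathds{1}\{k\ge 1\}-c(k)$ is in general convex only on $\{1,\dots,m\}$, so its maximizer can be $k=1$. Moreover, under consumer surplus $-\varphi_i(v_i)=\frac{1-F_i(v_i)}{f_i(v_i)}-v_i$ is positive at low values. There the correction is a bonus, so splitting units among low-value buyers is rewarded rather than dominated.

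A concrete instance: one buyer, $v\sim\Uni[0,1]$, $m=2$, $c(k)=0.4k$.
\begin{itemize}
\item Your objective equals $0.6-v$ for one unit and $0.2$ for two units, so its pointwise maximum has expectation $0.28$.
\item The optimal payoff is $0.2$, achieved by \consMaximizer. Here $\varphi^m_{i,\Pi}\equiv 1/2$, so it always allocates both units at price $0$.
\item The auction your maximizer suggests (one unit below $0.4$, two above) actually earns only $0.04$, because buyers above $0.4$ pay $0.4$.
\end{itemize}
So the sandwich cannot close. Your ironing step fails for a related reason: your objective also depends on $v_i$ through the unironed term $v_ix_i$, so its maximizer need not be constant on ironed intervals.

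The missing idea is the paper's $1/m$-scaled, linear virtual value. By Claim~\ref{clm:payment-at-least-thresh}, $p_i\ge z_\ell$, the last jump location. The Myerson payment is $\sum_j z_j\Delta_j\le m z_\ell$, so $p_i\ge p_i^{My}/m$. Lemma~\ref{lem:cons-aligned-reward-bound} then says buyer $i$ contributes at most $\E[\varphi^m_{i,\Pi}(v_i)x_i\mid\bv_{-i}]$, where $\varphi^m_{i,\Pi}=\payoffValue-\frac{\payoffPay}{m}\varphi_i$. Equality holds exactly when $x_i(\cdot,\bv_{-i})$ has a single jump from $0$ to $m$.

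This bound is linear in $x_i$ with no fixed charge. With $\Phi=\max_i\irn^m_{i,\Pi}(v_i)$ and $q=\sum_i y_i$, one gets $\sum_i\irn^m_{i,\Pi}(v_i)y_i-c(q)\le q\Phi-c(q)$. The right side is convex on all of $\{0,\dots,m\}$ because $c$ is concave with $c(0)=0$. Hence the maximum is at $q\in\{0,m\}$ with all units to one buyer. That is exactly what \consMaximizer selects, its allocation depends on $v_i$ only through $\irn^m_{i,\Pi}$, and so the bound is tight for it.

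On $x_i\in\{0,m\}$ your per-buyer term does coincide with $\varphi^m_{i,\Pi}(v_i)x_i$. The bound applied to competing auctions, however, must be in the linear form to control their intermediate allocations.
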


\begin{theorem}\label{thm:cons-aligned-optimal-auc}
    The \consMaximizer is optimal for any consumer-aligned payoff with stagewise-IR buyers in a single-winner environment, assuming the cost function is concave.
\end{theorem}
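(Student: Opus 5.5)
We will prove the single-winner case the same way the paper proposes to prove Theorem~\ref{thm:cons-surplus-optimal-auc}: first restrict to monotone auctions, then bound their payoff by ironed virtual welfare, and finally show that \consMaximizer attains that bound. Step 1 is to invoke Theorem~\ref{thm:cons-aligned-mono-wlog}. It lets us restrict attention to deterministic DSIC stagewise-IR single-winner auctions that are monotone a.e. Among monotone auctions, we will also argue that switching to a threshold auction with the same allocation never hurts. The pointwise lower bound \eqref{eq:cons-sketch-2} shows that any truthful auction pays at least the threshold payment at every profile, and payments enter the payoff with a negative sign. So it suffices to optimize over single-winner threshold auctions with monotone allocations.

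Step 2 writes the payoff of a single-winner threshold auction in virtual-value form, using $\bottomV_i=0$. Fix $\bv_{-i}$. Buyer $i$ receives $t\in\{0,\dots,m\}$ units and pays exactly the threshold $\theta_i(\bv_{-i})=\inf\{z: x_i(z,\bv_{-i})>0\}$ whenever $x_i>0$. Hence the expected payment is $\E[\theta_i\,\mathds{1}[x_i>0]]$. Because of upward closure (Claim~\ref{clm:start-win-always-win}), the indicator $\mathds{1}[x_i>0]$ is a threshold indicator in $v_i$. The standard Myerson identity applied to this 0/1 indicator therefore gives $\E[\theta_i \mathds{1}[x_i>0]]=\E[\psi_i(v_i)\mathds{1}[x_i>0]]$ with $\psi_i(v)=v-(1-F_i(v))/f_i(v)$.

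Since the auction is single-winner, $c(\sum_j x_j)=\sum_j c(x_j)$. Buyer $i$'s contribution is then
\[
\payoffValue(v_i)x_i-c(x_i)-\payoffPay\,\psi_i(v_i)\mathds{1}[x_i>0].
\]
This is where concavity enters. For fixed $v_i$, the map $t\mapsto \payoffValue(v_i)t-c(t)$ is convex on $\{1,\dots,m\}$, because $c$ is concave. So, conditional on winning, its maximum is at $t=m$ or $t=1$. Since $\payoffValue\ge0$ and $c$ is nondecreasing and concave with $c(0)=0$, we expect that whenever it is worth allocating at all, $m$ units are optimal. If this fails, we compare the $t=1$ and $t=m$ values and keep the better one.

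The allocation then reduces to a choice of a winner $i$, which gives generalized virtual value $\phi_i(v_i)=\payoffValue(v_i)m-c(m)-\payoffPay\psi_i(v_i)$, or of nobody, which gives $0$. Ironing $\phi_i$ in quantile space and using monotonicity of the winning indicator, as in the ironing box of Figure~\ref{fig:rev-aligned-usual-flow}, gives $\E[\phi_i\mathds{1}[\text{win}]]\le\E[\bar\phi_i\mathds{1}[\text{win}]]$. We then bound this pointwise by $\E[\max(0,\max_i\bar\phi_i(v_i))]$.

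Step 3 checks that \consMaximizer attains this upper bound. \consMaximizer gives all $m$ units to the buyer with the highest nonnegative $\bar\phi_i$ and charges threshold payments. Its winning indicator is constant on ironed intervals, so the ironing inequality is tight, exactly as in Myerson's argument. Its payments equal the $\psi$-expression by Step 2. We expect the main obstacle to be Step 2's reduction of every allocation to an all-or-nothing form. The difficulty is that $\payoffValue(v_i)t$ and the concave cost interact with the threshold payment, which depends only on the winning indicator and not on $t$. We would handle this by maximizing pointwise over $t$ for each $v_i$, and then verifying that the pointwise maximizer preserves monotonicity of the winning region and is consistent with the ironed bound. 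If the choice between $t=1$ and $t=m$ can vary with $v_i$, we would fold $\max_t(\payoffValue(v_i)t-c(t))$ into the virtual value and check that it is nondecreasing in $v_i$.
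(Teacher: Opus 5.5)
There is a genuine gap in Step~2, at exactly the point you flagged.

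\textbf{What goes wrong.} In a threshold auction, the payment at allocation level $t$ is $\inf\{z: x_i(z,\bv_{-i})\ge t\}$. This is the \emph{last} jump point $z_\ell$ at or below $v_i$, not the winning threshold $\theta_i$. So ``pays exactly $\theta_i$'' holds only for all-or-nothing allocations.

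Replacing $z_\ell$ by $\theta_i\le z_\ell$ is a valid relaxation, since it can only raise the payoff. But it is too weak. The payment term $-\payoffPay\,\psi_i(v_i)\mathbf{1}[x_i>0]$ does not scale with $t$. Hence $t\mapsto \payoffValue(v_i)t-c(t)-\payoffPay\psi_i(v_i)\mathbf{1}[t>0]$ is not convex on $\{0,\dots,m\}$, and for low $v_i$ the relaxed objective can strictly prefer $t=1$ to both $t=0$ and $t=m$.

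\textbf{Counterexample.} Take $n=1$, $m=2$, consumer surplus, $v\sim U[0,1]$ (so $\psi(v)=2v-1$), and $c(t)=0.3t$.
\begin{itemize}
\item For a winner, your relaxed objective is $0.7-v$ at $t=1$ and $0.4$ at $t=2$.
\item The monotone allocation with $x=1$ on $[0,0.3)$ and $x=2$ on $[0.3,1]$ has relaxed value $0.445$.
\item \consMaximizer gives two units to every type for free and earns $0.4$, which is in fact optimal.
\end{itemize}
So your upper bound strictly exceeds the payoff of \consMaximizer and cannot certify its optimality. Your fallback of folding $\max_t(\payoffValue(v_i)t-c(t))$ into the virtual value produces exactly this loose bound. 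The auction it suggests actually earns only $0.235$ once the true threshold payment $0.3$ is charged at the top level.

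\textbf{The missing idea} is a payment bound that is \emph{linear} in $x_i$. The paper combines $p_i\ge z_\ell$ (Claim~\ref{clm:payment-at-least-thresh}) with $p_i^{My}=\sum_j z_j\Delta_j\le m\,z_\ell$, hence $p_i\ge p_i^{My}/m$. This gives
\[
\E\bigl[\payoffValue(v_i)x_i-\payoffPay p_i\mid\bv_{-i}\bigr]\le\E\bigl[\varphi^m_{i,\Pi}(v_i)x_i\mid\bv_{-i}\bigr],
\qquad
\varphi^m_{i,\Pi}=\payoffValue-\tfrac{\payoffPay}{m}\varphi_i ,
\]
as in Lemma~\ref{lem:cons-aligned-reward-bound}. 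The relaxed objective then becomes $\sum_i\varphi^m_{i,\Pi}(v_i)x_i-c\bigl(\sum_i x_i\bigr)$. With $\Phi=\max_i\bar\varphi^m_{i,\Pi}(v_i)$, the map $q\mapsto q\Phi-c(q)$ is convex on all of $\{0,\dots,m\}$, including $q=0$. So allocating nothing or all $m$ units to one buyer is optimal, and the bound is tight for a single $0\to m$ jump. In the example above, this bound evaluates the two-level allocation at $0.34\le 0.4$, as it should.

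\textbf{What survives.} Since $m\varphi^m_{i,\Pi}-c(m)$ is exactly your $\phi_i$, your Step~1 and Step~3 go through unchanged. That covers the candidate auction, its ironing, and the tightness of its payments. Only the upper bound for monotone auctions with intermediate allocation levels must be replaced by this linear one.
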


Interestingly, \consMaximizer is optimal and DSIC (in the above settings) but \emph{not U-DSIC}, and in fact this is an inherent property of consumer-aligned payoff with stagewise-IR buyers (see Corollary~\ref{cor:cons-not-one-shot}).

\section{Additional Definitions}
\label{sec:model}
This section covers the full model definition, including the main definitions that were given in Section~\ref{sec:model-intro}.

\paragraph{Notation.}
We use boldface to denote vectors (e.g., $\bv$),
and 
we let $\bv_{-i}$ denote the vector obtained from $\bv$ by removing its $i$'th component.
We abuse notation slightly by letting $(v_i, \bv_{-i})$ denote the vector obtained by slotting value $v_i$ back into position $i$ of $\bv_{-i}$.
Throughout, we let $[n]=\set{1,\dots,n}$.
When working with an auction $\A$,
we use a superscript to indicate components and values associated with $\A$ (e.g., $\vec{x}^{\A}$ is the allocation rule of $\A$).
We may also sometimes use $\vec{x}^{s}$ to denote the allocation rule associated with an auction $\A^{s}$, where $s$ is some superscript,
and similarly for other values associated with $\A^s$.
We omit these superscripts when the auction is obvious from the context.

\paragraph{Bayesian auctions.}
In this paper we study a Bayesian single-parameter multi-unit auction with $m$ identical units and $n$ buyers.
Each buyer $i \in \{1, \dots, n\}$ has a private per-unit valuation $v_i$,
drawn independently from a distribution $F_i$ with support $[\bottomV_i, \topV_i]$,
where $\bottomV_i \ge 0$.
The distributions $\vec{F} = (F_1, \dots, F_n)$ are common knowledge, continuous and atomless, 
and admit densities $f_i$ that are strictly positive almost everywhere on $[\bottomV_i, \topV_i]$.
We denote by $\Omega\subseteq\set{0,1,\dots,m}^n$ the set of feasible allocations, which includes all allocations that allocate at most $m$ units in total, i.e.\ $\sum_{i=1}^n x_i\leq m$ for every $\vec{x}\in\Omega$.

\paragraph{Auction mechanisms.}
We restrict attention to direct mechanisms,
which are mechanisms in which the bid space  of each buyer is the same as their valuation space, i.e., $[ \bottomV_i, \topV_i]$ for each buyer $i$, but the buyers do not necessarily bid truthfully.
Let $\mathcal{B} = [\bottomV_1, \topV_1] \times \ldots \times [\bottomV_n, \topV_n]$ be the space of all possible bid profiles,
and let $\mathcal{S} = \Omega \times (\reals_{\geq 0})^n$ be the set of outcomes.
An \emph{auction mechanism} (or \emph{auction} for short)
is a mapping 
$\A : \mathcal{B} \rightarrow \mathcal{D}(\mathcal{S})$,
where $\mathcal{D}(\mathcal{S})$ is the set of all distributions over outcomes from $\mathcal{S}$.
Note that throughout, we abuse the terminology and notation slightly by conflating a random variable $S_i$, representing an auction outcome for buyer $i$, with its distribution.
The mechanism itself is deterministic,%
\footnote{This is without loss of generality: a randomized mechanism is a distribution over deterministic mechanisms, and by the law of total expectation,
at least one deterministic mechanism in the support has expected payoff at least that of the randomized auction.
Furthermore,
we assume throughout that the buyers' bids are deterministic, which is again w.l.o.g.\
when the mechanism is deterministic.}
but given a bid profile, the result of the auction is a distribution, i.e., a lottery.
Hence, we let $\vec{S}$ denote the random variable representing the result of the mechanism,
which we refer to as the \emph{random outcome};
for a specific realization, we use the term \emph{realized outcome}
and the notation $\vec{s}$. We assume that the buyers all bid according to a deterministic strategy $\vbid=\bid_1\times\dots\bid_n$, where $\bid_i$ is a mapping from the valuation of buyer $i$ to their bid.

The \emph{allocation rule} $\vec{x}^{\A} : \mathcal{B} \rightarrow \mathcal{D}(\Omega)$ and \emph{payment rule}
$\vec{p}^{\A} : \mathcal{B} \rightarrow \mathcal{D}(( \reals_{\geq 0})^n)$
of $\A$ are the projection of $\A$ to the allocation and payment, respectively (i.e., a mapping from the bids to the marginal distribution of the allocation/payment).
The auctions that we study can potentially involve a temporal element, where units are allocated over time and the payment is charged over time as well.
Thus, the \emph{outcome} of the auction consists of a \emph{schedule} specifying the units allocated to each buyer and the payments charged at each step.
However, in the technical overview we focus for the most part on two types of auctions:
In Section~\ref{sec:overview-rev-aligned} we work with
\emph{one-shot auctions}, where there is no temporal element --- all units are allocated and all payments are charged at once;
and in Section~\ref{sec:overview-cons-aligned} we introduce and design \emph{threshold auctions},
which involve two timesteps (we prove that more timesteps are not necessary).
Thus, we focus here on one-shot auctions.
In the case of a one-shot auction,
the realized outcome of the auction
is a pair $(\vec{x}, \vec{p})$,
where $\vec{x} \in \set{0,1,\dots,m}^n$ is a feasible allocation
and $\vec{p} \in (\reals_{\geq 0})^n$ is a vector of nonnegative payments.

\paragraph{Auction notations.}
We denote by $x_i^{\A}(\vec{b}), p_i^{\A}(\vec{b})$ the random variables denoting the  allocation and payment  (resp.) of buyer $i$ resulting from applying $\A$ to bid profile $\vec{b}$,
and by $x_i^{\A}(\vec{b},r), p_i^{\A}(\vec{b},r)$  the realized allocation and payment (resp.) for buyer $i$ when the bid profile is $\vec{b}$ and the randomness is $r$.
We let $\expect{x}_i^{\A}(\vec{b}) = \mathbb{E}_{r\sim\Uni[0,1]}[x_i^{\A}(\vec{b},r)]$ denote the expected allocation to buyer $i$ for a specific bid profile $\vec{b}$.
The \emph{interim} expected allocation for buyer $i$ with bid $b_i$ is
 $\expect{x}_i^{\A}(b_i;\vbid) = \E_{\bv_{-i}\sim\Pi_{j\neq i}F_j}[\expect{x}_i^{\A}(b_i, \vbid_{-i}(\bv_{-i}))]$,
 where $\vbid_{-i}$ is the strategy played by other buyers.%
 \footnote{We use this
	 notation even when $\vbid$ is not the bidding strategy for a measure zero set of valuation vectors. However, since the set is of measure zero, this makes no difference on the expected interim outcome.}
	 Additionally, when the strategy played is truthful for buyer $j\neq i$, we may write $\expect{x}_i^{\A}(b_i)$ and omit the $\vbid$ from the notation.
	 When discussing the interim outcome of a type \(v_i\) under a nontruthful strategy profile, we write
	 \(\expect{x}_i^{\A}(\bid_i(v_i);\vbid)\). When buyer \(i\)'s strategy is truthful, or when the argument is explicitly interpreted as a bid, we may write \(\expect{x}_i^{\A}(v_i;\vbid)\) or \(\expect{x}_i^{\A}(v_i)\).
We define $\expect{p}_i^{\A}(\vec{b})$, $\expect{p}_i^{\A}(b_i;\vbid)$ and $\expect{p}_i^{\A}(b_i)$ analogously. Throughout, for any strategy profile $\vbid$ played in $\A$, we assume that the induced interim allocation curve
$v_i\mapsto \expect{x}_i^{\A}(\bid_i(v_i);\vbid)$
is Riemann integrable; this assumption is purely technical and can be replaced by Lebesgue integrability without affecting any of our results.
When the auction mechanism is clear from the context, we abuse the notation by removing the superscript $\A$ from the above notations.

\paragraph{Seller's objective: Revenue-aligned and consumer-aligned.}

The seller aims to maximize in expectation
a \emph{payoff}, which is 
a 
combination
of a \emph{welfare component}, a linear \emph{revenue component}, and a \emph{cost}.
The first two components are given by a pair
$\Pi=(\payoffValue,\payoffPay)$, where \(\payoffValue:\reals_{\ge 0}\to\reals_{\ge 0}\) 
is a nondecreasing function representing the welfare component,
and $\payoffPay\in\reals$ represents the weight of the revenue in the objective.
The cost function \(c : \N \to \R_{\geq 0}\) is nondecreasing and normalized to $c(0)=0$. 
Given an auction $\A$, if a strategy $\vbid$ is played, the expected payoff of the seller is given by
\begin{equation*}
    \E[\A;\vbid,\Pi]=\E_{\bv,r}\sqbr{\sum_{i=1}^n\brackets{\payoffValue(v_i)x^{\A}_i(\vbid(\bv),r)+\payoffPay p^{\A}_i(\vbid(\bv),r)}-c\brackets{\sum_{i=1}^n x^{\A}_i(\vbid(\bv),r)}},
\end{equation*}
where the expectation is over the buyer valuations and the realizations of outcomes.
If it is clear from context what the payoff $\Pi$ is, we may omit it from the notation and simply write $\E[\A;\vbid]$.
Similarly, when it is clear from context which
strategy is played (e.g.\ if an auction is truthful), then we may omit the $\vbid$ from the expectation and simply denote by $\E[\A]$ the expected payoff of $\A$ when $\vbid$ is played.

We consider two classes of payoff functions.
The first one is the class of \emph{revenue-aligned} payoffs where the seller (perhaps weakly) prefers higher payments, so for these, $\payoffPay\geq 0$. Examples of such objectives are:
\begin{itemize}
    \item \textbf{Revenue Maximization:} ($\payoffValue\equiv0, \payoffPay=1$), where the seller cares solely about profit.
    \item \textbf{Welfare Maximization:} ($\payoffValue(v)=v, \payoffPay=0$), where the seller cares solely about welfare.
    \item \textbf{Welfare-Revenue Tradeoff:} ($\payoffValue(v)=(1-\payoffPay)v$), where the seller seeks to maximize both revenue and welfare in a weighted manner.
\end{itemize}
The second class covers \emph{consumer-aligned} payoffs where the seller prefers \emph{lower} payments. For example \textbf{consumer surplus:} ($\payoffValue(v)=v,\payoffPay=-1$), where the seller wants to maximize the total buyers' utility.
 For consumer-aligned payoff, we also need to assume ${\payoffValue(v)}+
 \payoffPay{v}\geq0$ for every $v\geq 0$,
because otherwise the objective is moot.

We introduce a simple unified model for buyers constrained in terms of their bids and/or the auction outcomes they are willing to accept. 

\paragraph{The acceptability predicate.}
Each buyer's constraint is captured by an \emph{acceptability predicate}
$\pred_i(S_i,b_i;v_i)$ (possibly a different predicate for each buyer), where $S_i$ 
is a distribution of outcomes for buyer $i$
and 
$b_i, v_i$ are the bid and valuation (resp.) of buyer~$i$.
When $\pred_i(S_i,b_i;v_i) = \true$,
buyer $i$ with valuation $v_i$
is able to place bid $b_i$ and accept outcome distribution $S_i$.
Conceptually, $\pred_i$ encodes hard constraints faced by the buyer rather
than preferences.
For instance, if the constraint is that a buyer cannot overbid, then
\[
\pred_i(S_i,b_i;v_i)=\true \quad \text{if and only if}\quad b_i \le v_i ,
\]
independently of the random outcome.
As another example that depends on the outcome this time, if the constraint is that the buyer cannot ever pay more than some value $B$ (i.e., an \emph{ex-post budget constraint}), then
\[
\pred_i(S_i,b_i;v_i)=\true \quad \text{if and only if}\quad \Pr[p_i(S_i)>B]=0.
\]

We impose no separability assumption on \(G_i\): the predicate may
depend jointly on the submitted bid and the induced outcome
distribution. We call \(G_i\) a \emph{bid constraint} if it is
independent of \(S_i\), and an \emph{outcome constraint} if it is
independent of \(b_i\).
We require truthful-report compatibility, i.e.\ for every \(S_i,b_i,v_i\), if
$G_i(S_i,b_i;v_i)=\mathsf{True}$
then
$G_i(S_i,v_i;v_i)=\mathsf{True}$.

 We remark that 
 the \emph{buyer constraints} may  constrain long auctions: for example, a buyer may be unwilling to engage in an outcome where they have to wait too long to receive an item.
 We prove that even in such cases, there is an \emph{optimal}
 auction that requires one or two timesteps (for revenue-aligned payoffs and for consumer-aligned payoff with stagewise-IR buyers, respectively).

\paragraph{Constrained utility.}

The \emph{unconstrained} (or \emph{quasilinear}) utility of buyer $i$ with valuation $v_i$ is, as usual, given by
\[
\unconstrained_i(S_i;v_i)=x_i(S_i)v_i-p_i(S_i).
\]
We define the \emph{constrained} utility of a buyer based on their acceptability
predicate as follows:
\begin{align}\label{eq:def-constrained-utility}
u_i(S_i,b_i;v_i)=
\begin{cases}
\unconstrained_i(S_i;v_i), & \pred_i(S_i,b_i;v_i)=\true,\\
-\infty, & \pred_i(S_i,b_i;v_i)=\false.
\end{cases}
\end{align}

Unless otherwise specified, the term \emph{utility} refers to the \emph{constrained}
utility function $u_i$.
When clear from context, we may omit either the outcome or the bid from the utility
function, depending on the constraint.
When considering a specific auction, we may abuse notation and write
$u_i(\vec{b};v_i)$ instead of $u_i(S_i,b_i;v_i)$, where $S_i$ is the random outcome for buyer $i$
of $\A$ given bid vector $\vec{b}$.
As with the allocation and payment, we let $u_i^{\A}(\vec{b};v_i)$ be the realized utility for buyer~$i$ in auction $\A$ given $\vec{b}$,
and define $\expect u_i(\vec{b};v_i)=\E_{r\sim\Uni[0,1]}[u_i(\vec{b},r;v_i)]$ as the expected utility for buyer~$i$, given bid profile $\vec{b}$. We also define $\expect u_i(b_i;v_i,\vbid)=\E_{\bv_{-i}}[\expect u_i(b_i,\vbid_{-i}(\bv_{-i});v_i)]$ as the interim expected utility for buyer~$i$, assuming that the other buyers play according to strategy $\vbid$.

\paragraph{Auction properties in the constrained setting.} 
We study several well-known properties of auctions, which we define here for the sake of completeness.
However, we emphasize that although the definitions here seem standard, their \emph{meaning} is different because the utility we use is the \emph{constrained} utility. 
A consequence of this is that for ``truthfulness'' it is only required that the \emph{acceptable} bids and outcomes are non-profitable in terms of the quasilinear utility.

We say that $\vbid$ is a Bayesian Nash equilibrium (BNE) if for every $i$ and $v_i,b_i$,
\begin{align*}
    \expect{\util}_i(\bid_i(v_i);v_i,\vbid)\geq \expect{\util}_i(b_i;v_i,\vbid).
\end{align*}

An auction is \emph{Bayesian incentive-compatible} (BIC) if truthful bidding is a Bayesian Nash equilibrium, i.e. if $\vbid(\bv)=\bv$ is a BNE.

We say that an auction is \emph{dominant-strategy incentive-compatible} (DSIC) if bidding truthfully is a best response to any fixed profile of other bids $\vec{b}_{-i}$. Formally, if for every buyer $i$ and every valuation $v_i$ and bids $b_i, \vec{b}_{-i}$, 
\begin{align*}
    \expect{\util}_i((v_i,\vec{b}_{-i});v_i)\geq \expect{\util}_i((b_i,\vec{b}_{-i});v_i).
\end{align*}
An auction is \emph{individually rational} (IR) if the Bayesian Nash equilibrium $\vbid$ played results in nonnegative \emph{interim} expected utility for all of the buyers. Formally, for IR we require that the auction admits a BNE $\vbid$ such that for every $v_i$,
\begin{align*}
    \expect{\util}_i(\bid_i(v_i);v_i,\vbid)\geq0.
\end{align*}
An auction is \emph{allocation-feasible} if in every realized outcome, the resulting allocation lies in $\Omega$ (i.e. the allocation is feasible in the ex-post sense).

We also define \emph{unconstrained} DSIC or BIC auctions (U-DSIC and U-BIC) to be those where bidding truthfully is optimal with respect to the \emph{unconstrained} utility, and not just among the ``allowed'' bids (those that respect the predicate $\pred_i$). 
Note that every U-BIC (or U-DSIC) auction is also BIC (or DSIC), but not the other way around.

Unless otherwise stated, all results in this paper apply to auctions that are individually rational, allocation-feasible, admit a Bayesian Nash equilibrium, and have nonnegative payments.
For auctions constructed in the paper, we explicitly verify these properties.

\paragraph{Auction Monotonicity.}
    An auction with allocation rule $\vec{x}$ is \emph{ex-interim monotone} if for every $i$, the allocation $\expect{x}_i(\vbid_i(v_i);\vbid)$ is nondecreasing in $v_i$, where $\vbid$ is the BNE played. 
    Similarly, the auction is \emph{ex-post monotone} if for every $i$ and bid profile $\vec{b}_{-i}$, the allocation function $\expect{x}_i(\bid_i(v_i),\vec{b}_{-i})$ is nondecreasing in $v_i$, where $\bid_i(\cdot)$ is buyer $i$'s strategy.

\paragraph{Measure zero exceptions.}
Due to the nature of our constraints, 
allowing utility functions to have discontinuities,
we often have to 
deal with exceptions on measure zero sets.
Thus we add the suffix ``a.e.'' (for ``almost everywhere'') 
to the names of properties that hold except on $F_i$-measure-zero sets of valuations.
Formally, a property $P$ holds \aee for buyer $i$ 
if there exists a set $V_i \subseteq [\bottomV_i,\topV_i]$ 
with $\Pr_{F_i}[V_i]=1$ such that $P$ holds over the domain $V_i$; 
similarly a property $P$ holds \aee for an auction if it holds
 \aee for every buyer $i\in[n]$.

 For example, an auction is BIC-\aee for buyer $i$ if
 there exists a set $V_i \subseteq [\bottomV_i,\topV_i]$ 
 with $\Pr_{F_i}[V_i]=1$ such that for every $v_i \in V_i$, 
 bidding truthfully is a Bayesian best response for buyer $i$ with valuation $v_i$.
We note that if some strategy profile $\vbid$ is a BNE-\aee
there is a measure zero set of valuation profiles in which case we do
 not know what the buyers bid; however, as mentioned earlier, we assume w.l.o.g.\ that they use a deterministic strategy.

For a set of buyers \(I\subseteq[n]\), we say that \(\vbid\) is a
BNE-\(\aee^I\) if \(\vbid\) is a BNE-\aee, and for every buyer
\(i\in I\), the strategy \(\bid_i\) is a Bayesian best response for
every valuation in \([\bottomV_i,\topV_i]\), rather than only
\(F_i\)-almost everywhere. For singleton sets we write
BNE-\(\aee^{(i)}\) instead of BNE-\(\aee^{\{i\}}\).

\paragraph{Uncapped monotone (UM) constraints.}
In this paper we focus on a class of constraints that we call \emph{uncapped-payment monotone constraints} (UM constraints).
These are defined as follows.

A constraint is \emph{monotone} if
\emph{high-type} buyers can do whatever 
\emph{low-type} buyers can. Formally,
if for any $v_i \le v_i'$, whenever 
$\pred_i(S_i,b_i;v_i)=\true$, it also holds that 
$\pred_i(S_i,b_i;v_i')=\true$. 
That is, if a low-type buyer can submit a given bid and accept the induced random outcome without violating the constraint, then a higher-type buyer can do so as well. It follows that in equilibrium, mimicking a low type \emph{never increases} the unconstrained utility, because this would also increase the constrained utility; however, mimicking a high type might \emph{increase} the \emph{unconstrained} utility, but \emph{decrease} the \emph{constrained} utility.

A constraint $\pred_i$ is \emph{uncapped-payment} if it never rules out an outcome that is almost surely a one-shot outcome that gives the buyer nonnegative unconstrained utility. 
In particular, if for every random
individual outcome \(S_i\) and valuation \(v_i\),
\[
\Pr_{s_i\sim S_i}
\left[
s_i\text{ is one-shot and }
u_i^{\mathrm{uc}}(s_i;v_i)\geq 0
\right]=1
\quad\Longrightarrow\quad
G_i(S_i,v_i;v_i)=\mathsf{True}.
\]
Thus, an uncapped-payment constraint always permits a buyer to
\emph{truthfully} accept a one-shot outcome that gives them nonnegative
unconstrained utility in every realization.
The constraint may restrict how payments are made, or the risk/timing of payments, but it does not impose a hard cap on total payment in such one-shot nonnegative-utility outcomes.
For example, the restrictions may be on the \emph{distribution} of the sum paid, or on the schedule of payments over time (given the allocation).
For a constraint to be uncapped-payment, we require that the constraint on the bid
allows truthful bids (i.e., $\pred_i^B(b_i ; v_i) = \true$ whenever $b_i=v_i$),
and that the constraint on outcomes, $\pred_i^O$, must satisfy $\pred_i^O( S_i ; v_i ) = \true$
for any outcome that is almost surely one-shot and 
 almost surely results in nonnegative unconstrained utility, i.e.\ $\Pr[\unconstrained(S_i;v_i)\geq0]=1$.
 This latter requirement is our way of requiring that the total sum paid is not restricted.

A constraint is an \emph{uncapped-payment monotone constraint}, or a
\emph{UM constraint}, if it is both monotone and uncapped-payment.

Some specific UM constraints that we study in this paper are:
\begin{enumerate}
    \item No-overbidding buyers: buyers who will not report a bid higher than their true valuation. 
    This is sometimes due to behavioral reasons, and sometimes due to partial external validation that's available to the seller. For this constraint
    \begin{align*}
        \pred_i(S_i,b_i;v_i)=\begin{cases}
            \true, & b_i\leq v_i\\
            \false, &\text{otherwise}
        \end{cases}.
    \end{align*}
    \item Ex-post individually rational buyers: buyers who will not bid in a way that may, with any positive probability, result in negative utility. For this constraint
    \begin{align*}
        \pred_i(S_i,b_i;v_i)=\begin{cases}
            \true, & \Pr[\unconstrained_i(S_i;v_i)<0]=0\\
            \false, &\text{otherwise}
        \end{cases}.
    \end{align*}   
    \item Stagewise individually rational buyers: buyers who cannot go into temporary debt, i.e. their cumulative utility at all times must be nonnegative with probability 1. The formal multi-step definition appears in Section~\ref{sec:consumer}.

\end{enumerate}

\section{Revenue-Aligned Objectives}\label{sec:revenue}

We use the revenue-aligned generalized virtual value $\varphi_{i,\Pi}$ and its ironing $\irn_{i,\Pi}$ from Definition~\ref{def:rev-aligned-virtual-values}.
This section turns the overview from Section~\ref{sec:overview-rev-aligned} into the full proof.
The argument has three parts: first we define the candidate ironed-virtual-welfare maximizer; then we prove payment upper bounds that replace the usual envelope theorem; finally we use measure-preserving rearrangements to reduce an arbitrary auction to monotone truthful auctions with improved payoff, and based on this, prove the main theorem (Theorem~\ref{thm:rev-myerson}).

\subsection{The Ironed-Virtual-Welfare Maximizer}

We begin with the auction that will ultimately be optimal.
It uses the familiar Myerson payment identity, but with the revenue-aligned ironed virtual values in the allocation objective.

\begin{definition}\label{def:rev-aligned-ironed-maximizer}
    Let $\Pi$ be a revenue-aligned payoff with its ironed virtual value function $\irn_{i,\Pi}$.
Fix a total order $\prec$ over $\Omega$.
Then \emph{\revMaximizer} is the deterministic one-shot auction
$\A^*$
defined by:
\begin{align}\label{eq:rev-aligned-argmax-def}
&\vec{x}^{*}(\bv) := \max_{\prec} \argmax_{\vec{y}\in\Omega}
\brackets{ \sum_{i=1}^n \irn_{i,\Pi}(v_i)y_i - c\left(\sum_{i=1}^n y_i\right) }, \\
&p_i^{*}(\bv) := v_i x_i^{*}(\bv) - \int_{\bottomV_i}^{v_i} x_i^{*}(z,\bv_{-i})dz,
\label{eq:IVW-max-payment}
\qquad
\text{for each buyer $i$}
.
\end{align}
\end{definition}

\begin{lemma}\label{lem:rev-aligned-optimal-among-tight}
The \revMaximizer from Definition~\ref{def:rev-aligned-ironed-maximizer} is U-DSIC, IR, allocation-feasible and monotone.
In addition, it satisfies
\begin{align*}
    \E_{\bv}\sqbr{\sum_{i=1}^n \payoffValue(v_i)x_i(\bv)+\payoffPay p_i(\bv)}=\E_{\bv}\sqbr{\sum_{i=1}^n \irn_{i,\Pi}(\bv)x_i(\bv)}.
\end{align*}
\end{lemma}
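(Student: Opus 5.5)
The plan is to follow Myerson's classical argument, adapted to the generalized virtual values $\varphi_{i,\Pi}$ and their ironing $\irn_{i,\Pi}$ from Definition~\ref{def:rev-aligned-virtual-values}. I will prove four things in order: ex-post monotonicity of $\vec{x}^*$, U-DSIC and IR, feasibility, and the payoff identity.

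\textbf{Monotonicity.} Fix $\bv_{-i}$. The ironed virtual value $\irn_{i,\Pi}$ is nondecreasing in $v_i$; this is a standard property of ironing and I assume it is recorded with Definition~\ref{def:rev-aligned-virtual-values}. As $v_i$ increases, only the coefficient of $y_i$ in the objective of \eqref{eq:rev-aligned-argmax-def} increases. A standard exchange argument then applies. Suppose $\vec{y}$ is chosen at $v_i$ and $\vec{y}'$ is chosen at $v_i'>v_i$. Add the two optimality inequalities, one comparing $\vec{y}$ against $\vec{y}'$ at $v_i$ and one comparing $\vec{y}'$ against $\vec{y}$ at $v_i'$. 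The result is $(\irn_{i,\Pi}(v_i')-\irn_{i,\Pi}(v_i))(y_i'-y_i)\ge 0$, so $y_i'\ge y_i$ whenever $\irn_{i,\Pi}$ strictly increases.

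When $\irn_{i,\Pi}(v_i')=\irn_{i,\Pi}(v_i)$, the objective is identical at $v_i$ and $v_i'$. The argmax set is therefore the same, and the fixed tie-breaking order $\prec$ selects the same allocation. This consistency of tie-breaking is why the definition fixes a total order on $\Omega$ rather than breaking ties arbitrarily. Feasibility is immediate because $\vec{x}^*(\bv)\in\Omega$ by construction.

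\textbf{U-DSIC and IR.} Since $x_i^*(\cdot,\bv_{-i})$ is nondecreasing and the payments \eqref{eq:IVW-max-payment} are exactly the Myerson payments, Myerson's lemma applied pointwise in $\bv_{-i}$ shows that truthful bidding maximizes the quasilinear utility. In particular it is optimal among all bids, which gives U-DSIC.

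Truthful utility equals $\int_{\bottomV_i}^{v_i}x_i^*(z,\bv_{-i})dz\ge 0$, which gives IR. Payments are nonnegative because monotonicity gives $\int_{\bottomV_i}^{v_i}x_i^*\,dz\le v_ix_i^*(\bv)$. Since the auction is deterministic and one-shot with nonnegative realized utility, it is also acceptable under any uncapped-payment constraint. U-DSIC then implies DSIC.

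\textbf{Payoff identity.} The main work is showing the payoff equals expected ironed virtual welfare. First, the standard integration by parts (swapping the order of integration against $f_i$) gives, for each fixed $\bv_{-i}$,
\[
\E_{v_i}[p_i^*]=\E_{v_i}\Bigl[\bigl(v_i-\tfrac{1-F_i(v_i)}{f_i(v_i)}\bigr)x_i^*\Bigr].
\]
Combining this with the welfare term shows that $\E[\payoffValue(v_i)x_i^*+\payoffPay p_i^*]=\E[\varphi_{i,\Pi}(v_i)x_i^*]$. This step presumably holds by the very definition of $\varphi_{i,\Pi}$.

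It remains to replace $\varphi_{i,\Pi}$ by $\irn_{i,\Pi}$. I would write both in quantile space $q=F_i(v_i)$ and let $H$ and $G$ be the integrated and convexified integrated curves. A second integration by parts gives
\[
\E[(\varphi-\irn)x_i^*]=-\int (H-G)\,dx_i^*(q).
\]
This vanishes because $H=G$ outside ironed intervals, while on each ironed interval $\irn_{i,\Pi}$ is constant, so $x_i^*$ is constant there by the tie-breaking argument above. Hence $dx_i^*=0$ wherever $H\ne G$. This is Myerson's ironing argument verbatim, and it is where the monotonicity-with-consistent-ties step is essential.

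The only delicate step I expect is this last one. The boundary terms must vanish, which uses $H=G$ at the endpoints. One must also handle possibly non-differentiable points of $G$ and measure-zero sets, which may require invoking the appendix's generalized virtual-value facts.
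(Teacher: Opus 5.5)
Your proposal is correct and follows essentially the same route as the paper. The steps match one for one: ex-post monotonicity via the exchange argument plus the fixed tie-breaking order $\prec$ (the paper's Claim~\ref{clm:pointwise-maximizer-nondecreasing}); U-DSIC, IR and nonnegative payments from the Myerson payment formula; and the payoff identity via Myerson's payment/virtual-value identity followed by ironing, with equality because the allocation is constant wherever $\irn_{i,\Pi}$ is constant. The only cosmetic difference is that you re-derive the ironing equality by integration by parts in quantile space, whereas the paper cites Lemma~\ref{lemma:ironing-properties} and its equality condition.
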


The next theorem states the revenue-aligned conclusion.
The rest of the section is devoted to proving it: Lemma~\ref{lem:rev-aligned-payment-bound} and Corollary~\ref{cor:pay-bound-by-varphi} give the payment side of the comparison, while Theorem~\ref{thm:rev-aligned-bid-constraint-mono-wlog} gives the monotonicity and truthfulness reduction.

\begin{theorem}\label{thm:rev-myerson}
    Assume that each buyer is subject to a UM-constraint $\pred_i$, and  the objective $\Pi=(\payoffValue,\payoffPay)$ is revenue-aligned.
    Then:

        \begin{enumerate}
        \item There exists an optimal auction that is one-shot and U-DSIC.
        \item If $\payoffValue(\cdot)$ is strictly increasing or $\payoffPay>0$,
        then every optimal auction is monotone-\aee.
        \item If $\payoffPay>0$ then in every optimal BIC auction,%
        \footnote{In fact, this claim holds for \emph{every optimal auction}, even non-truthful ones,
        provided the auction has a BNE $\vec{\bid}$ that achieves the revenue of the auction (i.e., achieves the supremum). In this case $\bid_i(v_i)$ replaces $v_i$ inside the allocation and payment function, and similarly $\bid_i(z)$ replaces the $z$ inside the integral.
        }
        the payment rule satisfies, for $F_i$-\aee $v_i$,
        \begin{align}\label{eq:unique-pay}
            \expect{p}_i(v_i)=\expect{x}_i(v_i)v_i-\int_{\bottomV_i}^{v_i} \expect{x}_i(z)\,dz.
        \end{align}
               (If $\payoffPay = 0$ then there \emph{exists}
        an optimal BIC auction with these payments,
        but there may also exist optimal BIC auctions with other payments.)

    \end{enumerate}
\end{theorem}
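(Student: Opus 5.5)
The plan follows the flowchart in Figure~\ref{fig:rev-aligned-no-overbidding}. The proof combines four ingredients:
\begin{enumerate*}[label=(\roman*)]
\item the payment upper bound of Lemma~\ref{lem:rev-aligned-payment-bound} and its consequence Corollary~\ref{cor:pay-bound-by-varphi};
\item the reduction Theorem~\ref{thm:rev-aligned-bid-constraint-mono-wlog};
\item Myerson's ironing inequality for monotone allocations;
\item the payoff identity of Lemma~\ref{lem:rev-aligned-optimal-among-tight} for \revMaximizer.
\end{enumerate*}

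For part 1, fix an arbitrary auction $\A$ with BNE $\vbid$. By Theorem~\ref{thm:rev-aligned-bid-constraint-mono-wlog}, there is a BIC, ex-interim monotone auction $\A'$ with interim Myerson payments and $\E[\A']\ge \E[\A;\vbid]$. Since $\A'$ is truthful, Corollary~\ref{cor:pay-bound-by-varphi} gives
\[
\E[\A'] \le \E_{\bv,r}\Big[\sum_i \varphi_{i,\Pi}(v_i)x'_i(\bv,r) - c\big(\textstyle\sum_i x'_i(\bv,r)\big)\Big].
\]
Because $\A'$ is monotone, the generalized ironing argument applies buyer by buyer: the difference $\E[(\varphi_{i,\Pi}-\irn_{i,\Pi})\expect{x}'_i]$ is an integral against a nonnegative measure supported where $\expect{x}'_i$ is flat, hence at most zero. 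This bounds $\E[\A']$ by the expected ironed virtual welfare minus cost of $\vec{x}'$. That quantity is pointwise (in $(\bv,r)$) at most the corresponding quantity for $\vec{x}^*$, by the argmax definition in \eqref{eq:rev-aligned-argmax-def}. By Lemma~\ref{lem:rev-aligned-optimal-among-tight}, the latter equals $\E[\A^*]$, and $\A^*$ is one-shot and U-DSIC. Hence \revMaximizer is optimal, which proves part 1.

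For parts 2 and 3, I use the strict-improvement clause of the full Theorem~\ref{thm:rev-aligned-bid-constraint-mono-wlog}. Suppose $\A$ is optimal but not monotone-\aee for some buyer $i$. The rearrangement step for $i$ replaces $\expect{x}_i$ by its nondecreasing rearrangement $y_i^\uparrow$, which differs from $\expect{x}_i$ on positive measure. If $\payoffValue$ is strictly increasing, Hardy--Littlewood is strict, so the welfare term strictly increases. If $\payoffPay>0$, I aim to show the payment bound $\int_0^{\expect{x}_i(v_i)}\tau(t)\,dt$ of Claim~\ref{clm:informal-tau-bound} integrates strictly below the Myerson payment of $y_i^\uparrow$ on positive measure. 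Either way the chain above becomes strict, contradicting optimality.

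For part 3 with $\payoffPay>0$, take an optimal BIC auction that is monotone-\aee by part 2. Lemma~\ref{lem:rev-aligned-payment-bound} gives $\expect{p}_i(v_i)$ at most the right side of \eqref{eq:unique-pay}. If the inequality were strict on a positive-measure set, raising payments to the Myerson payments would keep the same allocation and strictly increase revenue. Equivalently, the chain (i)--(iv) would be strict, so $\E[\A]<\E[\A^*]$, a contradiction. When $\payoffPay=0$, existence of such an auction follows from $\A^*$.

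I expect the main obstacle to be the strictness step in part 2 when $\payoffPay>0$ but $\payoffValue$ is only weakly increasing. There I need a quantitative version of the $\tau$-comparison: a non-monotone $\expect{x}_i$ makes $\tau(t)$ strictly smaller than the rearranged threshold on a set of $t$ of positive measure, weighted by a positive-measure set of $v_i$. I would first try to prove this via the layer-cake representation, comparing $\{t:\tau(t)<\tau^\uparrow(t)\}$, using that $f_i>0$ almost everywhere.
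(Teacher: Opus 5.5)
Your proposal is correct and follows the paper's proof essentially step for step. Part 1 is the same chain: Theorem~\ref{thm:rev-aligned-bid-constraint-mono-wlog}, then the ironing inequality of Lemma~\ref{lemma:ironing-properties} for the monotone interim allocation, then the pointwise argmax, then Lemma~\ref{lem:rev-aligned-optimal-among-tight} for \revMaximizer. Part 2 is exactly the strict clause of Theorem~\ref{thm:rev-aligned-bid-constraint-mono-wlog}; the paper proves its $\payoffPay>0$ case by the $\tau_x$ versus $\tau^\uparrow$ comparison you anticipate, in Claims~\ref{clm:not-nondecreasing-means-strict} and~\ref{clm:single-buyer-rearrangement-payoff}. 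Part 3 is the strict version of the same chain.

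The one step to drop is the heuristic ``raise the payments to Myerson payments'' in part 3, which would need its own implementability argument. The chain-based contradiction you give right after it is the argument the paper uses, and it suffices.
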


\subsection{Payment Bounds}

The following is a one-sided envelope result that bounds the payment of the buyer in a BNE of an auction with buyers subject to monotone constraints. This result is known in the special case of a single buyer with the no-overbidding constraint (see~\cite{krahmer2025unidirectional}); the proof below provides an alternative and more general argument that applies to any number of buyers and any monotone constraint.

\begin{lemma}\label{lem:rev-aligned-payment-bound}
Assume each buyer is subject to a monotone constraint $\pred_i$. Let $\A$ be an auction that admits a Bayesian-Nash equilibrium $\vbid$. Then for every $i$ and type $v_i$,
\begin{align}\label{eq:rev-aligned-payment-bound}
        \expect{p}_i(\bid_i(v_i);\vbid)\leq\expect{x}_i(\bid_i(v_i);\vbid)v_i-\int_{\bottomV_i}^{v_i}\expect{x}_i(\bid_i(z);\vbid)\,dz.
    \end{align}
\end{lemma}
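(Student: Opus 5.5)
The plan is a one-sided envelope argument: downward deviations are always feasible, and these alone are enough to bound payments from above.

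\textbf{Step 1 (downward deviations are feasible).} Fix buyer $i$ and the equilibrium $\vbid$. Write $X(v)=\expect{x}_i(\bid_i(v);\vbid)$, $P(v)=\expect{p}_i(\bid_i(v);\vbid)$, and $U(v)=X(v)v-P(v)$.

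For any types $z\le v$, type $v$ can mimic type $z$ by bidding $\bid_i(z)$. This deviation is acceptable for $v$. Since $\vbid$ is a BNE and the auction is IR, type $z$'s equilibrium play has finite (nonnegative) interim constrained utility. Hence $\pred_i$ holds for $z$'s bid and random outcome for a.e.\ realization of $\bv_{-i}$. By monotonicity of $\pred_i$, the same bid and outcome distribution are acceptable to type $v\ge z$. The constrained utility of this deviation therefore equals the quasilinear one, $X(z)v-P(z)$.

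The BNE condition for $v$ then gives
\[
U(v)\ge U(z)+X(z)(v-z)\qquad\text{for all } z\le v .
\]

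\textbf{Step 2 (telescoping over a partition).} Take a partition $\bottomV_i=z_0<z_1<\dots<z_k=v$ and sum the inequality from Step~1 over consecutive points:
\[
U(v)-U(\bottomV_i)\ge \sum_j X(z_j)(z_{j+1}-z_j).
\]
The right-hand side is a left Riemann sum. By the standing Riemann-integrability assumption on the interim allocation curve, refining the partition makes it converge to $\int_{\bottomV_i}^{v}X(z)\,dz$. This gives
\[
U(v)\ge U(\bottomV_i)+\int_{\bottomV_i}^{v}X(z)\,dz .
\]

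\textbf{Step 3 (conclude).} IR gives $U(\bottomV_i)\ge 0$. Substituting $U(v)=X(v)v-P(v)$ and rearranging yields \eqref{eq:rev-aligned-payment-bound}.

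\textbf{Main obstacle: Step 1's feasibility claim.} One must verify that type $z$'s equilibrium outcome is genuinely acceptable to $z$, and hence to $v$. I would argue that the equilibrium utility is finite, which forces $\pred_i$ to be true on a full-measure set of $\bv_{-i}$. The interim expectation is then unaffected by the null exceptions.

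A second subtlety is that the BNE condition may hold only almost everywhere (BNE-a.e.). In that case I would run the telescoping argument along partitions whose points lie in the full-measure good set; such partitions exist, and the Riemann sums still converge. For types $v$ outside the good set, the statement as written presumes the BNE holds at every type, so I would use that directly.
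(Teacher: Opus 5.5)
Your proposal is correct and follows essentially the same route as the paper. Both arguments rest on the one-sided envelope inequality $U(v)\ge U(z)+X(z)(v-z)$ for $z\le v$, which holds because downward mimicry is acceptable: IR makes the mimicked type's equilibrium play acceptable, and monotonicity of the constraint transfers this to the higher type. Both then telescope over a partition, pass to the integral via the Riemann-integrability assumption, and close with $U(\bottomV_i)\ge 0$. The paper uses uniform partitions, and your BNE-a.e.\ aside is unnecessary here because the lemma assumes a genuine BNE.
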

\begin{proof}
    Fix $i$. Let $\vbid$ be a BNE.

    To simplify notation, let $U_i(v_i)\coloneqq \expect{\util}_i(\bid_i(v_i); v_i)$ denote the equilibrium expected utility of buyer $i$ with valuation $v_i$.

    Consider a valuation $v_i$ and a slightly lower deviation $v_i-\delta\geq\bottomV_i$. Since $\vbid$ is a BNE and $\A$ is IR, $\bid_i(v_i-\delta)$ is acceptable for type $v_i-\delta$.
    Therefore,
    \begin{align*}
        U_i(v_i)-U_i(v_i-\delta)
        &=
        U_i(v_i)-\expect{\unconstrained}_i(\bid_i(v_i-\delta);v_i-\delta)\\
        &=U_i(v_i)-\expect{x}_i(\bid_i(v_i-\delta))(v_i-\delta)+\expect{p}_i(v_i-\delta)\\
        &=U_i(v_i)-\expect{\unconstrained}_i(\bid_i(v_i-\delta);v_i)+\delta\expect{x}_i(\bid_i(v_i-\delta)).
    \end{align*}
    Crucially, since $\pred_i$ is monotone, $\bid_i(v_i-\delta)$ is also acceptable for type $v_i$.  Hence,
    \begin{align}\label{eq:util-diff}
        U_i(v_i)-U_i(v_i-\delta)
        &=U_i(v_i)-\expect{\util}_i(\bid_i(v_i-\delta);v_i)+\delta\expect{x}_i(\bid_i(v_i-\delta))\\
        &\geq \delta\expect{x}_i(\bid_i(v_i-\delta)),
    \end{align}
    where the inequality follows from the fact that $\vbid$ is a BNE.

    We now partition $[\bottomV_i,v_i]$ into $N$ equally-sized intervals, each of size $\Delta_N\coloneqq\frac{v_i-\bottomV_i}{N}$.
    Using a telescopic sum, and due to Eq.~\ref{eq:util-diff},
    \begin{align*}
      U_i(v_i)\geq U_i(v_i)-U_i(\bottomV_i)
       \geq\sum_{j=1}^{N} \Delta_N\cdot \expect{x}_i(\bid_i(\bottomV_i+(j-1)\Delta_N)).
    \end{align*}
    The first inequality is due to $\vbid$ being a BNE and $\A$ being IR.

    Since $x_i(\bid_i(\cdot))$ is Riemann-integrable, so is $\expect{x}_i:[\bottomV_i,\topV_i]\to [0,m]$. Thus we take $N$ to $\infty$, and get
    \begin{align}\label{eq:rev-aligned-util-lower-bound}
        \expect{\util}_i(\bid_i(v_i);v_i)\geq\int_{\bottomV_i}^{v_i}\expect{x}_i(\bid_i(z))\,dz.
    \end{align}

    Since $\A$ is IR and $\vbid$ an equilibrium,
    $\expect{\util}_i(\bid_i(v_i);v_i)=\expect{\unconstrained}_i(\bid_i(v_i);v_i)=\expect{x}_i(\bid_i(v_i))v_i-\expect{p}_i(\bid_i(v_i))$. Combining this with~\eqref{eq:rev-aligned-util-lower-bound} completes the proof.
\end{proof}

For truthful auctions, Lemma~\ref{lem:rev-aligned-payment-bound} immediately implies the following corollary.
\begin{corollary}\label{cor:pay-bound-by-varphi}
    Let $\A$ be a BIC auction with a revenue-aligned payoff function $\Pi=(\payoffValue,\payoffPay)$, and buyers subject to monotone constraints.
    Then, for every $i$,
    \begin{align*}
        \E_{\bv,r}\sqbr{\payoffValue(v_i)x_i(\bv,r)+\payoffPay p_i(\bv,r)}\leq \E_{\bv,r}\sqbr{\varphi_{i,\Pi}(v_i)x_i(\bv,r)}.
    \end{align*}
\end{corollary}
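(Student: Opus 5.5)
We prove Corollary~\ref{cor:pay-bound-by-varphi} by taking expectations in Lemma~\ref{lem:rev-aligned-payment-bound} with truthful bidding and then running Myerson's standard integration by parts. The virtual value is the generalized one from Definition~\ref{def:rev-aligned-virtual-values}. We expect it to have the form $\varphi_{i,\Pi}(v)=\payoffValue(v)+\payoffPay\bigl(v-\frac{1-F_i(v)}{f_i(v)}\bigr)$, and we use this form below.

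\emph{Step 1: reduce to interim quantities.} Since $\A$ is BIC, $\vbid$ is the identity. By the law of total expectation (first over $r$, then over $\bv_{-i}$),
\[
\E_{\bv,r}\bigl[\payoffValue(v_i)x_i(\bv,r)+\payoffPay p_i(\bv,r)\bigr]=\E_{v_i}\bigl[\payoffValue(v_i)\expect{x}_i(v_i)+\payoffPay\,\expect{p}_i(v_i)\bigr],
\]
and likewise $\E_{\bv,r}[\varphi_{i,\Pi}(v_i)x_i(\bv,r)]=\E_{v_i}[\varphi_{i,\Pi}(v_i)\expect{x}_i(v_i)]$.

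\emph{Step 2: apply the payment bound.} Because $\payoffPay\ge 0$, multiplying \eqref{eq:rev-aligned-payment-bound} by $\payoffPay$ preserves the inequality. This gives
\[
\payoffPay\,\expect{p}_i(v_i)\le \payoffPay\Bigl(\expect{x}_i(v_i)v_i-\int_{\bottomV_i}^{v_i}\expect{x}_i(z)\,dz\Bigr)
\]
for every $v_i$. This is the only place where revenue-alignment is used. If $\payoffPay<0$, the inequality would flip, which is why the consumer-aligned case behaves differently.

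\emph{Step 3: integrate against $F_i$.} We apply Fubini (Tonelli) to the double integral:
\[
\E_{v_i}\Bigl[\int_{\bottomV_i}^{v_i}\expect{x}_i(z)\,dz\Bigr]=\int_{\bottomV_i}^{\topV_i}\expect{x}_i(z)(1-F_i(z))\,dz=\E_{z}\Bigl[\expect{x}_i(z)\tfrac{1-F_i(z)}{f_i(z)}\Bigr].
\]
This is justified because $\expect{x}_i\in[0,m]$ is bounded and integrable, and $f_i>0$ almost everywhere, so division by $f_i$ is valid $F_i$-a.e.

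\emph{Step 4: combine.} Adding the welfare term gives the bound
\[
\E_{v_i}\Bigl[\expect{x}_i(v_i)\Bigl(\payoffValue(v_i)+\payoffPay\bigl(v_i-\tfrac{1-F_i(v_i)}{f_i(v_i)}\bigr)\Bigr)\Bigr],
\]
which equals $\E[\varphi_{i,\Pi}(v_i)\expect{x}_i(v_i)]$. Undoing Step 1 then yields the claim.

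The main obstacle is Step 3. The difficulty is making sure that Definition~\ref{def:rev-aligned-virtual-values} matches exactly this expression; if it is instead written in quantile space, a change of variables is needed first. A secondary concern is integrability of $\frac{1-F_i}{f_i}\expect{x}_i$, but this is inherited from the finite left-hand side of the Fubini identity.
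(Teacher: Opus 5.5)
Your proposal is correct and follows the paper's proof. Both specialize Lemma~\ref{lem:rev-aligned-payment-bound} to the truthful BNE, integrate over $v_i$ using the Fubini exchange that gives Myerson's identity, and use $\payoffPay\ge 0$. The only cosmetic difference is the order: the paper first derives $\E[p_i]\le\E[\varphi_i(v_i)x_i]$ and then multiplies by $\payoffPay$, whereas you multiply first.

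Your guessed form of $\varphi_{i,\Pi}$ matches Definition~\ref{def:rev-aligned-virtual-values} exactly. That definition is stated in value space, so no quantile change of variables is needed.
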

\begin{proof}
    Since $\A$ is BIC, the strategy $\bid_j(v_j)=v_j$ for every $j$ is the BNE played. Thus, by Lemma~\ref{lem:rev-aligned-payment-bound}, for every $i$:
    \begin{align*}
        \E_{\bv_{-i},r}\sqbr{p_i(\bv,r)}\leq \E_{\bv_{-i},r}\sqbr{x_i(\bv,r)v_i-\int_{\bottomV_i}^{v_i}x_i((z,\bv_{-i}),r)dz}.
    \end{align*}
    Integrating over $v_i$ and repeating Myerson's analysis gives
    \begin{align*}
         \E_{\bv,r}\sqbr{p_i(\bv,r)}\leq\E_{\bv,r}\sqbr{\varphi_i(v_i)x_i(\bv,r)}\quad\forall i\in[n].
    \end{align*}
    Therefore,
    \begin{align*}
        \E_{\bv,r}\sqbr{\payoffValue(v_i)x_i(\bv,r)+\payoffPay p_i(\bv,r)}
        \leq\E_{\bv,r}\sqbr{\payoffValue(v_i)x_i(\bv,r)+\payoffPay \varphi_i(v_i)x_i(\bv,r)}
        =\E_{\bv,r}\sqbr{\varphi_{i,\Pi}(v_i)x_i(\bv,r)}.
    \end{align*}
\end{proof}

\subsection{Truthification and Monotonicity}

We now move onto the structural reduction.
The goal is to show that any auction can be replaced, without lowering (or even improving) revenue-aligned payoff, by one in which buyers bid truthfully, interim allocations are monotone, and payments are tight.

\paragraph{Auctions with tight payments.}
When an auction has payments that always match the upper bound of Eq.~\eqref{eq:rev-aligned-payment-bound} with equality for $\vbid$, we say that it has \emph{tight payments}  with respect to $\vbid$, and if equality is $F_i$-almost-surely, we say that it has \emph{tight payments almost everywhere}.
Note that in the case of a truthful auction, where the truthful BNE $\vbid(\bv)=\bv$ is played, tight payments with respect to $\vbid$ are exactly Myerson payments.

As described in Section~\ref{sec:overview-rev-aligned}, tight payments are not necessarily optimal or even truthful, depending on the allocation rule and the payoff function. However, we show that for revenue-aligned payoffs, there always exists an optimal auction that is monotone and has tight payments.

\begin{theorem}\label{thm:rev-aligned-bid-constraint-mono-wlog}
Let $\A$ be an auction with revenue-aligned payoff with UM-constrained buyers, and suppose
$\vbid$ is the BNE played in $\A$.
Then there exists a BIC-\aee and monotone-\aee auction $\A'$
 such that $\E[\A']\ge \E[\A| \vbid]$ and the interim payments
 satisfy the Myerson payment identity~\eqref{eq:unique-pay} $F_i$-\aee for every buyer $i$.
Moreover, if $\A$ is not monotone-\aee for some buyer $i$
and either $\payoffPay>0$ or $\payoffValue(\cdot)$ is
strictly increasing, then the inequality is strict.
\end{theorem}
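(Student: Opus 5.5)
We plan to follow the iterative scheme of the proof sketch of Theorem~\ref{thm:informal-truthification}, now starting from an arbitrary auction $\A$ with BNE $\vbid$ rather than a truthful one. First we absorb the strategies into the mechanism. Define $\A^{(0)}$ by $x^{(0)}_j(\vec{b},r)=x_j^{\A}(\vbid(\vec{b}),r)$ and likewise for payments. Then $\A^{(0)}$ has the same joint distribution of (valuation, outcome) under truthful bidding as $\A$ under $\vbid$, so $\E[\A^{(0)}]=\E[\A;\vbid]$. It need not be BIC, but every buyer's truthful outcome is a best response among the outcomes $\{\A(\bid_i(z),\cdot)\}_z$. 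This is exactly the hypothesis used in Lemma~\ref{lem:rev-aligned-payment-bound}, whose proof only compares a type against the equilibrium outcomes of lower types. We therefore work throughout with the weaker invariant ``buyer $j$'s truthful outcome is a constrained best response among outcomes assigned to other types of $j$'' for the buyers not yet processed.

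We then construct $\A^{(1)},\dots,\A^{(n)}$. At step $i$, let $y_i(v_i)=\expect{x}^{(i-1)}_i(v_i)$ and let $y_i^\uparrow$ be its nondecreasing rearrangement with respect to $F_i$. Apply Corollary~\ref{cor:random-relabeling} to obtain a randomized relabeling $\sigma_i(v_i,r')$ with $\sigma_i(v_i,r')\sim F_i$ and $\expect{x}_i(\sigma_i(v_i,\cdot))=y_i^\uparrow(v_i)$. For all $j$ set $x^{(i)}_j(\vec{b},(r,r'))=x^{(i-1)}_j((\sigma_i(b_i,r'),\vec{b}_{-i}),r)$, and set $p^{(i)}_j$ for $j\ne i$ the same way. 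Because $\sigma_i$ is distribution-preserving, every $j\neq i$ faces, for each of its own bids, the same interim outcome distribution as before. Hence its incentives, its constraint satisfaction and its payoff contribution are unchanged. The same holds for the cost term, since the joint allocation distribution is preserved, and feasibility holds pointwise.

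For buyer $i$ we design a one-shot payment rule $p^{(i)}_i$ whose interim expectation equals the Myerson payment of $y_i^\uparrow$ and which is pointwise nonnegative-utility for truthful bids. One choice is to charge, on realization with allocation $k$, a payment of the form $k\cdot\theta$, where $\theta\le v_i$ is a suitably scaled price, e.g. $p=x\cdot v_i\cdot(\text{Myerson payment}/(y_i^\uparrow(v_i) v_i))$. This ratio is at most $1$ and makes realized utility nonnegative. The uncapped property then gives acceptability of truthful outcomes, and monotonicity of $y_i^\uparrow$ with Myerson payments yields U-BIC for buyer $i$ by the standard argument. The welfare term is handled by Hardy--Littlewood, since $\payoffValue$ is nondecreasing and $y_i^\uparrow$ is equimeasurable with $y_i$. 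Strictness comes when $y_i$ is not a.e.\ monotone and $\payoffValue$ is strictly increasing.

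The main obstacle is the revenue term: showing $\E[\expect p_i^{(i-1)}]\le\E[\text{Myerson}(y_i^\uparrow)]$. Lemma~\ref{lem:rev-aligned-payment-bound} alone is insufficient because its right side depends on the ordering of $y_i$. Instead we use Claim~\ref{clm:tau-bound}, $\expect p_i(v_i)\le\int_0^{y_i(v_i)}\tau(t)\,dt$ with $\tau(t)=\inf\{v:y_i(v)\ge t\}$, and compare it with the analogous quantity for $y_i^\uparrow$. Writing both expectations via Fubini as $\int_0^m \tau(t)\Pr[y\ge t]\,dt$, equimeasurability equates the probabilities. It then remains to show $\tau(t)\le\tau^\uparrow(t)$, the quantile of $\{y\ge t\}$'s complement. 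We expect this to follow because $\{y^\uparrow\ge t\}$ is the top interval of the same $F_i$-mass, so its infimum dominates the infimum of any set of that mass. Claim~\ref{clm:nondecreasing-payment-threshold-int} identifies the bound for $y_i^\uparrow$ with its Myerson payment. Strictness for $\payoffPay>0$ requires showing that this inequality is strict on a positive-measure set of $t$ whenever $y_i$ is non-monotone on a positive-measure set, and this needs careful a.e.\ bookkeeping. Iterating over all buyers yields $\A'=\A^{(n)}$, which is BIC-a.e., monotone-a.e., and has payments satisfying~\eqref{eq:unique-pay}.
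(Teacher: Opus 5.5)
Your overall route is the paper's. Absorbing $\vbid$ into $\A^{(0)}$ is equivalent to the paper feeding $\bid_i(\sigma_i(b_i,r_1))$ into $\A^{(i-1)}$ while unprocessed buyers keep $\bid_j$. Your observation that Lemma~\ref{lem:rev-aligned-payment-bound} and Claim~\ref{clm:tau-bound} use only downward comparisons is what makes this legitimate; upward comparisons can genuinely fail, cf.\ Example~\ref{ex:failure-of-revelation-principle}. The relabeling, the Hardy--Littlewood step, the comparison $\tau\le\tau^\uparrow$ via Fubini and equimeasurability, and the cost argument all coincide with the paper's.

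\textbf{The step that fails as written is U-BIC of buyer $i$ ``by the standard argument''.} Corollary~\ref{cor:random-relabeling} gives $\E_{r'}[y_i(\sigma_i(b,r'))]=y_i^\uparrow(b)$ only for $b$ in a full-measure set $T_i$. At a bid $b\notin T_i$ the interim allocation $\hat x(b)$ is uncontrolled, while your per-unit price $\theta(b)$ is calibrated to $y_i^\uparrow(b)$. Suppose $\hat x(b)>y_i^\uparrow(b)>0$. Then a type $v$ deviating to $b$ gets $\frac{\hat x(b)}{y_i^\uparrow(b)}\bigl[y_i^\uparrow(b)(v-b)+\int_{\bottomV_i}^{b}y_i^\uparrow(z)\,dz\bigr]$. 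Whenever $\int_{\bottomV_i}^{b}y_i^\uparrow>0$, this exceeds the truthful utility $\int_{\bottomV_i}^{v}y_i^\uparrow(z)\,dz$ for all $v$ near $b$. That is a positive-measure set of types, so not even BIC-a.e.\ survives. A null set of bids matters here because BIC quantifies over all deviating bids.

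The paper's remedy (Definition~\ref{def:truthification}) is to give \emph{every} buyer zero allocation and zero payment whenever $b_i\notin T_i$. This works for three reasons:
\begin{itemize}
\item It is feasible because $\vec{0}\in\Omega$. Zeroing only buyer $i$ might not be feasible, since $\Omega$ need not be downward-closed.
\item It makes such deviations worth $0$, which is at most the truthful utility.
\item It changes no outcome distribution, since $\Pr[v_i\in T_i]=1$.
\end{itemize}

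\textbf{You also leave the strict inequality for $\payoffPay>0$ open.} The missing ingredient is Claim~\ref{clm:not-nondecreasing-means-strict}. Suppose $\Pr[y_i\ge t]=\Pr[v_i\ge\tau(t)]$ for a.e.\ $t$. Choosing a countable dense set of such levels shows that $\{y_i\ge t\}$ coincides with $[\tau(t),\topV_i]$ up to null sets simultaneously for all of them. Hence $y_i$ is nondecreasing off a null set. Contrapositively, non-monotonicity yields a positive-measure set of $t$ with $\Pr[y_i\ge t]>0$ and a strict mass inequality. The a.e.-positive density converts this into $\tau(t)<\tau^\uparrow(t)$, and hence into a strict gain in $\int_0^m\tau^\uparrow(t)\Pr[y_i\ge t]\,dt$ over the corresponding integral with $\tau$.
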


The proof of Theorem~\ref{thm:rev-aligned-bid-constraint-mono-wlog} uses two auxiliary payment identities.
The first provides another upper bound on the payment, which cannot be deduced from the bound in Lemma~\ref{lem:rev-aligned-payment-bound}; the second shows that, for monotone allocations, this threshold form is exactly the Myerson payment identity.

\begin{claim}\label{clm:tau-bound}
    Let $\A$ be an auction with UM-constrained
    buyers and a BNE-$\aee^{(i)}$ $\vbid$.

    Define $\mathcal{V}_t=\set{v_i\in[\bottomV_i,\topV_i]:\ \expect{x}_i(\beta_i(v_i);\vbid)\geq t}$
    and $\tau(t)=\inf\mathcal{V}_t$,
    where we set
     $\inf\emptyset=\bottomV_i$.

     Then for every $v_i$,
     \begin{align}
        \expect{p}_i(\bid_i(v_i);\vbid)\leq \int_{0}^{\expect{x}_i(\bid_i(v_i);\vbid)}\tau(t)dt.
        \label{eq:payment_upper}
    \end{align}
\end{claim}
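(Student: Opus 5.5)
The plan is to prove the bound in its equivalent utility form. Writing $X\coloneqq \expect{x}_i(\bid_i(v_i);\vbid)$ and $U(w)\coloneqq \expect{\util}_i(\bid_i(w);w,\vbid)$, IR gives $U(v_i)=Xv_i-\expect{p}_i(\bid_i(v_i);\vbid)$. So \eqref{eq:payment_upper} is equivalent to
\begin{align*}
U(v_i)\;\geq\; \int_0^{X}\bigl(v_i-\tau(t)\bigr)\,dt .
\end{align*}
If $X=0$ this reads $U(v_i)\ge 0$, which is IR. Assume $X>0$ from now on.

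The basic tool is a \emph{one-step mimicking inequality}, proved exactly as in Lemma~\ref{lem:rev-aligned-payment-bound}. For any $w\le w'$:
\begin{itemize}
\item the bid $\bid_i(w)$ is acceptable for type $w$, since $\vbid$ is an equilibrium and $\A$ is IR;
\item by monotonicity of $\pred_i$, it is then also acceptable for type $w'$.
\end{itemize}
The best-response property at $w'$ then gives
\begin{align*}
U(w')\;\ge\; U(w)+(w'-w)\,\expect{x}_i(\bid_i(w);\vbid).
\end{align*}
Here I use that $\vbid$ is a BNE-$\aee^{(i)}$, so best responses hold at \emph{every} type of buyer $i$. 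This matters because the types $w$ below are chosen arbitrarily (near infima), not from a full-measure set.

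Next I chain this inequality along a ladder of types that approximates $\tau$. First note the structure of $\tau$. The sets $\mathcal{V}_t$ are decreasing in $t$, so $\tau$ is nondecreasing, hence Riemann integrable on $[0,X]$. Moreover $v_i\in\mathcal{V}_t$ for $t\le X$, so $\tau(t)\le v_i$ there.

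Fix a partition $0=t_0<t_1<\dots<t_k=X$ and $\varepsilon>0$. For each $j<k$ pick $w_j\in\mathcal{V}_{t_j}$ with $w_j<\tau(t_j)+\varepsilon$, and set $w_k\coloneqq v_i$. For $j=0$, the choice $w_0=\bottomV_i$ is legitimate, since $\mathcal{V}_0$ is everything.

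The ladder must be nondecreasing, and this is the delicate point. I would enforce it by replacing $w_j$ with $\min\{w_j,w_{j+1},\dots\}$, processing from the top. Because the sets $\mathcal{V}_t$ are nested, a type chosen for a larger threshold $t_{j'}$ also lies in $\mathcal{V}_{t_j}$. So the replaced point still lies in $\mathcal{V}_{t_j}$ and is still below $\tau(t_j)+\varepsilon$.

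Applying the mimicking inequality from $w_j$ to $w_{j+1}$, and using $\expect{x}_i(\bid_i(w_j);\vbid)\ge t_j$, gives
\begin{align*}
U(w_{j+1})\;\ge\; U(w_j)+(w_{j+1}-w_j)\,t_j .
\end{align*}

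Summing these steps and using $U(w_0)\ge 0$ (IR) yields $U(v_i)\ge \sum_{j=0}^{k-1} t_j(w_{j+1}-w_j)$. Summation by parts rewrites the right-hand side as
\begin{align*}
\sum_{j=1}^{k}(t_j-t_{j-1})(v_i-w_j)\;\ge\;\sum_{j=1}^{k}(t_j-t_{j-1})\bigl(v_i-\tau(t_j)\bigr)-\varepsilon X .
\end{align*}
This is a right-endpoint Riemann sum for $\int_0^X(v_i-\tau(t))\,dt$, with the convention $\tau(t_k)\le v_i$ at the last endpoint.

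To finish, let $\varepsilon\to 0$ and refine the partition. Since $\tau$ is monotone, the Riemann sums converge to the integral, which gives the claimed bound.

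I expect the main obstacle to be the bookkeeping of the ladder: keeping it monotone while staying inside the sets $\mathcal{V}_{t_j}$, and handling the infima, which need not be attained. Both are resolved by the nestedness of the $\mathcal{V}_t$ together with the every-type best-response assumption.
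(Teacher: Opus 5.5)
Your proposal is correct and follows essentially the same route as the paper's proof. Both arguments use the one-step mimicking inequality $U(w')\ge U(w)+(w'-w)\,x(w)$, which comes from monotone constraints and best responses at every type. Both then build a nondecreasing ladder of near-infimum types using the nestedness of the $\mathcal{V}_t$, telescope with summation by parts, and take a Riemann-sum limit for the monotone $\tau$.

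There is one bookkeeping slip. Because you set $w_k=v_i$, the $j=k$ term of your displayed inequality goes the wrong way: $0=v_i-w_k\le v_i-\tau(t_k)$. This term is at most $(t_k-t_{k-1})(v_i-\underline{v}_i)$, so it vanishes as the partition is refined and the conclusion is unaffected. The paper avoids the issue by ending the ladder at a near-infimum type $w_K\in\mathcal{V}_{t_K}$ with $w_K\le v_i$, then adding one final step from $w_K$ to $v_i$.
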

\begin{proof}
In this proof we abuse notation and
denote
\begin{align*}
    x(v):=\expect{x}_i(\bid_i(v);\vbid)
    \quad\text{and}\quad
    p(v):=\expect{p}_i(\bid_i(v);\vbid).
\end{align*}
Additionally, we denote the interim utility for buyer $i$ 
with value $v$, when all buyers bid by $\vbid$ as
\begin{align*}
    U(v):=\expect{\util}_i(\bid_i(v);v,\vbid).
\end{align*}
Using these notations, we must prove that for every $v\in V$,
$p(v)\leq\int_0^{x(v)}\tau(t)dt$.

Since $\A$ is IR and $\vbid$ is a BNE-\aee, 
for every $v\in [\bottomV_i,\topV_i]$ we have
\begin{align*}
    U(v)=\unconstrained(v;v)=x(v)v-p(v).
\end{align*}

Together with the fact that the buyers' constraints are 
monotone and that buyer $i$ with type $v$ bids $\bid_i(v)$,
we get that for every $z\leq v$,
\begin{align*}
U(v)\geq\util(z;v)=\unconstrained(z;v)=x(z)v-p(z)=\unconstrained(z;z)+(v-z)x(z)\quad\forall z\leq v.
\end{align*}
Thus, for every $z\leq v$,
\begin{align}\label{eq:clm-2-proof-1}
    U(v)\geq U(z)+(v-z)x(z).
\end{align}

Now fix $v\in V_i$.
If $x(v)=0$ then by IR and since $\vbid$ is a BNE-\aee, 
$p(v)=0\leq\int_0^0\tau(t)dt$ and we're done, so we assume $x(v)>0$.

Select a number $N\in\mathbb{N}_{> 0}$. Define $K=\lfloor Nx(v) \rfloor$ and levels
\begin{align*}
    t_k=\frac{k}{N}\quad(k=0,1,\dots,K).
\end{align*}

For each $k\geq 1$, by definition of $\tau(t_k)$, we can pick a type $w_k$ such that
\begin{align}\label{eq:clm-2-proof-2}
    w_k\leq\tau(t_k)+\frac{1}{N}\quad\text{and}\quad x(w_k)\geq t_k.
\end{align}
We select the sequence such that $w_k\geq w_{k-1}$ for every $k\geq1$. The specific selection of $w_k$ is as follows:
\begin{enumerate}
    \item First select some $w_K\in \mathcal{V}_{t_K}$ such that $w_K\leq\min\set{v,\tau(t_K)+\frac{1}{N}}$.

    This is possible since $v\geq \tau(t_K)$.
    \item Next, for $k=K-1,\dots,1$ (in decreasing order):
    \begin{itemize}
        \item If $w_{k+1}\leq\tau(t_k)+\frac{1}{N}$, set $w_k\coloneqq w_{k+1}$ (which is acceptable since $\mathcal{V}_{t_{k+1}}\subseteq \mathcal{V}_{t_k}$).
        \item Else, pick any $w_{k}\in \mathcal{V}_{t_{k}}$ such that $w_{k}\leq\tau(t_k)+\frac{1}{N}$.
    \end{itemize}
\end{enumerate}
We also set $w_0=\bottomV_i$. By IR, $U(w_0)\geq0$.

Importantly, this selection process ensures that $w_k\geq w_{k-1}$ for every $k=1,\dots,K$.

Since $w_k$ is nondecreasing, we can apply~\eqref{eq:clm-2-proof-1} for each $k$ to get
\begin{align*}
    U(w_k)\geq U(w_{k-1})+(w_k-w_{k-1})x(w_{k-1})\geq U(w_{k-1})+(w_k-w_{k-1})t_{k-1}.
\end{align*}
The last transition is due to~\eqref{eq:clm-2-proof-2}.

Summing over all $k$'s we get
\begin{align*}
    U(w_K)\geq\sum_{k=1}^K (w_k-w_{k-1})t_{k-1}.
\end{align*}

We now add another step to the inequality by applying~\eqref{eq:clm-2-proof-1} on $v\geq w_K$ to get

\begin{align}\label{eq:clm-2-proof-3}
    U(v)\geq U(w_K)+(v-w_K)x(w_K)
    \geq \sum_{k=1}^K (w_k-w_{k-1})t_{k-1}+(v-w_K)t_K.
\end{align}

Because $p(v)=x(v)v-U(v)$, ~\eqref{eq:clm-2-proof-3} yields
\begin{align}\label{eq:clm-2-proof-4}
    p(v)\leq v(x(v)-t_K)+w_Kt_K-\sum_{k=1}^K (w_k-w_{k-1})t_{k-1}.
\end{align}

Since when $N\longrightarrow\infty$, 
the term $t_K=\frac{\lfloor Nx(v)\rfloor}{N}$ tends to $x(v)$,
 then the first term in the above inequality is
  $v(x(v)-t_K)=v(x(v)-x(v))+O\left(\frac{1}{N}\right)$ and tends to $0$.
  For the second term, plugging in $t_k=\frac{k}{N}$, we have
\begin{align*}
    w_Kt_K-\sum_{k=1}^K (w_k-w_{k-1})t_{k-1}
    &=\frac{1}{N}\brackets{Kw_K-\sum_{k=1}^K (w_k-w_{k-1})(k-1)}\\
    &=\frac{1}{N}\brackets{ Kw_K-\sum_{k=1}^K w_k (k-1)+\sum_{k=0}^{K-1} w_kk}\\
    &=\frac{1}{N}\brackets{ w_K-\sum_{k=1}^{K-1}w_{k}(k-1) +\sum_{k=1}^{K-1} w_kk}\\
    &=\frac{1}{N}\sum_{k=1}^{K}w_{k}.
\end{align*}
Plugging these into~\eqref{eq:clm-2-proof-4},
\begin{align*}
    p(v)\leq O\brackets{\frac{1}{N}}+\frac{1}{N}\sum_{k=1}^{K}w_{k}.
\end{align*}

Using $w_k\leq\tau(t_k)+\frac{1}{N}$ from~\eqref{eq:clm-2-proof-2},
\begin{align*}
      p(v)\leq O\brackets{\frac{1}{N}}+\frac{1}{N}\sum_{k=1}^{K}\tau(t_k).
\end{align*}

Since $\tau(\cdot)$ is nondecreasing, we have a Riemann sum for which we can take $N$ to $\infty$, getting
\begin{align*}
    p(v)\leq\int_{0}^{x(v)}\tau(t)dt.
\end{align*}
\end{proof}

We then go on to show that if an allocation rule is \emph{monotone}, the two payment bounds coincide:

\begin{claim}\label{clm:nondecreasing-payment-threshold-int}
    Let $y:[\bottomV_i,\topV_i]\to [0,m]$ be a measurable
    nondecreasing function.
    Then
    setting $\mathcal{V}_t=\set{v_i\in[\bottomV_i,\topV_i]:\,y(v_i)\geq t}$
    and $\tau(t)=\inf\mathcal{V}_t$, where we set $\inf\emptyset=\bottomV_i$,
    it holds that
    \begin{align}\label{eq:sketch-layer-cake}
        y(v_i)v_i
        -\int_{\bottomV_i}^{v_i}y(z)dz
        =\int_{0}^{y(v_i)}\tau(t)dt.
    \end{align}
\end{claim}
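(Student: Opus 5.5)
The identity is a layer-cake computation: I will express both sides as the area of the same planar region. Fix $v:=v_i$ and write $\bottomV:=\bottomV_i$. First, because $y$ is nondecreasing, each superlevel set $\mathcal{V}_t$ is an interval. It is either empty, or of the form $[\tau(t),\topV_i]$ or $(\tau(t),\topV_i]$. Consequently, for $z\in[\bottomV,\topV_i]$ and $t>0$, we have $y(z)\ge t$ implies $z\ge\tau(t)$, and $z>\tau(t)$ implies $y(z)\ge t$. In particular, $\tau$ is nondecreasing in $t$, and for $t\le y(v)$ we have $v\in\mathcal{V}_t$, hence $\tau(t)\le v$. So the right-hand side is a well-defined integral of a bounded monotone function.

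The main step is to rewrite the left-hand side using the layer-cake representation $y(z)=\int_0^{y(z)}dt=\int_0^{m}\mathbf{1}[y(z)\ge t]\,dt$. By Tonelli,
\begin{align*}
\int_{\bottomV}^{v}y(z)\,dz=\int_0^{m}\lambda\bigl(\{z\in[\bottomV,v]: y(z)\ge t\}\bigr)\,dt .
\end{align*}
For $t\in(0,y(v)]$, the inner set is $\mathcal{V}_t\cap[\bottomV,v]$. This is an interval from $\tau(t)$ to $v$, with the endpoint $\tau(t)$ possibly excluded, which does not affect Lebesgue measure. Its measure is therefore $v-\tau(t)$. For $t>y(v)$, monotonicity gives $y(z)\le y(v)<t$ for all $z\le v$, so the set is empty. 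Hence
\begin{align*}
\int_{\bottomV}^{v}y(z)\,dz=\int_0^{y(v)}\bigl(v-\tau(t)\bigr)\,dt = y(v)\,v-\int_0^{y(v)}\tau(t)\,dt,
\end{align*}
and rearranging gives the identity of the claim.

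Two degenerate points need checking. First, when $y(v)=0$ both sides vanish. Second, the convention $\inf\emptyset=\bottomV_i$ is never invoked for $t\le y(v)$, since there $\mathcal{V}_t\ni v$ is nonempty. The only mildly delicate step is the measure-zero boundary issue when identifying $\{z\le v: y(z)\ge t\}$ with an interval of length $v-\tau(t)$. Monotonicity settles this directly, as shown above, so no integrability beyond measurability and boundedness of $y$ is needed, and Tonelli applies because the integrand is nonnegative.
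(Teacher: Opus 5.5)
Your proposal is correct and is the same layer-cake argument the paper uses: write $\int_{\bottomV_i}^{v_i}y(z)\,dz$ as $\int_0^{y(v_i)}$ of the length of $\{z\le v_i: y(z)\ge t\}$, identify that set with an interval of length $v_i-\tau(t)$, and rearrange. You additionally make the endpoint issue, the vanishing of the range $t>y(v_i)$, and the Tonelli step explicit. One cosmetic caveat: your assertions that $z>\tau(t)$ implies $y(z)\ge t$ and that $\tau$ is nondecreasing hold only where $\mathcal{V}_t\neq\emptyset$ (with the convention $\inf\emptyset=\bottomV_i$ they can fail above $\sup y$), but this does not matter because you only use them for $t\le y(v_i)$.
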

\begin{proof}
By the layer cake representation, since we assume 
$y(v)$ is measurable and nondecreasing,

\begin{align*}
    \int_{\bottomV_i}^{v_i} y(z)dz
    &=\int_0^{y(v_i)} 
        \text{length}(\set{z\in[\bottomV_i,{v_i}]:y(z)\geq t})dt
    \\&=\int_0^{y(v_i)} \text{length}([\tau(t),v_i])dt
    \\&=\int_0^{y(v_i)} {v_i}-\tau(t)dt
    =y(v_i){v_i}-\int_0^{y(v_i)}\tau(t)dt.
\end{align*}

The claim follows immediately.
\end{proof}

We are now ready to prove Theorem~\ref{thm:rev-aligned-bid-constraint-mono-wlog}.

\begin{proof}[Proof of Theorem~\ref{thm:rev-aligned-bid-constraint-mono-wlog}]
Let $\A$ be an auction with BNE-\aee $\vbid^{(0)}$.
 We show that there exists a sequence of auctions
$\A^{(0)}:=\A, \A^{(1)},\ldots,\A^{(n)}$ such that for each $i$:
\begin{enumerate}[label=(\alph*)]
    \item\label{it:bic} The strategy profile $\vbid^{(i)}$ defined as
        \begin{align*}
            \bid^{(i)}_j(v_j):=\begin{cases}
            v_j & \text{if }j\in\set{1,\dots,i},\\ 
            \bid^{(i-1)}_j(v_j) & \text{if }j\in\set{i+1,\dots,n}
            \end{cases}
        \end{align*}
        is a BNE-$\aee^{(i+1,\dots,n)}$ of $\A^{(i)}$.
    \item\label{it:ir-alloc-feas}  $\A^{(i)}$ is IR and allocation-feasible.
     
    \item\label{it:mono} Buyers $j=1,\ldots,i$ have monotone-\aee interim allocations $\expect{x}_j(\bid_j(\cdot);\vbid)$.
    \item Buyers $j=1,\ldots,i$ have Myerson payment almost everywhere.
    \item\label{it:payoff} $\E[A^{(i)}|\vbid^{(i)}]\ge \E[A^{(i-1)}|\vbid^{(i-1)}]$, 
    with the inequality being strict if $\A^{(i-1)}$ is not monotone-\aee for buyer $i$,
    and either $\payoffPay>0$ or $\payoffValue(\cdot)$ is strictly increasing.
\end{enumerate}

Setting $\A':=\A^{(n)}$ will conclude the proof.

Throughout the proof, we
use randomness  $r\coloneqq(r_{A},r_{\ell})$ to denote two independent random
 variables distributed by $\Uni[0,1]$:
  one is used for the internal randomness of the auction;
   and the other for a randomized labeling (defined below).
Expectations $\E_r[\cdot]$ are taken over
the joint draw of $(r_{A},r_{\ell})$.

Given $\A^{(i-1)}$: if $\A^{(i-1)}$ is given and fulfills 
 conditions~\ref{it:bic}-\ref{it:payoff} for $i$, set $\A^{(i)}:=\A^{(i-1)}$;
 else, define $\A^{(i)}$ to be the truthification of
 $\A^{(i-1)}$ for buyer $i$ as defined in Definition~\ref{def:truthification}.
 
 We will prove by induction on $i$ that 
 $\A^{(i)}$ satisfies conditions~\ref{it:bic}-\ref{it:payoff} for $i$. 
 For $i=0$ these conditions are trivially satisfied, so we prove for
 $i\geq1$. 

Suppose $\A^{(i-1)}$ satisfies conditions~\ref{it:bic}-\ref{it:payoff} for $i-1$ but not for $i$.
We show $\A^{(i)}$ has all the desired properties:
\begin{enumerate}
    \item \textbf{$\vbid^{(i)}$ is a BNE-$\mathbf{\aee^{(i+1,\dots,n)}}$ of $\A^{(i)}$:}
    By Lemma~\ref{lem:truthification-buyer-i-bic}, $\A^{(i)}$ is truthful-\aee for buyer $i$ w.r.t.\ $\vbid$,
    which means that $\bid^{(i)}_i(v_i)=v_i$ is indeed a best response for buyer $i$ in $\A^{(i)}$
     when the other buyers play according to $\vbid^{(i)}$.
     In Lemma~\ref{lem:rearranging-preserves-bic} we showed that for every $j\neq i$, \aee $v_j$, and every $b_j$,
     $        \E_{\bv_{-j}}\sqbr{\expect{\util}^{(i)}_j(\bid^{(i)}_{-j}(\bv_{-j}),b_j;v_j)}
        =\E_{\bv_{-j}}\sqbr{\expect{\util}^{(i-1)}_j(\bid^{(i-1)}_{-j}(\bv_{-j}),b_j;v_j)}$.
        Since $\vbid^{(i-1)}$ is a BNE-\aee of $\A^{(i-1)}$, it follows that for every $j\neq i$, \aee $v_j$ and every $b_j$,
        \begin{align*}
            \E_{\bv_{-j}}\sqbr{\expect{\util}^{(i)}_j(\bid^{(i)}_{-j}(\bv_{-j}),\bid^{(i)}_j(v_j);v_j)}
            &=\E_{\bv_{-j}}\sqbr{\expect{\util}^{(i-1)}_j(\bid^{(i-1)}_{-j}(\bv_{-j}),\bid^{(i)}_j(v_j);v_j)}
            \\
            &=\E_{\bv_{-j}}\sqbr{\expect{\util}^{(i-1)}_j(\bid^{(i-1)}_{-j}(\bv_{-j}),\bid^{(i-1)}_j(v_j);v_j)}
            \\&\geq
            \E_{\bv_{-j}}\sqbr{\expect{\util}^{(i-1)}_j(\bid^{(i-1)}_{-j}(\bv_{-j}),b_j;v_j)}
            \\&=\E_{\bv_{-j}}\sqbr{\expect{\util}^{(i)}_j(\bid^{(i)}_{-j}(\bv_{-j}),b_j;v_j)},
        \end{align*}
        meaning that $\bid^{(i)}_j$ is indeed a Bayesian best response for buyer $j$ in $\A^{(i)}$
        when the other buyers play according to $\vbid^{(i)}$.
        For $j=i+1,\dots,n$, by the induction hypothesis
        buyer $j$ bids $\bid^{(i-1)}_j(v_j)=\bid^{(i)}_j(v_j)$ pointwise for every
        valuation $v_j$, so we can follow the exact
        chain from above to get that also in $\A^{(i)}$,
        strategy $\bid_j$ is a Bayesian best response, pointwise for every valuation of buyer $j$.
        Thus, $\vbid^{(i)}$ is a BNE-$\aee^{(i+1,\dots,n)}$ of $\A^{(i)}$.
    \item \textbf{Payments are nonnegative and it is allocation-feasible:} 
    This is shown in Claim~\ref{clm:truthification-nonnegative-pay-allocation-feasible}.
    \item \textbf{It is IR:} 
    In Lemma~\ref{lem:truthification-buyer-i-bic} we showed that for buyer $i$, 
    bidding truthfully in $\A^{(i)}$ results in nonnegative utility.
    For $j\neq i$, due to Lemma~\ref{lem:rearranging-preserves-bic} together
    with the fact that $\vbid^{(i-1)}$ is a BNE-\aee of $\A^{(i-1)}$ and the fact that $\A^{(i-1)}$ is IR,
    we have that for \aee $v_j$ and every $b_j$,
    \begin{align*}
        \E_{\bv_{-j}}\sqbr{\expect{\util}^{(i)}_j(\bid^{(i)}_{-j}(\bv_{-j}),v_j;v_j)}
        &=\E_{\bv_{-j}}\sqbr{\expect{\util}^{(i)}_j(\bid^{(i)}_{-j}(\bv_{-j}),\bid^{(i)}_j(v_j);v_j)}
       \geq0.
    \end{align*}
     \item \textbf{Interim allocations are nondecreasing for $j=1,\dots,i$:} 
     Let $j<i$. 
     Denote the interim allocation of buyer $j$ in $\A^{(i)}$
     given BNE-\aee $\vbid^{(i)}$ by
     \begin{align*}
        \hat{x}_j^{(i)}(v_j):=\E_{\bv_{-j},r}\sqbr{x^{(i)}_j(\vbid^{(i)}(\bv),r)},
     \end{align*}
     and similarly for $\A^{(i-1)}$. 
     By definition of $\bid_j^{(i)}$ and
     $\bid_j^{(i-1)}$,
     for $j<i$ we have 
     \begin{align*}
        \hat{x}_j^{(i)}(v_j)=\E_{\bv_{-j},r}\sqbr{x^{(i)}_j((\vbid_{-j}^{(i)}(\bv_{-j}),v_j),r)}
        \text{ and}\quad
        \hat{x}_j^{(i-1)}(v_j)=\E_{\bv_{-j},r}\sqbr{x^{(i-1)}_j((\vbid_{-j}^{(i-1)}(\bv_{-j}),v_j),r)}.
     \end{align*}

    It follows from Lemma~\ref{lem:rearranging-preserves-bic}
    that $\hat{x}_j^{(i)}(v_j)=\hat{x}_j^{(i-1)}(v_j)$.
    We assumed that $\A^{(i-1)}$ 
    satisfies condition~\ref{it:mono} for $i-1$, 
    meaning that $\hat{x}_j^{(i-1)}(\cdot)$ is nondecreasing, 
    and thus so is $\hat{x}_j^{(i)}(\cdot)$.

     For $i$, we showed in Lemma~\ref{lem:truthification-buyer-i-bic}
      that $\hat{x}_i^{(i)}(\cdot)$ is monotone-\aee.
\item \textbf{Payments are Myerson payments almost everywhere for buyers $j=1,\dots,i$:} 
  For $j=i$ this follows immediately from the definition of the truthification,
together with the fact we proved that $\A^{(i)}$ is BIC-\aee for $j=1,\dots,i$. For $j<i$ this follows from the induction hypothesis.

 \item \textbf{Expected payoff increased (weakly, and sometimes strictly):}
 For $j\neq i$ we 
 first note that by definition of $\vbid^{(i)}$, we have 
 $\bid_j^{(i)}(v_j)=\bid_j^{(i-1)}(v_j)$.
 Then we
 apply Lemma~\ref{lem:rearranging-preserves-bic} that shows that
 the realized allocations and payments for buyer $j$ when bidding
 $\bid_j^{(i)}(v_j)=\bid_j^{(i-1)}(v_j)$
 in $\A^{(i)}$ and $\A^{(i-1)}$ 
 are the same (when the other buyers play according to $\vbid^{(i)}$ 
 and $\vbid^{(i-1)}$, respectively).
  This implies that for every $j\neq i$, 
\begin{align*}
    &\E_{\bv,r}\sqbr{\payoffValue(v_j)x_j^{(i)}(\vbid^{(i)}(\bv),r)
    +\payoffPay p_j^{(i)}(\vbid^{(i)}(\bv),r)}
    \\
    &\qquad=
    \E_{\bv,r}\sqbr{\payoffValue(v_j)x_j^{(i-1)}(\vbid^{(i-1)}(\bv),r)
    +\payoffPay p_j^{(i-1)}(\vbid^{(i-1)}(\bv),r)}.
\end{align*}

From Lemma~\ref{lem:truthification-buyer-i-bic}, it follows that $\expect{x}_i^{(i)}(\cdot;\vbid^{(i)})$
is equimeasurable to a nondecreasing rearrangement, denoted
$x_i^\uparrow(\cdot)$, of 
$\expect{x}_i^{(i-1)}(\bid_i^{(i-1)}(\cdot);\vbid^{(i-1)})$.
Since $\bid_i^{(i)}(v_i)=v_i$, also  $\expect{x}_i^{(i)}(\cdot;\vbid^{(i)})$
is equimeasurable to $x_i^\uparrow(\cdot)$, and therefore
\begin{align*}
    \E_{v_i}\sqbr{\expect{x}_i^{(i)}(v_i;\vbid^{(i)})}=\E_{\bv,r}\sqbr{x_i^{(i)}(\vbid^{(i)}(\bv),r)}=\E_{v_i}\sqbr{x_i^\uparrow(v_i)}.
\end{align*}
Moreover, 
by definition of $p_i^{(i)}$ and notations from Definition~\ref{def:truthification}, 
for every $v_i\in T_i$ (meaning also $\bid_i^{(i)}(v_i)=v_i\in T_i$), we have
\begin{align*}
    &\E_{\bv,r}\sqbr{p_i^{(i)}(\vbid^{(i)}(\bv),r)}
    =\E_{\bv,r}\left[x_i^{(i)}(\vbid^{(i)}(\bv),r)\gamma_i(\bid_i^{(i)}(v_i))\ \middle|\ v_i\in T_i\right]
    \\
    &\qquad=\E_{\bv,r}\left[x_i^{(i)}(\vbid^{(i)}(\bv),r)\gamma_i(v_i)\ \middle|\ v_i\in T_i\right]
    \\
    &\qquad=\E_{\bv,r}\left[x_i^{(i)}(\vbid^{(i)}(\bv),r)\brackets{v_i-\frac{\int_{\bottomV_i}^{v_i}\expect{x}_i^{(i)}(z;\vbid^{(i)})dz}{\expect{x}_i^{(i)}(v_i;\vbid^{(i)})}}\ \middle|\ v_i\in T_i\land\expect{x}_i^{(i)}(v_i;\vbid^{(i)})>0\right]
    \cdot\Pr_{v_i}\sqbr{\expect{x}_i^{(i)}(v_i;\vbid^{(i)})>0}
    \\
    &\qquad=\E_{v_i}\left[\expect{x}_i^{(i)}(v_i;\vbid^{(i)})v_i-\int_{\bottomV_i}^{v_i}\expect{x}_i^{(i)}(z;\vbid^{(i)})dz\ \middle|\ v_i\in T_i\land\expect{x}_i^{(i)}(v_i;\vbid^{(i)})>0\right]
    \cdot\Pr_{v_i}\sqbr{\expect{x}_i^{(i)}(v_i;\vbid^{(i)})>0}
    \\
    &\qquad=\E_{v_i}\left[x_i^\uparrow(v_i)v_i-\int_{\bottomV_i}^{v_i}x_i^\uparrow(z)dz\right]
    =\E_{v_i}\sqbr{p_i^\uparrow(v_i)}
\end{align*}
Thus, we can apply Claim~\ref{clm:single-buyer-rearrangement-payoff},
which yields
\begin{align*}
    \E_{\bv,r}\sqbr{\payoffValue(v_i)x_i^{(i)}(\vbid^{(i)}(\bv),r)
    +\payoffPay p_i^{(i)}(\vbid^{(i)}(\bv),r)}
    &=
    \E_{\bv,r}\sqbr{\payoffValue(v_i)x_i^\uparrow(v_i)
    +\payoffPay p_i^\uparrow(v_i)}
    \\&
    \geq \E_{\bv,r}[\payoffValue(v_i)x^{(i-1)}_i(\vbid^{(i-1)}(\bv),r)
        +\payoffPay p^{(i-1)}_i(\vbid^{(i-1)}(\bv),r)].
\end{align*}
    Moreover, if $\expect{x}_i^{(i-1)}(\bid_i^{(i-1)}(\cdot);\vbid^{(i-1)})$ is not monotone-\aee
    end either $\payoffPay>0$ or $\payoffValue(\cdot)$ is
    strictly increasing, then the inequality is strict.

As for the allocation cost, using the definition of $T_i$ and $\A^{(i)}$ from 
Definition~\ref{def:truthification},
\begin{align*}
    \E_{\bv,r}\sqbr{c\brackets{\sum_{j=1}^n x_j^{(i)}\brackets{\vbid^{(i)}(\bv),r}}}
    &\underset{(1)}{=}\E_{\bv,r}\left[c\brackets{\sum_{j=1}^n x_j^{(i)}\brackets{\brackets{v_i,\vbid^{(i-1)}_{-i}(\bv_{-i})},r}}\,\middle|\ v_i\in T_i\right]
    \\
    &\underset{(2)}{=}\E_{\bv,r}\left[c\brackets{\sum_{j=1}^n x_j^{(i-1)}\brackets{\brackets{\bid^{(i-1)}(\sigma_i(v_i,r_\ell)),\vbid^{(i-1)}_{-i}(\bv_{-i})},r_a}}\,\middle|\ v_i\in T_i\right]
    \\
    &\underset{(3)}{=}\E_{\bv,r}\left[c\brackets{\sum_{j=1}^n x_j^{(i-1)}\brackets{\brackets{\bid^{(i-1)}(v_i),\vbid^{(i-1)}_{-i}(\bv_{-i})},r_a}}\,\middle|\ v_i\in T_i\right]
    \\
    &\underset{(4)}{=}\E_{\bv,r}\left[c\brackets{\sum_{j=1}^n x_j^{(i-1)}\brackets{\vbid^{(i-1)}(\bv),r_a}}\right]
\end{align*}
where (1) uses the definition of $\vbid^{(i)}$ and the fact that $\Pr_{v_i\sim F_i}\sqbr{v_i\in T_i}=1$;
(2) uses the definition of $\A^{(i)}$;
(3) relies on the fact that for $r\sim\Uni[0,1]$ and $v_i\sim F_i$, we have that $\sigma_i(v_i,r)\sim F_i$;
and (4) relies on the fact that $\Pr_{v_i\sim F_i}\sqbr{v_i\in T_i}=1$.

    Putting it all together,
    \begin{align*}
        \E\sqbr{\A^{(i)}\mid\,\vbid^{(i)}}
        &=\E_{\bv,r}\sqbr{\sum_{j=1}^n\brackets{\payoffValue(v_j)x_j^{(i)}(\vbid^{(i)}(\bv),r)
    +\payoffPay p_j^{(i)}(\vbid^{(i)}(\bv),r)}-c\brackets{\sum_{j=1}^n x^{(i)}_j(\vbid^{(i)}(\bv),r)}}\\
        &\geq \E_{\bv,r}\sqbr{
            \sum_{j=1}^n\brackets{\payoffValue(v_j)x_j^{(i-1)}(\vbid^{(i-1)}(\bv),r)
    +\payoffPay p_j^{(i-1)}(\vbid^{(i-1)}(\bv),r)}-c\brackets{\sum_{j=1}^n x^{(i-1)}_j(\vbid^{(i-1)}(\bv),r)}
        }\\
        &=\E_{\bv,r}\sqbr{\A^{(i-1)}\mid\,\vbid^{(i-1)}},
    \end{align*}
    with a strict inequality if $\A^{(i-1)}$ is not monotone-\aee for buyer $i$
    and either $\payoffPay>0$ or $\payoffValue(\cdot)$ is strictly increasing.
\end{enumerate}
Set $\A':=\A^{(n)}$, and we are done.
\end{proof}

Now we can prove Theorem~\ref{thm:rev-myerson}.

\begin{proof}[Proof of Theorem~\ref{thm:rev-myerson}]
    Let $\A^{(0)}$ be an auction 
    with BNE $\vbid^{(0)}$.
    By Theorem~\ref{thm:rev-aligned-bid-constraint-mono-wlog}
     there exists a BIC-\aee and monotone-\aee auction $\A^{(1)}$,
     such that $\E\sqbr{\A^{(1)}\mid\vbid^{tru}}\geq\E\sqbr{\A^{(0)}\mid\vbid^{(0)}}$,
     where $\vbid^{tru}$ denotes the truthful strategy (since $\A^{(1)}$
     is BIC-\aee, $\vbid^{tru}$ is a BNE-\aee).
    By Theorem~\ref{thm:rev-aligned-bid-constraint-mono-wlog}, the inequality is strict if
    $\A^{(0)}$ is not monotone-\aee and either $\payoffValue(\cdot)$
    is strictly increasing or $\payoffPay>0$. 
    Thus if $\A^{0}$ is optimal and either $\payoffValue(\cdot)$
    is strictly increasing or $\payoffPay>0$,
    it must be monotone-\aee and
    item (2) follows immediately.

    Now we consider auction $\A^{(2)}$ as defined in Definition~\ref{def:rev-aligned-ironed-maximizer}, which is one-shot and satisfies Eq.~\eqref{eq:unique-pay}. Additionally, by Lemma~\ref{lem:rev-aligned-optimal-among-tight} it is
    U-DSIC and monotone.

    Let $\mathcal{V}:=\mathcal{V}_1\times\dots\mathcal{V}_n\subseteq[\bottomV_1,\topV_1]\times\dots[\bottomV_n,\topV_n]$
    be the set of valuation profiles such that:
    \begin{enumerate}
        \item $\expect{x}^{(1)}_i(\cdot;\vbid^{tru})$ is nondecreasing over $\mathcal{V}_i$
        for every $i\in[n]$,
        \item for every $i\in[n]$ and $v_i\in\mathcal{V}_i$, buyer $i$ bids 
        truthfully in $\A^{(1)}$ given valuation $v_i$, and
        \item for every $i\in[n]$ and $v_i\in\mathcal{V}_i$, buyer $i$'s payment 
        in $\A^{(1)}$ is the Myerson payment.
    \end{enumerate}
    Since $\A^{(1)}$ is BIC-\aee, monotone-\aee,
    and has Myerson payments a.e.,
    $\Pr_{\bv}[\bv\in\mathcal{V}]=1$.

    The expected payoff of $\A^{(2)}$ is
    \begin{align*}
        \E\sqbr{\A^{(2)}}&\underset{(1)}{=}
        \E_{\bv,r}\sqbr{\sum_{i=1}^n \irn_{i,\Pi}(v_i)x^{(2)}_i(\bv,r)-c\brackets{\sum_{i=1}^n x^{(2)}_i(\bv,r)}}\\
        &\underset{(2)}{\geq}\E_{\bv,r}\sqbr{\sum_{i=1}^n \irn_{i,\Pi}(v_i)x^{(1)}_i(\bv,r)-c\brackets{\sum_{i=1}^n x^{(1)}_i(\bv,r)}}\\
        &\underset{(3)}{\geq}\E_{\bv,r}\left[\sum_{i=1}^n \varphi_{i,\Pi}(v_i)x^{(1)}_i(\bv,r)-c\brackets{\sum_{i=1}^n x^{(1)}_i(\bv,r)}\right]\\
        &\underset{(4)}{=}\E_{\bv,r}\left[\sum_{i=1}^n \varphi_{i,\Pi}(v_i)x^{(1)}_i(\bv,r)-c\brackets{\sum_{i=1}^n x^{(1)}_i(\bv,r)}\middle|\ \bv\in\mathcal{V}\right]\\
        &\underset{(5)}{=}\E_{\bv,r}\left[\sum_{i=1}^n \brackets{\payoffValue(v_i)x^{(1)}_i(\bv,r)+\payoffPay p^{(1)}_i(\bv,r)}-c\brackets{\sum_{i=1}^n x^{(1)}_i(\bv,r)}\middle|\ \bv\in\mathcal{V}\right]\\
        &\underset{(6)}{=}\E_{\bv,r}\left[\sum_{i=1}^n \brackets{\payoffValue(v_i)x^{(1)}_i(\bv,r)+\payoffPay p^{(1)}_i(\bv,r)}-c\brackets{\sum_{i=1}^n x^{(1)}_i(\bv,r)}\right]\\
        &=\E\sqbr{\A^{(1)}}\\
        &\geq \E\sqbr{\A^{(0)}\mid\vbid^{(0)}},
    \end{align*}
    where (1) is due to Lemma~\ref{lem:rev-aligned-optimal-among-tight},
    (2) follows from $\vec{x}^{(2)}$ being a maximizer of the ironed 
    virtual welfare, 
      (3) is due to Lemma~\ref{lemma:ironing-properties},
    (4) and (6) are due to $\Pr_{\bv}[\bv\in\mathcal{V}]=1$,
    and finally (5) is due to Myerson's analysis~\cite{myerson1981optimal},
    combined with the payments on $\mathcal{V}$ being Myerson payments.
    This completes the proof of item (1).

    For item (3), let $\A$ be an optimal BIC auction and assume $\payoffPay>0$.
    If the payments are not Myerson's payment a.e.\ for 
    some $i\in[n]$ then there exists a set
     $\mathcal{U}_i\subseteq[\bottomV_i,\topV_i]$
     with positive probability
    such that for every $v_i\in\mathcal{U}_i$,
    we have $\E_{\bv_{-i},r}[p_i(\bv,r)]<\E_{\bv_{-i},r}\sqbr{p_i^{My}(v_i)}$,
    and thus by Myerson's analysis,
    \begin{align*}
        \E_{\bv,r}\sqbr{\payoffValue(v_i)x_i(\bv,r)+\payoffPay p_i(\bv,r)}
        &=\E_{\bv,r}\sqbr{\payoffValue(v_i)x_i(\bv,r)+\payoffPay p_i(\bv,r)\mid v_i\in\mathcal{U}_i}
        \Pr[ v_i\in\mathcal{U}_i]
        \\&\qquad+\E_{\bv,r}\sqbr{\payoffValue(v_i)x_i(\bv,r)+\payoffPay p_i(\bv,r)\mid v_i\notin\mathcal{U}_i}
        \Pr[ v_i\notin\mathcal{U}_i]\\
        &<\E_{\bv,r}\sqbr{\payoffValue(v_i)x_i(\bv,r)+\payoffPay p^{My}_i(v_i)\mid v_i\in\mathcal{U}_i}
        \Pr[ v_i\in\mathcal{U}_i]
        \\&\qquad+\E_{\bv,r}\sqbr{\payoffValue(v_i)x_i(\bv,r)
        +\payoffPay p^{My}_i(v_i)\mid v_i\notin\mathcal{U}_i}
        \Pr[ v_i\notin\mathcal{U}_i]\\
        &=\E_{\bv,r}\sqbr{\payoffValue(v_i)x_i(\bv,r)
        +\payoffPay p^{My}_i(v_i)}\\
        &=\E_{\bv,r}\sqbr{\varphi_{i,\Pi}(v_i)x_i(\bv,r)},
    \end{align*}
    where the inequality is strict due to $\Pr[ v_i\in\mathcal{U}_i]>0$
    and $\payoffPay>0$.
    
By item (2), $\A$ is monotone-a.e. Then by 
    Corollary~\ref{cor:pay-bound-by-varphi} and
     Lemma~\ref{lemma:ironing-properties}, for every $j\in[n]$,
    \begin{align*}
        \E_{\bv,r}\sqbr{\payoffValue(v_j)x_j(\bv,r)+\payoffPay p_j(\bv,r)}
        &\leq\E_{\bv,r}\sqbr{\varphi_{j,\Pi}(v_j)x_j(\bv,r)}
        \le\E_{\bv,r}\sqbr{\irn_{j,\Pi}(v_j)x_j(\bv,r)}.
    \end{align*}

    Comparing to $\A^{(2)}$ just like the analysis above, together 
    with the strict inequality for $i$ yields
    \begin{align*}
        \E[\A]<\E\sqbr{\A^{(2)}},
    \end{align*}
    in contradiction to the optimality of $\A$. This completes the
    proof of item (3).

\end{proof}

\section{Consumer-Aligned Objectives}\label{sec:consumer}

This section gives the full proofs for the consumer-aligned results from Section~\ref{sec:overview-cons-aligned}.
We first set up the relevant notations and definitions, then prove the threshold-payment and virtual-value bounds used by the optimality argument, and finally prove the two monotonicity results.

\subsection{Stagewise Outcomes and Screening Auctions}

We begin by spelling out the stagewise-IR constraint in the notation of our model.
For stagewise-IR,
    \begin{align*}
        \pred_i(S_i,b_i;v_i)=\begin{cases}
            \true, & \Pr\left[\sum_{t=1}^{T} \brackets{x_i^t(S_i)v_i-p_i^t(S_i)}<0\right]= 0\quad \forall T=1,\dots,|S_i|\\
            \false, &\text{otherwise}
        \end{cases}.
    \end{align*}

\paragraph{Stagewise auctions with seller cost (SWAC).}
In~\cite{berzack2025dynamic} the notion of a \emph{SWAC} is introduced:
it is equivalent to a single-buyer auction in our setting, and the buyer is subject to stagewise-IR. They consider a class of objectives, ``negative tradeoff'', which is just slightly more restrictive than our consumer-aligned class.
We adopt some definitions from~\cite{berzack2025dynamic}, and extend several of their results to our more general setting; building on these results, we use new techniques, such as the rearrangement technique described in Section~\ref{sec:overview-rev-aligned}, to provide \emph{new} results for the multi-buyer setting.

\paragraph{Auction outcomes.}
The \emph{outcome} of the auction
{is represented by a tuple $\vec{s}$ of $n$ sequences, such that $s_i \in \left( [m] \times \mathbb{R}_{\geq 0} \right)^{\ast}$ for all $i\in\{1, \dots, n\}$. The component
$s_i^t = (x_i^t, p_i^t)$
denotes the allocation and the payments charged
to each buyer
at time $t = 1,\ldots,|s_i|$.%
\footnote{We note that in the general case an auction outcome can include additional properties, but the results in our paper are restricted to outcomes of the above type.}
It is convenient to denote by $x_i, p_i$
the \emph{total allocation} $x_i(s_i) = \sum_{t = 1}^{|s_i|} x_i^t$ and the \emph{total payment} $p_i(\vec{s}) = \sum_{t = 1}^{s_i} p_i^t$ of buyer $i$.
}

We refer to a standard auction, where in every outcome the allocation and payment happen together in one go, as a \emph{one-shot} auction.
We omit the outcome $\vec{s}$ from the notation when it is clear from the context.

The baseline case is the \emph{unrestricted feasibility set}, which admits all allocations constrained only by the total supply $m$, and we denote it by $\Omega_{\text{all}}$.
Beyond this, we focus on \emph{single-winner} feasiblity sets (or environments), where for every allowed allocation $\vec{x}\in\Omega$ there is at most a single buyer $i$ with $x_i>0$ (and for all $j\neq i$ we have $x_j=0$).
For simplicity of the proofs, we require that every buyer can receive all $m$ units, though this simplification is not essential.
We say that an auction is \emph{allocation-feasible} if for every bid profile $\vec{b}$, the allocation $\vec{x}(\vec{b})$ is in the feasibility set $\Omega$.

\paragraph{Screening auctions.}
For an arbitrary deterministic stagewise auction, the \emph{filter} of an outcome $s_i$ is the largest average price per unit that buyer $i$ must be able to pay along the way.
Formally, for each prefix $T=0,\dots,|s_i|$, let $X_i^T(s_i)=\sum_{t=1}^{T}x_i^t(s_i)$ and $P_i^T(s_i)=\sum_{t=1}^{T}p_i^t(s_i)$.
Then
\[
    f_i(s_i)\coloneqq
    \max_{T=0,\dots,|s_i|}
    \frac{P_i^T(s_i)}{X_i^T(s_i)},
\]
where a term with $X_i^T(s_i)=0$ is interpreted as $0$ if $P_i^T(s_i)=0$ and as $\infty$ otherwise.
For a deterministic stagewise auction, we write $f_i(\vec{b})$ for $f_i(s_i(\vec{b}))$, the filter of the outcome induced by bid profile $\vec{b}$, and similarly $f_i(v_i,\bv_{-i})$ when the auction is evaluated at a valuation profile.
Thus a stagewise-IR buyer with valuation $v_i$ can accept an outcome $s_i$ if and only if $v_i\geq f_i(s_i)$.

A screening auction is a deterministic stagewise auction with the following two-phase form:
for any submitted bid profile $\vec{b}$,
the outcome contains two phases, a \emph{screening phase} ($t=1$) and a \emph{main phase} ($t=2$).
During the screening phase, the seller allocates at most one unit to every buyer $i$ at a \emph{personalized screening price} that must be paid in that phase;
if buyer $i$ was allocated a unit in the screening phase,
then in the
main phase the seller can allocate more units and collect more payment from that buyer (otherwise, buyer $i$ gets nothing in the main phase).
We further require that payments in this mechanism are \emph{front-loaded}:
if for some bid profile $\vec{b}$ a buyer receives a positive allocation in the screening phase (i.e., $x_i^1(\vec{b}) > 0$), the screening payment must be at least $p_i(\vec{b}) / x_i(\vec{b})$.
Under this condition, the personalized screening price coincides with the general filter $f_i(\vec{b})$ of the resulting stagewise outcome.
This guarantees that if a
stagewise-IR buyer is able to bid $b_i$ given bids $\vec{b}_{-i}$, their valuation is at least as high as the filter: $v_i\geq f_i(\vec{b})$.

For stagewise-IR buyers in a deterministic stagewise auction, the acceptability predicate can equivalently be written as
\begin{align*}
    \pred_i(S_i,b_i;v_i)=\begin{cases}
        \true, & \Pr[f_i(S_i)>v_i]=0\\
        \false, &\text{otherwise}
    \end{cases}.
\end{align*}

\subsection{Threshold Payments and Virtual-Welfare Bounds}

We use the consumer-aligned virtual values $\varphi_{i,\Pi}^{m}$ and $\irn_{i,\Pi}^{m}$ from Definition~\ref{def:cons-aligned-virtual-values}.
The next definition is the candidate auction for the concave-cost optimality theorem: it maximizes ironed virtual welfare and then charges threshold payments.

\begin{definition}\label{def:cons-aligned-ironed-maximizer}
    Let $\Pi$ be a consumer-aligned payoff.
    Call an allocation $\vec{z}\in\Omega$ \emph{concentrated} if all its allocated units go to a single bidder, i.e., $z_j=0$ for all $j$ except one.
    Fix a total order $\prec$ over $\Omega$ satisfying the following two properties:
\begin{enumerate}
    \item Among allocations with the same total number of units, concentrated allocations are preferred to non-concentrated allocations.
    \item If $\vec{x},\vec{y}\in\Omega$ differ only in the allocation of bidder $i$, with $x_i=k<m$ and $y_i=m$, then $\vec{x}\prec\vec{y}$.
\end{enumerate}

Then \emph{\consMaximizer} is the threshold auction with allocation
\begin{align}\label{eq:cons-aligned-argmax-def}
&\vec{x}(\bv) := \max_{\prec} \argmax_{\vec{y}\in\Omega}
\brackets{ \sum_{i=1}^n \irn_{i,\Pi}^m(v_i)y_i - c\left(\sum_{i=1}^n y_i\right) },
\end{align}
and threshold payments with respect to $\vec{x}$.
\end{definition}

We next record two important consequences of stagewise-IR incentives.
The first says that once a buyer starts winning, higher values must also win.
The second lower-bounds payments by the threshold at which the buyer's allocation increases.

\begin{claim}\label{clm:start-win-always-win}
    Let $\A=(\vec{x},\vec{p})$ be a DSIC auction for consumer-aligned $\Pi$ among the class of deterministic auctions with stagewise-IR buyers.

    For every $i$ and $\bv$, if $x_i(\bv)>0$ then for every $v_i'>v_i$ buyer $i$ still wins, i.e. $x_i(v_i',\bv_{-i})>0$.
\end{claim}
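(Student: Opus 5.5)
The plan is a direct deviation argument that uses only DSIC, determinism, and the structure of the stagewise-IR predicate. The consumer-aligned objective plays no role. Fix buyer $i$, the others' valuations $\bv_{-i}$, and $v_i<v_i'$ with $x_i(v_i,\bv_{-i})>0$. Write $s_i$ for buyer $i$'s (deterministic) outcome at bid profile $(v_i,\bv_{-i})$. Write $X^T,P^T$ for its cumulative allocation and payment up to step $T$, and $x=X^{|s_i|}\geq 1$, $p=P^{|s_i|}$ for the totals.

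\textbf{Step 1: the truthful outcome of type $v_i$ is acceptable and individually rational.} I would argue that under DSIC in a deterministic auction, type $v_i$ bidding truthfully must receive finite constrained utility, hence $\pred_i(s_i,v_i;v_i)=\true$. Otherwise its utility is $-\infty$ and the truthful outcome violates the standing IR requirement on auctions in the paper. This is the step I expect to need the most care, because IR is stated at the interim level and DSIC alone does not exclude $-\infty$ at a single point. If it is needed only almost everywhere, I would note that the claim is applied a.e. anyway. Given acceptability, the stagewise-IR predicate gives $X^T v_i-P^T\geq 0$ for every prefix $T$, and in particular $xv_i-p\geq 0$.

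\textbf{Step 2: type $v_i'$ can mimic $v_i$.} Stagewise-IR is a monotone constraint: for every $T$ we have $X^Tv_i'-P^T\geq X^Tv_i-P^T\geq 0$, since $X^T\geq 0$ and $v_i'>v_i$. Hence $\pred_i(s_i,v_i;v_i')=\true$, and by truthful-report compatibility the outcome is acceptable with any bid. The deviation utility of type $v_i'$ bidding $v_i$ is therefore
\[
x v_i'-p=(xv_i-p)+(v_i'-v_i)x\geq (v_i'-v_i)x>0,
\]
where the strict inequality uses $x\geq 1$ and $v_i'>v_i$.

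\textbf{Step 3: conclude.} By DSIC, type $v_i'$'s truthful constrained utility at $(v_i',\bv_{-i})$ is at least this strictly positive quantity. Suppose instead that $x_i(v_i',\bv_{-i})=0$. Payments are nonnegative, so the truthful quasilinear utility is $-p_i(v_i',\bv_{-i})\leq 0$, and the constrained utility is either that or $-\infty$. This contradicts the positive lower bound, so $x_i(v_i',\bv_{-i})>0$.
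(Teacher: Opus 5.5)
Your proof is correct and is essentially the paper's argument. The paper runs the same chain in the other direction: it uses DSIC and monotonicity of stagewise-IR to get $0 \geq x_i(\bv)v_i' - p_i(\bv) > x_i(\bv)v_i - p_i(\bv)$, and hence a violation of IR at $v_i$, whereas you invoke IR at $v_i$ first to make the deviation strictly profitable and then contradict the nonpositive losing utility. Your Step~1 concern is shared by the paper, which likewise applies IR pointwise at $\bv$; also, the appeal to truthful-report compatibility in Step~2 is unnecessary (that property runs the other way), because the stagewise-IR predicate does not depend on the bid.
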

\begin{proof}
    Suppose to contradict that for some $i,\bf$ and $v_i'>v_i$ we have $x_i(\bv)>0$ but $x_i(v_i',\bv_{-i})=0$. Since stagewise-IR is a monotone constraint, and $\A$ is DSIC, we know that
    \begin{align*}
        0=u_i(v_i',\bv_{-i};v_i')\geq u_i(v_i,\bv_{-i};v_i')=x_i(\bv)v_i'-p_i(\bv)>x_i(\bv)v_i-p_i(\bv)=u_i(\bv;v_i)
\end{align*}
    However, this is a contradiction to $\A$ being IR.
\end{proof}

\begin{claim}\label{clm:payment-at-least-thresh}
    Suppose buyers are stagewise-IR, and
    let $\A=(\bold{x},\bold{p})$ be a deterministic DSIC auction.

    Then for every $i$ and $\bv_{-i}$,
    the payment is lower bounded by

    \begin{align}
        p_i(\bv)\geq\sup\set{z:x_i(z,\bv_{-i})<x_i(\bv)},
    \end{align}
    where $\sup\emptyset$ is defined to be $0$.
\end{claim}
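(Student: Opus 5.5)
Fix $i$ and $\bv_{-i}$, and write $x(z):=x_i(z,\bv_{-i})$, $p(z):=p_i(z,\bv_{-i})$. The plan is to use the deviation of a \emph{higher} type $v_i$ to a \emph{lower} bid $z$, together with stagewise-IR, to pin down the payment at $v_i$. Let $t:=x(v_i)$. If the set $\set{z: x(z)<t}$ is empty there is nothing to prove, since payments are nonnegative. If $t=0$ the set is also empty, so we may assume $t\ge 1$.

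\textbf{Step 1 (where the set lies).} Any $z$ in the set satisfies $z<v_i$. Suppose instead that some $z\ge v_i$ had $x(z)<t$; a short exchange argument should rule this out, though it is not even needed for the bound. It suffices to show that every $z$ in the set is at most $p(v_i)$. Since $p(v_i)$ is a number that we will compare to each such $z$, we will work with any $z$ satisfying $x(z)<t$.

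\textbf{Step 2 (the key deviation).} Take such a $z$ and suppose, for contradiction, that $z>p(v_i)$. Because $t\ge 1$, we have $p(v_i)/t\le p(v_i)<z$. Consider the outcome $s_i(v_i,\bv_{-i})$. I would like to argue that type $z$ can accept this outcome, i.e.\ that its filter satisfies $f_i(v_i,\bv_{-i})\le z$. The difficulty is that an arbitrary stagewise auction may front-load payments, so the filter can exceed $p(v_i)/t$. Still, every prefix pays at most the total $p(v_i)$ while receiving at least one unit whenever the payment is positive.

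More carefully, a prefix with positive payment but zero allocation would already be unacceptable to type $v_i$, which contradicts IR for truthful play. So every relevant prefix has $X^T\ge 1$, and hence $P^T/X^T\le p(v_i)<z$. Therefore $f_i\le p(v_i)<z$, and by the characterization of stagewise-IR via the filter, type $z$ may bid $v_i$.

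\textbf{Step 3 (the contradiction).} By DSIC for type $z$, we need $u_i(z;z)\ge t z-p(v_i)$. On the right-hand side, since $t\ge x(z)+1$ we get
\begin{align*}
tz-p(v_i)\ge x(z)z+z-p(v_i)>x(z)z.
\end{align*}
On the left-hand side, $u_i(z;z)=x(z)z-p(z)\le x(z)z$ because $p(z)\ge 0$. These two bounds contradict each other. Hence $z\le p(v_i)$ for every $z$ in the set, and taking the supremum gives the claim.

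The main obstacle is Step 2: justifying that the lower type can accept the higher type's stagewise outcome using only the total payment. This is where integrality of allocations, which gives $X^T\ge 1$ on paying prefixes, is used. If that bound fails for a general outcome, I would fall back on the sharper comparison with $f_i\le v_i$ together with the filter bound $f_i\le p(v_i)$ derived from IR at $v_i$.
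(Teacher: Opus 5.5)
Your argument is correct and is essentially the paper's proof: take $z>p(v_i)$ with $x(z)<x(v_i)$, observe that type $z$ may accept the outcome at $(v_i,\vec{v}_{-i})$ because its filter is at most $p(v_i)$, and then combine integrality ($x(z)\le x(v_i)-1$) with $p(z)\ge 0$ to contradict DSIC; your Step 2 additionally justifies the bound $f_i\le p(v_i)$ (no paying prefix can have zero allocation, since the truthful outcome is acceptable), which the paper simply asserts. The one flaw is the aside in Step 1 that every such $z$ lies below $v_i$: this is false for non-monotone auctions, though your argument never uses it, so simply delete it. For a counterexample, take one buyer with $m=3$ and values in $[0,1.5]$. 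Give nothing below $1$; on $[1,1.2)$ give one unit at price $1$ and then two more for $1.8$; on $[1.2,1.5]$ give one unit at price $1.2$ and then one free unit. This auction is DSIC under stagewise-IR.
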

\begin{proof}
    Let $\A=(\bold{x},\bold{p})$ be deterministic and DSIC.

    Suppose towards contradiction that $p_i(\bv)<\sup\set{z:x_i(z,\bv_{-i})<x_i(\bv)}$ for some $i$ and $\bv$. Thus there exists some $z>p_i(\bv)$ with $x_i(z,\bv_{-i})<x_i(\bv)$.
    Observe that
    \begin{align*}
        u_i((z,\bv_{-i});z)=x_i(z,\bv_{-i})z-p_i(z,\bv_{-i})\leq x_i(\bv)z-z<x_i(\bv)z-p_i(\bv)=\unconstrained_i(\bv;z)=u_i(\bv;z),
    \end{align*}
    where the first inequality is due to $x_i(z,\bv_{-i})\leq x_i(\bv)-1$ (because allocations are deterministic and thus integers) and $p_i(z,\bv_{-i})\geq0$, and the last transition is due to $z>p_i(\bv)\geq f_i(\bv)$, so it is safe for $z$ to bid $v_i$ (given $\bv_{-i}$). We've reached a contradiction to DSIC, completing the proof.
\end{proof}

The two lemmas below are the bridge from incentives to virtual welfare.
Lemma~\ref{lem:cons-mono-is-threshold} shows that once we restrict attention to monotone-optimal auctions, payments must be threshold payments.
Lemma~\ref{lem:cons-aligned-reward-bound} then converts the seller's payoff into an upper bound involving the consumer-aligned virtual value.

\begin{lemma}[Generalization of Lemma 5.11 from~\cite{berzack2025dynamic}]\label{lem:cons-mono-is-threshold}
Suppose buyers are stagewise-IR and that the payoff is consumer-aligned.
Then every threshold auction is DSIC and IR.
Moreover, \emph{every}
monotone-optimal auction has threshold payments almost everywhere, that is,\\ $\Pr_{\bv}[p_i(\bv)\neq \mu_i^{\vec{x}}(v_i;\bv_{-i})]=0$ for every $i$.
\end{lemma}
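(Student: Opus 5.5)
We split the lemma into two parts: first that every threshold auction is DSIC and IR, and then that monotone-optimality forces threshold payments almost everywhere. Throughout we fix a buyer $i$ and the other bids $\vec{b}_{-i}$, and write $x(z)=x_i(z,\vec{b}_{-i})$. This is nondecreasing in $z$ because a threshold auction is monotone by definition. We also write $\theta(b)=\inf\set{z: x(z)\ge x(b)}$ for the threshold, so that $\mu_i^{\vec{x}}(b;\vec{b}_{-i})$ equals $\theta(b)$, or $0$ when $\theta(b)=\bottomV_i$.

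\textbf{IR and stagewise-IR of truthful bidding.} Suppose buyer $i$ with value $v$ bids $v$. Since $v$ itself lies in the set defining $\theta(v)$, we get $\theta(v)\le v$. The screening phase charges at most $\theta(v)$ for one unit, so cumulative utility after phase 1 is at least $v-\theta(v)\ge 0$. The main phase is free, so cumulative utility stays nonnegative. Hence the truthful outcome is acceptable and IR holds.

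\textbf{DSIC.} Consider a deviation to a bid $b$. If $x(b)=x(v)$, the thresholds coincide and the utilities are equal. If $x(b)<x(v)$, integrality gives $x(b)\le x(v)-1$, so
\[
x(v)v-\theta(v)\ \ge\ (x(b)+1)v-v\ \ge\ x(b)v-\mu(b).
\]
If $x(b)>x(v)$, monotonicity gives $x(z)\le x(v)<x(b)$ for all $z<v$, hence $\theta(b)\ge v$. Stagewise-IR (the filter is the screening price) makes the deviation acceptable only if $v\ge\theta(b)$, which forces $\theta(b)=v$.

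\textbf{Main obstacle: the boundary case $\theta(b)=v$.} Here $v$ sits exactly at a jump of $x$, and the deviation yields $(x(b)-1)v$, which might exceed truthful utility. I would first try to rule this out by a convention that $x$ is right-continuous at jumps, i.e., ties at thresholds are resolved toward the larger allocation. This is exactly what the order $\prec$ in \consMaximizer enforces, since it prefers the allocation $m$. Under that convention $x(v)\ge x(b)$, contradicting $x(b)>x(v)$. Failing that, I would note that the offending valuations are the jump points of a monotone integer-valued function, hence countable and of measure zero. DSIC then holds a.e., which suffices for the payoff statements. I would state precisely which of the two is used.

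\textbf{Second part.} Let $\A$ be monotone-optimal with allocation $\vec{x}$. By Claim~\ref{clm:payment-at-least-thresh}, $p_i(\bv)\ge\sup\set{z:x_i(z,\bv_{-i})<x_i(\bv)}$. For nondecreasing $x_i(\cdot,\bv_{-i})$, this supremum equals $\inf\set{z:x_i(z,\bv_{-i})\ge x_i(\bv)}$, which is $\ge\mu_i^{\vec{x}}(v_i;\bv_{-i})$ (the $0$ convention only lowers $\mu$). Now let $\A'$ be the threshold auction with the same $\vec{x}$. By the first part it is DSIC, IR and monotone, and it has the same allocations, so the welfare and cost terms are unchanged. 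Its payments are pointwise no larger. Since the revenue coefficient is negative, $\E[\A']\ge\E[\A]$, with strict inequality if $p_i\ne\mu_i^{\vec{x}}$ on a set of positive measure. That would contradict monotone-optimality, so payments are threshold payments almost everywhere.
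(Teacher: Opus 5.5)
Your second part is the paper's argument essentially verbatim: you lower-bound $p_i$ via Claim~\ref{clm:payment-at-least-thresh}, replace the payments by threshold payments with the same $\vec{x}$, and contradict monotone-optimality using consumer-alignment. The real difference is the first part. The paper does not argue DSIC and IR directly; it identifies each slice $x_i(\cdot,\bv_{-i})$ with a threshold SWAC and invokes Lemma~5.11 of~\cite{berzack2025dynamic}. You instead give a self-contained case analysis. Your analysis is correct, and it is more informative than the citation, because the boundary case you isolate is a genuine failure and not an artifact of the proof. Take values in $[0,2]$, $m=2$, $x(z)=0$ for $z\le 1$ and $x(z)=2$ for $z>1$. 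Every bid $b>1$ carries screening price $1$, so type $v=1$ can bid $b>1$ (cumulative utility $1-1=0$ after screening) and ends with utility $1>0$. Hence the first claim of the lemma holds exactly only under right-continuity of $x_i(\cdot,\bv_{-i})$ at its jumps, and otherwise only off the countable jump set. Right-continuity is the assumption the paper imposes explicitly when it verifies DSIC of $\tilde{\A}$ in the proof of Theorem~\ref{thm:cons-aligned-mono-wlog}.

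Two corrections before you commit to one of your options.

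First, the order $\prec$ in \consMaximizer does not enforce right-continuity. Its second property only compares allocations that differ in buyer $i$'s coordinate alone. So at a tie $\irn_{i,\Pi}^{m}(v)=\irn_{j,\Pi}^{m}(v_j)$ it may prefer ``all units to $j$'' over ``all units to $i$''. That leaves a jump of $x_i(\cdot,\bv_{-i})$ at $v$ that is not right-continuous. No fixed order can favor every buyer in such ties, so right-continuity must be imposed, not derived.

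Second, the second part inherits the issue whichever option you pick, because $\A'$ reuses $\A$'s allocation and nothing forces that allocation to be right-continuous. In the example above, charge screening price $\min\{b,3/2\}$ and a further $3/2-\min\{b,3/2\}$ in the main phase. This gives an exactly DSIC monotone auction, since the filters on the upper step lie in $(1,b]$ and block type $1$. Yet the threshold auction with the same allocation is not DSIC at $v=1$. So you must do one of two things. Either read ``monotone-optimal'' over auctions that are DSIC up to measure-zero exceptions, in line with the paper's a.e.\ conventions. Or justify altering $\vec{x}$ on the jump set; this is harmless for one buyer, but with several buyers it interacts with feasibility. The paper's citation passes over the same point.
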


\begin{lemma}
\label{lem:cons-aligned-reward-bound}
    Let $\A$ be a monotone-optimal auction for consumer-aligned payoff, with stagewise-IR buyers.
Then for every $i$ and $\bv_{-i}$,
\begin{align}\label{eq:cons-aligned-reward-bound}
    \E_{v_i}\sqbr{\payoffValue(v_i)x_i(\bv)-\payoffPay p_i(\bv) \mid \bv_{-i}}\leq \E_{v_i}\sqbr{\varphi_{i,\Pi}^{m}(v_i)x_i(\bv) \mid \bv_{-i}}
    .
\end{align}

    In addition, if there is only a single jump
    in allocation from $0$ to $m$, then inequality~\eqref{eq:cons-aligned-reward-bound} tightens to equality.
\end{lemma}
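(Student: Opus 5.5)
Fix $i$ and $\bv_{-i}$, and write $x(v)=x_i(v,\bv_{-i})$, which is nondecreasing in $v$ (monotone-optimal) and integer-valued in $\set{0,\dots,m}$.

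\textbf{Step 1: payments are threshold payments.} By Lemma~\ref{lem:cons-mono-is-threshold}, $p_i(v,\bv_{-i})$ equals the threshold payment $\mu_i^{\vec{x}}(v;\bv_{-i})$ for a.e.\ $v$. For a monotone step function, $x$ jumps at thresholds $\theta_1\le\theta_2\le\dots\le\theta_m$, with $\theta_k=\inf\set{z:x(z)\ge k}$. When $x(v)=k>0$, the threshold payment is $\theta_{k'}$, where $k'$ is the smallest level with $\theta_{k'}=\theta_k$; it is set to $0$ when this infimum is $\bottomV_i$. So conditioned on $\bv_{-i}$ the payment is a single price equal to the threshold at which the buyer's allocation became its current value. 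The payoff term $\E_{v_i}[\payoffValue(v_i)x(v_i)-\payoffPay\,p_i]$ therefore becomes an explicit functional of the thresholds:
\[
\sum_k \int_{\theta_k}^{\theta_{k+1}}\brackets{\payoffValue(v)k-\payoffPay\,\theta_{(k)}}f_i(v)\,dv ,
\]
where $\theta_{(k)}$ denotes the entry threshold of level $k$.

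\textbf{Step 2: compare with the virtual value.} Definition~\ref{def:cons-aligned-virtual-values} (the one adapted from \cite{berzack2025dynamic}) defines $\varphi^m_{i,\Pi}$ as the per-unit marginal payoff of a posted-threshold sale of $m$ units. The plan is to rewrite the payoff as a sum of layers. Each unit level $k$ is allocated on $[\theta_k,\topV_i]$, and the payment $\theta_{(k)}$ is charged once for the whole bundle rather than per unit. Because $\payoffPay\ge 0$ appears with a minus sign, the single upfront payment equals at most $k\cdot$(per-unit threshold); here $x\ge 1$ when $x>0$, and the first threshold is at most the later ones. So replacing it with the per-unit accounting implicit in $\varphi^m$ only increases the right-hand side. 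Concretely, I would write the payment as $\theta_{(k)}\le \frac1m\sum_{j}\theta_j\cdot(\text{units})$, or whichever comparison matches the definition of $\varphi^m$. I would then integrate by parts in the style of Myerson, using $\theta\,(1-F_i(\theta))=\int_\theta^{\topV_i}(\text{stuff})\,dF_i$, to turn each layer into $\E[\varphi^m_{i,\Pi}(v)\mathbf{1}[v\ge\theta_k]]$. Summing over layers gives $\E[\varphi^m_{i,\Pi}(v_i)x(v_i)\mid\bv_{-i}]$.

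\textbf{Step 3: the equality case.} If $x$ has a single jump from $0$ to $m$ at $\theta$, the payment is exactly $\theta$ for all $m$ units. This is precisely the configuration used to define $\varphi^m$, so every inequality in Step~2 is an identity and \eqref{eq:cons-aligned-reward-bound} holds with equality.

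\textbf{Main obstacle.} The hard step is the inequality in Step~2 for multi-step allocations. It requires matching the lumpy charge, a single price for a bundle of $k$ units with possibly several jumps, against the $m$-scaled virtual value, and showing the error has the right sign for every configuration of thresholds. I would first try to prove that, for fixed allocation levels, the payoff is maximized by merging all jumps into a single $0\to m$ jump at an appropriate point, and then apply the exact single-jump identity. If that fails, I would do the layer-by-layer comparison directly.
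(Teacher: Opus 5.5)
There is a genuine gap in Step~2, and it is the step you flag as the main obstacle. The payoff subtracts $\payoffPay p_i$ with $\payoffPay\ge 0$, and $\varphi^m_{i,\Pi}(v)=\payoffValue(v)-\frac{\payoffPay}{m}\varphi_i(v)$. So what you need is a \emph{lower} bound on the actual payment: $\E_{v_i}[p_i(\bv)\mid\bv_{-i}]\ge\frac1m\E_{v_i}[\varphi_i(v_i)x_i(\bv)\mid\bv_{-i}]$.

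The inequality you propose, $\theta_{(k)}\le\frac1m\sum_j\theta_j\cdot(\text{units})$, points the other way. It is also false: for an allocation that jumps once from $0$ to $1$ at $\theta>0$ with $m\ge 2$, the charged price is $\theta$ while the right side is $\theta/m$. Likewise, ``the first threshold is at most the later ones'' is the wrong monotonicity fact here. Your fallback of merging all jumps into a single $0\to m$ jump also does not prove the lemma as stated. It changes the allocation $x_i$ that appears on the right-hand side of \eqref{eq:cons-aligned-reward-bound}, so it says nothing about the given auction.

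The missing idea is a one-line pointwise comparison with Myerson's payment. Let $z_1\le\dots\le z_\ell$ be the jump locations of $x_i(\cdot,\bv_{-i})$ in $[\bottomV_i,v_i]$, with jump sizes $\Delta_1,\dots,\Delta_\ell$, so that $\sum_j\Delta_j=x_i(\bv)\le m$. The price charged is the \emph{last} jump location: $p_i(\bv)\ge\sup\{z:x_i(z,\bv_{-i})<x_i(\bv)\}=z_\ell$, by Claim~\ref{clm:payment-at-least-thresh} or by Lemma~\ref{lem:cons-mono-is-threshold} as you use it. This $z_\ell$ dominates every earlier $z_j$. Hence
$p^{My}_i(\bv)=\sum_j z_j\Delta_j\le z_\ell\,x_i(\bv)\le m\,p_i(\bv)$.

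Since $\payoffPay\ge0$, replacing $p_i$ by $p^{My}_i/m$ can only increase $\payoffValue x_i-\payoffPay p_i$. Myerson's identity $\E_{v_i}[p^{My}_i\mid\bv_{-i}]=\E_{v_i}[\varphi_i(v_i)x_i\mid\bv_{-i}]$, which is exactly your planned integration by parts, then gives \eqref{eq:cons-aligned-reward-bound} directly. No layer-by-layer analysis over threshold configurations is needed.

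Your Step~3 is correct once this is in place. With a single $0\to m$ jump at $\theta$, you have $p_i=\theta$ a.e.\ and $p^{My}_i=m\theta$ on the winning region, so both inequalities hold with equality.
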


We prove these two lemmas in turn.

\begin{proof}[Proof of Lemma~\ref{lem:cons-mono-is-threshold}]
It is shown in Lemma 5.11 of~\cite{berzack2025dynamic} that every SWAC with threshold payments is DSIC and IR, so it immediately applies that any auction in our model with threshold payments and stagewise-IR buyers is also DSIC and IR.

    Let $\A=(\bold{x},\bold{p})$
    be a monotone-optimal DSIC screening auction $\A$.

    Define $\tau_i(\bv)=\sup\set{z:\ x_i(z,\bv_{-i})<x_i(\bv)}$, where we set $\sup\emptyset=0$. Observe that since $x_i(\cdot,\bv_{-i})$ is nondecreasing, $\tau_i(\bv)$ is equal exactly to the threshold payments of $i$ for every $\bv_{-i}$.

    Let $T_1,\dots,T_n$ be the sets of valuations, such that for every $i$, $T_i\subseteq[\bottomV_i,\topV_i]$ and for all $v_i\in T_i$ it holds that $p_i(\bv)\neq \tau_i(\bv)$.

    By Claim~\ref{clm:payment-at-least-thresh} we know that
    \begin{align*}
        p_i(\bv)>\sup\set{z:x_i(z,\bv_{-i})<x_i(\bv)}.
    \end{align*}
    Hence, we can set a new payment rule for a screening auction as follows:
    \begin{align*}
        \tilde{f}_j(\bv)=\pmono_j(\bv)=\sup\set{z:x_i(z,\bv_{-i})<x_i(\bv)} \forall j,\bv.
    \end{align*}

    Define $\A'=(\bold{x},\Tilde{\bold{p}})$.
    This payment rule $v_i\mapsto \pmono_i(v_i,\bv_{-i})$ is equivalent to threshold payments as defined in~\cite{berzack2025dynamic}. They show, in Lemma 5.11, that with this payment it is a dominant strategy for buyer $i$ to bid truthfully in $\A'$.

    Hence $\A'$ is monotone, DSIC, clearly IR and deterministic. It is also allocation-feasible because no allocations changed from $\A$.

    However, we strictly lowered the payments for all bid profiles where at least one buyer does not have threshold payments.

    Since the payoff if consumer-aligned,
    lowering the payment strictly improves the designer's payoff pointwise. Thus, if $\A$ does not have threshold payments almost everywhere, $\E[\A']>\E[\A]$, which would contradict the fact that $\A$ is monotone-optimal.

    Thus, $\A$ has threshold payments almost everywhere, and we are done.
\end{proof}

\begin{proof}[Proof of Lemma~\ref{lem:cons-aligned-reward-bound}]
Let  $\A=(\bold{x},\bold{p})$ be an auction satisfying the assumptions of
the claim.

Fix $i$ and $\bv$. The allocation rule $x_i(\cdot,\bv_{-i})$ is an integer nondecreasing step function. Denote its jump locations in $[\bottomV_i,v_i]$ by $\bottomV=z_1,\dots,z_\ell$ and the jump sizes by $\Delta_1,\dots,\Delta_\ell$ (respectively).
If $\bottomV_i>0$, we set $z_1=0$.
Then by Claim~\ref{clm:payment-at-least-thresh}
\begin{align*}
     p_i(\bv)\geq\sup\set{z:x_i(z,\bv_{-i})<x_i(\bv)}=z_\ell.
\end{align*}
On the other hand, Myerson's payment has
\begin{align*}
    p^{My}_i(\bv)=\sum_{j=1}^\ell z_j\Delta_j\leq z_\ell\sum_{j=1}^\ell \Delta_j\leq z_\ell m=m\cdot p_i(\bv).
\end{align*}
Hence, by Myerson~\cite{myerson1981optimal},

\begin{align}
    \E_{v_i}\sqbr{\payoffValue(v_i)x_i(\bv)-\payoffPay p_i(\bv) \mid \bv_{-i}}
    &\leq
    \E_{v_i}\sqbr{\payoffValue(v_i)x_i(\bv)-\tfrac{\payoffPay}{m} p^{My}_i(\bv) \mid \bv_{-i}}\label{eq:ineq-myerson}\\
    &=\E_{v_i}\sqbr{\payoffValue(v_i)x_i(\bv)-\tfrac{\payoffPay}{m} \brackets{v-\tfrac{1-F_i(v)}{f_i(v)}}x_i(\bv) \mid \bv_{-i}}
    \\
    &=
    \E_{v_i}\sqbr{\varphi_{i,\Pi}^{m}(v_i)x_i(\bv) \mid \bv_{-i}}.
\end{align}

  Now suppose there is a single jump in the allocation $x_i(\cdot,\bv_{-i})$ from $0$ to $m$. In this case, from Lemma~\ref{lem:cons-mono-is-threshold} we have:
  \begin{align*}
      p_i(\bv)=\begin{cases}
          0, & x_i(\bv)=0\\
          z_\ell, & x_i(\bv)=m
      \end{cases}
      =\frac{p^{My}_i(\bv)}{m}\quad\text{a.e.}.
  \end{align*}

Hence, inequality~\eqref{eq:ineq-myerson} tightens to equality and we are done.
\end{proof}

\subsection{Optimality Under Concave Costs}

We now prove the two optimality theorems for the \consMaximizer.
The proof is shared by Theorem~\ref{thm:cons-aligned-optimal-auc} and Theorem~\ref{thm:cons-surplus-optimal-auc}. It shares similarities with the proof of Theorem 6.7 from~\cite{berzack2025dynamic}.
The main differences are that here there are multiple buyers and the cost function is an arbitrary concave function, rather than the specific over-time cost function for i.i.d.\ buyers in the rental game.

\begin{proof}[Proof of Theorem~\ref{thm:cons-aligned-optimal-auc} and Theorem~\ref{thm:cons-surplus-optimal-auc}]

    For simplicity, in this proof we denote $x(v_i)=x_i(v_i,\bv_{-i})$ and $p(v_i)=p_i(v_i,\bv_{-i})$ and $\phi_i(v_i)=\irn_{i,\Pi}^{m}(v_i)$ for every $i$. By Definition~\ref{def:cons-aligned-virtual-values} and Observation~\ref{obs:irn-nondecreasing}, $\phi_i(\cdot)$ is nondecreasing.

    Fix \(\bv\), and let
\(\Phi=\max_i \phi_i (v_i)\). For any allocation \(y\), set \(q=\sum_i y_i\). Then
\[
\sum_i \phi_i(v_i) y_i-c(q)\le q\Phi-c(q).
\]
We do not know yet that allocating all $q$ units to the buyer attaining $\Phi$ is allocation-feasible, this is just a temporary mathematical upper bound.

Since \(c\) is concave and \(c(0)=0\), the function \(q\mapsto q\Phi-c(q)\) is convex on
\(\{0,\ldots,m\}\), so a maximum is attained at \(q=0\) or \(q=m\).
This \emph{is} allocation-feasible, since the feasibility set allows allocating either no units or all \(m\) units to a single buyer.
Hence, allocating to a single buyer is optimal.
We show that the threshold deciding from which valuation to allocate all units is \(\frac{c(m)}{m}\); specifically, if $\phi_i(v_i) < \frac{c(m)}{m}$, then allocating buyer $i$ any positive number of units is \emph{strictly} worse than allocating them zero units.

    Suppose $\phi_i(v_i)< \frac{c(m)}{m}$, and let $k\in\set{1,...,m}$. By concavity of $c$,
    \begin{align*}
        \phi_i(v_i)k-c(k)<k\frac{c(m)}{m}-c(k)\le0=0\cdot\phi_i(v_i)-c(0).
    \end{align*}
    Thus, assigning an extra $0$ units to buyer $i$ is better than assigning any positive number of units, and the optimal allocation for $i$ is $0$.

    Since \consMaximizer prefers concentrated allocations over non-concentrated allocations, and among positive concentrated allocations it prefers allocating all units to a single buyer, it follows that \consMaximizer is optimal.
In addition, it has a nondecreasing allocation rule, due to Claim~\ref{clm:pointwise-maximizer-nondecreasing}. Finally,
    due to Lemma~\ref{lem:cons-mono-is-threshold}
    it is DSIC and IR.

\textbf{Optimality of $\A$:}
By Theorem~\ref{thm:cons-aligned-mono-wlog} and
Theorem~\ref{thm:cons-surplus-mono-wlog}, it suffices to compare
against auctions whose allocation rule is monotone-\aee. Let
$\A'=(\bold{x}',\bold{p}')$ be such an auction.
For this auction we have
    \begin{align*}
        \E\sqbr{\A'}&=\E\sqbr{\sum_{i=1}^n \brackets{\payoffValue(v_i)x_i'(\bv)-\payoffPay p_i'(\bv)} -c\brackets{\sum_{i=1}^n x_i'(\bv)}}\\
        &\leq\E\left[ \sum_{i=1}^n \varphi_{i,\Pi}^m(v_i)x_i'(\bv)-c\brackets{\sum_{i=1}^n  x_i'(\bv)}\right],
    \end{align*}
    where the second step is since there is always at most one winner and $c(0)=0$, and the inequality is due to Lemma~\ref{lem:cons-aligned-reward-bound}.

    Using Lemma 6.1 from~\cite{berzack2025dynamic}, an allocation rule that maximizes\\ $\sum_{i=1}^n \E\sqbr{\irn_{i,\Pi}^m(v_i)x_i(\bv)-c\brackets{ x_i(\bv)}}$ also maximizes $\E\sqbr{\sum_{i=1}^n \varphi_{i,\Pi}^m(v_i)x_i(\bv)-c\brackets{ x_i(\bv)}}$.\\ This maximization is done by $\vec{x}$ as we showed above. Combining our results, we get
    \begin{align}
        \E\sqbr{\A}&=\sum_{i=1}^n \E\sqbr{\varphi_{i,\Pi}^m(v_i)x_i(\bv)-c\brackets{x_i(\bv)}}\\
        &\geq \sum_{i=1}^n \E\sqbr{\varphi_{i,\Pi}^m(v_i)x_i'(\bv)-c\brackets{\sum_{i=1}^n  x_i'(\bv)}}\\
        &\geq \E\sqbr{\A'}.
    \end{align}
    The first transition is due to Lemma~\ref{lem:cons-aligned-reward-bound}.

    Thus we conclude that $\A$ is optimal.

\end{proof}

\subsection{Proof that Monotonicity is W.L.O.G.\ for Consumer-Aligned Single-Winner Auctions}\label{sec:proof-cons-monotonicity-conditions}

We next justify the monotonicity reduction used in the optimality proof above.
The construction rearranges each buyer's allocation function into a nondecreasing one, while preserving the event that the buyer wins; this preservation is what keeps the single-winner feasibility constraint intact.

\begin{proof}[Proof of Theorem~\ref{thm:cons-aligned-mono-wlog}]

        Let $\A$ be an optimal DSIC auction that is not monotone a.e., and denote its allocation and payments rules by $\vec{x}$ and $\vec{p}$, respectively.

        Assume w.l.o.g. that $x_i(\cdot,\bv_{-i})$ is right-continuous (this is possible since we assume that $x_i(\cdot,\bv_{-i})$ is Riemann integrable, and thus its discontinuities have a probability measure zero in a continuous distribution).

        We will create a new \emph{screening} auction $\Tilde{\A}=(\xmono,\pmono,\vec{f})$ whose allocation rule is ex-post nondecreasing for every buyer, and show that $\E[\Tilde{\A}]>\E[\A]$, which will contradict the optimality of $\A$.

        For every $i$ and $\bv_{-i}$,
         let $s_i^{\bv_{-i}}:[\bottomV_i,\topV_i]\rightarrow\set{0,\dots,m}$
         be the nondecreasing rearrangement
          of $z\mapsto x_i(z,\bv_{-i})$
          with respect to $F_i$
          (see Definition~\ref{def:nondecreasing-rearrangement}),
           and we select one that is right-continuous.
        Due to Claim~\ref{clm:start-win-always-win},
         we know that $x_i(z,\bv_{-i})=0$
        if and only if $z$ is less than some threshold,
         which we denote by $\alpha^x_i(\bv_{-i})$,
         so we can select $s_i^{\bv_{-i}}$ to be a rearrangement
         that preserves this threshold.

        By properties of the nondecreasing rearrangement
          $s_i^{\bv_{-i}}(\cdot)$ is nondecreasing and
           equimeasurable with $x_i(\cdot,\bv_{-i})$.
           It follows that for every $t\in\set{0,\dots,m}$,
           buyers are allocated $t$ units with the same probability.
            Formally, for a fixed $\bv_{-i}$,
        \begin{align*}
            \Pr_{v_i}\sqbr{s_i^{\bv_{-i}}(\bv)=t\mid \bv_{-i}}&=\Pr_{v_i}\sqbr{s_i^{\bv_{-i}}(\bv)>t-1\mid \bv_{-i}}-\Pr_{v_i}\sqbr{s_i^{\bv_{-i}}(\bv)>t\mid \bv_{-i}}\\
            &=\Pr_{v_i}[x_i(\bv)>t-1\mid \bv_{-i}]-\Pr_{v_i}[x_i(\bv)>t\mid \bv_{-i}]\\
            &=\Pr_{v_i}[x_i(\bv)=t\mid \bv_{-i}].\numberthis\label{eq:same-prob-t}
        \end{align*}

Now define, for every $i$ and $\bv$,
\begin{align}
  \xmono_i(\bv)\ \coloneqq s_i^{\bv_{-i}}(v_i),
\end{align}
and
\begin{align}
    f_i(\bv)=\pmono_i(\bv)\ \coloneqq \Tilde{\tau}_i^{\bv_{-i}}(\xmono_i(\bv)),
\end{align}
where $\Tilde{\tau}_i^{v_{-i}}(t)=\sup\set{z<v_i:\,\xmono_i(z,\bv_{-i})<t}$ and  $\sup\emptyset$ is defined to be $0$.
Note that since $s_i^{\bv_{-i}}$ is nondecreasing,
$\Tilde{\tau}_i^{\bv_{-i}}(t)$
collapses to $\sup\set{z:\xmono_i(z,\bv_{-i})<t}$, and also $\pmono_i(\bv)\leq v_i$.

It follows immediately that $\xmono_i(\cdot,\bv_{-i}):[\bottomV_i,\topV_i]\rightarrow [m]$ is nondecreasing. We will now show that $\Tilde{\A}=(\xmono,\pmono)$ is DSIC, IR and allocation-feasible.

\paragraph{$\Tilde{\A}$ is allocation-feasible.} Let $\bv$, and suppose that there are at least two winners $i,j$ in $\Tilde{\A}$ for bid profile $\bv$. Since the rearrangements preserve the winners thresholds, we know that $v_i\geq\alpha^{\xmono}_i(\bv_{-i})=\alpha^{x}_i(\bv_{-i})$ and $v_j\geq\alpha^{\xmono}_j(\bv_{-j})=\alpha^{x}_j(\bv_{-j})$. But this would mean that both $i$ and $j$ win in $\A$ for profile $\bv$, in contradiction to the feasibility of $\A$.

\paragraph{$\Tilde{\A}$ is IR.} Let $\bv$ and $i$. If $i$ loses for bid profile $\bv$, their utility is $0$, because $\Tilde{\tau}_i^{\bv_{-i}}(0)=\sup\emptyset=0$ due to Claim~\ref{clm:start-win-always-win}. Otherwise their unconstrained utility is still nonnegative:
\begin{align*}
    \xmono_i(\bv)v_i-\pmono_i(\bv)\geq v_i-v_i=0.
\end{align*}
Since  $\pmono_i(\bv)\leq v_i$, buyer $i$ does not violate their constraint by bidding truthfully.

Either way, we showed that buyer $i$ gets nonnegative utility

\paragraph{$\Tilde{\A}$ is DSIC.} Let $\bv,i$ and $v_i'$. We will show that for valuation profile $\bv$ buyer $i$ has no incentive to misreport $v_i'$. Observe that if $\xmono_i(\bv)=\xmono_i(v_i',\bv_{-i})$ then also $\pmono_i(\bv)=\pmono_i(v_i',\bv_{-i})$, so buyer $i$ has no incentive to misreport $v_i'$.

If $\xmono_i(\bv)<\xmono_i(v_i',\bv_{-i})$, then by the payment definition $\pmono_i(v_i',\bv_{-i})=\Tilde{\tau}_i^{\bv_{-i}}(\xmono_i(v_i',\bv_{-i}))>v_i$. The strict inequality is due to the monotonicity and right-continuity of $\xmono(\cdot,\bv_{-i})$. Since all payments are frontloaded, $\safe(v_i,v_i';\bv_{-i})=0$, so also in this case buyer $i$ will not misreport $v_i'$.

Else, $\xmono_i(\bv)>\xmono_i(v_i',\bv_{-i})$. The allocations are whole and deterministic, so
\begin{align*}
    \xmono_i(\bv)v_i-\pmono_i(\bv)&\geq(\xmono_i(v_i',\bv_{-i})+1)v_i-\pmono_i(\bv)\\
    &\geq \xmono_i(v_i',\bv_{-i})v_i+v_i-v_i\\
    &\geq \xmono_i(v_i',\bv_{-i})v_i-\pmono_i(v_i',\bv_{-i}).
\end{align*}
Thus also in this case buyer $i$ will not misreport $v_i'$.

This concludes the proof that $\Tilde{\A}$ is DSIC.

Next we move on to show that $\E[\Tilde{\A}]>\E[\A]$.

    Select some $i$ and $\bv$ for which $\xmono_i(\bv)=t$. There exists some $v_i'$ for which $x_i(v_i',\bv_{-i})=t$. We will show that $\pmono_i(\bv)\leq p_i(v_i',\bv_{-i})$.
    If $t=0$ the inequality is immediate.
    Else, $t>0$ and by definition of $\pmono$,
    \begin{align}\label{eq:single-payment}
       \pmono_i(\bv)&=\sup\set{z:\xmono_i(z,\bv_{-i})<t}
       \leq \sup\set{z:x_i(z,\bv_{-i})<t}
       \leq p_i(v_i',\bv_{-i}).
    \end{align}
    The first inequality is due to the fact that $\xmono_i(\cdot,\bv_{-i})$ is a  nondecreasing rearrangement of $x_i(\cdot,\bv_{-i})$, and the second is due to Claim~\ref{clm:payment-at-least-thresh}.

    It follows that for a fixed $\bv_{-i}$,
    \begin{align*}
        \E_{v_i}[\pmono_i(\bv)\mid \bv_{-i}]&=\sum_{t=0}^m\E_{v_i}\sqbr{\pmono_i(\bv)\mid \bv_{-i},\xmono_i(\bv)=t}\cdot\Pr_{v_i}\sqbr{\xmono_i(\bv)\mid \bv_{-i}}\\
        &\leq \sum_{t=0}^m\E_{v_i}\sqbr{p_i(\bv)\mid \bv_{-i},x_i(\bv)=t}\cdot\Pr_{v_i}\sqbr{x_i(\bv)\mid \bv_{-i}}\\
        &=\E_{v_i}[p_i(\bv)\mid \bv_{-i}]
    \end{align*}
    The inequality is due to ~\eqref{eq:single-payment} and~\eqref{eq:same-prob-t}.
    Therefore
    \begin{align}\label{eq:better-pay}
        \E_{\bv}[\pmono_i(\bv)]\leq\E_{\bv}[p_i(\bv)].
    \end{align}

    Next, we use the Hardy-Littlewood inequality and move to the quantile space to show an improvement in the first term of the payoff. Formally, the quantile of value $v$ for buyer $i$ is defined as $q_i(v)=1-F_i(v)$, so we can define the value of a quantile as $V_i(q)=F_i^{-1}(1-q)$. The distribution of the quantile is uniform, and we use this in the expectation below.
    Note that $V_i$ is monotone nonincreasing in $q$ and $\payoffValue$ is monotone nondecreasing in the valuation, thus both $\payoffValue(V_i(\cdot))$ and $\xmono(V_i(\cdot))$ are monotone nonincreasing.

    Therefore, by Hardy-Littlewood inequality, for every $i$ and $\bv_{-i}$
    \begin{align*}
        \E_{v_i}[\payoffValue(v_i)x_i(v,\bv_{-i})\mid \bv_{-i}]&=\int_0^1 \payoffValue(V_i(q))x_i(V_i(q),\bv_{-i}) d q
        \\
        &\leq \int_0^1 \payoffValue(V_i(q))\xmono_i(V_i(q),\bv_{-i})dq=\E_{v_i}[\payoffValue(v_i)\xmono_i(\bv)\mid \bv_{-i}].
    \end{align*}
    Therefore,
    \begin{align}\label{eq:better-welfare}
        \E_{\bv}\sqbr{\payoffValue(v_i)x_i(\bv))}\leq \E_{\bv}\sqbr{\payoffValue(v_i)\xmono_i(\bv))}.
    \end{align}

Finally we analyze the cost term. Due to ~\eqref{eq:same-prob-t}, and since we assume $c(0)=0$,
\begin{align*}
    \E_{v_i}\sqbr{c(\xmono_i(\bv))\mid \bv_{-i}}=\sum_{t=1}^m c(t)\cdot\Pr_{v_i}\sqbr{\xmono_i(\bv)=t\mid \bv_{-i}}
    =\sum_{t=1}^m c(t)\cdot\Pr_{v_i}\sqbr{x_i(\bv)=t\mid \bv_{-i}}
    =\E_{v_i}\sqbr{c(x_i(\bv))\mid \bv_{-i}},
\end{align*}
and therefore
\begin{align*}
    \E_{\bv}\sqbr{c(\xmono_i(\bv))}=\E_{\bv}\sqbr{c(x_i(\bv))}.
\end{align*}

    For every $\bv$ (in both $\A$ and $\Tilde{\A}$) and $c(0)=0$, we have pointwise
\begin{align*}
c\brackets{\sum_{i=1}^n \xmono_i(\bv)} = \sum_{i=1}^n c\brackets{\xmono_i(\bv)}\quad\text{and}\quad
c\brackets{\sum_{i=1}^n x_i(\bv)} = \sum_{i=1}^n c\big(x_i(\bv)\big).
\end{align*}

Therefore,
\begin{align}\label{eq:same-cost-single-winner}
        \E_{\bv}\sqbr{c\brackets{\sum_{i=1}^n \xmono_i(\bv)}}
        &=\E_{\bv}\sqbr{\sum_{i=1}^n c\brackets{\xmono_i(\bv)}}=\E_{\bv}\sqbr{\sum_{i=1}^n c\brackets{x_i(\bv)}}
        =\E_{\bv}\sqbr{c\brackets{\sum_{i=1}^n x_i(\bv)}}.
    \end{align}

    Putting Eq.~\eqref{eq:better-pay} with $\payoffPay>0$ together with Eq.~\eqref{eq:better-welfare} and Eq.~\eqref{eq:same-cost-single-winner} yields $\E[\Tilde{\A}]\geq\E[\A]$.

    If $x$ is not nondecreasing a.e., the inequality is strict.
\end{proof}

\subsection{Proof that Monotonicity is W.L.O.G.\ for Consumer-Surplus Auctions}
For consumer surplus monotonicity, we utilize from Lemma 5.7 in~\cite{berzack2025dynamic} that considers a (single-buyer) SWAC maximizing consumer-surplus, and shows that an optimal SWAC is monotone a.e. (allocation-monotone in~\cite{berzack2025dynamic} is defined exactly like we defined monotone a.e.\ here). Since a SWAC is simply a deterministic single-buyer auction with the stagewise-IR constraint in our model, we can use their proof. We show that the transformation they use in Lemma 5.7 does not violate allocation-feasibility constraint.

\begin{proposition}\label{prop:swac-consumer-surplus}
    Let $\A$ be a SWAC that is not monotone a.e., with the buyer being stagewise-IR.
    Then there exists another allocation-feasible SWAC $\A'$ with strictly higher expected consumer surplus than $\A$, such that for every $v$, buyer with value $v$ does not get more units in $\A'$ than in $\A$.
\end{proposition}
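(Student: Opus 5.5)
My plan is to start from the construction in Lemma 5.7 of~\cite{berzack2025dynamic} and re-check it inside our model. Given a non-monotone SWAC $\A$ with allocation $x(\cdot)$ and outcome schedule, that lemma produces a SWAC $\A'$ with strictly larger expected consumer surplus. The proposition adds a pointwise domination requirement, $x'(v)\le x(v)$ for every $v$. So the work is to extract the construction from that lemma and verify two things: (i) the strict improvement, which I take directly from~\cite{berzack2025dynamic}, since a SWAC is exactly a deterministic single-buyer stagewise-IR auction here and the consumer-surplus objective is $\payoffValue(v)=v$, $\payoffPay=-1$; and (ii) the domination property, which gives allocation-feasibility.

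For (ii), I would describe the modification concretely. Take a non-monotone SWAC, i.e.\ a positive-measure set of pairs $v<v'$ with $x(v)>x(v')$. I expect the transformation to \emph{truncate} allocations rather than raise them: each high type $v'$ whose allocation sits below that of some lower type is kept as is, and the offending lower types are given a reduced schedule. One natural candidate is to replace $x(v)$ by the running infimum $\inf_{w\ge v}x(w)$ over the types they would otherwise mimic, with payments and filters adjusted so stagewise-IR and DSIC still hold. If the Lemma 5.7 construction has this ``lower the allocation, lower the payment'' shape, domination holds by construction, since every new allocation is a minimum over old allocations that includes $x(v)$ itself. I would verify this by going through each case of the proof and checking that no type's allocation is ever increased. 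I would also check that any added screening step only moves a unit from the main phase to the first phase, so the total count is unchanged.

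Given domination, feasibility is immediate. In the multi-buyer setting, $\A'$ replaces buyer $i$'s slice $x_i(\cdot,\bv_{-i})$ for fixed $\bv_{-i}$. If $x_i'(\bv)\le x_i(\bv)$ pointwise and the other buyers are unchanged, then $\sum_j x_j'(\bv)\le\sum_j x_j(\bv)\le m$. If $\Omega$ is downward closed, as it is for the unrestricted and single-winner sets, the new profile lies in $\Omega$.

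The main obstacle is that Lemma 5.7's construction may not be literally pointwise-dominated. It might, for instance, swap allocations across types to exploit higher values, which would raise some $x(v')$. In that case I would re-derive a dominated variant. I would iron only downward: set $x'(v)=\min_{w\ge v}x(w)$, and charge threshold payments with front-loaded screening, as in Lemma~\ref{lem:cons-mono-is-threshold}. I would then show consumer surplus weakly improves. The welfare loss on lowered types would be compensated by the payment savings, using Claim~\ref{clm:payment-at-least-thresh}, which shows that the original payments were at least the thresholds. Strictness would then need a separate argument on the non-monotone set, and this compensation step is where I expect the real difficulty.
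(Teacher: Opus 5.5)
As you suspected, your first route does not go through verbatim. In the construction of Lemma 5.7 of \cite{berzack2025dynamic}, the filter on the bid $v^*$ is reset to $\max\{w,P(v^*)/X(v^*)\}$. In item 3 of that analysis, types with $X^{\A}(v)<x^*$ may change their bid in $\A'$ and thereby get more units than in $\A$. Your case-by-case domination check fails exactly there, and your plan gives no repair. The paper's repair is a small perturbation, not a new construction. The proof of Lemma 5.7 shows $\Pr[\mathcal{S}_{v^*}\cap[w,v^*)]>0$, so one can pick $w'\in(w,v^*)$ with $\Pr[\mathcal{S}_{v^*}\cap[w',v^*)]>0$ still. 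Running the same transformation with $w'$ makes the new filter on bid $v^*$ equal to $\max\{w',P(v^*)/X(v^*)\}$. Then no type with $X^{\A}(v)<x^*$ changes its bid, and every type receives the same or fewer units. The strict gain is inherited from Lemma 5.7, and feasibility follows from downward closedness exactly as in your third paragraph.

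Your fallback (running infimum $X'(v)=\min_{u\ge v}X(u)$ with threshold payments $P'$) is viable, but the step you leave open is its whole content. The tool you name does not suffice. Claim~\ref{clm:payment-at-least-thresh} applied to type $v$ gives only $P(v)\ge\sup\{u:X(u)<X(v)\}$, a single threshold value that does not scale with the number of units removed. You must recover $v\,(X(v)-X'(v))$, which can be several units' worth.

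The missing idea is to use the incentive constraint of the \emph{higher} type:
\begin{enumerate}
\item First observe that $P'(v)=\sup\{u:X'(u)<X'(v)\}=\sup\{u:X(u)<X'(v)\}$.
\item If $X'(v)<X(v)$, take $w>v$ with $X(w)=X'(v)$. Type $w$ can accept $v$'s outcome, since $f(v)\le v<w$. DSIC then gives $P(v)\ge P(w)+w\,(X(v)-X'(v))$.
\item Claim~\ref{clm:payment-at-least-thresh} applied to $w$ (not to $v$) gives $P(w)\ge\sup\{u:X(u)<X(w)\}=P'(v)$.
\item Combining, $[vX'(v)-P'(v)]-[vX(v)-P(v)]\ge(w-v)(X(v)-X'(v))>0$.
\item When $X'(v)=X(v)$, the claim applied to $v$ directly gives $P'(v)\le P(v)$, and the cost term can only drop.
\item If $\{v:X'(v)<X(v)\}$ were $F$-null, $X$ would agree a.e.\ with the nondecreasing $X'$. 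So strictness is automatic for a non-monotone-a.e.\ $\A$.
\end{enumerate}

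With this inequality, your fallback becomes a genuinely different, self-contained proof. You still need care at the at most $m$ jump points of $X'$ when checking DSIC of the threshold auction. This route gives pointwise domination by construction and does not depend on the internals of Lemma 5.7. The paper's route instead reuses Lemma 5.7 wholesale and needs only the $w\to w'$ tweak.
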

\begin{proof}
    We apply the transformation in the proof of Lemma 5.7 in~\cite{berzack2025dynamic} on $\A$, with a slight change. We use their notation in this proof.

    It is shown in the proof that $\Pr[\mathcal{S}_{v^*}\cap[w,v^*)]>0$. Thus, we select some $w'\in(w,v^*)$ such that still $\Pr[\mathcal{S}_{v^*}\cap[w',v^*)]>0$.

    Then, the transformation from $\A$ to $\A'$ will be based on $w'$ instead of $w$, so the new filter on bid $v^*$ becomes $\max\set{w',\tfrac{P(v^*)}{X(v^*)}}$.

    The only effect this change has is in item 3 in their analysis: there is now \emph{no} valuation $v$ with $X^{\A}(v)<x^*$ that changes their bid in $\A'$.

    Thus, for any valuation $v$:
    \begin{itemize}
        \item Buyer $v$ is either allocated the \emph{same} amount of units in $\A'$ as in $\A$, or
        \item they are allocated \emph{less} units in $\A'$ as in $\A$.
    \end{itemize}

    Since $\Omega$ is downwards-closed, it follows immediately that $\A'$ is allocation-feasible.
\end{proof}

\begin{proof}[Proof of Theorem~\ref{thm:cons-surplus-mono-wlog}, generalized from Lemma 5.7 in~\cite{berzack2025dynamic}]\hfill

    Suppose that $\Omega$ is downwards-closed and the payoff is consumer surplus, and let $\A$ be an auction that is not monotone a.e.\

    For every $i$ and $\bv_{-i}$ for which $x_i(\cdot,\bv_{-i})$ is not monotone non-decreasing, we define the following SWAC $\A_{swac}$ (in the notations and definitions of~\cite{berzack2025dynamic}):

    \begin{align*}
        &X(v)=x_i(v,\bv_{-i})\\
        &P(v)=p_i(v,\bv_{-i})\\
        &\text{and the daily payments correspond to the sequence of payments in $\A$.}
    \end{align*}

    It follows that all the allocations, payments and filters in this SWAC are equal to the filters in $\A$ for buyer $i$ given $\bv_{-i}$.

    Thus they are equivalent in terms of buyer incentives and outcomes.

    We apply Proposition~\ref{prop:swac-consumer-surplus} on $\A_{swac}$ to get a different SWAC $\A_{swac}'$ where buyer $v$ never gets more units in $\A_{swac}'$ compared to $\A_{swac}$.

    Thus, we design a new auction $\A'$ based on $\A$
    and $\A_{swac}'$ as follows:
    \begin{align*}
        \A'_j(\vec{u})=\begin{cases}
            \A_j(\vec{u}),& j\neq i\text{ or } \vec{u}_{-i}\neq\bv_{-i}\\
            \text{The induced auction from $\A_{swac}'$},& \text{otherwise}
        \end{cases}.
    \end{align*}

Since $\Omega$ is downwards-closed, and due to Proposition~\ref{prop:swac-consumer-surplus}, $\A'$ is allocation-feasible.

For every bid profile $\vec{u}$ and if either $j\neq i$ or $\vec{u}_{-i}\neq\bv_{-i}$, all outcomes in $\A'$ are the same as in $\A$, so to prove IR and DSIC we just need to consider buyer $i$ and $\bv_{-i}$.

Indeed, since $\A_{swac}'$ is DSIC and IR, it is immediate that $\A'$ is also DSIC and IR.

In terms of expected consumer surplus: since $\A$ is not ex-post monotone a.e., there exists some $i$ such that
\[
\Pr_{\bv_{-i}}\!\left[\Pr_{y,z\sim F_i}\!\left(y<z \wedge x_i(y,\bv_{-i})>x_i(z,\bv_{-i})\right)>0\right]>0,
\]
where $y,z\sim F_i$ are i.i.d.\ and independent of $\bv_{-i}$. Hence with positive probability over $\bv_{-i}$, the induced SWAC of $\A_i(\cdot,\bv_{-i})$ is not allocation-monotone, and applying Proposition~\ref{prop:swac-consumer-surplus} on each such slice and the definition of $\A'$ yields $\E[\A']>\E[\A]$.

\end{proof}

\section{Conclusion and Future Research Directions}

    In this paper we initiate the systematic study of auction design with constrained buyers, and give Myerson-like results showing that for revenue-aligned auctions, when the constraints are uncapped and monotone (UM), the designer does not benefit from exploiting the constraints of the buyers. %
    For consumer-aligned payoff, our main bottom-line contribution is an optimal mechanism for multi-buyer auctions with a single winner and concave cost. 
    Perhaps the most interesting open question (beyond the relaxation of the  conditions above) is to ask whether there is a comprehensive theory of auctions for buyers with constraints that are \emph{not uncapped}, for the full range of objectives we consider.
    For example, consider buyers with budget or return-on-investment (RoI) constraints, and the consumer surplus objective - what is the optimal auction? We believe that our framework and techniques may possibly serve as the foundation for studying this direction.

\section*{Acknowledgements}
\ifanonymous
\else
\addcontentsline{toc}{section}{Acknowledgements}

We thank Yannai Gonczarowski and Amos Fiat for helpful discussions and comments.
This work
received funding from the European Research Council (ERC) under the European Union’s Horizon
2020 research and innovation program (grant No.: 101077862), from the Israel Science Foundation
(grant No.: 3331/24, 2801/20 and 3725/24), from the NSF-BSF (grant No.: 2021680), and from a Google Research Scholar
Award.

\fi

\bibliographystyle{siam}
\bibliography{references}

\appendix
\addtocontents{toc}{\protect\setcounter{tocdepth}{1}}

\section*{Appendix Organization}
\begin{itemize}
    \item Appendix~\ref{sec:additional-related}: Additional related work.

    \item Appendix~\ref{app:auxiliary}: Auxiliary definitions and examples.
    \item Appendix~\ref{app:ironing}: Virtual values and ironing.

    \item Appendix~\ref{app:randomized-relabeling-truthification}: Randomized relabeling and partial truthification.

    \item Appendix~\ref{app:rev-aligned}: Remaining details for Section~\ref{sec:revenue} on revenue-aligned objectives.

    \item Appendix~\ref{app:cons-aligned}: Remaining details for Section~\ref{sec:consumer} on consumer-aligned objectives.
\end{itemize}

\section{Additional Related Work}\label{sec:additional-related}

\subsection{Related Work on Buyer Constraints}

\paragraph{The no-overbidding \emph{assumption} and ex-post individually rational buyers.} The assumption that bidders do not or cannot bid above their valuations finds significant precedent in the auction literature.
Buyers that do not overbid are sometimes referred to as \emph{conservative bidders}.
In the Price of Anarchy line of work, assuming that buyers do not overbid is a very common and essential assumption, beginning with~\cite{christodoulou2008bayesian}, and continuing in many papers such as~\cite{bhawalkar2011welfare,leme2010pure,lucier2010price,roughgarden2017price}.
There are multiple justifications for this assumption. One of them, relied on by~\cite{lucier2010price,christodoulou2008bayesian}, assumes that agents are ``ex-post individually rational'', meaning they will never make decisions that might (in some scenario) result in them getting negative utility. With these types of agents, the mechanism can utilize this assumption to prevent agents from overbidding. Another justification, given for example by~\cite{lucier2010price,leme2009sponsored}, involves designing a mechanism that inserts a random bid with vanishing probability, allowing the application of trembling-hand techniques~\cite{bielefeld1988reexamination} to show that agents will be conservative.
Beyond the Price of Anarchy literature, \cite{ahunbay2020two} study a two-buyer sequential auction and make a behavioral no-overbidding assumption, motivated by the fact that in practice buyers often do not overbid, as they may not have perfect trust in the seller.

\paragraph{The no-overbidding \emph{constraint} and verifiable bids.}
A related line of work studies mechanism design with verifiable bids, where some misreports are ruled out exogenously. Early work by \cite{green1986partially} shows that partial verification can expand the designer’s power and mainly characterizes when the revelation principle continues to hold (this does not cover all of our constraints, see Example~\ref{ex:failure-of-revelation-principle}). This perspective naturally connects to auctions with a no-overbidding constraint, where reports above the true valuation are infeasible. Recent work by \cite{reuter2023revenue} studies revenue maximization with partially verifiable reports in single-item auctions, where buyers report an auxiliary characteristic correlated with value that can be partially checked, and shows that verification can strictly improve revenue over Myerson in some settings. Relatedly, \cite{celik2006mechanism} shows that in single-agent screening problems, if the full-information optimal action is monotone, then optimizing subject only to downward (unidirectional) incentive constraints yields the same outcome as full incentive compatibility; however, their argument relies on the absence of feasibility coupling and does not extend to multi-buyer auctions.

In a similar manner, auctions with budget-constrained buyers, who cannot over-report their budgets, have been considered (e.g., \cite{devanur2017optimal,che2000optimal}). They call these \emph{conditional} mechanisms and show that in some cases the optimal conditional mechanism can yield more revenue than the optimal \emph{unconditional} mechanism. They also explain how this can be enforced, for example, by requiring the buyer to pay their full budget with some small probability. This demonstrates that certain reports are effectively \emph{impossible} for some bidders to make.

\paragraph{The stagewise individually rational constraint.} \cite{krahmer2025dynamic} consider a dynamic screening procurement auction where the agent is short-term liquidity constrained, which is equivalent to stagewise-IR buyers. In addition,~\cite{berzack2025dynamic} define stagewise-IR buyers exactly as we do, although the definition in the multi-buyer case is slightly more delicate. They show that, for some objectives, the designer can utilize the buyers' constraints to improve their outcome. We strictly improve upon the results of~\cite{berzack2025dynamic} even in the single-buyer case. For a detailed comparison, see Appendix~\ref{app:comparison-to-rental}.

\subsection{Related Work on Incorporating Buyer Constraints Into the Utility Function}\label{app:prior-papers-that failed}\hfill

Several past works have explicitly incorporated specific buyer constraints into the utility function, while implicitly assuming that Myerson's classical theory still applies. As we have shown, this assumption is incorrect, 
but does not in itself invalidate their results.
One of our goals is to provide theoretical underpinnings for correct designs that incorrectly rely on Myerson's theory.

For example,~\cite{dobzinski2014efficiency,hirai2025polyhedral} study budget-constrained buyers and explicitly say that the buyers' budget is $-\infty$ in case of a payment that violates their budget. However, the rest of their analysis continues with respect to the standard unconstrained utility, assuming, for example, that all truthful auctions are monotone, or that if buyers do not deviate, it is because their quasilinear utility would not increase.
However, under the constrained utility model, if buyers are constrained, their decision not to deviate may be due to their constraint rather than their quasilinear utility.
Similarly,~\cite{lv2023auction,tang2024towards} do the same for RoI-constrained buyers.

The work of \cite{golrezaei2021auction} considers buyers who are not prepared to participate if they do not get a high enough RoI, but does not explicitly incorporate this constraint into the buyer's utility; instead, the buyers are treated as unconstrained, and at equilibrium the mechanism must ensure that RoI constraints are met.

\subsection{Related Work on General Seller Objectives}

\paragraph{Consumer surplus.}
Hartline and Roughgarden~\cite{HartlineR08} initiated the algorithmic study of consumer surplus maximization, focusing on single-parameter settings. Recent work explores this objective in a variety of models and through a variety of approaches, including multi-parameter, combinatorial, online, and learning-augmented settings~\cite{goldner2025multidimensional,goldner2026knowing,ezra2025multi,fotakis2016efficient,ganesh2023combinatorial}.
Berzack et al.~\cite{berzack2025dynamic} were the first to combine consumer surplus maximization with constrained buyers. They considered stagewise-IR agents and designed a mechanism that utilizes this constraint to extract higher consumer surplus (among other objectives). See Appendix~\ref{app:comparison-to-rental} for a detailed comparison to our results. Very recently, Eden et al.~\cite{alon2026consumer} study a more general objective that includes consumer surplus maximization as a special case in the context of budget-feasible procurement auctions.

\paragraph{Other payoff functions.}
Many papers have considered maximizing both welfare and revenue simultaneously with unconstrained buyers, which is included in our revenue-aligned tradeoff, such as~\cite{likhodedov2003mechanism,diakonikolas2012efficiency}. Even more general objectives have been considered by~\cite{berzack2025dynamic} and~\cite{feldman2025budget} (what they term BEST objectives), the latter studying budget-feasible contracts.

\subsection{Comparison to Previous Work on Rental Games~\cite{berzack2025dynamic}}\label{app:comparison-to-rental}

Our results greatly expand upon the results of~\cite{berzack2025dynamic}.
In addition to providing results for a much more general model, our proofs unite many objectives and our techniques are more robust:
they are not merely a generalization of those of~\cite{berzack2025dynamic}, since the single-buyer monotonicity-repair techniques are inapplicable in the multi-buyer setting or when the feasibility set is restricted.
This motivates a re-examination of the problem from first principles. Instead of relying on objective-specific arguments as in~\cite{berzack2025dynamic}, we develop a unified proof framework that applies across several objectives. As a result, we obtain stronger optimality guarantees than~\cite{berzack2025dynamic} even in the single-buyer setting, establishing global optimality rather than optimality limited to finite-menu auctions.

The core extension is from single-buyer auctions to multi-buyer settings. This introduces the primary technical challenge, as feasibility constraints become coupled across buyers, substantially restricting the modifications that can be made to an auction. This difficulty is further amplified by our allowance of arbitrary feasibility sets, which imposes additional constraints on how allocations can be altered.

\subsubsection{Revenue-Aligned Results Comparison}
Our revenue-aligned payoff objective combines the revenue-like, welfare-like, and positive tradeoff objectives from~\cite{berzack2025dynamic}. 

The main differences in our results are highlighted in Table~\ref{tab:rev-aligned}.
The rows of the table should be read as follows.
The \emph{buyer constraints} row specifies which hard restrictions on acceptable bids or outcomes are covered:~\cite{berzack2025dynamic} studies stagewise-IR buyers, while our revenue-aligned results apply to any uncapped-payment monotone (UM) constraint.
The \emph{buyers} row records whether the result is single-agent or multi-agent.
The \emph{feasibility set} row describes which allocations the seller may choose from: $\Omega_{\mathrm{all}} = \set{ (x_1,\ldots,x_n) : \sum_{i = 1}^n x_i = m}$ denotes the unrestricted allocation set for $m$ units, whereas our results allow an arbitrary feasible set $\Omega$.
The \emph{optimality} row distinguishes guarantees that hold only within a finite-menu subclass from guarantees that are global over the relevant auction class.
Finally, the \emph{auction space} row says whether the theorem is restricted to deterministic DSIC auctions or also covers auctions with lotteries and Bayesian incentive compatibility (BIC).

\begin{table}[h]
\centering
\begin{tabular}{lcc}
\hline
\textbf{Dimension} & \textbf{Berzack et al.\ (2025)} & \textbf{This paper} \\
\hline
Buyer constraints & Stagewise IR & Any uncapped-monotone \\
Buyers & Single-buyer & Multi-buyer \\
Feasibility set & Only $\Omega_{\mathrm{all}}$ & Any $\Omega$ \\
Optimality & Some restricted to finite-menu & All global \\
Auction space & Only deterministic, DSIC & Including lotteries, BIC \\
\hline
\end{tabular}
\caption{Comparison for the revenue-aligned case}\label{tab:rev-aligned}
\end{table}

\subsubsection{Consumer-Aligned Results Comparison}

Our consumer-aligned payoff is slightly more general than the negative tradeoff objective from~\cite{berzack2025dynamic}. 

The main differences in our results are highlighted in Table~\ref{tab:cons-aligned}.
Here the comparison focuses on the stagewise-IR setting studied by~\cite{berzack2025dynamic}.
The \emph{buyers} row indicates that we move from the original single-buyer rental game to settings with multiple strategic buyers.
The \emph{optimality} row has the same meaning as in Table~\ref{tab:rev-aligned}: our guarantee is global rather than restricted to finite-menu auctions.
The \emph{robustness} row describes the cost functions for which the mechanism is proved optimal; instead of relying on the particular rental-game cost functions from~\cite{berzack2025dynamic}, our proof works for any concave cost.

\begin{table}[h]
\centering
\begin{tabular}{lcc}
\hline
\textbf{Dimension} & \textbf{Berzack et al.\ (2025)} & \textbf{This paper} \\
\hline
Buyers & Single-buyer & Multi-buyer \\
\hline
Optimality & Finite-menu & All global \\
\hline
Robustness & Specific cost functions & Any concave cost \\
\hline
\end{tabular}
\caption{Comparison for the consumer-aligned case}\label{tab:cons-aligned}
\end{table}

While~\cite{berzack2025dynamic} find an optimal auction for specific cost functions derived from the rental game with i.i.d.\ buyers, we prove that this mechanism form is optimal for \emph{any} concave cost function in the multi-buyer settings covered by our consumer-aligned theorem. This decouples the mechanism's validity from the specific ``rental'' interpretation and establishes it as a general solution for consumer-aligned objectives in these settings.

\section{Auxiliary Definitions and Examples}\label{app:auxiliary}

\subsection{Common Preliminaries}\label{app:common}

\begin{definition}[Nondecreasing rearrangement, simplified from~\cite{LiebLoss2001}]
\label{def:nondecreasing-rearrangement}
Let $g:X\to Y$ be a measurable function, with $X,Y\subseteq\reals_{\geq0}$.
A measurable function \(g^\uparrow:X\to Y\) is called a
\emph{nondecreasing rearrangement} of $g$ with respect to distribution $\mu$ if:

\begin{enumerate}
    \item \(g^\uparrow\) is nondecreasing on \(X\), and
    \item \(g^\uparrow\) is equimeasurable with \(g\) with respect to $\mu$, i.e., for every \(t\in Y\),
    \[
    \Pr_{u\sim\mu}\!\big[g^\uparrow(u)\ge t\big]
    =
    \Pr_{u\sim \mu}\!\big[g(u)\ge t\big].
    \]
\end{enumerate}
Such a nondecreasing rearrangement exists. Moreover, it is unique up to \(\mu\)-almost-everywhere
equality: if \(\tilde g^\uparrow\) is another nondecreasing rearrangement of \(g\), then
\[
g^\uparrow(u)=\tilde g^\uparrow(u)
\qquad\text{for }\mu\text{-almost every }u.
\]
Throughout, when we speak of \emph{the} nondecreasing rearrangement,
we take this version so that
\[
g^\uparrow\big(X\big)\subseteq
\overline{\,g\big(X\big)\,},
\]
and hence if \(g^\uparrow(u)\notin g(X)\), then
\(g^\uparrow(u)\) is necessarily a limit point of the image of \(g\).
\end{definition}

\subsection{Examples}\label{app:examples}

\subsubsection{Implementing the Toy Auction with No-Overbidding Buyers}
\label{app:warmup-no-overbidding-payments}

Consider the allocation rule from the toy example in Section~\ref{sec:overview-rev-aligned}.
Assume both buyers have values in $[1,2]$ and are subject to no-overbidding
constraints, so a buyer with value $v_i$ may only submit bids $b_i\leq v_i$.
The allocation and payment rules are shown in Table~\ref{tab:warmup-no-overbidding-implementation};
entries in the two tables are
$(\textcolor{blue}{x_1},\textcolor{green!50!black}{x_2})$ and
$(\textcolor{blue}{p_1},\textcolor{green!50!black}{p_2})$, respectively.

\begin{table}[h]
\centering
\begin{subtable}{0.48\linewidth}
\centering
\[
\begin{array}{c|c|c}
 & \textcolor{blue}{b_1<1.9} & \textcolor{blue}{b_1\geq 1.9} \\ \hline
\textcolor{green!50!black}{b_2<1.6}
& (\textcolor{blue}{2},\textcolor{green!50!black}{0})
& (\textcolor{blue}{1},\textcolor{green!50!black}{1})
\\[1mm]
\textcolor{green!50!black}{b_2\geq 1.6}
& (\textcolor{blue}{0},\textcolor{green!50!black}{2})
& (\textcolor{blue}{1},\textcolor{green!50!black}{1})
\end{array}
\]
\caption{Allocation rule}
\label{tab:warmup-no-overbidding-allocation}
\end{subtable}
\hfill
\begin{subtable}{0.48\linewidth}
\centering
\[
\begin{array}{c|c|c}
 & \textcolor{blue}{b_1<1.9} & \textcolor{blue}{b_1\geq 1.9} \\ \hline
\textcolor{green!50!black}{b_2<1.6}
& (\textcolor{blue}{2},\textcolor{green!50!black}{0})
& (\textcolor{blue}{0},\textcolor{green!50!black}{0})
\\[1mm]
\textcolor{green!50!black}{b_2\geq 1.6}
& (\textcolor{blue}{0},\textcolor{green!50!black}{0})
& (\textcolor{blue}{0},\textcolor{green!50!black}{0})
\end{array}
\]
\caption{Payment rule}
\label{tab:warmup-no-overbidding-payments}
\end{subtable}
\caption{The warmup auction and a payment rule that implements it under no-overbidding constraints.}
\label{tab:warmup-no-overbidding-implementation}
\end{table}

We verify that truthful bidding is DSIC with
respect to the constrained utility. Fix the other buyer's bid. For buyer~$1$,
if $v_1<1.9$, then every feasible bid satisfies $b_1<1.9$, so every feasible bid
induces the same outcome. If $v_1\geq1.9$, then a deviation to the low-bid column gives utility
$2v_1-2$ when $b_2<1.6$, compared with truthful utility $v_1$; this decreases utility since $v_1\leq2$. When $b_2\geq1.6$, the same deviation gives utility
$0$, compared with truthful utility $v_1$.

For buyer~$2$, if $v_2<1.6$, then every feasible bid satisfies $b_2<1.6$, so
again all feasible bids induce the same outcome. If $v_2\geq1.6$, then a
deviation to the low-bid row gives utility $0$ when $b_1<1.9$, compared with
truthful utility $2v_2$; when $b_1\geq1.9$, both rows give buyer~$2$ one unit and
zero payment. Thus no feasible deviation is profitable for either buyer, so the
auction is DSIC under the no-overbidding constraints. Truthful bidding also gives
nonnegative utility in every realized cell, so the auction is IR.

\begin{example}[Revelation principle failure]
\label{ex:failure-of-revelation-principle}
    Consider a single item auction $\A$ with a single buyer with valuation $v \sim \Uni[0,1]$, who is constrained to bid at most twice their valuation ($b \le 2v$). Suppose $\A$ allocates the item whenever the bid is $b\geq0.8$, with zero payment. 
    In this auction, only buyers with valuation $v \ge 0.4$ can place a winning bid without violating the buyer constraint. Thus, the item is allocated to the buyers with valuations in the range $[0.4, 1]$.

    If the revelation principle were to hold, there would be a truthful auction allocating the item to exactly the same buyer types at the same price (zero).    
    However, this is impossible to implement with a truthful direct mechanism:
    a buyer with valuation $v=0.2$, who receives nothing in the original auction, can now profitably lie
    and bid $\hat{v} = 0.4$.
    This is both profitable and feasible for this buyer,
    even under the constraint (as $0.4 \leq 2 \cdot 0.2$). Thus, there is no truthful mechanism
    equivalent to the original.
\end{example}

\begin{example}[Why binding payments are not always implementable]\label{ex:binding-payment-fail}
Consider a single buyer with valuation $v\in[0,30]$. Let the desired allocation probabilities at equilibrium be:
\begin{equation}
q(v) = \begin{cases}
0.5 & \text{if } 0 \le v \le 10, \\
1 & \text{if } 10 < v \le 20, \\
0.9 & \text{if } 20 < v \le 30.
\end{cases}
\end{equation}

The binding payment rule from~\eqref{eq:rev-aligned-payment-bound} implies that the equilibrium payment of type $v$ must satisfy $\expect{p}(v)=q(v)v-\int_{0}^v q(z)dz$ and thus the expected utility is $\expect{\util}(v)=\int_{0}^v q(z)dz$. Thus at equilibrium, we would like a buyer with type $30$ to get a utility of
$$0.5\cdot10+1\cdot10+0.9\cdot10=24.$$

However, consider the deviation where type $v=30$ mimics the strategy of type $v=20$. Since the constraints are monotone, this deviation is acceptable. The payment required from type $20$ under the binding rule is:
$$\expect{p}(20)=20 \cdot q(20)-\int_{0}^{20} q(z)dz=20\cdot1-15=5.$$
By mimicking type $20$, the buyer with true valuation $v=30$ obtains allocation $q(20)=1$ and pays $\expect{p}(20)=5$, resulting in a utility of
$$30\cdot1-5=25.$$
Since $25>24$, the buyer strictly prefers to deviate. This confirms that the binding payment rule cannot support a non-monotone allocation in equilibrium.
\end{example}

\section{Virtual Values and Ironing}\label{app:ironing}
\subsection{Generic Ironing}\label{sec:ironing}

\begin{definition}[Ironed virtual valuations~\cite{myerson1981optimal,HartlineR08}]\label{def:ironing}
    Given a distribution function $F(\cdot)$ over some interval $I=[\bottomV,\topV]$ and any function $\theta:I\rightarrow\reals$, the \textnormal{ironed virtual value function}, $\Bar{\theta}$, is constructed as follows:

    \begin{enumerate}
        \item For $q\in[0,1]$, define $h(q)=\theta\brackets{F^{-1}(q)}$.
        \item Define $H(q)=\int_0^q h(r)dr$.
        \item Define $\Psi$ as the convex hull of $H$ --- the largest convex function bounded above $H$ for all $q\in[0,1]$.
        \item Define $\psi(q)$ as the derivative of $\Psi(q)$, where defined, and extend to all of $[0,1]$ by right-continuity.
        \item Finally, $\Bar{\theta}(v)=\psi(F(v))$.
    \end{enumerate}
\end{definition}

    \begin{observation}\label{obs:irn-nondecreasing}
        Any ironed virtual value function defined by Definition~\ref{def:ironing} is nondecreasing. This is immediate by definition, as the derivative of a convex function.
    \end{observation}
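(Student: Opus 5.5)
The statement is Observation~\ref{obs:irn-nondecreasing}: any ironed virtual value $\Bar{\theta}$ produced by Definition~\ref{def:ironing} is nondecreasing. The plan is to show that $\psi$ is nondecreasing on $[0,1]$ and then compose with the nondecreasing map $F$.

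First, I will establish that $\Psi$ is convex on $[0,1]$. Here I read ``convex hull'' in step 3 in the usual ironing sense: the greatest convex minorant of $H$, i.e.\ the largest convex function lying below $H$. This object exists. $H$ is finite and continuous, being an integral of $h$; I take $\theta$ bounded, or at least $h$ integrable, as is implicit in the definition of $H$. Hence $H$ admits an affine, and thus convex, function lying below it. The pointwise supremum of all convex functions lying below $H$ is again convex and lies below $H$.

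Second, I will use the standard fact that a convex function on an interval has nondecreasing one-sided derivatives, with
$\Psi'_-(q)\le\Psi'_+(q)\le\Psi'_-(q')$ for $q<q'$. Therefore, wherever $\Psi$ is differentiable, the derivative $\psi$ is nondecreasing on that (co-countable) set. The right-continuous extension in step 4 takes $\psi(q)=\lim_{q'\downarrow q}\psi(q')$, which equals $\Psi'_+(q)$. This extension is also nondecreasing: it is a limit of values from a nondecreasing family, and it is sandwiched between neighboring derivative values. At the endpoint $q=1$, any right-continuous value is taken as the left limit; this value is still at least every earlier value.

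Finally, since $F$ is a CDF, it is nondecreasing. So for $v\le v'$ we have $F(v)\le F(v')$, and hence $\Bar\theta(v)=\psi(F(v))\le\psi(F(v'))=\Bar\theta(v')$.

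I do not expect a genuine obstacle. The only delicate point is the interpretation of the wording ``largest convex function bounded above $H$''. Read literally as a convex \emph{majorant}, there is no largest such function, so I would note that the intended object is the convex minorant. I would also confirm that the right-continuous extension preserves monotonicity, including at the endpoints.
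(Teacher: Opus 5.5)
Your proposal is correct and follows the paper's one-line argument: $\Psi$ is convex, so its derivative and its right-continuous extension (which is $\Psi'_+$) are nondecreasing, and composing with the nondecreasing $F$ preserves monotonicity. Your added care simply fills in details the paper leaves implicit, namely reading ``convex hull'' as the greatest convex minorant and handling the extension at the endpoints.
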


The following Lemma is generalized to cover allocation functions
that are monotone-\aee, and not necessarily pointwise.
\begin{lemma}[Slight Generalization of Lemma 2.8 from~\cite{HartlineR08}]\label{lemma:ironing-properties}
    Let $F_i$ be a distribution function, 
    let $\theta_i:[\bottomV_i,\topV_i]\rightarrow\reals$, 
    and let $\expect{x}_i:[\bottomV_i,\topV_i]\to[0,m]$ 
    be a nondecreasing-\aee interim allocation rule. 
    
    Define $\Psi,H$ and $\Bar{\theta}_i$ as in Definition~\ref{def:ironing}.
    Then 
    \begin{equation}
        \E_{v_i\sim F_i}\sqbr{\theta(v_i)\expect{x}_i(v_i)}\leq \E_{v_i\sim F_i}\sqbr{\Bar{\theta}_i(v_i)\expect{x}_i(v_i)},
    \end{equation}
    with equality holding if and only if,
    wherever $\Psi(F_i(v_i))<H(F_i(v_i))$,
     $\expect{x}_i(v_i)$ is constant $F_i$-\aee.
\end{lemma}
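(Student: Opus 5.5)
The plan is to work in quantile space and reduce the inequality to an integration-by-parts identity against the convex hull, exactly as in Myerson's ironing argument. The only new ingredient is that $\expect{x}_i$ is nondecreasing merely almost everywhere.

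\textbf{Setup.} Substitute $q=F_i(v_i)$. Since $F_i$ is continuous and atomless, $q\sim\Uni[0,1]$. Set $y(q)=\expect{x}_i(F_i^{-1}(q))$, so that
\begin{align*}
\E_{v_i\sim F_i}\sqbr{\theta(v_i)\expect{x}_i(v_i)}&=\int_0^1 h(q)\,y(q)\,dq,\\
\E_{v_i\sim F_i}\sqbr{\Bar\theta_i(v_i)\expect{x}_i(v_i)}&=\int_0^1 \psi(q)\,y(q)\,dq.
\end{align*}
Because $\expect{x}_i$ is nondecreasing off an $F_i$-null set, $y$ is nondecreasing off a Lebesgue-null set. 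I will first replace $y$ by a genuinely nondecreasing, right-continuous version $\tilde y$ that agrees with $y$ a.e. A natural choice is $\tilde y(q)=\operatorname{ess\,inf}_{(q,1]}y$, or equivalently the nondecreasing rearrangement of $y$ (Definition~\ref{def:nondecreasing-rearrangement}), which coincides with $y$ a.e. in this case. Neither integral changes under this replacement, so we may assume $y$ is nondecreasing and bounded in $[0,m]$. This replacement step is the only real departure from Lemma~2.8 of~\cite{HartlineR08}. The main care needed is to show that "nondecreasing off a null set" really yields an a.e.-equal monotone function. I expect this to be the main (mild) obstacle. I would argue it by noting that on the full-measure set $V$ where monotonicity holds, $y|_V$ is monotone, and then extending by right limits along $V$, which is dense.

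\textbf{Integration by parts.} Write $y$ as a Lebesgue--Stieltjes integral, $y(q)=y(0)+\int_{(0,q]}dy$, with $dy\ge 0$. Using $H(0)=\Psi(0)=0$, $H(1)=\Psi(1)$ (the convex hull agrees with $H$ at the endpoints), and the fact that $H$ and $\Psi$ are absolutely continuous, Fubini gives
\begin{align*}
\int_0^1 (\psi-h)\,y\,dq=\int_{(0,1]}\big(H(r)-\Psi(r)\big)\,dy(r)\;\ge\;0 .
\end{align*}
The sign follows from $\Psi\le H$ and $dy\ge0$. This proves the inequality.

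\textbf{Equality case.} Equality holds iff $\int (H-\Psi)\,dy=0$, i.e.\ iff the measure $dy$ puts no mass on the open set $\{H>\Psi\}$. The set $\{H>\Psi\}$ is a countable union of open intervals, by continuity of $H$ and $\Psi$. Having no mass there is equivalent to $y$ being constant on each such interval, up to its a.e. version. Translating back through $v_i=F_i^{-1}(q)$ gives the stated condition: $\expect{x}_i$ is constant $F_i$-a.e. wherever $\Psi(F_i(v_i))<H(F_i(v_i))$. For this translation I would make explicit that the condition is read interval by interval on the components of $\{\Psi<H\}$. Since $\psi$ is constant on each such component (because $\Psi$ is affine there), the "if" direction also follows directly: there $\psi$ equals the average of $h$, so those contributions match.
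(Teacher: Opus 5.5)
Your proposal is correct and follows the paper's route: replace $\expect{x}_i$ by an everywhere-nondecreasing version that agrees with it a.e., which the paper simply asserts exists and you construct via right limits along the full-measure set, and then apply the standard ironing inequality. The only difference is that you re-derive Lemma~2.8 of Hartline--Roughgarden through the identity $\int_0^1(\psi-h)\,y\,dq=\int_{(0,1]}(H-\Psi)\,dy$ instead of citing it, and you state explicitly that the equality condition must hold separately on each connected component of $\{\Psi<H\}$, which is the correct reading of the paper's looser wording.
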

\begin{proof}
Since \(\expect{x}_i\) is nondecreasing \(F_i\)-almost everywhere, there exists a version
\(\tilde{x}_i\) of \(\expect{x}_i\) that is everywhere nondecreasing and satisfies
\(\tilde{x}_i=\expect{x}_i\) \(F_i\)-almost everywhere. Applying Lemma~2.8 of~\cite{HartlineR08} to \(\tilde{x}_i\) yields
\[
\mathbb{E}_{v_i\sim F_i}\!\left[\theta_i(v_i)\tilde{x}_i(v_i)\right]
\le
\mathbb{E}_{v_i\sim F_i}\!\left[\bar{\theta}_i(v_i)\tilde{x}_i(v_i)\right].
\]
Since \(\tilde{x}_i=x_i\) \(F_i\)-almost everywhere, both expectations agree with
the corresponding ones for \(x_i\), and therefore
\[
\mathbb{E}_{v_i\sim F_i}\!\left[\theta_i(v_i)\expect{x}_i(v_i)\right]
\le
\mathbb{E}_{v_i\sim F_i}\!\left[\bar{\theta}_i(v_i)\expect{x}_i(v_i)\right].
\]
Moreover, equality holds if and only if \(\tilde{x}_i\) is constant wherever
\[
\Psi(F_i(v_i)) < H(F_i(v_i)),
\]
equivalently, if and only if \(x_i\) is constant there \(F_i\)-almost everywhere.
\end{proof}

The following result is used throughout the paper, applied on ironed virtual value functions.
\begin{claim}\label{clm:pointwise-maximizer-nondecreasing}
    Suppose $\phi_1,\dots,\phi_n$ are nondecreasing functions mapping buyer valuations to real numbers, $\prec$ is a total order over feasibility set $\Omega$, and $c:\set{0,\dots,m}\to\reals_{\geq 0}$ is a cost function.

    Then the allocation function $\vec{x}$ defined as
    the pointwise maximizer
    \begin{align*}x(\bv)=\max_{\prec}\argmax_{\vec{x}'\in\Omega}\crl*{\sum_{i=1}^n \phi_i(v_i)x_i'-c\brackets{\sum_{i=1}^n x_i'}}
    \end{align*}
    is ex-post monotone nondecreasing.
\end{claim}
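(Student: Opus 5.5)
The plan is to prove ex-post monotonicity one coordinate at a time. Fix a buyer $i$, a profile $\bv_{-i}$, and two values $v_i<v_i'$. Write $\vec{x}=\vec{x}(v_i,\bv_{-i})$ and $\vec{x}'=\vec{x}(v_i',\bv_{-i})$. The goal is $x_i'\ge x_i$. Define the objective $W_{v}(\vec{y})=\sum_j\phi_j(v_j)y_j-c(\sum_j y_j)$. The key observation is that changing $v_i$ to $v_i'$ changes $W$ only through the term $\phi_i(v_i)y_i$. Hence
\[
W_{v'}(\vec{y})-W_{v}(\vec{y})=(\phi_i(v_i')-\phi_i(v_i))\,y_i=:\delta\, y_i, \qquad \delta\ge 0 .
\]

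First I would run the standard exchange argument. Since $\vec{x}$ maximizes $W_v$, we have $W_v(\vec{x})\ge W_v(\vec{x}')$. Since $\vec{x}'$ maximizes $W_{v'}$, we have $W_{v'}(\vec{x}')\ge W_{v'}(\vec{x})$. Adding the two inequalities and substituting the identity above gives $\delta(x_i'-x_i)\ge 0$.

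If $\delta>0$, this gives $x_i'\ge x_i$ immediately.

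If $\delta=0$, the objectives $W_v$ and $W_{v'}$ coincide as functions on $\Omega$. Their argmax sets are therefore identical, and the tie-breaking rule $\max_\prec$ selects the same element, so $\vec{x}'=\vec{x}$.

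The one step needing care is the case $\delta>0$ with $x_i'=x_i$ while the other coordinates differ. This does not break monotonicity of $x_i$, so nothing more is required. For completeness, though, I would also check that no tie-breaking issue can produce $x_i'<x_i$. The inequality $\delta(x_i'-x_i)\ge0$ with $\delta>0$ rules this out regardless of $\prec$, because both $\vec{x}$ and $\vec{x}'$ are exact maximizers. Note that $\Omega$ is finite, so both argmax sets are nonempty.

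I expect no real obstacle. The only subtle point is that the argument needs exact maximizers, so no approximation or almost-everywhere caveat enters, and it holds for every $\bv$. The conclusion is that $x_i(\cdot,\bv_{-i})$ is nondecreasing for every $i$ and every $\bv_{-i}$, which is exactly ex-post monotonicity.
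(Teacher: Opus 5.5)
Your proof is correct and follows essentially the same route as the paper: the exchange argument gives $(\phi_i(v_i')-\phi_i(v_i))(x_i'-x_i)\ge 0$. The degenerate case $\phi_i(v_i')=\phi_i(v_i)$ is then handled by noting that the two objectives coincide on $\Omega$, so the fixed tie-breaking rule $\max_{\prec}$ returns the same allocation.

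The only difference is presentational. The paper argues by contradiction from a hypothetical violation $x_i(v_i,\bv_{-i})>x_i(v_i',\bv_{-i})$, whereas you split directly on whether $\delta>0$ or $\delta=0$.
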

\begin{proof}
    Suppose towards contradiction that there is an $j\in[n]$ and some $y<z$ and $\bv_{-j}$ such that
    $x_j(y,\bv_{-j})>x_j(z,\bv_{-j})$. Then since $\vec{x}$ is ``maximizing'', and since feasibility does not depend on the bid profile,
    \begin{align*}
        \sum_{i=1}^n \phi_i((z,\bv_{-j})_i)x_i(z,\bv_{-j})-c\brackets{\sum_{i=1}^nx_i(z,\bv_{-j})}
        \geq \sum_{i=1}^n \phi_i((z,\bv_{-j})_i)x_i(y,\bv_{-j})-c\brackets{\sum_{i=1}^n x_i(y,\bv_{-j})},
    \end{align*}
    and similarly
    \begin{align*}
        &\sum_{i=1}^n \phi_i((y,\bv_{-j})_i)x_i(y,\bv_{-j})-c\brackets{\sum_{i=1}^n x_i(y,\bv_{-j})}
        \geq \sum_{i=1}^n \phi_i((y,\bv_{-j})_i)x_i(z,\bv_{-j})-c\brackets{\sum_{i=1}^n x_i(z,\bv_{-j})}.
    \end{align*}
    Combining both inequalities,
    \begin{align*}
        \phi_j(z)x_j(z,\bv_{-j})+\phi_j(y)x_j(y,\bv_{-j})\geq \phi_j(z)x_j(y,\bv_{-j})+\phi_j(y)x_j(z,\bv_{-j})).
    \end{align*}
    Rearranging,
    \begin{align*}
        \brackets{\phi_j(z)-\phi_j(y)}(x_j(z,\bv_{-j})-x_j(y,\bv_{-j}))\geq 0.
    \end{align*}

Since $x_j(y,\bv_{-j})>x_j(z,\bv_{-j})$ and $\phi_i(\cdot)$ is nondecreasing, it follows that  $\phi_i(y)=\phi_i(z)$.
    
    But $\vec{x}$ selects a unique maximizer via a fixed deterministic tie-breaking rule ($\max_{\prec}$), so in this case we'd have $x_j(z,\bv_{-j})=x_j(y,\bv_{-j})$, a contradiction. 
\end{proof}

\subsection{Revenue-Aligned Virtual Values}\label{app:rev-virtual-values}

For revenue-aligned payoffs, we use a generalized version of Myerson's virtual value, adapted for the payoff function $\Pi$.

\begin{definition}\label{def:rev-aligned-virtual-values}
    Given a revenue-aligned payoff $\Pi=(\payoffValue,\payoffPay)$ and valuation distribution $F_i$ with pdf $f_i$, the \emph{generalized virtual value} $\varphi_{i,\Pi}$ is defined as:
\begin{align*}
    \varphi_{i,\Pi}(v_i)\coloneqq \payoffValue(v_i)+\payoffPay\varphi_i(v_i),
\end{align*}
where $\varphi_i(v_i)=v_i-\frac{1-F_i(v_i)}{f_i(v_i)}$ is the virtual value for revenue as defined in~\cite{myerson1981optimal}.

The \emph{ironed generalized virtual value} $\irn_{i,\Pi}(v_i)$ is defined to be the ironed virtual value of $\varphi_{i,\Pi}$ from Definition~\ref{def:ironing}.
\end{definition}

\begin{observation}
    Observe that for revenue-aligned payoff $\Pi$, every \emph{regular} distribution (in the standard---revenue---sense) is also $\Pi$-regular,
    \end{observation}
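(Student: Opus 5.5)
The plan is to unfold the definition of $\Pi$-regularity. I take it to mean that the generalized virtual value $\varphi_{i,\Pi}$ from Definition~\ref{def:rev-aligned-virtual-values} is nondecreasing on $[\bottomV_i,\topV_i]$, in direct analogy with the standard notion where $\varphi_i$ itself is nondecreasing. With that reading, the observation reduces to a monotonicity statement about a nonnegative combination of monotone functions.

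First, I write $\varphi_{i,\Pi}(v_i)=\payoffValue(v_i)+\payoffPay\varphi_i(v_i)$. Three facts are then available:
\begin{itemize}
\item $\payoffValue$ is nondecreasing, which holds by the standing assumption on the welfare component of any payoff $\Pi$;
\item $\payoffPay\ge 0$, since $\Pi$ is revenue-aligned;
\item $\varphi_i(v_i)=v_i-\frac{1-F_i(v_i)}{f_i(v_i)}$ is nondecreasing, since $F_i$ is regular in the usual revenue sense.
\end{itemize}
Hence $\payoffPay\varphi_i$ is nondecreasing, because multiplying by a nonnegative constant preserves monotonicity. The sum of two nondecreasing functions is nondecreasing, so $\varphi_{i,\Pi}$ is nondecreasing, which is exactly $\Pi$-regularity.

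If $\Pi$-regularity is instead phrased via ironing, namely $\irn_{i,\Pi}=\varphi_{i,\Pi}$, I would add one step using Definition~\ref{def:ironing}. Since $F_i$ is continuous with a density that is positive a.e., $F_i^{-1}$ is nondecreasing. Therefore $h(q)=\varphi_{i,\Pi}(F_i^{-1}(q))$ is nondecreasing, and its integral $H$ is convex. The convex hull in step~3 is then $\Psi=H$ itself, and $\psi=h$ at every point where $H$ is differentiable. This gives $\irn_{i,\Pi}(v_i)=\varphi_{i,\Pi}(v_i)$ for $F_i$-a.e.\ $v_i$, and everywhere under the right-continuous version chosen in the definition.

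No step is a real obstacle here. The only care needed is this measure-zero and version-selection issue in the ironing variant, together with being explicit that the argument uses $\payoffPay\ge0$. For a consumer-aligned payoff ($\payoffPay<0$) the implication fails, which is why the observation is restricted to revenue-aligned payoffs.
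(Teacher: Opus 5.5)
Your argument is correct and matches the paper's proof: for $y<z$, write $\varphi_{i,\Pi}(z)-\varphi_{i,\Pi}(y)=\payoffValue(z)-\payoffValue(y)+\payoffPay(\varphi_i(z)-\varphi_i(y))$, and use that $\payoffValue$ is nondecreasing, $\payoffPay\ge 0$, and $\varphi_i$ is nondecreasing by regularity. One small correction to your optional ironing paragraph: the construction yields the right-continuous version of $\varphi_{i,\Pi}$, so $\irn_{i,\Pi}=\varphi_{i,\Pi}$ holds at continuity points of $\varphi_{i,\Pi}$ (hence $F_i$-a.e.), not literally everywhere.
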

\begin{proof}
For every $y<z$:
\begin{align*}
    \varphi_{i,\Pi}(z)-\varphi_{i,\Pi}(y)=\payoffValue(z)-\payoffValue(y)+\payoffPay(\varphi_i(z)-\varphi_i(y)),
\end{align*}
and $\payoffValue(\cdot)$ is nondecreasing and $\payoffPay\geq 0$.
\end{proof}

\subsection{Consumer-Aligned Virtual Values}\label{app:cons-virtual-values}

\begin{definition}\label{def:cons-aligned-virtual-values}
Given a consumer-aligned payoff $\payoffValue(v)x-\payoffPay p$,
we define the \emph{consumer-aligned generalized virtual welfare function}:
\begin{equation*}
    \varphi_{i,\Pi}^{m}(v_i)=\payoffValue(v_i)-\frac{\payoffPay}{m}\varphi_i(v_i),
\end{equation*}
where $\varphi_i(v_i)$ is the standard virtual value for revenue.
We also define the \emph{ironed consumer-aligned virtual value} $\irn_{i,\Pi}^{m}(v_i)$
as the ironed virtual value of $\varphi_{i,\Pi}^{m}$ from Definition~\ref{def:ironing}.
\end{definition}

We point out that this 
notion of virtual value differs from the virtual value for revenue-aligned payoff that we used in Section~\ref{sec:overview-rev-aligned}, as it depends on the number of units $m$.

\section{Randomized Relabeling and Partial Truthification}\label{app:randomized-relabeling-truthification}

This section isolates the randomized relabeling argument and the partial truthification construction used in the revenue-aligned proofs. The first subsection proves the measure-preserving rearrangement result that lets us replace an interim allocation curve by its nondecreasing rearrangement without changing the distribution of bids seen by the rest of the auction. The second subsection uses this relabeling to truthify one buyer at a time while preserving feasibility, nonnegative payments, and the incentives of the other buyers.

\subsection{Randomized Relabeling}\label{app:randomized-relabeling}

The purpose of this subsection is to construct a randomized change of labels that preserves the underlying value distribution while sorting the induced allocation rule. We first state the standard disintegration and transfer results needed to implement a conditional random draw measurably. We then apply them to show that, for any interim allocation curve, there is a randomized relabeling whose induced allocation equals the nondecreasing rearrangement in expectation, while the relabeled value remains distributed according to the original value distribution.

For a random variable $Z$, we denote its law (i.e., its distribution) by $\mathcal L(Z)$.

\begin{proposition}[Restatement of Theorem~8.5 in~\cite{kallenberg1997foundations}]\label{prop:disintegration}

Let $S$ and $T$ be standard Borel spaces, and let $\xi$ and $\eta$ be random
elements taking values in $S$ and $T$, respectively. Then there exists a map (specifically a probability kernel)
\[
K:S\times\mathcal B(T)\to[0,1]
\]
(with $\mathcal B(T)$ the Borel $\sigma$-algebra on $T$) such that:

\begin{enumerate}
\item For each $x\in S$, the set function $A\mapsto K(x,A)$ is a probability
measure on $(T,\mathcal B(T))$; and for each $A\in\mathcal B(T)$, the map
$x\mapsto K(x,A)$ is Borel measurable.

\item For every nonnegative bounded Borel function $f:S\times T\to\mathbb R_{\geq0}$,
\[
\mathbb E[f(\xi,\eta)\mid \xi] \;=\; \int_T f(\xi,t)\,K(\xi,dt)
\qquad\text{a.s.}
\]
In particular, taking $f(\xi,\eta)=\mathbf{1}_{A}(\eta)$ for any $A\in\mathcal B(T)$ yields
\[
K(\xi,A)\;=\; \Pr[\eta\in A\mid \xi]
\qquad\text{a.s.}
\]
\end{enumerate}

\end{proposition}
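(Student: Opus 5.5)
The plan is the classical construction of a regular conditional distribution. First transfer the problem to the real line, then build the kernel from countably many conditional expectations, and finally extend the defining identity by monotone class arguments.

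\textbf{Step 1 (reduction).} Since $T$ is standard Borel, there is a Borel isomorphism $h$ between $T$ and a Borel subset $T_0\subseteq[0,1]$. It therefore suffices to construct a kernel $K_0$ for the pair $(\xi,h(\eta))$ with $T_0$ replaced by $\reals$. If $K_0$ charges $\reals\setminus T_0$ only on a null set of $x$, we redefine it to be a fixed point mass $\delta_{t_0}$, $t_0\in T_0$, on that null set. We then set $K(x,A):=K_0(x,h(A))$. The null-set claim holds because $\E[K_0(\xi,\reals\setminus T_0)]=\Pr[h(\eta)\notin T_0]=0$, and $\{x:K_0(x,\reals\setminus T_0)>0\}$ is Borel. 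We only use that $S$ is a measurable space, since conditioning on $\xi$ means conditioning on $\sigma(\xi)$ and Doob--Dynkin lets us write conditional expectations as Borel functions of $\xi$.

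\textbf{Step 2 (countable skeleton).} For each rational $q$, pick a Borel $g_q:S\to[0,1]$ with $g_q(\xi)=\Pr[h(\eta)\le q\mid\xi]$ a.s. By countability, outside a set $N\subseteq S$ with $\Pr[\xi\in N]=0$, the following properties all hold simultaneously:
\begin{itemize}
\item $q\mapsto g_q(x)$ is nondecreasing;
\item $g_q(x)=\lim_{r\downarrow q,\,r\in\mathbb Q}g_r(x)$;
\item $\lim_{q\to-\infty}g_q(x)=0$ and $\lim_{q\to\infty}g_q(x)=1$.
\end{itemize}
These follow from monotonicity and conditional monotone/dominated convergence applied along countable sequences. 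The set $N$ is Borel, being defined by countably many Borel conditions. For $x\notin N$, set $G(x,t):=\inf_{q>t,\,q\in\mathbb Q}g_q(x)$, which is a genuine distribution function, and let $K_0(x,\cdot)$ be the associated Lebesgue--Stieltjes measure. For $x\in N$, let $K_0(x,\cdot)$ be a fixed probability measure. Measurability of $x\mapsto K_0(x,A)$ holds for $A=(-\infty,t]$ by construction. The class of $A$ for which it holds is a $\lambda$-system containing that $\pi$-system, so Dynkin's $\pi$--$\lambda$ theorem gives it for all Borel $A$. The same argument shows $K_0(\xi,A)=\Pr[h(\eta)\in A\mid\xi]$ a.s. for every Borel $A$: the class of $A$ where it holds is again a $\lambda$-system containing the half-lines.

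\textbf{Step 3 (item 2).} For $f(x,t)=\mathbf 1_B(x)\mathbf 1_A(t)$ we have $\E[\mathbf 1_B(\xi)\mathbf 1_A(\eta)\mid\xi]=\mathbf 1_B(\xi)K(\xi,A)$, since $\mathbf 1_B(\xi)$ is $\xi$-measurable and can be pulled out. The set of $C\in\mathcal B(S)\otimes\mathcal B(T)$ for which the identity holds with $f=\mathbf 1_C$ is a $\lambda$-system; here the measurability of $x\mapsto\int\mathbf 1_C(x,t)K(x,dt)$ is part of the induction. Since rectangles form a $\pi$-system, the identity holds for all product-measurable $C$. By linearity it extends to nonnegative simple functions, and by monotone convergence (both for conditional expectations and for the integrals against $K$) to all bounded nonnegative Borel $f$. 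The special case $f=\mathbf 1_A(\eta)$ is then immediate.

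\textbf{Main obstacle.} The main obstacle is Step 2: a single null set must be chosen so that the rational-indexed conditional probabilities assemble into a bona fide distribution function. This is the only place where standard Borel structure (via the transfer to $\reals$ in Step 1) is essential. Everything else is routine $\pi$--$\lambda$ bookkeeping.
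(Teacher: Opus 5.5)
Your proof is correct. The paper gives no proof of its own and simply cites Kallenberg; your argument is essentially the standard textbook proof found there: transfer $T$ into a Borel subset of $\mathbb{R}$, assemble rational-indexed conditional probabilities into a distribution function off a single $\xi$-null Borel set, and extend from rectangles to all product-measurable $f$ by $\pi$--$\lambda$ and monotone convergence. One ordering fix: the redefinition on $\{x : K_0(x,\mathbb{R}\setminus T_0)>0\}$ in Step~1 uses the measurability of $x\mapsto K_0(x,\cdot)$ established in Step~2, so that adjustment should be carried out after Step~2.
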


\begin{proposition}[Restatement of Lemma~4.22 in~\cite{kallenberg1997foundations}]\label{prop:transfer}
Let $S,T$ be standard Borel spaces and let $K:S\times\mathcal B(T)\to[0,1]$ be a probability kernel.
Then there exists a measurable map
\[
\Phi:S\times[0,1]\to T
\]
such that if $R\sim \Uni[0,1]$, then for every $s\in S$,
\[
\mathcal L(\Phi(s,R))=K(s,\cdot).
\]
That is, for each fixed $s$, the random output $\Phi(s,R)$ has distribution $K(s,\cdot)$ on $T$.
\end{proposition}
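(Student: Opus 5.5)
The plan is to reduce to the real line via a Borel isomorphism and then use the \emph{conditional quantile transform}. Every standard Borel space is Borel-isomorphic to a Borel subset of $\reals$, so we fix an injective Borel map $\iota:T\to\reals$ with Borel image $\iota(T)$ and Borel inverse $\iota^{-1}:\iota(T)\to T$. We push the kernel forward by setting $\tilde K(s,B):=K(s,\iota^{-1}(B))$ for $B\in\mathcal B(\reals)$. Then $\tilde K$ is again a probability kernel from $S$ to $\reals$: for fixed $s$ it is a probability measure, and for fixed $B$ the map $s\mapsto\tilde K(s,B)$ is measurable. Moreover $\tilde K(s,\iota(T))=1$ for every $s$. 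It therefore suffices to build $\tilde\Phi:S\times[0,1]\to\reals$ with $\mathcal L(\tilde\Phi(s,R))=\tilde K(s,\cdot)$, and then compose with $\iota^{-1}$.

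For the real-valued case, define the conditional distribution function $G(s,x):=\tilde K(s,(-\infty,x])$. For each fixed $x$ it is measurable in $s$, and for each fixed $s$ it is nondecreasing and right-continuous in $x$, tending to $0$ and $1$ at $\mp\infty$. Set $\tilde\Phi(s,u):=\inf\{x\in\reals: G(s,x)\ge u\}$ for $u\in(0,1)$, and put $\tilde\Phi(s,u):=t^\ast$ for $u\in\{0,1\}$, with $t^\ast\in\iota(T)$ fixed. By right-continuity we get the standard equivalence $\tilde\Phi(s,u)\le x \iff u\le G(s,x)$ for $u\in(0,1)$. This gives two things. First, for fixed $s$ we have $\Pr[\tilde\Phi(s,R)\le x]=G(s,x)$ for all $x$, so $\mathcal L(\tilde\Phi(s,R))=\tilde K(s,\cdot)$, since distribution functions determine Borel measures on $\reals$. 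Second, $\{(s,u):\tilde\Phi(s,u)\le x\}=\{(s,u):u\le G(s,x)\}\cap(S\times(0,1))$, up to the measurable boundary pieces. This is a jointly measurable set, because $(s,u)\mapsto G(s,x)-u$ is measurable in the product $\sigma$-algebra. Hence $\tilde\Phi$ is $\mathcal B(S)\otimes\mathcal B([0,1])$-measurable.

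The step I expect to need the most care is the return to $T$, since $\tilde\Phi$ may take values outside $\iota(T)$ on a set of $u$'s. Because $\tilde K(s,\iota(T))=1$, for each fixed $s$ this happens only for a Lebesgue-null set of $u$. We therefore define $\Phi(s,u):=\iota^{-1}(\tilde\Phi(s,u))$ if $\tilde\Phi(s,u)\in\iota(T)$, and $\Phi(s,u):=\iota^{-1}(t^\ast)$ otherwise. The set $\{\tilde\Phi\in\iota(T)\}$ is measurable because $\iota(T)$ is Borel, so $\Phi$ is measurable. For each $s$, $\Phi(s,R)$ agrees almost surely with $\iota^{-1}(\tilde\Phi(s,R))$. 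The law of the latter is the pushforward of $\tilde K(s,\cdot)$ under $\iota^{-1}$, which is exactly $K(s,\cdot)$. This yields $\mathcal L(\Phi(s,R))=K(s,\cdot)$ for every $s\in S$, as claimed.
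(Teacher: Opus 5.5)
Your argument is correct and is essentially the standard proof of the cited Kallenberg lemma; the paper gives no proof of its own and simply invokes that lemma. Your steps match it: reduce $T$ to a Borel subset of $\mathbb{R}$ via a Borel isomorphism, apply the conditional quantile transform, get joint measurability from $\{\tilde\Phi\le x\}=\{u\le G(s,x)\}$, and redefine $\Phi$ on the $u$-null set where $\tilde\Phi$ leaves $\iota(T)$ (your proof also never uses that $S$ is standard Borel, consistent with Kallenberg's formulation for arbitrary $S$).
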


\begin{lemma}[Randomized relabeling]
\label{lem:random-relabeling}
Let $g:[0,1]\to[0,m]$ be Borel measurable.

Let $g^*$ denote the decreasing rearrangement of $g$
 (so $g$ and $g^*$ are equimeasurable).
Then there exists a measurable map $L:[0,1]\times[0,1]\to[0,1]$ such that for independent
$U,R\sim\Uni[0,1]$:
\begin{enumerate}[label=(\roman*),ref=(\roman*)]
\item\label{item:lem-random-relabeling-i} For $\lambda$-a.e.\ $u\in[0,1]$: \quad $g(L(u,R))=g^*(u)\quad\text{a.s.\ in }R$.
\item\label{item:lem-random-relabeling-ii} $L(U,R)\sim\Uni[0,1]$.
\end{enumerate}
\end{lemma}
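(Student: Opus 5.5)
We would build $L$ by coupling $U$ with a uniform variable $V$ so that $g(V)=g^*(U)$ almost surely, and then realize the conditional law of $V$ given $U$ through the transfer map of Proposition~\ref{prop:transfer}. The construction rests on the equimeasurability of $g$ and $g^*$ with respect to Lebesgue measure $\lambda$: the pushforwards $g_\#\lambda$ and $g^*_\#\lambda$ coincide as laws on $[0,m]$. Call this common law $\nu$.

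\textbf{Step 1 (the coupling).} Let $V\sim\Uni[0,1]$. By Proposition~\ref{prop:disintegration}, applied with $\xi=g(V)\in[0,m]$ and $\eta=V\in[0,1]$ (both standard Borel), there is a kernel $K_1$ on $[0,m]\times\mathcal B([0,1])$ with $K_1(y,\cdot)=\mathcal L(V\mid g(V)=y)$ for $\nu$-a.e.\ $y$. Since $g(V)=y$ almost surely under this conditional law, we have for $\nu$-a.e.\ $y$ that
\[
K_1\big(y,\{w: g(w)= y\}\big)=1 .
\]
The set $\{(y,w): g(w)=y\}$ is Borel because $g$ is Borel measurable. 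Hence the map $y\mapsto K_1(y,\{w:g(w)=y\})$ is measurable, and the $\nu$-null exceptional set is well defined.

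\textbf{Step 2 (composition).} Define the kernel $K(u,\cdot):=K_1(g^*(u),\cdot)$ on $[0,1]\times\mathcal B([0,1])$. It is measurable in $u$ because $g^*$ is measurable; being monotone, it is Borel. Proposition~\ref{prop:transfer} then supplies a measurable $\Phi:[0,1]\times[0,1]\to[0,1]$ with $\mathcal L(\Phi(u,R))=K(u,\cdot)$ for every $u$. Set $L:=\Phi$.

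For item~\ref{item:lem-random-relabeling-i}, let $N$ be the $\nu$-null exceptional set from Step~1. Then $\{u: g^*(u)\in N\}$ has $\lambda$-measure $\nu(N)=0$, since $g^*_\#\lambda=\nu$. For every other $u$, the point $L(u,R)$ lies in $\{w:g(w)=g^*(u)\}$ almost surely in $R$, which is exactly item~\ref{item:lem-random-relabeling-i}.

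For item~\ref{item:lem-random-relabeling-ii}, take any Borel $A\subseteq[0,1]$ and compute
\[
\Pr[L(U,R)\in A]=\int_0^1 K_1(g^*(u),A)\,du=\int K_1(y,A)\,\nu(dy)=\int K_1(y,A)\,(g_\#\lambda)(dy).
\]
The last integral equals $\E\big[\Pr[V\in A\mid g(V)]\big]=\lambda(A)$ by the disintegration property of Proposition~\ref{prop:disintegration}. So $L(U,R)\sim\Uni[0,1]$.

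\textbf{Main obstacle.} The delicate point is the measure-theoretic bookkeeping in Step~1 and item~\ref{item:lem-random-relabeling-i}. We must ensure that the conditional kernel concentrates on the fiber $g^{-1}(y)$ for $\nu$-a.e.\ $y$, and then transport this null set from $\nu$ back to $\lambda$ through $g^*$. Both steps use equimeasurability, as the identity $g^*_\#\lambda=g_\#\lambda$, at the crucial moments; this is why the coupling is built through $g$'s pushforward rather than directly.

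To verify the fiber concentration, we would apply Proposition~\ref{prop:disintegration} with $f(y,w)=\mathbf 1[g(w)\neq y]$. Its expectation vanishes, so $\int K_1(y,\{w:g(w)\neq y\})\,\nu(dy)=0$, which gives the concentration $\nu$-almost everywhere.
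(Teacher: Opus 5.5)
Your proposal is correct and takes essentially the same route as the paper. Both disintegrate a uniform variable given its $g$-value, show the kernel concentrates on the fibers of $g$, plug in $g^*(u)$, and use $g^*_\#\lambda=g_\#\lambda$ for both items; the only differences are cosmetic (you apply the transfer lemma to the composed kernel $K_1(g^*(u),\cdot)$ rather than forming $\Phi(g^*(u),r)$ afterwards, and you verify fiber concentration with an indicator test function rather than via $\int_0^1|g(\Phi(y,r))-y|\,dr=0$).
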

\begin{proof}

Let $U_0\sim\Uni[0,1]$.
Set $Y:=g(U_0)\in[0,m]$. 

Since $[0,1]$ and $[0,m]$ are standard Borel spaces, we can apply
 Proposition~\ref{prop:disintegration} with $S=[0,m]$ and $T=[0,1]$ and the random
  elements $\eta=U_0$ and $\xi=Y$, and get that
 there exists a probability kernel
 $K:[0,m]\times\mathcal B([0,1])\to[0,1]$ such that for every $y\in[0,m]$,
  $K(y,\cdot)$ is a probability measure on $[0,1]$, 
  and for every Borel set $A\subseteq [0,1]$,
\begin{equation}\label{eq:disintegration-riemann}
\Pr[U_0\in A\mid Y]=K(Y,A)\quad\text{a.s.}
\end{equation}

Now we can apply Proposition~\ref{prop:transfer} with $S=[0,m]$, $T=[0,1]$ and the kernel $K$,
giving us a measurable map $\Phi:[0,m]\times[0,1]\to[0,1]$ such that for each $y\in[0,m]$,
 if $R\sim\Uni[0,1]$ is independent of $U_0,Y$, then $\mathcal{L}(\Phi(y,R))= K(y,\cdot)$. 

Define the bounded nonnegative Borel function 
$f:[0,m]\times[0,1]\to\mathbb R_{\geq0}$ by $f(y,t)=|g(t)-y|$. 
Since $Y=g(U_0)$, we have $f(Y,U_0)=0$ a.s.
Hence, a.s. in $Y$,
\begin{align*}
    0=\E[f(Y,U_0)\mid Y]=\int_0^1 f(Y,t)\,K(Y,dt)=\int_0^1|g(t)-Y|K(Y,dt)=\
    \int_0^1|g(\Phi(Y,r))-Y|\,dr,
\end{align*}
where the last transition is due to $\mathcal{L}(\Phi(Y,R))= K(Y,\cdot)$.

Define the map                                                                                                                                 $\psi:[0,m]\to\mathbb R_{\ge 0}$ by
\[
\psi(y)\;:=\;\int_0^1 |g(\Phi(y,r))-y|\,dr.
\]
Since $(y,r)\mapsto |g(\Phi(y,r))-y|$ is bounded, Borel measurable and nonnegative, $\psi$ is Borel measurable.
From the previous equation we have $\psi(Y)=0$ a.s.  Let
\[
B\;:=\;\{y\in[0,m]:\ \psi(y)=0\}.
\]
Then $\Pr[Y\in B]=1$.  Moreover, for any fixed $y\in B$,
\[
0=\psi(y)=\int_0^1 |g(\Phi(y,r))-y|\,dr
\]
and since the integrand is nonnegative this implies $g(\Phi(y,r))=y$ for Lebesgue-a.e.\ $r$.
Equivalently, if $R\sim\Uni[0,1]$, then for every $y\in B$,
\begin{equation}\label{eq:fiber-property}
g(\Phi(y,R))=y \qquad \text{a.s.\ in }R.
\end{equation}

Now define
\[
L:[0,1]\times[0,1]\to[0,1],\qquad L(t,r):=\Phi(g^*(t),r).
\]
Since $g^*$ is decreasing (hence Borel measurable) and $\Phi$ is measurable, $L$ is measurable.

\noindent\emph{Proof of Item~\ref{item:lem-random-relabeling-i}.}
Let $U\sim\Uni[0,1]$ be independent of $R$.
Because $g$ and $g^*$ are equimeasurable,
\begin{align*}
    \mathcal{L}(g^*(U))=\mathcal{L}(g(U_0))=\mathcal{L}(Y).
\end{align*}
Hence
\[
\Pr[g^*(U)\in B]=\Pr[Y\in B]=1,
\]
which implies $\lambda(\{u\in[0,1]:\ g^*(u)\in B\})=1$, where $\lambda$ is Lebesgue measure.
Fix any $u$ with $g^*(u)\in B$. Then by \eqref{eq:fiber-property} with $y=g^*(u)$,
\[
g(L(u,R)) = g(\Phi(g^*(u),R)) = g^*(u)\qquad \text{a.s.\ in }R,
\]
where the second transition is due to $g^*(u)\in B$ and the definition of $B$.
This establishes Item~\ref{item:lem-random-relabeling-i}.

\noindent\emph{Proof of Item~\ref{item:lem-random-relabeling-ii}.}
Let $A\subseteq[0,1]$ be Borel. Using that $R$ is independent of $U$ and that
$\mathcal L(\Phi(y,R))=K(y,\cdot)$ for each fixed $y$, we have
\begin{align*}
\Pr[L(U,R)\in A]
&=\E\big[\Pr[\Phi(g^*(U),R)\in A\mid U]\big]
=\E\big[K(g^*(U),A)\big].
\end{align*}
Since $ \mathcal{L}(g^*(U))=\mathcal{L}(Y)$, this equals $\E[K(Y,A)]$.
By \eqref{eq:disintegration-riemann} and the law of total expectation,
\[
\E[K(Y,A)] \;=\; \E[\Pr[U_0\in A\mid Y]] \;=\; \Pr[U_0\in A] \;=\; \lambda(A).
\]
Therefore $\Pr[L(U,R)\in A]=\lambda(A)$ for all Borel $A$, i.e.\ $L(U,R)\sim\Uni[0,1]$.
This proves Item~\ref{item:lem-random-relabeling-ii} and completes the proof.
\end{proof}

\begin{corollary}[Random relabeling]\label{cor:random-relabeling}
Let $\A$ be an auction and let $\vbid$ be a BNE-\aee of $\A$. Fix buyer $i$ and define
\[
y_i(v_i):=
\expect{x}_i(\bid_i(v_i);\vbid).
\]
Let \(x_i^\uparrow:[\bottomV_i,\topV_i]\to[0,m]\) denote the nondecreasing rearrangement of
\(y_i\).

Then there exists a measurable map
\[
\sigma_i:[\bottomV_i,\topV_i]\times[0,1]\to[\bottomV_i,\topV_i]
\]
such that for independent $v_i\sim F_i$ and $r_1\sim\Uni[0,1]$,
\begin{enumerate}
    \item\label{item:cor-random-relabeling-measure} 
    $\sigma_i(v_i,r_1)\sim F_i$ 
    (i.e.\ $\sigma_i$ is $F_i$-measure-preserving), and
    \item\label{item:nnodecreasing-equality} 
    there exists a set $ T_i\subseteq[\bottomV_i,\topV_i]$ with
     $\Pr_{v_i\sim F_i}[v_i\in  T_i]=1$,
    such that for every $v_i\in  T_i$,
    \[
    \E_{r_1}\!\left[y_i\bigl(\sigma_i(v_i,r_1)\bigr)\right]=x_i^\uparrow(v_i).
    \]
\end{enumerate}

\end{corollary}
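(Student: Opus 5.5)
The plan is to reduce Corollary~\ref{cor:random-relabeling} to Lemma~\ref{lem:random-relabeling} by passing to the quantile space of $F_i$. Since $F_i$ is continuous and atomless with density positive a.e., the map $q\mapsto F_i^{-1}(q)$ (generalized inverse) pushes $\Uni[0,1]$ forward to $F_i$, and $F_i(v_i)\sim\Uni[0,1]$ when $v_i\sim F_i$, with $F_i^{-1}(F_i(v_i))=v_i$ for $F_i$-a.e.\ $v_i$. I would define $g:[0,1]\to[0,m]$ by $g(u):=y_i(F_i^{-1}(1-u))$. This function is Borel measurable, since $y_i$ is Riemann integrable (hence Lebesgue measurable, and after modifying on a null set, Borel) and $F_i^{-1}$ is monotone. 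Reversing the quantile direction ensures that the \emph{decreasing} rearrangement $g^*$ supplied by Lemma~\ref{lem:random-relabeling} corresponds to a \emph{nondecreasing} function of $v_i$.

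Next, I would check that $v_i\mapsto g^*(1-F_i(v_i))$ is a nondecreasing rearrangement of $y_i$ with respect to $F_i$. It is nondecreasing because $g^*$ is nonincreasing and $F_i$ is nondecreasing. It is equimeasurable with $y_i$ because $1-F_i(v_i)\sim\Uni[0,1]$, so its law equals $\mathcal L(g^*(U))=\mathcal L(g(U))=\mathcal L(y_i(F_i^{-1}(1-U)))=\mathcal L(y_i(v_i))$. By the a.e.-uniqueness in Definition~\ref{def:nondecreasing-rearrangement}, it therefore agrees with $x_i^\uparrow$ outside an $F_i$-null set $N_1$.

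Now take the map $L$ from Lemma~\ref{lem:random-relabeling} and define $\sigma_i(v_i,r):=F_i^{-1}\bigl(1-L(1-F_i(v_i),r)\bigr)$, which is measurable as a composition of measurable maps.
\begin{itemize}
\item \emph{Item~\ref{item:cor-random-relabeling-measure}.} With $U:=1-F_i(v_i)\sim\Uni[0,1]$ independent of $r$, item~\ref{item:lem-random-relabeling-ii} of the lemma gives $L(U,r)\sim\Uni[0,1]$, so $1-L(U,r)$ is uniform and $\sigma_i(v_i,r)\sim F_i$.
\item \emph{Item~\ref{item:nnodecreasing-equality}.} Observe that $y_i(\sigma_i(v_i,r))=g(L(1-F_i(v_i),r))$. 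Item~\ref{item:lem-random-relabeling-i} says this equals $g^*(1-F_i(v_i))$ for a.e.\ $r$, whenever $1-F_i(v_i)$ lies outside a Lebesgue-null set $N_2$. Since $1-F_i(v_i)$ is uniform, the set of $v_i$ with $1-F_i(v_i)\in N_2$ is $F_i$-null. Taking expectation over $r$ gives $\E_{r}[y_i(\sigma_i(v_i,r))]=g^*(1-F_i(v_i))=x_i^\uparrow(v_i)$. So we set $T_i$ to be the complement of these two null sets together with $N_1$.
\end{itemize}

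I expect the main (if modest) obstacle to be the null-set bookkeeping around the generalized inverse. The issue is that $F_i$ may be flat on gaps of the support, where $F_i^{-1}$ is not a true inverse. This must be checked to affect only $F_i$-null sets, and the measurability of $g$ must be handled after replacing $y_i$ by a Borel version that agrees with it a.e. (which changes nothing in either item, since $\sigma_i(v_i,r)\sim F_i$). Everything else follows directly from Lemma~\ref{lem:random-relabeling}.
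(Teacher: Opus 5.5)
Your proposal is correct and follows the paper's proof step for step. It uses the same quantile-reversed function $g(u)=\tilde y_i(F_i^{-1}(1-u))$ built from a Borel version $\tilde y_i$ of $y_i$, the same map $\sigma_i(v_i,r)=F_i^{-1}(1-L(1-F_i(v_i),r))$, and the same identification of $g^*(1-F_i(\cdot))$ with $x_i^\uparrow$ via equimeasurability and a.e.\ uniqueness of the nondecreasing rearrangement.

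Two small remarks. The Borel-version swap follows from the \emph{joint} law $\sigma_i(v_i,r)\sim F_i$ plus Fubini, which gives $\E_r[\tilde y_i(\sigma_i(v_i,r))]=\E_r[y_i(\sigma_i(v_i,r))]$ for $F_i$-a.e.\ $v_i$, not for each fixed $v_i$. The support gaps you worry about cannot occur, since $f_i>0$ a.e.\ on $[\bottomV_i,\topV_i]$ makes $F_i$ strictly increasing there.
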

\begin{proof}
By the integrability assumption in Section~\ref{sec:model}, \(y_i\) is Riemann integrable and therefore Lebesgue measurable. Hence there exists a
Borel measurable function $\tilde y_i:[\bottomV_i,\topV_i]\to[0,m]$
such that \(\tilde y_i(v_i)=y_i(v_i)\) \(F_i\)-almost everywhere. Define
\[
g:[0,1]\to[0,m],
\qquad
g(u):=\tilde y_i\!\left(F_i^{-1}(1-u)\right),
\]
and let \(g^*\) be the decreasing rearrangement of \(g\). Since \(F_i^{-1}\) is Borel measurable,
the function \(g\) is Borel measurable, so Lemma~\ref{lem:random-relabeling} applies.

Let \(L:[0,1]\times[0,1]\to[0,1]\) be the measurable map given by
Lemma~\ref{lem:random-relabeling}, and define
\[
\sigma_i(v_i,r_1):=
F_i^{-1}\!\left(1-L(1-F_i(v_i),r_1)\right).
\]
Since \(F_i\) is continuous and atomless, if \(v_i\sim F_i\) then
\(1-F_i(v_i)\sim \Uni[0,1]\). By Item~\ref{item:lem-random-relabeling-ii} of Lemma~\ref{lem:random-relabeling},
\(L(1-F_i(v_i),r_1)\sim \Uni[0,1]\). Therefore
\[
\sigma_i(v_i,r_1)=F_i^{-1}\!\left(1-L(1-F_i(v_i),r_1)\right)\sim F_i,
\]
which proves Item~\ref{item:cor-random-relabeling-measure}.

It remains to prove Item~\ref{item:nnodecreasing-equality}. By Item~\ref{item:lem-random-relabeling-i} of Lemma~\ref{lem:random-relabeling}, for
Lebesgue-a.e. \(u\in[0,1]\),
\[
g(L(u,r)) = g^*(u)
\qquad\text{a.s. in }r.
\]
Substituting \(u=1-F_i(v_i)\), we obtain that for \(F_i\)-almost every \(v_i\),
\[
g\!\left(L(1-F_i(v_i),r)\right)=g^*(1-F_i(v_i))
\qquad\text{a.s. in }r.
\]
By the definitions of \(g\) and \(\sigma_i\),
\[
g\!\left(L(1-F_i(v_i),r)\right)
=
\tilde y_i\!\left(
F_i^{-1}\!\left(1-L(1-F_i(v_i),r)\right)
\right)
=
\tilde y_i(\sigma_i(v_i,r)).
\]
Hence for \(F_i\)-almost every \(v_i\),
\[
\tilde y_i(\sigma_i(v_i,r)) = g^*(1-F_i(v_i))
\qquad\text{a.s. in }r,
\]
and therefore
\[
\E_r\!\left[\tilde y_i(\sigma_i(v_i,r))\right]
=
g^*(1-F_i(v_i))
\qquad\text{for }F_i\text{-a.e. }v_i.
\]

Now fix \(v_i\). Since \(\sigma_i(v_i,R)\sim F_i\) by
Item~\ref{item:cor-random-relabeling-measure}, and since \(\tilde y_i=y_i\)
\(F_i\)-almost everywhere, we have
\[
\E_r\!\left[\tilde y_i(\sigma_i(v_i,r))\right]
=
\E_r\!\left[y_i(\sigma_i(v_i,r))\right].
\]
Combining the last two equations gives
\[
 \E_r\!\left[y_i(\sigma_i(v_i,r))\right]
=
g^*(1-F_i(v_i))
\qquad\text{for }F_i\text{-almost every }v_i.
\]

We observe that since $F_i$ is nondecreasing and $g^*$ is decreasing, the map $v_i\mapsto g^*(1-F_i(v_i))$ is nondecreasing.
In addition,
since $g$ and $g^*$ are equimeasurable, 
\begin{align*}
    \mathcal{L}(g^*(1-F_i(v_i)))&=\mathcal{L}(g(1-F_i(v_i)))=\mathcal{L}(\tilde y_i(v_i))=\mathcal{L}(y_i(v_i)).
\end{align*}
Therefore, \(v_i\mapsto g^*(1-F_i(v_i))\) is a  nondecreasing rearrangement of \(y_i\).
Since the nondecreasing rearrangement is unique up to \(F_i\)-almost-everywhere equality,
\[
g^*(1-F_i(v_i))=x_i^\uparrow(v_i)
\qquad\text{for }F_i\text{-a.e. }v_i.
\]
Finally,
\begin{align*}
    \E_r\!\left[y_i(\sigma_i(v_i,r))\right]=x_i^\uparrow(v_i)
\qquad\text{for }F_i\text{-a.e. }v_i,
\end{align*}
which proves Item~\ref{item:nnodecreasing-equality}.
\end{proof}

\subsection{Partial Truthification}\label{app:partial-truthification}

This subsection turns the relabeling map from Corollary~\ref{cor:random-relabeling} into a local transformation of the auction. For a fixed buyer \(i\), the construction replaces the outcome induced by buyer \(i\)'s bid with the outcome induced by a randomized relabeled bid, and charges a Myerson-style per-unit payment for the rearranged interim allocation. The lemmas below show that this makes buyer \(i\) truthful and monotone almost everywhere, while the other buyers see the same distribution over outcomes and therefore keep the same incentives.

\begin{definition}[Partial Truthification]\label{def:truthification}
    Let $\A$ be an auction with UM-constrained buyers and BNE-$\aee^{(i)}$ $\vbid$,
    let $\sigma_i$ be the measurable map from 
    Corollary~\ref{cor:random-relabeling}
    for buyer $i$ and $T_i$ be the set from the same corollary.

    Define the \emph{truthification of $\A$ for buyer $i$} to be the auction $\A'$ 
    such that for every bid profile $\vec{b}$ and randomness $r=(r_1,r_2)\in[0,1]^2$, 
    the outcome of each bidder $j\neq i$ is given by
    \begin{align*}
        \A'_j(\vec{b},r):=\begin{cases}
        \A_j\big((\bid_i(\sigma_i(b_i,r_1)),\vec{b}_{-i}),r_2\big), & \text{if }b_i\in  T_i,\\
            \text{$0$ units and $0$ payment}, & \text{if }b_i\notin  T_i.
        \end{cases}
    \end{align*}

    For buyer $i$,
    the outcome is a random one-shot allocation and payment.
    Specifically, for every bid profile $\vec{b}$ and randomness $r=(r_1,r_2)\in[0,1]^2$, 
    the one-shot allocation of buyer $i$ is given by
    \begin{align*}
        &x'_i(\vec{b},r):=\begin{cases}
        x_i\big((\bid_i(\sigma_i(b_i,r_1)),\vec{b}_{-i}),r_2\big), & \text{if }b_i\in  T_i,\\
        0, & \text{if }b_i\notin  T_i.
        \end{cases}
    \end{align*}
    The one-shot payment for buyer $i$ is proportional to the realized allocation, 
    with a per-unit price that depends only on $b_i$. Using the notation
    \begin{align*}
        \hat{x}_i(b_i)=\expect{x}_i'(b_i;\vbid)=\E_{r,\bv_{-i}}\!\left[x'_i\big((b_i,\vbid_{-i}(\bv_{-i})),r\big)\right],
    \end{align*}
    define
    \begin{align}\label{eq:lem-proof-relabeling-gamma}
        \gamma_i(b_i):=\begin{cases}
                    b_i-\dfrac{\int_{\bottomV_i}^{b_i}\hat{x}_i(z)dz}{\hat{x}_i(b_i)} & \text{if }\hat{x}_i(b_i)>0,\\[6pt]
                    0 & \text{if }\hat{x}_i(b_i)=0.
                \end{cases}
    \end{align}
    and set
    \begin{align}\label{eq:lem-proof-relabeling-pay}
        p_i'(\vec{b},r):=\begin{cases}
            x'_i(\vec{b},r)\cdot\gamma_i(b_i), & \text{if }b_i\in  T_i,\\
            0, & \text{if }b_i\notin  T_i.
        \end{cases}
    \end{align}
    
\end{definition}

\begin{lemma}\label{lem:rearranging-preserves-bic}
    Let $\A$ be an auction with BNE-\aee $\vbid$, and let
    $\A'$ be the truthification of $\A$ for buyer $i$ with relabeling $\sigma_i$ from Definition~\ref{def:truthification}.
    Define
    \begin{align*}
        \bid'_j(v_j):= \begin{cases}
        v_j & j=i\\
        \bid_j(v_j) & j\neq i
        \end{cases}
    \end{align*}

    Then for every $j\neq i$, bid $b_j$, and set of outcomes $J$ for bidder $j$,
    \begin{align*}
        \Pr_{r,\bv_{-j}}[\A'_j((b_j,\vbid'_{-j}(\bv_{-j})),r)\in J]=\Pr_{r,\bv_{-j}}[\A_j((b_j,\vbid_{-j}(\bv_{-j})),r)\in J].
    \end{align*}
    
    In particular, for every valuation $v_j$,
    \begin{align*}
        \E_{\bv_{-j}\sim \vec{F}_{-j}}[\expect{\util}'_j(b_j,\vbid'_{-j}(\bv_{-j});v_j)]
        =\E_{\bv_{-j}\sim \vec{F}_{-j}}[\expect{\util}_j(b_j,\vbid_{-j}(\bv_{-j});v_j)].
    \end{align*}
    That is, the expected utility of bidder $j$ for any bid $b_j$ 
    and valuation $v_j$ is the same in $\A'$ and $\A$ (assuming the other bidders bid by $\vbid'$ or $\vbid$ respectively).

\end{lemma}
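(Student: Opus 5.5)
The plan is to reduce the statement to a change of variables in buyer $i$'s coordinate. Fix $j\neq i$, a bid $b_j$, and a measurable set $J$ of outcomes for bidder $j$. In $\A'$, buyer $i$ bids $\bid'_i(v_i)=v_i$, and bidder $j$'s outcome is generated as follows:
\begin{itemize}
\item if $v_i\in T_i$, it is $\A_j\big((\bid_i(\sigma_i(v_i,r_1)),b_j,\vbid_{-ij}(\bv_{-ij})),r_2\big)$;
\item if $v_i\notin T_i$, it is the null outcome.
\end{itemize}
By Corollary~\ref{cor:random-relabeling}, $\Pr_{v_i\sim F_i}[v_i\in T_i]=1$, so the second case is an $F_i$-null event. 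It does not affect the probability, and I will drop it first.

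Next, condition on $\bv_{-ij}$ and $r_2$, which are independent of $(v_i,r_1)$. Define the measurable function
\[
h(w):=\mathbf{1}\big[\A_j\big((\bid_i(w),b_j,\vbid_{-ij}(\bv_{-ij})),r_2\big)\in J\big].
\]
The probability for $\A'$ then equals $\E_{\bv_{-ij},r_2}\,\E_{v_i,r_1}\big[h(\sigma_i(v_i,r_1))\big]$. By item~1 of Corollary~\ref{cor:random-relabeling}, $\sigma_i(v_i,r_1)\sim F_i$ when $v_i\sim F_i$ and $r_1\sim\Uni[0,1]$ are independent, so the inner expectation equals $\E_{w\sim F_i}[h(w)]$. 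Undoing the conditioning (Fubini/Tonelli, since $h\ge 0$ is bounded) gives exactly $\Pr_{r,\bv_{-j}}\big[\A_j((b_j,\vbid_{-j}(\bv_{-j})),r)\in J\big]$. The randomness $r$ in $\A$ is a single uniform, and it is played here by $r_2$.

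For the utility statement, note that bidder $j$'s constrained utility $u_j(S_j,b_j;v_j)$ depends only on the \emph{distribution} of $j$'s outcome and on $(b_j,v_j)$. Two sub-steps are needed.
\begin{itemize}
\item First I show the equality of conditional laws given $\bv_{-j}$ after integrating over $v_i$. The relevant object is the interim random outcome faced by $j$, i.e.\ the mixture over $\bv_{-j}$ and $r$, so I will state the utility identity at that level. This level is what enters the BNE and IR arguments in the proof of Theorem~\ref{thm:rev-aligned-bid-constraint-mono-wlog}.
\item Second, equality of laws for all $J$ gives equal acceptability under $\pred_j$ and equal expected quasilinear utility, since $x_j,p_j$ are bounded and measurable functions of the outcome.
\end{itemize}

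The main obstacle I expect is that $u_j$ is $-\infty$ on unacceptable outcomes. Where the statement evaluates $\expect{\util}_j$ pointwise in $\bv_{-j}$, the $\A'$ side mixes over $r_1$ for a fixed $v_i$, whereas the $\A$ side uses the fixed bid $\bid_i(v_i)$. The laws agree only after averaging over $v_i$. I would handle this by interpreting acceptability with respect to the interim outcome distribution, consistent with how BNE is defined via $\expect{\util}_j(b_j;v_j,\vbid)$. If a pointwise-in-$\bv_{-j}$ reading is required, I would instead argue via the a.s.\ acceptability event: $\pred_j$ for constraints like ex-post IR depends only on the null/non-null sets of outcomes, and these are preserved by the law identity above.
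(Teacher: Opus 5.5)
Your proposal is correct and follows essentially the same route as the paper. Both arguments discard the $F_i$-null event $v_i\notin T_i$, use $\sigma_i(v_i,r_1)\sim F_i$ to identify the law of bidder $j$'s outcome in $\A'$ with its law in $\A$, and conclude that acceptability under $\pred_j$ and expected quasilinear utility coincide because both depend only on that law. The subtlety you raise is handled in the paper exactly by your interim reading: its proof defines $O_j(b_j)$ and $O'_j(b_j)$ with $\bv_{-j}$ drawn at random and applies $\pred_j$ to that interim law. So your pointwise-in-$\bv_{-j}$ fallback is not needed, and in any case it would only cover null-set predicates such as ex-post IR.
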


\begin{proof}
Fix a bidder $j\neq i$, valuation $v_j$ and bid $b_j$.
Now let $\bv_{-j}\sim F_{-j}$ and $R=(R_1,R_2)\sim\Uni[0,1]^2$ be drawn independently.
Define, for every bid $b_j$, the individual outcomes of bidder $j$ in $\A'$ and $\A$ as follows:
\begin{align*}
    O_j(b_j)\coloneqq \A_j((b_j,\vbid_{-j}(\bv_{-j})),R_2)
    \quad\text{and}\quad
    O'_j(b_j)\coloneqq \A'_j((b_j,\vbid'_{-j}(\bv_{-j})),R).
\end{align*}

By definition of $\A'$ and since $\Pr[T_i]=1$, we have
\begin{align*}
    \mathcal{L}(O_j'(b_j))
    &=\mathcal{L}(\A'_j((b_j,\vbid'_{-j}(\bv_{-j})),R))
    =\mathcal{L}(\A'_j((b_j,\vbid'_{-j}(\bv_{-j})),R)\mid\,v_i\in T_i)
    \\
    &=\mathcal{L}(\A_j((\bid_i(\sigma_i(v_i,R_1)),b_j,\vbid'_{-(i,j)}(\bv_{-(i,j)})),R_2)\mid\,v_i\in T_i)
    \\
    &=\mathcal{L}(\A_j((\bid_i(\sigma_i(v_i,R_1)),b_j,\vbid'_{-(i,j)}(\bv_{-(i,j)})),R_2)),
\end{align*}
where we use the notation $(z_i,z_j,\vec{u}_{-(i,j)})$ to 
denote the vector $\vec{u}$ with the $i$-th and $j$-th coordinates removed, 
and then adding $z_i$ and $z_j$ back in the $i$-th and $j$-th coordinates respectively.

Since $\sigma_i$ is $F_i$-measure-preserving, for $v_i\sim F_i$ independently of $R_1$, we have that
$\mathcal{L}(\sigma_i(v_i,R_1))=\mathcal{L}(v_i)$ and thus
$\mathcal{L}(\bid_i(\sigma_i(v_i,R_1)))=\mathcal{L}(\bid_i(v_i))$.
Together with the fact that $\bid'_{\ell}(v_{\ell})=\bid_{\ell}(v_{\ell})$ for all $\ell\neq i$, we have
\begin{align*}
    &\mathcal{L}(\A_j((\bid_i(\sigma(v_i,R_1)),b_j,\vbid'_{-(i,j)}(\bv_{-(i,j)})),R_2))
    =\mathcal{L}(\A_j((\bid_i(v_i),b_j,\vbid'_{-(i,j)}(\bv_{-(i,j)})),R_2))\\
    &\quad\quad=\mathcal{L}(\A_j((b_j,\vbid_{-j}(\bv_{-j})),R_2))= \mathcal{L}(O_j(b_j)).
\end{align*}
Combining the above equalities, we get
\begin{align*}
    \mathcal{L}(O_j'(b_j))=\mathcal{L}(O_j(b_j)).
\end{align*}

Since the acceptability predicate $\pred_j$ depends only on the
 distribution of the individual outcome of bidder $j$ and the bid, it follows that a bid $b_j$
 and its random outcome is acceptable in $\A'$,
  given valuation $v_j$, if and only if they are acceptable in $\A$.
  Formally, $\pred_j(O_j'(b_j),b_j;v_j)=\pred_j(O_j(b_j),b_j;v_j)$. 

  Since \(\mathcal L(O'_j(b_j))=\mathcal L(O_j(b_j))\), and the acceptability predicate \(G_j\)
depends only on bidder \(j\)'s bid \(b_j\), valuation \(v_j\), and the distribution of bidder \(j\)'s
individual outcome, we obtain
\[
G_j(O'_j(b_j),b_j;v_j)=G_j(O_j(b_j),b_j;v_j).
\]
Hence the constrained utility of bidder \(j\) when bidding \(b_j\)
is the same in \(A'\) and in \(A\): if this common
predicate is false, both sides equal \(-\infty\); if it is true, then both sides equal the same
unconstrained expected utility, since \(O'_j(b_j)\) and \(O_j(b_j)\) have the same probability law.
 Therefore,
\[
\E_{\bv_{-j}\sim F_{-j}}\!\left[\expect{\util}'_j(b_j,\vbid'_{-j}(\bv_{-j});v_j)\right]
=
\E_{\bv_{-j}\sim F_{-j}}\!\left[\expect{\util}_j(b_j,\vbid_{-j}(\bv_{-j});v_j)\right].
\]
\end{proof}

\begin{lemma}
    \label{lem:truthification-buyer-i-bic}
    Let $\A$ be an auction with BNE-$\aee^{(i)}$ $\vbid$ and UM-constrained buyers, and let
    $\A'$ be the truthification of $\A$ for buyer $i$ with relabeling $\sigma_i$
    and set $T_i$ from Definition~\ref{def:truthification}.
    
    Then:
    \begin{enumerate}
        \item 
        $\expect{x}_i'(\cdot;\vbid)$
          is monotone-\aee, specifically monotone on $T_i$.
          \item For every $v_i\in T_i$ it holds that
          $\expect{x}_i'(v_i;\vbid)=x_i^\uparrow(v_i)$.
        \item Buyer $i$ gets nonnegative utility
    from truthfully bidding in $\A'$.
        \item $\A'$ is truthful-\aee for buyer $i$ w.r.t.\ $\vbid$.
    \end{enumerate}    
\end{lemma}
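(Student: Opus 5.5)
We prove the four items in order, since each feeds the next. For items 1 and 2, fix $v_i\in T_i$. The other buyers bid $\vbid_{-i}(\bv_{-i})$ independently of $r_1$, and $r_2$ is independent as well, so by the definition of $x_i'$ and Fubini,
\begin{align*}
\hat{x}_i(v_i)=\expect{x}_i'(v_i;\vbid)=\E_{r_1}\!\left[\E_{r_2,\bv_{-i}}\!\left[x_i\big((\bid_i(\sigma_i(v_i,r_1)),\vbid_{-i}(\bv_{-i})),r_2\big)\right]\right]=\E_{r_1}\!\left[y_i(\sigma_i(v_i,r_1))\right],
\end{align*}
where $y_i(w)=\expect{x}_i(\bid_i(w);\vbid)$. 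Item~\ref{item:nnodecreasing-equality} of Corollary~\ref{cor:random-relabeling} identifies this with $x_i^\uparrow(v_i)$, which gives item 2. Item 1 follows because $x_i^\uparrow$ is nondecreasing, so $\hat x_i$ is nondecreasing on $T_i$. Off $T_i$ we have $\hat x_i=0$, and this is a null set.

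For item 3, note that a truthful bid $v_i\in T_i$ yields a one-shot outcome whose realized payment is $x_i'\gamma_i(v_i)$. Whenever $\hat x_i(v_i)>0$,
\[
\gamma_i(v_i)=v_i-\frac{\int_{\bottomV_i}^{v_i}\hat x_i}{\hat x_i(v_i)}\in[0,v_i].
\]
The upper bound is immediate since $\hat x_i\ge 0$. The lower bound holds because $\hat x_i$ is nondecreasing a.e., which gives $\int_{\bottomV_i}^{v_i}\hat x_i\le (v_i-\bottomV_i)\hat x_i(v_i)\le v_i\hat x_i(v_i)$; the modification on a null set does not affect the integral. Hence every realization has unconstrained utility $x_i'(v_i-\gamma_i(v_i))\ge 0$, and payments are nonnegative. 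The outcome is a.s.\ one-shot with nonnegative utility, so the uncapped property makes the truthful bid acceptable. The utility is therefore $\hat x_i(v_i)v_i-\hat x_i(v_i)\gamma_i(v_i)=\int_{\bottomV_i}^{v_i}\hat x_i\ge 0$. For $v_i\notin T_i$ the outcome is $(0,0)$, again acceptable with utility $0$.

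For item 4, any deviation $b_i$ gives interim unconstrained utility
\[
\hat x_i(b_i)v_i-\hat x_i(b_i)\gamma_i(b_i)=\int_{\bottomV_i}^{b_i}\hat x_i+(v_i-b_i)\hat x_i(b_i),
\]
and the constrained utility is at most this. We compare with the truthful utility $\int_{\bottomV_i}^{v_i}\hat x_i$ using the standard Myerson argument. The difference equals $\int_{b_i}^{v_i}(\hat x_i(z)-\hat x_i(b_i))\,dz$, which is $\ge 0$ in both cases $b_i<v_i$ and $b_i>v_i$, by monotonicity of $\hat x_i$ on $T_i$ together with the a.e.\ qualification. Deviations to $b_i\notin T_i$ give utility $0$. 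The main obstacle is the measure-zero bookkeeping in this step: the pointwise comparison $\hat x_i(z)\ge\hat x_i(b_i)$ for $z>b_i$ is only guaranteed for $z,b_i\in T_i$. I would handle this by working with $b_i\in T_i$ and integrating only over $z\in T_i$, which is full measure. This yields truthfulness for every $v_i$ against all bids in $T_i$, hence truthfulness a.e.\ for buyer $i$ as required.
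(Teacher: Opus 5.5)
Your proposal is correct and follows essentially the same route as the paper: you identify $\hat x_i$ with $x_i^\uparrow$ on $T_i$ via Fubini and Corollary~\ref{cor:random-relabeling}, then use $\gamma_i(v_i)\le v_i$ together with the uncapped property to make the truthful one-shot outcome acceptable, then run the standard Myerson comparison for deviations into $T_i$ (this relies on $F_i$-null sets being Lebesgue-null because $f_i>0$ a.e.), and finally note that deviations outside $T_i$ yield zero. One small slip: the comparison establishes truthfulness for every $v_i\in T_i$, not for every $v_i$, since a type $v_i\notin T_i$ receives nothing when truthful and may profitably underbid into $T_i$; truthfulness on $T_i$ is exactly the a.e.\ statement the lemma requires, so the conclusion stands.
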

\begin{proof}
    Fix $i\in[n]$, valuation $v_i$, and let $\A'$ be the truthification of $\A$ 
    for buyer $i$ with relabeling $\sigma_i$ from Definition~\ref{def:truthification}.
    Let $ T_i\subseteq[\bottomV_i,\topV_i]$ be the set from
     Corollary~\ref{cor:random-relabeling} with $\Pr_{F_i}[ T_i]=1$.

    For monotonicity-\aee, we show that for every $v_i\in T_i$,
    \begin{align*}
        \expect{x}_i'(v_i;\vbid)=\E_{r,\bv_{-i}}\!\left[x'_i\big((v_i,\vbid_{-i}(\bv_{-i})),r\big)\right]
        \underset{(1)}{=}\E_{r_1,r_2,\bv_{-i}}\!\left[x_i\big((\bid_i(\sigma_i(v_i,r_1)),\vbid_{-i}(\bv_{-i})),r_2\big)\right]
        \underset{(2)}{=}x_i^\uparrow(v_i).
    \end{align*}
    where (1) is by definition of $\A'$ for $v_i\in T_i$ and
     (2) is by Item~\ref{item:nnodecreasing-equality} of 
     Corollary~\ref{cor:random-relabeling} (for every $v_i\in T_i$).
    Thus, 
     $\expect{x}_i'(\cdot;\vbid)$
     is monotone on $ T_i$, meaning that it is indeed monotone-\aee

    We continue by showing that bidding truthfully is \emph{acceptable} for buyer $i$
    in $\A'$.
    Recall that since the constraint $\pred_i$ is \emph{uncapped-payment}, by definition if the outcome $S_i$ is
    almost surely one-shot and almost surely results in nonnegative
    unconstrained utility, then $\pred_i^O(S_i;v_i)=\true$.
     By definition of $\A'$, the outcome for buyer $i$ is always one-shot,
     so it remains to verify that it results in
      nonnegative unconstrained utility almost surely.
      For $v_i\notin T_i$ it is immediate 
      that 
      buyer $i$ gets an unconstrained utility of $0$
      by bidding truthfully.
      Hence, we can focus on $v_i\in T_i$.
    For these valuations the unconstrained utility for 
     bids $\vec{b}_{-i}$ and randomness $r$
     is given by
     \begin{align*}
       {\unconstrained}'_i((v_i,\vec{b}_{-i}),r;v_i)&=
        x_i'((v_i,\vec{b}_{-i}),r)\cdot v_i -p_i'((v_i,\vec{b}_{-i}),r)\\
        &=x_i'((v_i,\vec{b}_{-i}),r)\cdot v_i -x_i'((v_i,\vec{b}_{-i}),r)\cdot\gamma_i(v_i).
     \end{align*}
    If $\hat{x}_i(v_i)=0$ then the above is all zero and thus nonnegative.
     Otherwise, by definition of $\gamma_i(v_i)$ in~\eqref{eq:lem-proof-relabeling-gamma},
     we have $\gamma_i(v_i)\leq v_i$, and thus ${\unconstrained}'_i((v_i,\vec{b}_{-i}),r;v_i)\geq 0$.
        Therefore, the random outcome of buyer $i$ that results from buyer $i$
        bidding truthfully in $\A'$ is \emph{acceptable} and results in nonnegative utility.

     It remains to show $\A'$ is truthful-\aee
     for buyer $i$ w.r.t.\ $\vbid$.
     For any valuation $v_i\in  T_i$ and acceptable bid $b_i$, we have
     \begin{align}\label{eq:util-tag-is-unconstrained-tag}
        \E_{r,\bv_{-i}}\sqbr{\util_i'((b_i,\vbid_{-i}(\bv_{-i}));v_i)}
        &=
        \E_{r,\bv_{-i}}\sqbr{{\unconstrained_i}'((b_i,\vbid_{-i}(\bv_{-i}));v_i)},
     \end{align}
      by definition of the constrained utility. We now split to cases:
     \begin{itemize}
        \item If $b_i\in T_i$
     we have
     \begin{align*}
        \E_{r,\bv_{-i}}\sqbr{{\unconstrained_i}'((b_i,\vbid_{-i}(\bv_{-i}));v_i)}
        &\underset{(*)}{=}\E_{r,\bv_{-i}}\sqbr{x_i'((b_i,\vbid_{-i}(\bv_{-i})),r)\cdot v_i -x_i'((b_i,\vbid_{-i}(\bv_{-i})),r)\cdot\gamma_i(b_i)}\\
        &=\hat{x}_i(b_i)v_i-\hat{x}_i(b_i)\gamma_i(b_i),
     \end{align*}
     where $(*)$ is by definition of $\A'$ for buyer $i$.
     If $\hat{x}_i(b_i)=0$ then the above is zero and thus we get
	     $\E_{r,\bv_{-i}}\sqbr{\util_i'((b_i,\vbid_{-i}(\bv_{-i}));v_i)}\leq\E_{r,\bv_{-i}}\sqbr{\util_i'((v_i,\vbid_{-i}(\bv_{-i}));v_i)}$.

     So assume $\hat{x}_i(b_i)>0$. By definition of $\A'$ and $\gamma_i(b_i)$ in~\eqref{eq:lem-proof-relabeling-gamma}, we have
     \begin{align*}
        \E_{r_2,\bv_{-i}}\sqbr{p'_i((b_i,\vbid_{-i}(\bv_{-i})),r_2)}
        &=\hat{x}_i(b_i)\gamma_i(b_i)=\hat{x}_i(b_i)b_i-\int_{\bottomV_i}^{b_i}\hat{x}_i(z)dz,
     \end{align*}
     which is exactly Myerson's payment identity for the interim allocation
     $\hat{x}_i$. 
     
     For the following analysis, we use the fact that
     $\hat{x}_i$ is nondecreasing over $T_i$, and since
     $F_i$ has a positive density function over
     $[\bottomV_i,\topV_i]$, it follows that the Lebesgue
     measure of $[\bottomV_i,\topV_i]\setminus T_i$
is $0$, and this $\hat{x}_i$ is nondecreasing Lebesgue-\aee.

     Therefore, if $b_i\leq v_i$ then
        \begin{align*}
            &\expect{\unconstrained}_i'((v_i,\vbid_{-i}(\bv_{-i}));v_i)-
            \expect{\unconstrained}_i'((b_i,\vbid_{-i}(\bv_{-i}));v_i)
            \\&\quad\quad
            =\hat{x}_i(v_i)v_i-\brackets{\hat{x}_i(v_i)v_i-\int_{\bottomV_i}^{v_i}\hat{x}_i(z)dz}
            -\brackets{\hat{x}_i(b_i)v_i-\brackets{\hat{x}_i(b_i)b_i-\int_{\bottomV_i}^{b_i}\hat{x}_i(z)dz}}
            \\&\quad\quad
            =\int_{b_i}^{v_i}\hat{x}_i(z)dz-\hat{x}_i(b_i)(v_i-b_i)
            =\int_{b_i}^{v_i}\hat{x}_i(z)-\hat{x}_i(b_i)dz
            \geq 0.
        \end{align*}
     Similarly, if $b_i>v_i$ then 
     \begin{align*}
            \expect{\unconstrained}_i'((v_i,\vbid_{-i}(\bv_{-i}));v_i)-
            \expect{\unconstrained  }_i'((b_i,\vbid_{-i}(\bv_{-i}));v_i)
            &=-\int_{v_i}^{b_i}\hat{x}_i(z)dz-\hat{x}_i(b_i)(v_i-b_i)
            \\&=\int_{v_i}^{b_i}\hat{x}_i(b_i)-\hat{x}_i(z)dz
            \geq 0.
     \end{align*}

    Either way, together with Eq.~\eqref{eq:util-tag-is-unconstrained-tag} it follows that for every $v_i,b_i\in T_i$,
    \begin{align*}
       \E_{r,\bv_{-i}}\sqbr{\util_i'((v_i,\vbid_{-i}(\bv_{-i}));v_i)}\geq 
       \E_{r,\bv_{-i}}\sqbr{{\unconstrained_i}'((b_i,\vbid_{-i}(\bv_{-i}));v_i)}\geq
       \E_{r,\bv_{-i}}\sqbr{{\util_i}'((b_i,\vbid_{-i}(\bv_{-i}));v_i)}.
    \end{align*}

        \item If $b_i\notin T_i$, by definition of $\A'$ we have
        $\E_{r,\bv_{-i}}\sqbr{{\unconstrained_i}'((b_i,\vbid_{-i}(\bv_{-i}));v_i)}
             = 0$,
        so together with Eq.~\eqref{eq:util-tag-is-unconstrained-tag}
        and since we showed that truthful bidding results in
        nonnegative utility for buyer $i$,
        \begin{align*}
            \E_{r,\bv_{-i}}\sqbr{\util_i'((v_i,\vbid_{-i}(\bv_{-i}));v_i)}
            \geq 0=\E_{r,\bv_{-i}}\sqbr{{\unconstrained_i}'((b_i,\vbid_{-i}(\bv_{-i}));v_i)}.
        \end{align*}

     \end{itemize}
     Hence,
     $\A'$ is indeed truthful-\aee for buyer $i$ w.r.t.\ $\vbid$, completing the proof.
\end{proof}

\begin{claim}\label{clm:truthification-nonnegative-pay-allocation-feasible}
    The auction $\A'$ from Definition~\ref{def:truthification}
    has nonnegative payments and is allocation-feasible.
\end{claim}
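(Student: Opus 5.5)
The claim has two parts, and I would check both pointwise, realization by realization, directly from Definition~\ref{def:truthification}. Throughout I would use that the original auction $\A$ is, by the standing assumptions of Section~\ref{sec:model}, allocation-feasible with nonnegative payments for every bid profile and every realization of its internal randomness.

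\emph{Allocation-feasibility.} Fix a bid profile $\vec{b}$ and randomness $r=(r_1,r_2)$. If $b_i\notin T_i$, every buyer $j\neq i$ receives $0$ units, and so does buyer $i$, since $x'_i(\vec{b},r)=0$. The zero allocation lies in $\Omega$, because $\Omega$ contains the null allocation; I would state this as part of the model's downward-closed convention. If $b_i\in T_i$, set $\vec{b}'':=(\bid_i(\sigma_i(b_i,r_1)),\vec{b}_{-i})$. This is a legitimate bid profile, since $\sigma_i$ maps into $[\bottomV_i,\topV_i]$ and $\bid_i$ maps valuations to bids. By the definition, every buyer, including $i$, receives exactly the realized allocation $x_j(\vec{b}'',r_2)$ of $\A$ at that single profile and single randomness. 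The realized allocation vector of $\A'$ is therefore the realized allocation vector of $\A$ at $(\vec{b}'',r_2)$, and hence lies in $\Omega$. The key point is that all buyers' outcomes are driven by the \emph{same} relabeled bid and the same $r_2$, so we obtain a joint allocation that $\A$ actually realizes, not a mixture of coordinates taken from different realizations.

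\emph{Nonnegative payments.} For $j\neq i$, the payment is either $0$ or $p_j(\vec{b}'',r_2)\ge 0$. For buyer $i$, the payment is $x'_i(\vec{b},r)\gamma_i(b_i)$ with $x'_i\ge 0$, so it suffices to show $\gamma_i(b_i)\ge 0$. When $\hat{x}_i(b_i)=0$ this is immediate. Otherwise I need
\[
\int_{\bottomV_i}^{b_i}\hat{x}_i(z)\,dz\le b_i\,\hat{x}_i(b_i).
\]
This is the main step. By Lemma~\ref{lem:truthification-buyer-i-bic}, $\hat{x}_i$ agrees on $T_i$ with the nondecreasing function $x_i^\uparrow$, and $T_i$ has full Lebesgue measure in $[\bottomV_i,\topV_i]$ because $f_i>0$ a.e. Hence for $b_i\in T_i$,
\[
\int_{\bottomV_i}^{b_i}\hat{x}_i(z)\,dz=\int_{\bottomV_i}^{b_i}x_i^\uparrow(z)\,dz\le (b_i-\bottomV_i)\,x_i^\uparrow(b_i)\le b_i\,\hat{x}_i(b_i),
\]
where the last step uses $\bottomV_i\ge 0$. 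For $b_i\notin T_i$ the payment is $0$ by definition, so there is nothing to check.

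The only delicate point I anticipate is this almost-everywhere issue. Monotonicity of $\hat{x}_i$ holds only on $T_i$, so the integral comparison must go through $x_i^\uparrow$ rather than $\hat{x}_i$ directly. That is legitimate because the payment is nonzero only for $b_i\in T_i$, and there $\hat{x}_i(b_i)=x_i^\uparrow(b_i)$ exactly.
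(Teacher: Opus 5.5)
Your proposal is correct and follows essentially the same argument as the paper. Every realized allocation of $\A'$ is either the zero allocation or the allocation $\A$ realizes at the single profile $(\bid_i(\sigma_i(b_i,r_1)),\vec{b}_{-i})$ with randomness $r_2$, and buyer $i$'s payment is nonnegative because $\gamma_i(b_i)\ge \bottomV_i\ge 0$ by a.e.\ monotonicity of $\hat{x}_i$ on the full-measure set $T_i$, which you express equivalently through $x_i^\uparrow$.
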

\begin{proof}
    Let $i$ for which $\A'$ is the truthification of $\A$.

    \emph{Payments are nonnegative:}
    For bid vector $\vec{b}$, if $b_i\notin T_i$ then all payers pay $0$,
    which is nonnegative. So suppose $b_i\in T_i$.
    For $j\neq i$ the nonnegativity of payments comes
     from the nonnegativity guaranteed in $\A$: 
     realized payments for $j\neq i$ were payments made
      in $\A$.
    For bidder $i$,
    the payment of buyer $i$ is given by 
    $p_i'(\vec{b},r)=x_i'(\vec{b},r)\cdot\gamma_i(b_i)$.
    The allocation $x_i'(\vec{b},r)$ is nonnegative,
    and since $\hat{x}_i(\cdot)$ is monotone nondecreasing-\aee and specifically monotone on $T_i$ (by Lemma~\ref{lem:truthification-buyer-i-bic}), 
    we have that for $F_i$-almost every $z\leq b_i$, 
    $\hat{x}_i(z)\leq \hat{x}_i(b_i)$, and thus
    we have
    \begin{align*}
        \gamma_i(b_i)=b_i-\dfrac{\int_{\bottomV_i}^{b_i}\hat{x}_i(z)dz}{\hat{x}_i(b_i)}
        = b_i-\int_{\bottomV_i}^{b_i}\dfrac{\hat{x}_i(z)}{\hat{x}_i(b_i)}dz
        \geq b_i-(b_i-\bottomV_i)=\bottomV_i\geq 0.
    \end{align*}
    Therefore also when $b_i\in T_i$,
     the total payment for bidder $i$ with valuation $b_i$ is nonnegative.
    
    \emph{It is allocation-feasible:} 
    The realized allocation in $\A'$ is
    either $(0,\dots,0)$ which we require in our model
    to be feasible, and otherwise
    it is an allocation that could be
    realized in $\A$, 
    just for a different bid profile.

\end{proof}

 \section{Deferred Results and Proofs for Section~\ref{sec:revenue}: Revenue-Aligned Payoff}\label{app:rev-aligned}

 \subsection{Properties of the Ironed Virtual Welfare Maximizer (\revMaximizer)}\label{sec:proof-rev-aligned-optimal-among-tight}

\begin{proof}[Proof of Lemma~\ref{lem:rev-aligned-optimal-among-tight}]
    Let $\A$ be \revMaximizer from Definition~\ref{def:rev-aligned-ironed-maximizer}. First observe that $\A$ is allocation-feasible since all ex-post allocations are directly selected from $\Omega$.

    We now show that its allocation rule $x_i(\cdot,\bv_{-i})$ is nondecreasing for every $i$ and $\bv_{-i}$. Since $\irn_{i,\Pi}(\cdot)$ is nondecreasing (Observation~\ref{obs:irn-nondecreasing}), and because \revMaximizer is defined like in the conditions of Claim~\ref{clm:pointwise-maximizer-nondecreasing}, we can apply  Claim~\ref{clm:pointwise-maximizer-nondecreasing} proving that $\A$ is indeed monotone.
    
    Now, using the fact that $\A$ is deterministic,
    \begin{align*}
        \E[\A]&=\E_{\bv,r}\sqbr{\sum_{i=1}^n \brackets{\payoffValue(v_i)x_i(\bv)+\payoffPay p_i(\bv)}-c\brackets{\sum_{i=1}^n x_i(\bv)}}\\
        &\quad=\sum_{i=1}^n \brackets{\E_{\bv,r}\sqbr{\payoffValue(v_i)x_i(\bv)}+\payoffPay\E_{\bv,r}\sqbr{p_i(\bv)}}-\E_{\bv,r}\sqbr{c\brackets{\sum_{i=1}^n x_i(\bv)}}.
    \end{align*}
    By~\cite{myerson1981optimal}, $\E_{\bv}\sqbr{p_i(\bv)}=\E_{\bv}\sqbr{\varphi(v_i)x_i(\bv)}$, so
    \begin{align*}
        \E[\A]
        &=\sum_{i=1}^n \brackets{\E_{\bv,r}\sqbr{\payoffValue(v_i)x_i(\bv)}+\payoffPay\E_{\bv,r}\sqbr{\varphi(v_i)x_i(\bv)}}-\E_{\bv,r}\sqbr{c\brackets{\sum_{i=1}^n x_i(\bv)}}\\
        &=\sum_{i=1}^n \E_{\bv,r}\sqbr{\varphi_{i,\Pi}(v_i)x_i(\bv)}-\E_{\bv,r}\sqbr{c\brackets{\sum_{i=1}^n x_i(\bv)}}.
    \end{align*}
    
    By Lemma~\ref{lemma:ironing-properties}, $\E_{\bv,r}\sqbr{\varphi_{i,\Pi}(v_i)x_i(\bv)}\leq \E_{\bv,r}\sqbr{\irn_{i,\Pi}(v_i)x_i(\bv)}$. Define $\Psi$ and $H$ from the ironing procedure used to define $\irn_{i,\Pi}$. At points $v_i$ where $\Psi(F_i(v_i))<H(F_i(v_i))$, $\Psi$ is locally linear as the convex hull of $H$, and thus $\irn_{i,\Pi}$ is locally constant. From the way $\vec{x}$ is defined, with a fixed deterministic tie-breaking rule, it will then be constant over these intervals.
    Thus, $\E_{\bv,r}\sqbr{\varphi_{i,\Pi}(v_i)x_i(\bv)}= \E_{\bv,r}\sqbr{\irn_{i,\Pi}(v_i)x_i(\bv)}$ for every $v_i$.

    Continuing the above analysis, we get
    \begin{align}\label{eq:proof-of-argmax-1}
        \E[\A]&=\sum_{i=1}^n \E_{\bv,r}\sqbr{\irn_{i,\Pi}(v_i)x_i(\bv)}-\E_{\bv,r}\sqbr{c\brackets{\sum_{i=1}^n x_i(\bv)}}.
    \end{align}

    For the remaining properties left to prove:
    \begin{itemize}
        \item \textbf{It is allocation-feasible} since all ex-post allocations are directly selected from $\Omega$.
        \item \textbf{It has nonnegative payments}, since $x_i(\cdot,\bv_{-i})$ is nondecreasing for every $i$ and $\bv_{-i}$, and by definition $p_i(\bv)=x_i(\bv)v_i-\int_{\bottomV_i}^{v_i}x_i(z,\bv_{-i})dz\geq x_i(\bv)v_i-x_i(\bv)(v_i-\bottomV_i)\geq 0$.
        \item \textbf{It is IR} since bidding truthfully does not violate a bid constraint, and since it is deterministic and one-shot no outcome constraint is violated either, so the truthful utility is equal to the unconstrained truthful utility. By definition, and since $x_i(\cdot,\bv_{-i})$ is nondecreasing, $p_i(\bv)\leq x_i(\bv)v_i$, so $\expect{u}_i(v_i;v_i)\geq0$.
        \item \textbf{It is U-DSIC} because for every $i$, $v_i,b_i$ and $\bv_{-i}$,
    \begin{align*}
        &\util_i(\bv;v_i)=\int_{\bottomV_i}^{v_i}x_i(z,\bv_{-i})dz,\quad\text{and}\\
        &\util_i((b_i,\bv_{-i});v_i)=x_i(b_i,\bv_{-i})v_i-x_i(b_i,\bv_{-i})b_i+\int_{\bottomV_i}^{b_i}x_i(z,\bv_{-i})dz\\
        &\quad\quad\quad\quad\quad\quad=x_i(b_i,\bv_{-i})(v_i-b_i)+\int_{\bottomV_i}^{b_i}x_i(z,\bv_{-i})dz.
    \end{align*}
    
    Using the monotonicity of $x_i(\cdot,\bv_{-i})$, for every $b_i\leq v_i$,
    \begin{align*}
        \util_i(\bv;v_i)-\util_i((b_i,\bv_{-i});v_i)&=\int_{b_i}^{v_i}x_i(z,\bv_{-i})dz
        -x_i(b_i,\bv_{-i})(v_i-b_i)
        \\&\geq (v_i-b_i)x_i((b_i,\bv_{-i})-x_i((b_i,\bv_{-i})(v_i-b_i)
        = 0.
    \end{align*}
    For $b_i>v_i$,
    \begin{align*}
        \util_i(\bv;v_i)-\util_i((b_i,\bv_{-i});v_i)&=-\int_{v_i}^{b_i}x_i(z,\bv_{-i})dz
        +x_i(b_i,\bv_{-i})(b_i-v_i)
        \\&\geq -(b_i-v_i)x_i(b_i,\bv_{-i})+x_i(b_i,\bv_{-i})(b_i-v_i)=0.
    \end{align*}

    Thus bidding truthfully is a dominant strategy with respect to the unconstrained utility, so $\A$ is indeed U-DSIC.
    \end{itemize}
\end{proof}

 \subsection{Payoff Comparison Claims}\label{sec:proof-rev-mono-support}
\begin{claim}\label{clm:not-nondecreasing-means-strict}
    Let \(h : [a,b] \to \mathbb{R}_{\ge 0}\) be measurable (where \(b\) may be \(\infty\)), and let
\(\mu\) be a probability distribution over \([a,b]\) with density \(f\) satisfying
\(f(v) > 0\) for Lebesgue-a.e. \(v \in [a,b]\).
For every \(t \in \mathbb{R}_{\ge 0}\), define
\[
\tau_h(t) := \inf\{ z \in [a,b] : h(z) \ge t\},
\]
with the convention \(\inf \emptyset = \infty\).
Let
\[
M_h := \operatorname*{ess\,sup}_{v\sim\mu} h(v).
\]

If \(h\) is not nondecreasing \(\mu\)-a.e., then there exists a measurable set
\(X \subseteq [0,M_h)\) with positive Lebesgue measure such that for every \(t \in X\),
\[
\Pr_{v\sim\mu}[h(v)\ge t] \;<\; \Pr_{v\sim\mu}[v\ge \tau_h(t)].
\]
In particular, \(\Pr_{v\sim\mu}[h(v)\ge t] > 0\) for every \(t\in X\).
\end{claim}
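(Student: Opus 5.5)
The plan is to argue by contradiction, after first recording a pointwise inclusion that holds for every $t$. By definition of $\tau_h(t)$, every $z$ with $h(z)\ge t$ satisfies $z\ge \tau_h(t)$. Hence $\{v: h(v)\ge t\}\subseteq [\tau_h(t),b]$, and so
\[
\Pr_{v\sim\mu}[h(v)\ge t]\le \Pr_{v\sim\mu}[v\ge\tau_h(t)]\qquad\text{for every } t .
\]
Measurability is handled as follows. Both $t\mapsto\Pr[h(v)\ge t]$ and $t\mapsto \tau_h(t)$ are monotone. Therefore $t\mapsto\Pr[v\ge\tau_h(t)]$ is also monotone, and the set
\[
X:=\{t\in[0,M_h):\ \Pr[h(v)\ge t]<\Pr[v\ge\tau_h(t)]\}
\]
is Borel. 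It then suffices to show that $X$ has positive Lebesgue measure. The ``in particular'' part holds for every $t<M_h$, directly from the definition of the essential supremum.

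Suppose, toward a contradiction, that $X$ is Lebesgue-null. Then the set $E:=[0,M_h)\setminus X$ has full measure in $[0,M_h)$, so it contains a countable set $D$ that is dense in $[0,M_h)$. For each $t\in D$, equality of the two probabilities together with the inclusion above shows that $\{h\ge t\}$ and $[\tau_h(t),b]$ differ by a $\mu$-null set. Intersecting over the countably many $t\in D$, and also removing the null set $\{h>M_h\}$, we obtain a set $V$ with $\mu(V)=1$ such that
\[
\mathbf 1[h(v)\ge t]=\mathbf 1[v\ge \tau_h(t)]\quad\text{for all } v\in V \text{ and } t\in D,
\qquad\text{and}\qquad h\le M_h \text{ on } V .
\]

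The key step is to show that $h$ is nondecreasing on $V$. Take $v<w$ in $V$ and any $t\in D$ with $t\le h(v)$. Then $v\ge\tau_h(t)$, hence $w\ge\tau_h(t)$, and therefore $h(w)\ge t$. So $h(w)\ge \sup\{t\in D: t\le h(v)\}$, and we check that this supremum equals $h(v)$.
\begin{itemize}
\item If $h(v)<M_h$, the supremum equals $h(v)$ by density of $D$ in $[0,M_h)$; note that $h\ge 0$, so the case $h(v)=0$ is trivial.
\item If $h(v)=M_h$, we get $h(w)\ge t$ for all $t\in D$, so $h(w)\ge M_h=h(v)$.
\end{itemize}
Thus $h$ is nondecreasing on a set of full $\mu$-measure, which contradicts the hypothesis. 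Since $f>0$ Lebesgue-a.e., $\mu$-null and Lebesgue-null sets coincide on $[a,b]$, so ``a.e.'' carries the same meaning in either sense.

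I expect the main obstacle to be bookkeeping around null sets and boundary points, namely whether $v$ equals $\tau_h(t)$ exactly, and the endpoint value $M_h$. The countable dense set $D$ is what makes it possible to take a single full-measure set $V$ that works simultaneously for all relevant thresholds.
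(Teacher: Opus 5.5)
Your proposal is correct and follows essentially the same route as the paper. Both arguments assume the exceptional set is null, pick a countable dense set $D$ of thresholds outside it, discard the $\mu$-null sets $[\tau_h(t),b]\setminus\{h\ge t\}$ for $t\in D$ together with $\{h>M_h\}$, and show $h$ is nondecreasing on the full-measure remainder.

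The differences are minor. You conclude via a supremum over $D$, whereas the paper picks a single $t\in D$ strictly between $h(y)$ and $h(x)$. You also justify the Borel measurability of $X$, which the paper leaves implicit.
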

\begin{proof}
For \(t \in \mathbb{R}_{\ge 0}\), let
\[
V_t := \{v\in[a,b] : h(v)\ge t\}.
\]
Since \(V_t \subseteq [\tau_h(t),b]\) for every \(t\), define
\[
B := \Bigl\{ t\in[0,M_h) : \mu(V_t) < \mu([\tau_h(t),b]) \Bigr\}.
\]
We prove the contrapositive by assuming \(\lambda(B)=0\), where \(\lambda\) denotes Lebesgue
measure, and showing that \(h\) is nondecreasing \(\mu\)-a.e.

Since \(\lambda(B)=0\), the set \([0,M_h)\setminus B\) is dense in \([0,M_h)\). Choose a countable
dense set
\[
D \subseteq [0,M_h)\setminus B.
\]
Define
\[
N_0 := \{v\in[a,b] : h(v) > M_h\},
\qquad
N := N_0 \cup \bigcup_{t\in D}\bigl([\tau_h(t),b]\setminus V_t\bigr).
\]
By definition of \(M_h\), we have \(\mu(N_0)=0\). Also, for every \(t\in D\), since \(t\notin B\),
\[
\mu(V_t)=\mu([\tau_h(t),b]).
\]
As \(V_t\subseteq [\tau_h(t),b]\), this implies
\[
\mu([\tau_h(t),b]\setminus V_t)=0.
\]
Since \(D\) is countable, \(\mu(N)=0\).

We now show that \(h\) is nondecreasing on \([a,b]\setminus N\).
Take \(x<y\) with \(x,y\notin N\). Suppose toward contradiction that \(h(x)>h(y)\).
Because \(x,y\notin N_0\), we have \(h(x),h(y)\le M_h\). Hence we may choose
\(t\in D\) such that
\[
h(y) < t < h(x).
\]
Since \(h(x)\ge t\), we have \(x\in V_t\), and therefore \(x\ge \tau_h(t)\). As \(y>x\), it follows
that \(y\ge \tau_h(t)\), i.e.
\[
y\in [\tau_h(t),b].
\]
But \(y\notin N\), so in particular
\[
y\notin [\tau_h(t),b]\setminus V_t.
\]
Hence \(y\in V_t\), which means \(h(y)\ge t\), contradicting \(h(y)<t\).

Thus \(h\) is nondecreasing on \([a,b]\setminus N\). Since \(\mu(N)=0\), it follows that \(h\) is
nondecreasing \(\mu\)-a.e. This proves the contrapositive, and therefore the claim follows.
\end{proof}
\begin{claim}\label{clm:single-buyer-rearrangement-payoff}
    Let $\A$ be an auction with UM-constrainted buyers and
     BNE-$\aee^{(i)}$ $\vbid$ for some $i\in[n]$.
Let $x^\uparrow(v_i)$ be a
     nondecreasing rearrangement of 
     $v_i\mapsto\expect{x}_i(\bid_i(v_i);\vbid)$, and 
     $p^\uparrow(v_i)$ be the Myerson payment rule 
     defined by~\eqref{eq:unique-pay}:
    \begin{align*}
        p^\uparrow(v_i)=x^\uparrow(v_i)v_i-\int_{\bottomV_i}^{v_i}x^\uparrow(z)dz.
    \end{align*}

    Then for any revenue-aligned payoff $(\payoffValue,\payoffPay)$,
    \begin{align*}
        \E_{v_i\sim F_i}[\payoffValue(v_i)x^\uparrow(v_i)+\payoffPay p^\uparrow(v_i)]
        \geq \E_{\bv,r}[\payoffValue(v_i)x_i(\vbid(\bv),r)
        +\payoffPay p_i(\vbid(\bv),r)].
    \end{align*}

    Moreover, if $\expect{x}_i(\bid_i(\cdot);\vbid)$ is not monotone-\aee
    end either $\payoffPay>0$ or $\payoffValue(\cdot)$ is
    strictly increasing, then the inequality is strict.
\end{claim}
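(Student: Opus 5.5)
The plan is to treat the welfare term and the revenue term separately and show that each weakly increases when we pass to the rearranged auction. Write $y(v):=\expect{x}_i(\bid_i(v);\vbid)$ and $p(v):=\expect{p}_i(\bid_i(v);\vbid)$. By the law of total expectation the right-hand side equals $\E_{v\sim F_i}[\payoffValue(v)y(v)+\payoffPay p(v)]$. Since $\payoffPay\ge 0$, it suffices to prove two inequalities:
\begin{align*}
\E[\payoffValue(v)y(v)]\le \E[\payoffValue(v)x^\uparrow(v)] \quad\text{and}\quad \E[p(v)]\le\E[p^\uparrow(v)].
\end{align*}

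\emph{Welfare term.} I will pass to quantile space, as in the proof of Theorem~\ref{thm:cons-aligned-mono-wlog}. There $\payoffValue\circ V_i$ and $x^\uparrow\circ V_i$ are similarly ordered, and $x^\uparrow$ is equimeasurable with $y$. The Hardy--Littlewood inequality then gives the first inequality. For strictness when $\payoffValue$ is strictly increasing, I will use the equality case of Hardy--Littlewood: if $\payoffValue$ is strictly increasing, equality forces $y$ to be similarly ordered with $\payoffValue$ a.e., hence nondecreasing a.e. I expect the most delicate point to be justifying this equality case carefully, for example via the layer-cake identity $\E[\payoffValue(v)y(v)]=\int_0^m \E[\payoffValue(v)\mathbf 1\{y(v)\ge t\}]\,dt$. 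For each $t$, the superlevel set $\{y\ge t\}$ and the interval $[\tau^\uparrow(t),\topV_i]$ have equal $F_i$-mass. Among sets of a given mass, a strictly increasing $\payoffValue$ has strictly larger integral over the top interval than over any set that differs from it on positive measure. Claim~\ref{clm:not-nondecreasing-means-strict} supplies a positive-measure set of levels $t$ on which the two sets differ.

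\emph{Revenue term.} Set $\tau(t):=\inf\{v: y(v)\ge t\}$ and $\tau^\uparrow(t):=\inf\{v:x^\uparrow(v)\ge t\}$. By Claim~\ref{clm:tau-bound}, $p(v)\le\int_0^{y(v)}\tau(t)\,dt$. By Claim~\ref{clm:nondecreasing-payment-threshold-int}, $p^\uparrow(v)=\int_0^{x^\uparrow(v)}\tau^\uparrow(t)\,dt$. Applying Fubini to both bounds gives
\begin{align*}
\E[p(v)]\le\int_0^m \tau(t)\Pr[y(v)\ge t]\,dt \quad\text{and}\quad \E[p^\uparrow(v)]=\int_0^m\tau^\uparrow(t)\Pr[x^\uparrow(v)\ge t]\,dt.
\end{align*}
By equimeasurability, the probabilities in the two integrands coincide. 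It therefore suffices to show $\tau(t)\le\tau^\uparrow(t)$ whenever $\Pr[y\ge t]>0$. Since $x^\uparrow$ is nondecreasing, we have $\Pr[v\ge\tau^\uparrow(t)]=\Pr[x^\uparrow\ge t]=\Pr[y\ge t]$ up to null sets. Since $\{y\ge t\}\subseteq[\tau(t),\topV_i]$, we also have $\Pr[v\ge\tau(t)]\ge\Pr[y\ge t]$. Because $F_i$ has an a.e.\ positive density, $F_i$ is strictly increasing, and therefore $\tau(t)\le\tau^\uparrow(t)$. Levels with $\Pr[y\ge t]=0$ contribute nothing to either integral, so the convention $\inf\emptyset=\bottomV_i$ is harmless.

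\emph{Strictness when $\payoffPay>0$.} Claim~\ref{clm:not-nondecreasing-means-strict}, applied to $h=y$, yields a positive-measure set of levels $t$ at which $\Pr[y\ge t]>0$ and $\Pr[y\ge t]<\Pr[v\ge\tau(t)]$. For such $t$ we get $\Pr[v\ge\tau(t)]>\Pr[v\ge\tau^\uparrow(t)]$, so $\tau(t)<\tau^\uparrow(t)$ strictly. This makes the revenue integral strictly larger, and multiplying by $\payoffPay>0$ gives the strict inequality. Throughout, I will have to control the measure-zero exceptions, namely that $x^\uparrow$ is nondecreasing and that the BNE holds only a.e. Claim~\ref{clm:tau-bound} is stated for a BNE-$\aee^{(i)}$, so it applies to every $v$. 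The remaining exceptions only affect null sets in the integrals.
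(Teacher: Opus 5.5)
Your proposal is correct and follows the paper's proof essentially step for step. The welfare term uses Hardy--Littlewood in quantile space. The revenue term uses Claim~\ref{clm:tau-bound}, Claim~\ref{clm:nondecreasing-payment-threshold-int}, the comparison $\tau(t)\le\tau^\uparrow(t)$ from equimeasurability and strict monotonicity of $F_i$, and Claim~\ref{clm:not-nondecreasing-means-strict} for strictness when $\payoffPay>0$. You apply Tonelli before comparing thresholds and the paper does it after, which is immaterial.

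The one step to tighten is welfare strictness when $\payoffValue$ is strictly increasing; the paper itself only asserts this ``by Hardy--Littlewood.'' Claim~\ref{clm:not-nondecreasing-means-strict} does not supply what you cite it for. It gives levels with $\Pr[y\ge t]<\Pr[v\ge\tau(t)]$. That can happen merely because a null set of stray points pulls $\tau(t)$ down, while $\{y\ge t\}$ still agrees with an upper interval up to a null set.

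Instead, compare $\{y\ge t\}$ directly with $\{x^\uparrow\ge t\}$, and use the layer-cake identity
\[
\E\bigl|y(v)-x^\uparrow(v)\bigr|=\int_0^m \Pr\bigl[\{y\ge t\}\,\triangle\,\{x^\uparrow\ge t\}\bigr]\,dt .
\]
Comparing with $\{x^\uparrow\ge t\}$ rather than $[\tau^\uparrow(t),\topV_i]$ also avoids the convention $\inf\emptyset=\bottomV_i$ at levels above the essential supremum. If the symmetric differences were null for almost every $t$, then $y=x^\uparrow$ almost everywhere, so $y$ would be nondecreasing almost everywhere. Hence, when $y$ is not monotone almost everywhere, a positive-measure set of levels has $\Pr[\{y\ge t\}\triangle\{x^\uparrow\ge t\}]>0$. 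Your bathtub comparison at those levels then yields the strict inequality.
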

\begin{proof}

    First, we show that the welfare term weakly improves from the rearrangement.

Since $F_i$ is continuous and atomless, the random variable
$U := F_i(V_i)$ is uniformly distributed on $[0,1]$.
For any integrable function $g(V_i)$ we may therefore write
$g(V_i) = g(F_i^{-1}(U))$.
Define
\[
\tilde x(u) := \expect{x}_i\brackets{\bid_i\brackets{F_i^{-1}(u)};\vbid},
\qquad
\tilde \payoffValue(u) := \payoffValue(F_i^{-1}(u)),
\qquad u \in [0,1].
\]
Because both $\payoffValue(\cdot)$ and $F_i^{-1}(\cdot)$ are nondecreasing,
$\tilde \payoffValue(\cdot)$ is nondecreasing on $[0,1]$.
Moreover, $x^\uparrow$ being an $F_i$-measure-preserving nondecreasing
rearrangement of $\expect{x}_i(\bid_i(\cdot);\vbid)$ is equivalent to
\[
\tilde x^\uparrow(u) := x^\uparrow(F_i^{-1}(u))
\]
being the nondecreasing rearrangement of $\tilde x$ on $[0,1]$.
By the Hardy--Littlewood rearrangement inequality on $[0,1]$,
\[
\int_0^1 \tilde \payoffValue(u)\,\tilde x^\uparrow(u)\,du
\;\ge\;
\int_0^1 \tilde \payoffValue(u)\,\tilde x(u)\,du.
\]
Translating back via $v_i=F^{-1}(u)$ yields
\begin{align}\label{eq:clm-nondecreasong-better}
    \E_{v_i\sim F_i}\sqbr{\payoffValue(v_i)\,x^\uparrow(v_i)}
\geq
\E_{v_i\sim F_i}\sqbr{\payoffValue(v_i)\,\expect{x}_i(\bid_i(v_i);\vbid)}
=\E_{\bv,r}\sqbr{\payoffValue(v_i)\,x_i(\vbid(\bv),r)}.
\end{align}

In addition, if $\payoffValue(\cdot)$ is strictly increasing, 
then if $\expect{x}_i(\bid_i(\cdot);\vbid)$ is not monotone-\aee, by Hardy-Littlewood
 the inequality is strict.

    Next, define 
    \begin{align*}
        \tau_x(t)=\inf\set{z\in[\bottomV_i,\topV_i]:\ \expect{x}_i(\bid_i(z);\vbid)\geq t}
    \end{align*}
    and
    \begin{align*}
        \tau^\uparrow(t)=\inf\set{z\in[\bottomV_i,\topV_i]:\ x^\uparrow_i(z)\geq t},
    \end{align*}
    where by convention $\inf\emptyset=\bottomV_i$.
    
    We will show that $\tau^\uparrow(t)\geq \tau_x(t)$ for every $t\in[0,m]$. 

Fix $t$. We have
\begin{align}\label{eq:clm-proof-x-then-tau}
    \Pr_{v_i\sim F_i}\sqbr{\expect{x}_i(\bid_i(v_i);\vbid)\geq t}\leq \Pr_{v_i\sim F_i}\sqbr{v_i\geq \tau_x(t)}.
\end{align}

If $\expect{x}_i(\bid_i(\cdot);\vbid)$ is not monotone $F_i$-\aee,
then by Claim~\ref{clm:not-nondecreasing-means-strict}
there exists a set $X\subseteq[0,m]$ with 
such that for every $t\in X$,
inequality~\eqref{eq:clm-proof-x-then-tau} is strict
and $\Pr_{v_i\sim F_i}[\expect{x}_i(\bid_i(v_i);\vbid)\geq t]>0$.

Since $x^\uparrow$ is an $F_i$--measure-preserving rearrangement
 of $\expect{x}_i(\bid_i(\cdot);\vbid)$,
and using the fact that $x^\uparrow$ is nondecreasing and $F_i$ is atomless,
\begin{align*}
    \Pr_{v_i\sim F_i}\sqbr{\expect{x}_i(\bid_i(v_i);\vbid)\geq t}=\Pr_{v_i\sim F_i}\sqbr{x^\uparrow(v_i)\geq t}=\Pr_{v_i\sim F_i}\sqbr{v_i\geq \tau^\uparrow(t)}.
\end{align*}

Plugging this into~\eqref{eq:clm-proof-x-then-tau},
we get
$\Pr_{v_i\sim F_i}\sqbr{v_i\geq \tau^\uparrow(t)}\leq\Pr_{v_i\sim F_i}\sqbr{v_i\geq \tau_x(t)}$, and it immediately follows that
 $\tau_x(t)\leq\tau^\uparrow(t)$,
 with the inequality being strict for every $t\in X$.

    Now, due to Claim~\ref{clm:tau-bound}, 
    \begin{align}\label{eq:clm-pay-by-rearranged-tau}
        \E_{v_i\sim F_i}\sqbr{ \payoffPay\expect{p}_i(\bid_i(v_i);\vbid)}\leq
        \E_{v_i\sim F_i}\sqbr{\payoffPay\int_0^{\expect{x}_i(\bid_i(v_i);\vbid)} \tau_x(t)dt}
        \leq
        \E_{v_i\sim F_i}\sqbr{\payoffPay\int_0^{\expect{x}_i(\bid_i(v_i);\vbid)} \tau^\uparrow(t)dt}.
    \end{align}

    If $\expect{x}_i(\bid_i(\cdot);\vbid)$ is not monotone-\aee
    then using $X$ defined above,

    \begin{align*}
       & \E_{v_i\sim F_i}\sqbr{\int_0^{\expect{x}_i(\bid_i(v_i);\vbid)} \tau^\uparrow(t)dt}
        -
        \E_{v_i\sim F_i}\sqbr{\int_0^{\expect{x}_i(\bid_i(v_i);\vbid)} \tau_x(t)dt}
        \\&\qquad=
         \int_0^m \brackets{ \tau^\uparrow(t)-\tau_x(t)}\Pr_{v_i}[\expect{x}_i(\bid_i(v_i);\vbid)\geq t]dt
        \\&\qquad
        \geq\int_{t\in X} \brackets{ \tau^\uparrow(t)-\tau_x(t)}\Pr_{v_i}[\expect{x}_i(\bid_i(v_i);\vbid)\geq t]dt.
        \numberthis\label{eq:tau-proof-positive-X}
    \end{align*}
Let
\[
M := \operatorname*{ess\,sup}_{v_i\sim F_i} \expect{x}_i(\beta_i(v_i);\beta).
\]
Since \(\expect{x}_i(\beta_i(\cdot);\beta)\in[0,m]\), we have \(M\le m\).

If \(\expect{x}_i(\beta_i(\cdot);\beta)\) is not monotone-a.e., then by Claim~\ref{clm:not-nondecreasing-means-strict} applied to
\[
h(v_i) := \expect{x}_i(\beta_i(v_i);\beta),
\]
there exists a measurable set \(X\subseteq[0,M)\) with positive Lebesgue measure such that for
every \(t\in X\),
\[
q(t) := \Pr_{v_i\sim F_i}[\expect{x}_i(\beta_i(v_i);\beta)\ge t]
\;<\;
\Pr_{v_i\sim F_i}[v_i\ge \tau_x(t)].
\]
Because \(x^\uparrow\) is an \(F_i\)-measure-preserving nondecreasing rearrangement of
\(\expect{x}_i(\beta_i(\cdot);\beta)\),
\[
q(t)
=
\Pr_{v_i\sim F_i}[x^\uparrow(v_i)\ge t]
=
\Pr_{v_i\sim F_i}[v_i\ge \tau^\uparrow(t)].
\]
Also, since \(t<M\), by the definition of essential supremum we have \(q(t)>0\).

Therefore, for every \(t\in X\),
\[
\Pr_{v_i\sim F_i}[v_i\ge \tau^\uparrow(t)]
=
q(t)
<
\Pr_{v_i\sim F_i}[v_i\ge \tau_x(t)].
\]
Because \(F_i\) is atomless and has density positive a.e., the tail function
\(s\mapsto \Pr_{v_i\sim F_i}[v_i\ge s]\) is strictly decreasing wherever it is positive. Hence,
for every \(t\in X\),
\[
\tau^\uparrow(t) > \tau_x(t).
\]
Combining this with \(q(t)>0\), we obtain that for every \(t\in X\),
\[
\bigl(\tau^\uparrow(t)-\tau_x(t)\bigr)\Pr_{v_i\sim F_i}[\expect{x}_i(\beta_i(v_i);\beta)\ge t]
=
\bigl(\tau^\uparrow(t)-\tau_x(t)\bigr)q(t)
> 0.
\]
Thus the integrand in~\eqref{eq:tau-proof-positive-X} is measurable, nonnegative everywhere, and strictly positive on a
set \(X\) of positive Lebesgue measure. Therefore,
\[
\int_{t\in X}
\bigl(\tau^\uparrow(t)-\tau_x(t)\bigr)
\Pr_{v_i\sim F_i}[\expect{x}_i(\beta_i(v_i);\beta)\ge t]
\,dt
>0.
\]
So the second inequality in~\eqref{eq:clm-pay-by-rearranged-tau} is strict whenever \(\rho>0\) and
\(\expect{x}_i(\beta_i(\cdot);\beta)\) is not monotone-a.e.  

    Thus, when $\payoffPay>0$ and $\expect{x}_i(\bid_i(\cdot);\vbid)$
    is not monotone-\aee, the second inequality of~\eqref{eq:clm-pay-by-rearranged-tau}
    is strict.

    By Tonelli, since $\tau^\uparrow(t)$ is always nonnegative,
    \begin{align*}
        \E_{v_i\sim F_i}\sqbr{\int_0^{\expect{x}_i(\bid_i(v_i);\vbid)} \tau^\uparrow(t)dt}
        &=\E_{v_i\sim F_i}\sqbr{\int_0^{\infty} \tau^\uparrow(t)\mathbb{I}_{t\leq\expect{x}_i(\bid_i(v_i);\vbid)}dt}
        =\int_0^m \tau^\uparrow(t)\Pr_{v_i\sim F_i}[\expect{x}_i(\bid_i(v_i);\vbid)\geq t]dt\\
        &=\int_0^m \tau^\uparrow(t)\Pr_{v_i\sim F_i}[x^\uparrow(v_i)\geq t]dt\numberthis\label{eq:clm-based-on-alpha}
        =\E_{v_i\sim F_i}\sqbr{\int_0^{x^\uparrow(v_i)} \tau^\uparrow(t)dt}.
   \end{align*}
    Transition~\eqref{eq:clm-based-on-alpha} is due to $x^\uparrow$ being $F_i$--measure-preserving.

    Plugging this into~\eqref{eq:clm-pay-by-rearranged-tau}, 
    and using 
    Claim~\ref{clm:nondecreasing-payment-threshold-int}
    with $y(\cdot):=x^\uparrow(\cdot)$,
    \begin{align*}
        \payoffPay\E_{v_i\sim F_i}\sqbr{ \expect{p}_i(\bid_i(v_i);\vbid)}
        &\leq\payoffPay\E_{v_i\sim F_i}\sqbr{\int_0^{x^\uparrow(v_i)} \tau^\uparrow(t)dt}
  \\&\underset{(1)}{=}\payoffPay\E_{v_i\sim F_i}\sqbr{ x^\uparrow(v_i)v_i-\int_{\bottomV_i}^{v_i}x^\uparrow(z)dz}
  =\payoffPay\E_{v_i\sim F_i}\sqbr{p^\uparrow(v_i)}.
    \end{align*}
    The last transition comes from how we defined $p^\uparrow(\cdot)$.
    
    Together with Eq.~\eqref{eq:clm-nondecreasong-better},
     $\payoffPay\geq0$ and the linearity of expectation,
      we are done showing the weak inequality.

      In the case that $\payoffPay>0$ and $\expect{x}_i(\bid_i(\cdot);\vbid)$ is not monotone-\aee,
      we get a strict inequality, as we showed above.

\end{proof}

\section{Deferred Results and Proofs for Section~\ref{sec:consumer}: Consumer-Aligned Payoff}\label{app:cons-aligned}

\begin{corollary}[Generalization of Corollary 7.3 from~\cite{berzack2025dynamic}]\label{cor:cons-not-one-shot}
    One-shot auctions are not always optimal for consumer-aligned payoff. Equivalently, sometimes the seller benefits from designing DSIC auctions that are not U-DSIC.
\end{corollary}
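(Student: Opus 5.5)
The plan is to prove the corollary with one explicit instance: a single stagewise-IR buyer (so $n=1$) whose valuation is $v\sim\Uni[0,1]$, the consumer-surplus objective ($\payoffValue(v)=v$, $\payoffPay=-1$), $m=4$ units, and the concave cost $c(0)=0$, $c(k)=2.2$ for $k\ge 1$. In this instance I will show that every one-shot DSIC auction has expected payoff at most $0$, while a threshold auction achieves strictly positive payoff. The threshold auction is DSIC by Lemma~\ref{lem:cons-mono-is-threshold}, but it is not one-shot, and hence (by what follows) not U-DSIC.

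\textbf{Step 1: one-shot DSIC auctions are U-DSIC.} In a deterministic one-shot auction, the stagewise-IR predicate reduces to $x_i(\vec b)v_i-p_i(\vec b)\ge 0$. Take any misreport that stagewise-IR forbids. Its unconstrained utility is negative, which is below the truthful utility, since truthful utility is $\ge 0$ by IR. So a forbidden deviation is never profitable even in unconstrained terms. Hence DSIC and U-DSIC coincide for one-shot auctions, and the classical Myerson theory applies to them: the allocation is monotone and payments are Myerson payments. The same conclusion holds for randomized one-shot auctions after conditioning on the randomness, using the footnote that reduces randomized mechanisms to deterministic ones.

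\textbf{Step 2: upper bound for one-shot auctions.} With Myerson payments, the expected consumer surplus is
\[
\E\bigl[(v-\varphi(v))x(v)-c(x(v))\bigr]=\E\bigl[(1-v)x(v)-c(x(v))\bigr].
\]
The weight $1-v$ is decreasing, so ironing over the whole support gives the constant $\tfrac12$. Lemma~\ref{lemma:ironing-properties} then bounds the payoff by $\E[\tfrac12 x(v)-c(x(v))]$. Checking pointwise over $k\in\{0,\dots,4\}$, we have $\tfrac k2-c(k)\le 2-2.2<0$ for every $k\ge1$. Therefore every one-shot auction has payoff at most $0$.

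\textbf{Step 3: a better threshold auction.} Take the threshold auction that allocates all $4$ units when $v\ge\theta$ and nothing otherwise, with screening price $\theta$. Its payoff is
\[
\int_\theta^1 (4v-\theta-2.2)\,dv=(1-\theta)\bigl(2(1+\theta)-\theta-2.2\bigr).
\]
At $\theta=\tfrac12$ this equals $\tfrac12\cdot 0.3=0.15>0$. This strictly beats every one-shot auction, which proves the corollary.

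\textbf{Expected obstacle.} Step 1 is the delicate point: the equivalence of DSIC and U-DSIC has to be argued carefully for one-shot outcomes, including the randomized case. If that gives trouble, I would restrict the comparison to deterministic DSIC auctions, which is the class the consumer-aligned section works in anyway.
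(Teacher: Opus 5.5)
Within deterministic auctions your argument is correct, but it takes a different route from the paper. The paper builds no instance of its own. It imports Corollary~7.3 of~\cite{berzack2025dynamic} by noting two correspondences: a fixed-rate SWAC is the same as a one-shot auction (any bid giving nonnegative quasilinear utility can be placed), and a threshold SWAC is the same as a threshold screening auction. Lemmas~7.1 and~7.2 there then transfer verbatim. You use the same underlying observation in your Step~1: for deterministic one-shot outcomes, stagewise-IR forbids only deviations with negative quasilinear utility, so DSIC coincides with U-DSIC. You then prove the separation from scratch on an explicit instance, bounding all such auctions via Myerson's characterization plus ironing and computing a threshold auction directly. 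The numbers check out:
\begin{itemize}
\item $1-v$ irons to the constant $\tfrac12$;
\item $\tfrac k2-2.2<0$ for $1\le k\le 4$;
\item the threshold auction earns $(1-\theta)(\theta-0.2)$, which is $0.15$ at $\theta=\tfrac12$.
\end{itemize}
Your version is self-contained and transparent; the paper's is shorter but rests entirely on the cited examples.

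Two corrections are needed. First, your Step~1 claim that randomized one-shot auctions reduce to deterministic ones by conditioning on the randomness is false, not merely delicate. For a one-shot lottery, stagewise-IR is exactly ex-post IR, and the predicate is evaluated on the whole outcome distribution. A deviation excluded because it loses money with small positive probability becomes available once you condition on realizations where it does not. The footnote you cite concerns randomizing over maps from bids to lotteries, not collapsing the lotteries themselves. Your own instance shows this. Allocate all $4$ units iff $b\ge 0.55$, and in that case charge $2.2$ with probability $\epsilon$ and $0$ otherwise. A type $v<0.55$ cannot bid $b\ge 0.55$, since its realized utility would be $4v-2.2<0$ with probability $\epsilon$. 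Types $v\ge 0.55$ get the same acceptable lottery from every winning bid. So this one-shot auction is DSIC and IR, with payoff $\int_{0.55}^{1}(4v-2.2-2.2\epsilon)\,dv=0.405-0.99\epsilon$, which beats your threshold auction. Hence Step~2's bound fails for lotteries, and your fallback is required, not optional. The corollary has to be read for deterministic DSIC auctions, which is the standing restriction of the paper's consumer-aligned section.

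Second, ``not one-shot, hence not U-DSIC'' is a non sequitur as written. Either note that a type $v\in(\theta/4,\theta)$ gains $4v-\theta>0$ in unconstrained utility by bidding $\theta$. Or observe that Step~2 uses only Myerson's characterization of total allocation and payment, so it bounds every U-DSIC auction, one-shot or not, by $0$.
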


\begin{proof}[Proof of Corollary~\ref{cor:cons-not-one-shot}]
    We can use the same examples from Corollary 7.3 in~\cite{berzack2025dynamic}:
    A fixed-rate SWAC is exactly equivalent to a one-shot auction, since in the fixed-rate SWAC the filter is equal to the average per-unit price, so any buyer who gets nonnegative quasilinear utility from some bid is able to make that bid; similarly,  in a one-shot auction, any buyer who gets nonnegative quasilinear utility from some bid is able to make that bid because they do not violate the constraint.

    Equivalently, a threshold SWAC from~\cite{berzack2025dynamic} is equivalent to a threshold screening auction: the filter in the SWAC is exactly the filter in the screening auction.

    Thus, Lemma 7.1 and Lemma 7.2 from~\cite{berzack2025dynamic} apply in our setting too, and we reach the same conclusion.
\end{proof}

\end{document}